\title{Proofs and Refutations for Intuitionistic and Second-Order Logic (Extended Version)}
\author{Pablo Barenbaum}{Universidad de Buenos Aires, Argentina \and Univeridad Nacional de Quilmes (CONICET), Argentina}{pbarenbaum@dc.uba.ar.org}{}{}
\author{Teodoro Freund}{Universidad de Buenos Aires, Argentina}{tfreund95@gmail.com}{}{}
\authorrunning{P. Barenbaum and T. Freund} 
\keywords{lambda-calculus, propositions-as-types, classical logic, proof normalization} 
\newdimen\proofrulebreadth \proofrulebreadth=.05em
\newdimen\proofdotseparation \proofdotseparation=1.25ex
\newdimen\proofrulebaseline \proofrulebaseline=2ex
\let\then\relax
\def\hfi{\hskip0pt plus.0001fil}
\mathchardef\squigto="3A3B
\newif\ifinsideprooftree\insideprooftreefalse
\newif\ifonleftofproofrule\onleftofproofrulefalse
\newif\ifproofdots\proofdotsfalse
\newif\ifdoubleproof\doubleprooffalse
\let\wereinproofbit\relax
\newdimen\shortenproofleft
\newdimen\shortenproofright
\newdimen\proofbelowshift
\newbox\proofabove
\newbox\proofbelow
\newbox\proofrulename
\def\shiftproofbelow{\let\next\relax\afterassignment\setshiftproofbelow\dimen0 }
\def\shiftproofbelowneg{\def\next{\multiply\dimen0 by-1 }%
\afterassignment\setshiftproofbelow\dimen0 }
\def\setshiftproofbelow{\next\proofbelowshift=\dimen0 }
\def\setproofrulebreadth{\proofrulebreadth}
\def\prooftree{
%
\ifnum  \lastpenalty=1
\then   \unpenalty
\else   \onleftofproofrulefalse
\fi
%
\ifonleftofproofrule
\else   \ifinsideprooftree
        \then   \hskip.5em plus1fil
        \fi
\fi
%
\bgroup
\setbox\proofbelow=\hbox{}\setbox\proofrulename=\hbox{}%
\let\justifies\proofover\let\leadsto\proofoverdots\let\Justifies\proofoverdbl
\let\using\proofusing\let\[\prooftree
\ifinsideprooftree\let\]\endprooftree\fi
\proofdotsfalse\doubleprooffalse
\let\thickness\setproofrulebreadth
\let\shiftright\shiftproofbelow \let\shift\shiftproofbelow
\let\shiftleft\shiftproofbelowneg
\let\ifwasinsideprooftree\ifinsideprooftree
\insideprooftreetrue
%
\setbox\proofabove=\hbox\bgroup$\displaystyle 
\let\wereinproofbit\prooftree
%
\shortenproofleft=0pt \shortenproofright=0pt \proofbelowshift=0pt
%
\onleftofproofruletrue\penalty1
}
\def\eproofbit{
%
\ifx    \wereinproofbit\prooftree
\then   \ifcase \lastpenalty
        \then   \shortenproofright=0pt  
        \or     \unpenalty\hfil         
        \or     \unpenalty\unskip       
        \else   \shortenproofright=0pt  
        \fi
\fi
%
\global\dimen0=\shortenproofleft
\global\dimen1=\shortenproofright
\global\dimen2=\proofrulebreadth
\global\dimen3=\proofbelowshift
\global\dimen4=\proofdotseparation
\global\count255=\proofdotnumber
%
$\egroup  
%
\shortenproofleft=\dimen0
\shortenproofright=\dimen1
\proofrulebreadth=\dimen2
\proofbelowshift=\dimen3
\proofdotseparation=\dimen4
\proofdotnumber=\count255
}
\def\proofover{
\eproofbit 
\setbox\proofbelow=\hbox\bgroup 
\let\wereinproofbit\proofover
$\displaystyle
}%
\def\proofoverdbl{
\eproofbit 
\doubleprooftrue
\setbox\proofbelow=\hbox\bgroup 
\let\wereinproofbit\proofoverdbl
$\displaystyle
}%
\def\proofoverdots{
\eproofbit 
\proofdotstrue
\setbox\proofbelow=\hbox\bgroup 
\let\wereinproofbit\proofoverdots
$\displaystyle
}%
\def\proofusing{
\eproofbit 
\setbox\proofrulename=\hbox\bgroup 
\let\wereinproofbit\proofusing
\kern0.3em$
}
\def\endprooftree{
\eproofbit 
  \dimen5 =0pt
%
\dimen0=\wd\proofabove \advance\dimen0-\shortenproofleft
\advance\dimen0-\shortenproofright
%
\dimen1=.5\dimen0 \advance\dimen1-.5\wd\proofbelow
\dimen4=\dimen1
\advance\dimen1\proofbelowshift \advance\dimen4-\proofbelowshift
%
\ifdim  \dimen1<0pt
\then   \advance\shortenproofleft\dimen1
        \advance\dimen0-\dimen1
        \dimen1=0pt
        \ifdim  \shortenproofleft<0pt
        \then   \setbox\proofabove=\hbox{%
                        \kern-\shortenproofleft\unhbox\proofabove}%
                \shortenproofleft=0pt
        \fi
\fi
%
\ifdim  \dimen4<0pt
\then   \advance\shortenproofright\dimen4
        \advance\dimen0-\dimen4
        \dimen4=0pt
\fi
%
\ifdim  \shortenproofright<\wd\proofrulename
\then   \shortenproofright=\wd\proofrulename
\fi
%
\dimen2=\shortenproofleft \advance\dimen2 by\dimen1
\dimen3=\shortenproofright\advance\dimen3 by\dimen4
%
\ifproofdots
\then
        \dimen6=\shortenproofleft \advance\dimen6 .5\dimen0
        \setbox1=\vbox to\proofdotseparation{\vss\hbox{$\cdot$}\vss}%
        \setbox0=\hbox{%
                \advance\dimen6-.5\wd1
                \kern\dimen6
                $\vcenter to\proofdotnumber\proofdotseparation
                        {\leaders\box1\vfill}$%
                \unhbox\proofrulename}%
\else   \dimen6=\fontdimen22\the\textfont2 
        \dimen7=\dimen6
        \advance\dimen6by.5\proofrulebreadth
        \advance\dimen7by-.5\proofrulebreadth
        \setbox0=\hbox{%
                \kern\shortenproofleft
                \ifdoubleproof
                \then   \hbox to\dimen0{%
                        $\mathsurround0pt\mathord=\mkern-6mu%
                        \cleaders\hbox{$\mkern-2mu=\mkern-2mu$}\hfill
                        \mkern-6mu\mathord=$}%
                \else   \vrule height\dimen6 depth-\dimen7 width\dimen0
                \fi
                \unhbox\proofrulename}%
        \ht0=\dimen6 \dp0=-\dimen7
\fi
%
\let\doll\relax
\ifwasinsideprooftree
\then   \let\VBOX\vbox
\else   \ifmmode\else$\let\doll=$\fi
        \let\VBOX\vcenter
\fi
\VBOX   {\baselineskip\proofrulebaseline \lineskip.2ex
        \expandafter\lineskiplimit\ifproofdots0ex\else-0.6ex\fi
        \hbox   spread\dimen5   {\hfi\unhbox\proofabove\hfi}%
        \hbox{\box0}%
        \hbox   {\kern\dimen2 \box\proofbelow}}\doll%
%
\global\dimen2=\dimen2
\global\dimen3=\dimen3
\egroup 
\ifonleftofproofrule
\then   \shortenproofleft=\dimen2
\fi
\shortenproofright=\dimen3
%
\onleftofproofrulefalse
\ifinsideprooftree
\then   \hskip.5em plus 1fil \penalty2
\fi
}
\newcommand{\emptyPremise}{\vphantom{{}^@}}
\newcommand{\indrulename}[1]{\textsc{#1}}
\newcommand{\indrule}[3]{
\ensuremath{
\begin{array}{c}
  \prooftree #2
    \justifies #3
    \thickness=0.05em
    \using \indrulename{\footnotesize{#1}}
  \endprooftree
\end{array}}}
\newcommand{\indih}[2][\text{\ih}]{\indrule{}{#1}{#2}}
\newcommand{\indruleNPos}[4]{
\begin{array}[#1]{c@{}r}
\hspace{-.2cm}
 #3
\hspace{-.2cm}
\vspace{-.1cm}
\\
& \,#2\!\hspace{-.5cm}\vspace{-.2cm} \\
\cline{1-1}\vspace{-.3cm} \\
  #4 \hspace{.5cm}\,
\end{array}
}
\newcommand{\indruleN}[3]{{\small\indruleNPos{b}{#1}{#2}{#3}}}
\newcommand{\indNih}[2][\text{\ih}]{\indruleN{}{#1}{#2}}
\newcommand{\indNax}[2][\text{\Ax}]{\indruleN{#1}{}{#2}}
\newcommand{\deriv}{\pi}
\newcommand{\derivdots}[1]{
\begin{array}[b]{c@{}r}
\vdots 
\\
#1
\end{array}
}
\renewcommand{\theenumi}{\arabic{enumi}}
\renewcommand{\theenumii}{\arabic{enumii}}
\renewcommand{\theenumiii}{\arabic{enumiii}}
\renewcommand\p@enumii{\theenumi.}
\renewcommand\p@enumiii{\theenumi.\theenumii.}
\renewcommand\p@enumiv{\theenumi.\theenumii.\theenumiii.}
\newcommand{\llem}[1]{\label{lemma:#1}}
\newcommand{\rlem}[1]{Lem.~\ref{lemma:#1}}
\newcommand{\lprop}[1]{\label{prop:#1}}
\newcommand{\rprop}[1]{Prop.~\ref{prop:#1}}
\newcommand{\lthm}[1]{\label{thm:#1}}
\newcommand{\rthm}[1]{Thm.~\ref{thm:#1}}
\newcommand{\lsec}[1]{\label{section:#1}}
\newcommand{\rsec}[1]{Section~\ref{section:#1}}
\renewcommand{\emptyset}{\varnothing}
\newcommand{\Nat}{\mathbb{N}}
\newcommand{\NotNow}[1]{}
\newcommand{\Hs}{\hspace{.3cm}}
\newcommand{\HS}{\hspace{.5cm}}
\newcommand{\ST}{\ |\ }
\newcommand{\ie}{{\em i.e.}\xspace}
\newcommand{\eg}{{\em e.g.}\xspace}
\newcommand{\ih}{IH\xspace}
\newcommand{\set}[1]{\{#1\}}
\newcommand{\eqdef}{\,\mathrel{\overset{\mathrm{def}}{=}}\,}
\newcommand{\sub}[2]{\{#1:=#2\}}
\newcommand{\esub}[2]{[#1:=#2]}
\newcommand{\fv}[1]{\mathsf{fv}(#1)}
\newcommand{\ftv}[1]{\mathsf{ftv}(#1)}
\newcommand{\under}{\underline{\,\,\,}}
\newcommand{\imp}{\rightarrow}
\newcommand{\coimp}{\ltimes}
\newcommand{\var}{x}
\newcommand{\vartwo}{y}
\newcommand{\varthree}{z}
\newcommand{\varset}{X}
\newcommand{\tm}{t}
\newcommand{\tmtwo}{s}
\newcommand{\tmthree}{u}
\newcommand{\tmfour}{r}
\newcommand{\tmfive}{p}
\newcommand{\tmsix}{q}
\newcommand{\btyp}{\alpha}
\newcommand{\btyptwo}{\beta}
\newcommand{\btypthree}{\gamma}
\newcommand{\typ}{A}
\newcommand{\typtwo}{B}
\newcommand{\typthree}{C}
\newcommand{\ev}{P}
\newcommand{\evtwo}{Q}
\newcommand{\PP}{{}^\oplus}
\newcommand{\NN}{{}^\ominus}
\newcommand{\pp}{{}^+}
\newcommand{\nn}{{}^-}
\newcommand{\OP}{{}^{\sim}}
\newcommand{\dom}[1]{\mathsf{dom}(#1)}
\newcommand{\tctx}{\Gamma}
\newcommand{\tctxtwo}{\Delta}
\newcommand{\rulename}[1]{\textup{\textsc{#1}}}
\newcommand{\PRK}{\textup{PRK}}
\newcommand{\PRJ}{\textup{PRJ}}
\newcommand{\PRJVsym}{\ensuremath{\mathtt{J}}}
\newcommand{\PRJV}{\textup{PRJ$\star$}}
\newcommand{\NK}{\textup{NK}}
\newcommand{\NJ}{\textup{NJ}}
\newcommand{\lambdaPRK}{\lambda^{\PRK}}
\newcommand{\lambdaPRKU}{\lambda^{\PRK}_{\UTerms}}
\newcommand{\lambdaPRKFragment}[1]{\lambdaPRK(#1)}
\newcommand{\lambdaPRKAndImpNegAll}[1]{\lambdaPRKFragment{\fragAndImpNegAll}}
\newcommand{\lambdaPRJ}{\lambda^{\PRJ}}
\newcommand{\lambdaPRJV}{\lambda^{\PRJV}}
\newcommand{\classem}[1]{\iota(#1)}
\newcommand{\intem}[1]{\iota(#1)}
\newcommand{\judg}[3]{#1\vdash#2:#3}
\newcommand{\judgV}[4][\varset]{#2\vdash^{#1}#3:#4}
\newcommand{\judgPRK}[3]{#1\vdash_{\PRK}#2:#3}
\newcommand{\judgPRJ}[4][]{#2\vdash^{#1}_{\PRJ}#3:#4}
\newcommand{\judgPRJV}[4][\varset]{#2\vdash^{#1}_{\PRJV}#3:#4}
\newcommand{\logPRK}[2]{#1\vdash_{\PRK}#2}
\newcommand{\logPRJ}[2]{#1\vdash_{\PRJ}#2}
\renewcommand{\log}[2]{#1\vdash#2}
\newcommand{\logNK}[2]{#1\vdash_{\mathsf{NK}}#2}
\newcommand{\logNJ}[2]{#1\vdash_{\mathsf{NJ}}#2}
\newcommand{\strongabssym}[1]{{\RHD\!\!\!\LHD_{#1}}}
\newcommand{\strongabs}[3]{#2 \mathrel{\strongabssym{#1}} #3}
\newcommand{\abs}[3]{#2 \mathrel{\bowtie_{#1}} #3}
\newcommand{\pair}[2]{\langle#1,#2\rangle}
\newcommand{\pairp}[2]{\pair{#1}{#2}\pp}
\newcommand{\pairn}[2]{\pair{#1}{#2}\nn}
\newcommand{\pairpn}[2]{\pair{#1}{#2}^{\pm}}
\newcommand{\pairc}[2]{\pair{#1}{#2}^\clasym}
\newcommand{\pairj}[2]{\pair{#1}{#2}^\intsym}
\newcommand{\projisym}[1][i]{\pi_{#1}}
\newcommand{\proji}[2][i]{\projisym[#1](#2)}
\newcommand{\projip}[2][i]{\projisym[#1]^+(#2)}
\newcommand{\projin}[2][i]{\projisym[#1]^-(#2)}
\newcommand{\projipn}[2][i]{\projisym[#1]^{\pm}(#2)}
\newcommand{\projic}[2][i]{\projisym[#1]^{\clasym}(#2)}
\newcommand{\projij}[2][i]{\projisym[#1]^{\intsym}(#2)}
\newcommand{\inisym}[1][i]{\mathsf{in}_{#1}}
\newcommand{\ini}[2][i]{\inisym[#1](#2)}
\newcommand{\inip}[2][i]{\inisym[#1]\!\!\pp(#2)}
\newcommand{\inin}[2][i]{\inisym[#1]\!\!\nn(#2)}
\newcommand{\inipn}[2][i]{\inisym[#1]^{\pm}(#2)}
\newcommand{\inic}[2][i]{\inisym[#1]^{\clasym}(#2)}
\newcommand{\inij}[2][i]{\inisym[#1]^{\intsym}(#2)}
\newcommand{\casesym}{\delta}
\newcommand{\caseto}{.}
\newcommand{\case}[5]{\casesym#1\,[_{#2}\caseto#3][_{#4}\caseto#5]}
\newcommand{\casep}[5]{\casesym\pp#1\,[_{#2} \caseto #3][_{#4} \caseto #5]}
\newcommand{\caseptable}[5]{\begin{array}[t]{l}
  \casesym\pp #1 \\
    \HS[_{#2} \caseto #3] \\
    \HS[_{#4} \caseto #5] \\
  \end{array}
}
\newcommand{\casen}[5]{\casesym\nn#1\,[_{#2} \caseto #3][_{#4} \caseto #5]}
\newcommand{\casepn}[5]{\casesym^{\pm}#1\,[_{#2} \caseto #3][_{#4} \caseto #5]}
\newcommand{\casec}[5]{\casesym^{\clasym}#1\,[_{#2} \caseto #3][_{#4} \caseto #5]}
\newcommand{\casej}[5]{\casesym^{\intsym}#1\,[_{#2} \caseto #3][_{#4} \caseto #5]}
\newcommand{\lam}[2]{\lambda_{#1}.\,#2}
\newcommand{\lamp}[2]{\lambda\pp_{#1}.{#2}}
\newcommand{\lamn}[2]{\lambda\nn_{#1}.{#2}}
\newcommand{\lampn}[2]{\lambda^{\pm}_{#1}.{#2}}
\newcommand{\lamj}[2]{\lambda^{\intsym}_{#1}.{#2}}
\newcommand{\ap}[2]{#1@#2}
\newcommand{\app}[2]{#1@\pp#2}
\newcommand{\apn}[2]{#1@\nn#2}
\newcommand{\appn}[2]{#1@^{\pm}#2}
\newcommand{\appj}[2]{#1@^{\intsym}#2}
\newcommand{\copairwrap}[1]{(#1)}
\newcommand{\copair}[2]{\copairwrap{#1\,\bm{;}#2}}
\newcommand{\copairp}[2]{\copairwrap{#1\,\bm{;}\!\pp#2}}
\newcommand{\copairn}[2]{\copairwrap{#1\,\bm{;}\!\nn#2}}
\newcommand{\copairpn}[2]{\copairwrap{#1\,\bm{;}\!^{\pm}#2}}
\newcommand{\copairc}[2]{\copairwrap{#1\,\bm{;}^{\clasym}#2}}
\newcommand{\copairj}[2]{\copairwrap{#1\,\bm{;}^{\intsym}#2}}
\newcommand{\colamsym}{\varrho}
\newcommand{\colam}[4]{\colamsym{#1}[_{#2;#3}.{#4}]}
\newcommand{\colamp}[4]{\colamsym^+{#1}[_{#2;#3}.{#4}]}
\newcommand{\colamn}[4]{\colamsym^-{#1}[_{#2;#3}.{#4}]}
\newcommand{\colampn}[4]{\colamsym^{\pm}{#1}[_{#2;#3}.{#4}]}
\newcommand{\colamc}[4]{\colamsym^{\clasym}{#1}[_{#2;#3}.{#4}]}
\newcommand{\negisym}{\mathsf{N}}
\newcommand{\negi}[1]{\negisym#1}
\newcommand{\negip}[1]{\negisym\pp#1}
\newcommand{\negin}[1]{\negisym\nn#1}
\newcommand{\negipn}[1]{\negisym^{\pm}#1}
\newcommand{\negiP}[1]{\negisym\PP#1}
\newcommand{\negeP}[1]{\negesym\PP#1}
\newcommand{\negesym}{\mathsf{M}}
\newcommand{\nege}[1]{\negesym#1}
\newcommand{\negep}[1]{\negesym\pp#1}
\newcommand{\negen}[1]{\negesym\nn#1}
\newcommand{\negepn}[1]{\negesym^{\pm}#1}
\newcommand{\neglamsym}{\Lambda}
\newcommand{\negapsym}{\texttt{\textup{\#}}}
\newcommand{\neglamc}[2]{\neglamsym^{\clasym}_{#1}.\,#2}
\newcommand{\negapc}[2]{#1 \negapsym^{\clasym} #2}
\newcommand{\neglamj}[2]{\neglamsym^{\intsym}_{#1}.\,#2}
\newcommand{\negapj}[2]{#1 \negapsym^{\intsym} #2}
\newcommand{\clasym}{\mathcal{C}}
\newcommand{\clasapsym}{\bullet}
\newcommand{\claslamsym}{\textup{\Circle}}
\newcommand{\claslam}[2]{\mathsf{\claslamsym}_{#1}.\,#2}
\newcommand{\claslamp}[2]{\mathsf{\claslamsym}\pp_{#1}.\,#2}
\newcommand{\claslamn}[2]{\mathsf{\claslamsym}\nn_{#1}.\,#2}
\newcommand{\claslampn}[2]{\mathsf{\claslamsym}^{\pm}_{#1}.\,#2}
\newcommand{\claslamnp}[2]{\mathsf{\claslamsym}^{\mp}_{#1}.\,#2}
\newcommand{\clasap}[2]{#1 \clasapsym #2}
\newcommand{\clasapp}[2]{#1 \clasapsym\!\!\pp\, #2}
\newcommand{\clasapn}[2]{#1 \clasapsym\!\!\nn\, #2}
\newcommand{\clasappn}[2]{#1 \clasapsym\!\!^{\pm}\, #2}
\newcommand{\intsym}{\clasym}
\colorlet{darkorange}{orange!50!black}
\colorlet{darkgreen}{green!60!black}
\newcommand{\colorand}[1]{#1}
\newcommand{\colorimp}[1]{#1}
\newcommand{\colorneg}[1]{#1}
\newcommand{\colorall}[1]{#1}
\newcommand{\fragand}{\colorand{{}^\land_\lor}}
\newcommand{\fragimp}{\colorimp{{}^\imp_\coimp}}
\newcommand{\fragneg}{\colorneg{\neg}}
\newcommand{\fragall}{\colorall{{}^\forall_\exists}}
\newcommand{\fragAndImpNegAll}{\fragand\fragimp\fragneg\fragall}
\newcommand{\Ax}{\rulename{Ax}}
\newcommand{\Abs}{\rulename{Abs}}
\newcommand{\Iandp}{\rulename{I$_\land^+$}}
\newcommand{\Iorn}{\rulename{I$_\lor^-$}}
\newcommand{\Eandp}[1][i]{\rulename{E$^+_{\land{#1}}$}}
\newcommand{\Eorn}[1][i]{\rulename{E$^-_{\lor{#1}}$}}
\newcommand{\Iorp}[1][i]{\rulename{I$^+_{\lor{#1}}$}}
\newcommand{\Iandn}[1][i]{\rulename{I$^-_{\land{#1}}$}}
\newcommand{\Eorp}{\rulename{E$_\lor^+$}}
\newcommand{\Eandn}{\rulename{E$_\land^-$}}
\newcommand{\Inotp}{\rulename{I$_\lnot^+$}}
\newcommand{\Inotn}{\rulename{I$_\lnot^-$}}
\newcommand{\Enotp}{\rulename{E$_\lnot^+$}}
\newcommand{\Enotn}{\rulename{E$_\lnot^-$}}
\newcommand{\Icp}{\rulename{I$_{\circ}^+$}}
\newcommand{\Icn}{\rulename{I$_{\circ}^-$}}
\newcommand{\Icpn}{\rulename{I$_{\circ}^\pm$}}
\newcommand{\Ecp}{\rulename{E$_{\circ}^+$}}
\newcommand{\Ecn}{\rulename{E$_{\circ}^-$}}
\newcommand{\Ecpn}{\rulename{E$_{\circ}^\pm$}}
\newcommand{\Iallp}{\rulename{I$_\forall^+$}}
\newcommand{\Iexn}{\rulename{I$_\exists^-$}}
\newcommand{\Eallp}{\rulename{E$_\forall^+$}}
\newcommand{\Eexn}{\rulename{E$_\exists^-$}}
\newcommand{\Iexp}{\rulename{I$_\exists^+$}}
\newcommand{\Ialln}{\rulename{I$_\forall^-$}}
\newcommand{\Eexp}{\rulename{E$_\exists^+$}}
\newcommand{\Ealln}{\rulename{E$_\forall^-$}}
\newcommand{\Iimpp}{\rulename{I$_\imp^+$}}
\newcommand{\Icoimpn}{\rulename{I$_\coimp^-$}}
\newcommand{\Eimpp}{\rulename{E$_\imp^+$}}
\newcommand{\Ecoimpn}{\rulename{E$_\coimp^-$}}
\newcommand{\Icoimpp}{\rulename{I$_\coimp^+$}}
\newcommand{\Iimpn}{\rulename{I$_\imp^-$}}
\newcommand{\Ecoimpp}{\rulename{E$_\coimp^+$}}
\newcommand{\Eimpn}{\rulename{E$_\imp^-$}}
\newcommand{\Weakening}{\rulename{W}}
\newcommand{\ICut}{\rulename{ICut}}
\newcommand{\CCut}{\rulename{CCut}}
\newcommand{\IContrapose}{\rulename{IContra}}
\newcommand{\CContrapose}{\rulename{CContra}}
\newcommand{\ccontrapose}[3]{{\mathtt{cc}_{#1}^{#2}(#3)}}
\newcommand{\icontrapose}[3]{{\mathtt{ic}_{#1}^{#2}(#3)}}
\newcommand{\lemN}[1]{\pitchfork^-_{#1}}
\newcommand{\lemP}[1]{\pitchfork^+_{#1}}
\newcommand{\lemNinner}[2]{\Delta^-_{#1,#2}}
\newcommand{\lemPinner}[2]{\Delta^+_{#1,#2}}
\newcommand{\SubT}{\rulename{\textup{SubT}}}
\newcommand{\NDrulename}[1]{\rulename{{#1}}}
\newcommand{\NDAx}{\NDrulename{ax}}
\newcommand{\NDExpl}{\NDrulename{E$\bot$}}
\newcommand{\NDLem}{\NDrulename{lem}}
\newcommand{\NDIand}{\NDrulename{I$\land$}}
\newcommand{\NDEand}[1][i]{\NDrulename{E$\land_{#1}$}}
\newcommand{\NDIor}[1][i]{\NDrulename{I$\lor_{#1}$}}
\newcommand{\NDEor}{\NDrulename{E$\lor$}}
\newcommand{\NDInot}{\NDrulename{I$\lnot$}}
\newcommand{\NDEnot}{\NDrulename{E$\lnot$}}
\newcommand{\NDIall}{\NDrulename{I$\forall$}}
\newcommand{\NDEall}{\NDrulename{E$\forall$}}
\newcommand{\NDIex}{\NDrulename{I$\exists$}}
\newcommand{\NDEex}{\NDrulename{E$\exists$}}
\newcommand{\NDIimp}{\NDrulename{I$\imp$}}
\newcommand{\NDEimp}{\NDrulename{E$\imp$}}
\newcommand{\NDIcoimp}{\NDrulename{I$\coimp$}}
\newcommand{\NDEcoimp}{\NDrulename{E$\coimp$}}
\newcommand{\NDW}{\NDrulename{W}}
\newcommand{\NDCut}{\NDrulename{Cut}}
\newcommand{\rewritingRuleName}[1]{\mathsf{#1}}
\newcommand{\toa}[1]{\xrightarrow{#1}}
\newcommand{\rtoa}[1]{\mathrel{\xrightarrow{#1}\!\!{}^*\,\,}}
\newcommand{\rto}{\mathrel{\rightarrow^*}}
\newcommand{\ruleBAndP}{\rewritingRuleName{\beta\pp_\land}}
\newcommand{\ruleBOrN}{\rewritingRuleName{\beta\nn_\lor}}
\newcommand{\ruleBOrP}{\rewritingRuleName{\beta\pp_\lor}}
\newcommand{\ruleBAndN}{\rewritingRuleName{\beta\nn_\land}}
\newcommand{\ruleBImpP}{\rewritingRuleName{\beta\pp_\imp}}
\newcommand{\ruleBCoimpN}{\rewritingRuleName{\beta\nn_\coimp}}
\newcommand{\ruleBCoimpP}{\rewritingRuleName{\beta\pp_\coimp}}
\newcommand{\ruleBImpN}{\rewritingRuleName{\beta\nn_\imp}}
\newcommand{\ruleBNegP}{\rewritingRuleName{\beta\pp_\neg}}
\newcommand{\ruleBNegN}{\rewritingRuleName{\beta\nn_\neg}}
\newcommand{\ruleBAllP}{\rewritingRuleName{\beta\pp_\forall}}
\newcommand{\ruleBExN}{\rewritingRuleName{\beta\nn_\exists}}
\newcommand{\ruleBExP}{\rewritingRuleName{\beta\pp_\exists}}
\newcommand{\ruleBAllN}{\rewritingRuleName{\beta\nn_\forall}}
\newcommand{\ruleBClasP}{\rewritingRuleName{\beta\pp_{\circ}}}
\newcommand{\ruleBClasN}{\rewritingRuleName{\beta\nn_{\circ}}}
\newcommand{\ruleBClasPN}{\rewritingRuleName{\beta^{\pm}_{\circ}}}
\newcommand{\ruleAAnd}{\rewritingRuleName{\strongabssym{\land}}}
\newcommand{\ruleAOr}{\rewritingRuleName{\strongabssym{\lor}}}
\newcommand{\ruleAImp}{\rewritingRuleName{\strongabssym{\imp}}}
\newcommand{\ruleACoimp}{\rewritingRuleName{\strongabssym{\coimp}}}
\newcommand{\ruleANeg}{\rewritingRuleName{\strongabssym{\neg}}}
\newcommand{\ruleAAll}{\rewritingRuleName{\strongabssym{\forall}}}
\newcommand{\ruleAEx}{\rewritingRuleName{\strongabssym{\exists}}}
\newcommand{\SEP}{\,/\,}
\newcommand{\ruleProj}{\ruleBAndP\SEP\ruleBOrN}
\newcommand{\ruleCase}{\ruleBOrP\SEP\ruleBAndN}
\newcommand{\ruleBetaB}{\ruleBImpP\SEP\ruleBCoimpN}
\newcommand{\ruleCoproj}{\ruleBCoimpP\SEP\ruleBImpN}
\newcommand{\ruleNeg}{\ruleBNegP\SEP\ruleBNegN}
\newcommand{\ruleAppT}{\ruleBAllP\SEP\ruleBExN}
\newcommand{\ruleOpen}{\ruleBExP\SEP\ruleBAllN}
\newcommand{\ruleBetaC}{\ruleBClasP\SEP\ruleBClasN}
\newcommand{\ruleAbsPairInj}{\ruleAAnd}
\newcommand{\ruleAbsInjPair}{\ruleAOr}
\newcommand{\ruleAbsLamCopair}{\ruleAImp}
\newcommand{\ruleAbsCopairLam}{\ruleACoimp}
\newcommand{\ruleAbsNeg}{\ruleANeg}
\newcommand{\ruleAbsLamPairT}{\ruleAAll}
\newcommand{\ruleAbsPairLamT}{\ruleAEx}
\newcommand{\ruleEta}{\rewritingRuleName{\eta_{\circ}}}
\newcommand{\ctxhole}{\Box}
\newcommand{\ctxof}[1]{\langle#1\rangle}
\newcommand{\gctx}{\mathtt{C}}
\newcommand{\gctxof}[1]{\gctx\ctxof{#1}}
\newcommand{\crictx}{\mathtt{H}}
\newcommand{\crictxof}[1]{\crictx\ctxof{#1}}
\newcommand{\nf}{f}
\newcommand{\neu}{e}
\newcommand{\semF}[1]{\{\!\!\{#1\}\!\!\}}
\newcommand{\typingDerivation}{\pi}
\newcommand{\all}[2]{\forall #1.\,#2}
\newcommand{\ex}[2]{\exists #1.\,#2}
\newcommand{\dummy}{\Diamond}
\newcommand{\lamtu}[1]{\lambda_{\dummy}.\,#1}
\newcommand{\lamtp}[2]{\lambda\pp_{#1}.\,#2}
\newcommand{\lamtn}[2]{\lambda\nn_{#1}.\,#2}
\newcommand{\lamtpn}[2]{\lambda^{\pm}_{#1}.\,#2}
\newcommand{\lamc}[2]{\lambda^{\clasym}_{#1}.\,#2}
\newcommand{\lamtc}[2]{\lambda^{\clasym}_{#1}.\,#2}
\newcommand{\lamtj}[2]{\lambda^{\intsym}_{#1}.\,#2}
\newcommand{\apptu}[1]{#1 @ \dummy}
\newcommand{\apptp}[2]{#1 @\pp #2}
\newcommand{\apptn}[2]{#1 @\nn #2}
\newcommand{\apptpn}[2]{#1 @^{\pm} #2}
\newcommand{\appc}[2]{#1 \,\texttt{\textup{@}}^{\clasym}\, #2}
\newcommand{\apptc}[2]{#1 \,\texttt{\textup{@}}^{\clasym}\, #2}
\newcommand{\apptj}[2]{#1 \,\texttt{\textup{@}}^{\intsym}\, #2}
\newcommand{\patu}[1]{\langle\dummy, #1\rangle}
\newcommand{\patp}[2]{\langle#1, #2\rangle\pp}
\newcommand{\patn}[2]{\langle#1, #2\rangle\nn}
\newcommand{\patpn}[2]{\langle#1, #2\rangle^{\pm}}
\newcommand{\patc}[2]{\langle#1, #2\rangle^{\clasym}}
\newcommand{\patj}[2]{\langle#1, #2\rangle^{\intsym}}
\newcommand{\openp}[4]{\optp{#3}{#1}{#2}{#4}}
\newcommand{\optu}[3]{\nabla\,#1[_{(\dummy,#2)}.#3]}
\newcommand{\optp}[4]{\nabla\pp\,#1[_{(#2,#3)}.#4]}
\newcommand{\optn}[4]{\nabla\nn\,#1[_{(#2,#3)}.#4]}
\newcommand{\optpn}[4]{\nabla^{\pm}\,#1[_{(#2,#3)}.#4]}
\newcommand{\optc}[4]{\nabla^{\clasym}\,#1[_{(#2,#3)}.#4]}
\newcommand{\optj}[4]{\nabla^{\intsym}\,#1[_{(#2,#3)}.#4]}
\colorlet{colappendix}{blue}
\newcommand{\SeeAppendix}[1]{\textcolor{colappendix}{#1}}
\newcounter{alphasect}
\def\alphainsection{0}
\let\oldsection=\section
\def\section{%
  \ifnum\alphainsection=1%
    \addtocounter{alphasect}{1}%
  \fi%
\oldsection}%
\renewcommand\thesection{%
  \ifnum\alphainsection=1%
    \Alph{alphasect}%
  \else%
    \arabic{section}%
  \fi%
}%
\newcommand{\UTerms}{\mathbf{U}}
\newcommand{\erase}[1]{|#1|}
\newcommand{\atm}{a}
\newcommand{\atmtwo}{b}
\newcommand{\atmthree}{c}
\newcommand{\SNTerms}{\mathbf{SN}}
\newcommand{\CanTerms}{\mathbf{CAN}}
\newcommand{\tou}{\to_{\UTerms}}
\newcommand{\rtou}{\to^*_{\UTerms}}
\newcommand{\join}{\bigvee}
\newcommand{\meet}{\bigwedge}
\newcommand{\lfpF}[1]{\bm{\mu}(#1)}
\newcommand{\gfpF}[1]{\bm{\nu}(#1)}
\newcommand{\lfp}[2]{\lfpF{#1.#2}}
\newcommand{\gfp}[2]{\gfpF{#1.#2}}
\newcommand{\snsize}[1]{\#(#1)}
\newcommand{\red}[2]{[\![#1]\!]_{#2}}
\newcommand{\env}{\rho}
\newcommand{\subst}{\sigma}
\newcommand{\judgSubst}[3]{#1\vDash_{#2}#3}
\newcommand{\RCSet}{\mathbf{RC}}
\newcommand{\RCPerp}{\bm{\,\perp\!\!\!\perp}}
\newcommand{\RC}{r.c.\xspace}
\newcommand{\RCs}{r.c.'s\xspace}
\newcommand{\rclosure}[1]{\mathbb{C}{#1}}
\newcommand{\rc}{\xi}
\newcommand{\rcfamily}{\mathcal{R}}
\newcommand{\inI}{\in{I}}
\newcommand{\rctop}{\bm{\top}}
\newcommand{\rcbot}{\bm{\bot}}
\newcommand{\rctimes}{\bm{\times}}
\newcommand{\rcplus}{\bm{+}}
\newcommand{\rcimp}{\bm{\imp}}
\newcommand{\rccoimp}{\bm{\coimp}}
\newcommand{\rcneg}{\bm{\mathord{\sim}}}
\newcommand{\rcall}[2]{\bm{\Pi}_{#1}\,{#2}}
\newcommand{\rcex}[2]{\bm{\Sigma}_{#1}\,{#2}}
\newcommand{\rcperp}[2]{(#1,#2)\in\RCPerp}
\newcommand{\fragTitle}[1]{\textsf{\textup{#1}}}
\newcommand{\resultName}[1]{\textsf{\textup{#1}}}
\newenvironment{fragBox}[1]
  {\mdfsetup{
    frametitle={\colorbox{white}{\space\fragTitle{#1}\space}},
    frametitleaboveskip=-\ht\strutbox
    }
  \begin{mdframed}[linecolor=lipicsGray]\vspace{-.5cm}\small%
  }
  {\end{mdframed}}
\begin{document}

\maketitle

\begin{abstract}
The $\lambdaPRK$-calculus is a typed $\lambda$-calculus that exploits
the duality between the notions of proof and refutation to provide
a computational interpretation for classical propositional logic.
In this work, we extend $\lambdaPRK$ to encompass classical second-order logic,
by incorporating parametric polymorphism and existential types. The system is
shown to enjoy good computational properties, such as type preservation,
confluence, and strong normalization, which is established by means of a
reducibility argument. We identify a syntactic restriction on proofs that
characterizes exactly the intuitionistic fragment of second-order $\lambdaPRK$,
and we study canonicity results.
\end{abstract}

\section{Introduction}

Constructivism in logic is closely related with the notion of {\em algorithm}
in computer science.
The reason is that a constructive proof of existence of a mathematical object
fulfilling certain properties should provide an {\em effective construction}
of such an object.
For example, a constructive proof of
$\forall x \in \Nat.\,\exists y\in\Nat.\,P(x,y)$
may be understood as an algorithm that takes as input a natural number $x$
and produces as output a natural number $y$ that verifies $P(x,y)$.
The close relationship that exists
between {\em proofs} and {\em computer programs},
and between {\em logical propositions} and {\em program specifications}
(or {\em types}),
can be taken to its maximum consequences
in the form of the {\bf propositions-as-types correspondence}.

This correspondence has given rise to a broad and active area of research,
guided by the principle that
{\em each proof-theoretical notion has a
computational counterpart and vice-versa}.
The interest is that this correspondence
allows logic and computer science to feed back on each other.
Indeed, besides minimal propositional logic,
it has been extended to such settings as
{\em first-order logic}~\cite{debruijn1970mathematical,martinlof1971theory,CoquandH88},
{\em second-order logic}~\cite{thesisgirard,reynolds1974towards},
{\em linear logic}~\cite{girard1987linear},
{\em modal logic}~\cite{bierman2000intuitionistic,DBLP:journals/jacm/DaviesP01}
and {\em classical logic}~\cite{griffin1989formulae,Curien00theduality,symmetric-Barbanera-berardi,lambdamu-parigot}.

The question of what kind of computational system would
constitute a reasonable counterpart for {\bf classical logic},
from the point of view of the propositions-as-types correspondence,
is far from being definitely settled.
This work is part of the quest for a satisfactory answer to this
problem.

\subparagraph{The proofs and refutations calculus ($\lambdaPRK$)}
Until the late 1980s, it was widely thought that it was not possible
to extend the propositions-as-types correspondence to encompass classical
logic. This view changed when Griffin~\cite{griffin1989formulae}
remarked that the classical principle of double negation elimination
($\neg\neg\typ \to \typ$) can be understood as the typing rule for
a control operator $\mathcal{C}$, closely related to Felleisen's $\mathcal{C}$
operator~\cite{FelleisenFKD87} and to \textsc{Scheme}'s \texttt{call/cc}.
Since then, many other calculi for classical logic have been proposed.
Significant examples are Parigot's $\lambda\mu$~\cite{lambdamu-parigot},
Barbanera and Berardi's
symmetric $\lambda$-calculus~\cite{symmetric-Barbanera-berardi},
and
Curien and Herbelin's $\bar{\lambda}\mu\tilde{\mu}$ calculus~\cite{Curien00theduality}.

The starting point of this paper is the logical system $\PRK$,
introduced recently by the authors~\cite{BFlics21}
and extending Nelson's constructive negation~\cite{nelson1949constructible}.
In $\PRK$, propositions become classified along two dimensions:
their {\em sign}, which may be positive or negative,
and their {\em strength}, which may be strong or weak.
This results into four possible {\em modes} to state
\begin{wrapfigure}{r}{5cm}
$
  \begin{array}{l|ll}
                & \text{Positive} & \text{Negative} \\
  \hline
  \text{Strong} & \typ\pp         & \typ\nn \\
  \text{Weak}   & \typ\PP         & \typ\NN
  \vspace{-0.2cm}
  \end{array}
$
\end{wrapfigure} 
a proposition.
Positive ($\typ\pp/\typ\PP$) and negative ($\typ\nn/\typ\NN$)
propositions correspond to affirmations and
denials.
Strong ($\typ\pp/\typ\nn$)
and weak ($\typ\PP/\typ\NN$) propositions
impose restrictions on the shape of canonical proofs:
a canonical proof of a strong affirmation ($\typ\pp$)
must always be constructed with an introduction rule
for the corresponding logical connective,
whereas a canonical proof of a weak affirmation ($\typ\PP$)
must always proceed by {\em reductio ad absurdum},
by assuming the weak denial $\typ\NN$
and proving the strong affirmation $\typ\pp$.

We summarize some important characteristics of $\PRK$.
First,
  $\PRK$ is a {\bf refinement} of classical logic:
  $\typ_1,\hdots,\typ_n\vdash\typtwo$ holds in classical propositional logic
  if and only if
  $\typ\PP_1,\hdots,\typ\PP_n\vdash\typtwo\PP$ holds in $\PRK$.
  In fact $\PRK$ is ``finer'' than classical logic:
  for example,
  the law of excluded middle holds {\em weakly},
  \ie $(\typ\lor\neg\typ)\PP$ is valid in $\PRK$,
  whereas it does not hold {\em strongly},
  \ie $(\typ\lor\neg\typ)\pp$ is not valid (in general) in $\PRK$.
Second,
  the $\lambdaPRK$-calculus, which results
  from assigning proof terms to $\PRK$ proofs
  and endowing it with rewrite rules,
  turns out to be {\bf confluent} and {\bf strongly normalizing},
  besides enjoying {\bf subject reduction}.
Third,
  as a result, $\PRK$ enjoys {\bf canonicity}:
  a proof of a sequent $\vdash \ev$
  without assumptions can always be normalized to a
  {\em canonical} proof, headed by an introduction rule.

\subparagraph{Contributions and structure of this paper}
The $\PRK$ logical system of~\cite{BFlics21} only treats three propositional
connectives: conjunction, disjunction, and negation.
\begin{itemize}
\item
  In~\rsec{second_order_prk}, we {\bf extend the $\lambdaPRK$ calculus}
  to propositional second-order logic.
  We incorporate
  {\bf second-order universal and existential quantification},
  as well as two propositional connectives,
  implication and co-implication.
  The system is shown to {\bf refine classical second-order logic},
  and to enjoy good computational properties: {\bf subject reduction} and
  {\bf confluence}.
  This extension increases the expressivity of the system, allowing
  to encode inductive datatypes such as natural numbers, lists, and trees.
\item
  In~\rsec{bohm_berarducci_encodings},
  we study {\bf B\"ohm--Berarducci encodings}, that is,
  we study how the logical connectives of second-order $\lambdaPRK$
  may be encoded in terms of universal quantification and implication only
  ($\set{\forall,\imp}$).
  The encoding turns out to be only partially satisfactory:
  it simulates proof normalization for an introduction rule
  followed by an elimination rule in the {\em positive} case but,
  unfortunately, not in the negative case.
\item
  In \rsec{strong_normalization} we prove {\bf strong normalization}
  for the second-order $\lambdaPRK$-calculus.
  This is the most technically challenging part of the work.
  In~\cite{BFlics21}, normalization of the propositional
  fragment of $\lambdaPRK$ is attained by means of a translation
  to System~F with non-strictly positive recursion.
  This technique does not carry over
  to the second-order case.
  To prove strong normalization, we use a variant of Girard's technique
  of reducibility candidates and, in particular,
  we resort to a non-trivial adaptation
  of Mendler's proof of strong normalization
  for System~F with non-strictly positive recursion~\cite{mendler1991inductive}.
\item
  In~\rsec{prj},
  we define a subsystem of second-order $\lambdaPRK$,
  called $\lambdaPRJ$, by imposing a syntactic restriction on terms.
  We show that $\lambdaPRJ$
  {\bf refines second-order intuitionistic logic}, in the sense
  that $\lambdaPRJ$ is a conservative extension of second-order intuitionistic
  logic and, conversely,
  second-order intuitionistic logic can be embedded in $\lambdaPRJ$.
\item
  In~\rsec{canonicity} we formulate {\bf canonicity} results for $\lambdaPRK$.
  In particular, we strengthen the canonicity results of~\cite{BFlics21}
  to show that an explicit witness can be extracted from a proof of~$\ev$.
\item
  Finally, in~\rsec{conclusion} we conclude and we discuss
  some related and future work.
\end{itemize}

\section{Second-Order Proofs and Refutations}
\lsec{second_order_prk}

In this section we define a second-order extension of $\lambdaPRK$,
including its syntax, typing rules, and rewriting rules.
We show that
the system enjoys {\bf subject reduction}, it is {\bf confluent},
and it {\bf refines classical second-order logic}~(\rthm{lambdaPRK_refinement}).

\subparagraph{Syntax of types}
We assume given a denumerable set of
{\em type variables} $\btyp,\btyptwo,\btypthree,\hdots$.
The sets of {\em pure types} ($\typ,\typtwo,\hdots$)
and {\em types} ($\ev,\evtwo,\hdots$) are given by:
\[
  \typ ::=
       \btyp
  \mid \colorand{\typ \land \typ}
  \mid \colorand{\typ \lor \typ}
  \mid \colorimp{\typ \imp \typ}
  \mid \colorimp{\typ \coimp \typ}
  \mid \colorneg{\neg\typ}
  \mid \colorall{\all{\btyp}{\typ}}
  \mid \colorall{\ex{\btyp}{\typ}}
  \HS\HS
  \ev ::=
       \typ\pp
  \mid \typ\nn
  \mid \typ\PP
  \mid \typ\NN
\]
where $\typ\coimp\typtwo$ represents {\em co-implication},
the dual connective to implication,
to be understood (roughly) as $\neg\typ\land\typtwo$.
The four modes represent strong affirmation ($\typ\pp$),
strong denial ($\typ\nn$), weak affirmation ($\typ\PP$),
and weak denial ($\typ\NN$).
Note that {\em modes} ($\pp,\nn,\PP,\NN$) can only decorate the
root of a type, \ie they cannot be nested.

Sometimes one may be interested in {\em fragments} of the system.
For instance, the $\lambdaPRK$-calculus of~\cite{BFlics21}
corresponds to the $\set{\land,\lor,\neg}$ fragment.
In this paper we are usually interested in the full
$\set{\land,\lor,\imp,\coimp,\neg,\forall,\exists}$ fragment.
As long as there is little danger of confusion
we still speak of $\lambdaPRK$ without further qualifications.

\subparagraph{Syntax of terms}
Terms of $\lambdaPRK$ are given by the following grammar.
The letter $i$ ranges over $\set{1,2}$.
Some terms are decorated with either ``$\pp$'' or ``$\nn$''.
In the grammar we write ``${}^{\pm}$''
to stand for either ``$\pp$'' or ``$\nn$''.
\[
\begin{array}{rllllll}
  \tm,\tmtwo,\hdots ::= \hspace{-1cm}
    & \var^{\ev}
    & \text{variable}
    & \mid & \strongabs{\ev}{\tm}{\tmtwo}
    & \text{absurdity}
  \\
      \mid & \claslampn{(\var:\ev)}{\tm}
    & \text{$\PP/\NN$ introduction}
    & \mid & \clasappn{\tm}{\tmtwo}
    & \text{$\PP/\NN$ elimination}
  \\
      \mid & \pairpn{\tm}{\tmtwo}
    & \text{\colorand{$\land\pp$/$\lor\nn$ introduction}}
    & \mid & \projipn{\tm}
    & \text{\colorand{$\land\pp$/$\lor\nn$ elimination}}
  \\
      \mid & \inipn{\tm}
    & \text{\colorand{$\lor\pp$/$\land\nn$ introduction}}
    & \mid & \casepn{\tm}{\var:\ev}{\tmtwo}{\vartwo:\evtwo}{\tmthree}
    & \text{\colorand{$\lor\pp$/$\land\nn$ elimination}}
  \\
      \mid & \lampn{(\var:\ev)}{\tm}
    & \text{\colorimp{$\imp\pp$/$\coimp\nn$ introduction}}
    & \mid & \appn{\tm}{\tmtwo}
    & \text{\colorimp{$\imp\pp$/$\coimp\nn$ elimination}}
  \\
      \mid & \copairpn{\tm}{\tmtwo}
    & \text{\colorimp{$\coimp\pp$/$\imp\nn$ introduction}}
    & \mid & \colampn{\tm}{\var:\ev}{\vartwo:\evtwo}{\tmtwo}
    & \text{\colorimp{$\coimp\pp$/$\imp\nn$ elimination}}
  \\
      \mid & \negipn{\tm}
    & \text{\colorneg{$\neg\pp$/$\neg\nn$ introduction}}
    & \mid & \negepn{\tm}
    & \text{\colorneg{$\neg\pp$/$\neg\nn$ elimination}}
  \\
      \mid & \lamtpn{\btyp}{\tm}
    & \text{\colorall{$\forall\pp$/$\exists\nn$ introduction}}
    & \mid & \apptpn{\tm}{\typ}
    & \text{\colorall{$\forall\pp$/$\exists\nn$ elimination}}
  \\
      \mid & \patpn{\typ}{\tm}
    & \text{\colorall{$\exists\pp$/$\forall\nn$ introduction}}
    & \mid & \optpn{\tm}{\btyp}{\var:\ev}{\tmtwo}
    & \text{\colorall{$\exists\pp$/$\forall\nn$ elimination}}
\end{array}
\]
The notions of free and bound occurrences of variables are defined as
expected,
with the typographical convention that subscripted variable occurrences
are binding.
Terms are considered up to $\alpha$-renaming of bound variables.
We write $\fv{\tm}$ for the set of free variables of $\tm$
and $\ftv{\tm}$ for the set of type variables occurring free in $\tm$.
By $\tm\sub{\var}{\tmtwo}$ we mean the capture-avoiding substitution
of the free occurrences of $\var$ in $\tm$ by $\tmtwo$.

Variables are formally annotated with their type, which we usually omit.
Sometimes we also omit the types of bound variables if they are clear
from the context, as well as the name of unused bound variables,
writing ``$\under$'' instead. For example, if $\var \not\in \fv{\tm}$
we may write $\claslamp{\under}{\tm}$ rather than
$\claslamp{(\var:\typ\nn)}{\tm}$.
Application-like operators are assumed to be left-associative;
for example,
$\apptp{\clasapp{\app{\tm}{\tmtwo}}{\tmthree}}{\typ}$
stands for
$\apptp{(\clasapp{(\app{\tm}{\tmtwo})}{\tmthree})}{\typ}$.
In a term of the form $\claslampn{(\var:\ev)}{\tm}$,
the variable $\var$ is called the {\em counterfactual},
and more specifically a {\em negative counterfactual}
in a term of the form $\claslamp{(\var:\typ\NN)}{\tm}$.
In a term of the form $\clasappn{\tm}{\tmtwo}$,
we call $\tm$ the {\em subject} and $\tmtwo$ the {\em argument}.
We write $\gctx$ for arbitrary term {\em contexts},
\ie terms with a single free occurrence
of a distinguished variable $\ctxhole$ called a {\em hole}.
We write $\gctxof{\tm}$ for the variable-capturing substitution of the
hole of $\gctx$ by $\tm$.

\subparagraph{The $\lambdaPRK$ type system}
A {\em typing context}, ranged over by $\tctx,\tctxtwo,\hdots$,
is a finite assignment of variables to types,
written as $\var_1:\ev_1,\hdots,\var_n:\ev_n$.
We write $\dom{\tctx}$ for the {\em domain} of $\tctx$,
\ie the finite set $\set{\var_1,\hdots,\var_n}$.
Typing judgments in $\lambdaPRK$ are of the form $\judg{\tctx}{\tm}{\ev}$,
meaning that $\tm$ has type $\ev$ under the context $\tctx$.
Derivable judgments are given inductively by the typing rules below.

We write $\judgPRK{\tctx}{\tm}{\ev}$
if the typing judgment $\judg{\tctx}{\tm}{\ev}$
is derivable in $\lambdaPRK$.
When we wish to emphasize the logical point of view,
we may write sequents as $\log{\ev_1,\hdots,\ev_n}{\evtwo}$,
and we may write $\logPRK{\ev_1,\hdots,\ev_n}{\evtwo}$ to mean
that there exists a term $\tm$ such that
$\judgPRK{\var_1:\ev_1,\hdots,\var_n:\ev_n}{\tm}{\evtwo}$.

\begin{fragBox}{Basic rules}
\[
\indrule{\Ax}{
  \emptyPremise
}{
  \judg{\tctx,\var:\ev}{\var}{\ev}
}
\indrule{\Abs}{
  \judg{\tctx}{\tm}{\typ\pp}
  \HS
  \judg{\tctx}{\tmtwo}{\typ\nn}
}{
  \judg{\tctx}{\strongabs{\ev}{\tm}{\tmtwo}}{\ev}
}
\indrule{\Icp}{
  \judg{\tctx, \var : \typ\NN}{\tm}{\typ\pp}
}{
  \judg{\tctx}{\claslamp{\var:\typ\NN}{\tm}}{\typ\PP}
}
\]
\[
\indrule{\Icn}{
  \judg{\tctx, \var : \typ\PP}{\tm}{\typ\nn}
}{
  \judg{\tctx}{\claslamp{\var:\typ\PP}{\tm}}{\typ\NN}
}
\indrule{\Ecp}{
  \judg{\tctx}{\tm}{\typ\PP}
  \Hs
  \judg{\tctx}{\tmtwo}{\typ\NN}
}{
  \judg{\tctx}{\clasapp{\tm}{\tmtwo}}{\typ\pp}
}
\indrule{\Ecn}{
  \judg{\tctx}{\tm}{\typ\NN}
  \Hs
  \judg{\tctx}{\tmtwo}{\typ\PP}
}{
  \judg{\tctx}{\clasapn{\tm}{\tmtwo}}{\typ\nn}
}
\]
\end{fragBox}

\begin{fragBox}{Conjunction and disjunction}
\[
\indrule{\colorand{\Iandp}}{
  \judg{\tctx}{\tm}{\typ\PP}
  \HS
  \judg{\tctx}{\tmtwo}{\typtwo\PP}
}{
  \judg{\tctx}{\pairp{\tm}{\tmtwo}}{(\typ \land \typtwo)\pp}
}
\indrule{\colorand{\Iorn}}{
  \judg{\tctx}{\tm}{\typ\NN}
  \HS
  \judg{\tctx}{\tmtwo}{\typtwo\NN}
}{
  \judg{\tctx}{\pairn{\tm}{\tmtwo}}{(\typ \lor \typtwo)\nn}
}
\indrule{\colorand{\Eandp}}{
  \judg{\tctx}{\tm}{(\typ_1 \land \typ_2)\pp}
}{
  \judg{\tctx}{\projip{\tm}}{\typ_i\PP}
}
\]
\[
\indrule{\colorand{\Eorn}}{
  \judg{\tctx}{\tm}{(\typ_1 \lor \typ_2)\nn}
}{
  \judg{\tctx}{\projin{\tm}}{\typ_i\NN}
}
\indrule{\colorand{\Iorp}}{
  \judg{\tctx}{\tm}{\typ_i\PP}
  \HS i \in \set{1, 2}
}{
  \judg{\tctx}{\inip{\tm}}{(\typ_1 \lor \typ_2)\pp}
}
\indrule{\colorand{\Iandn}}{
  \judg{\tctx}{\tm}{\typ_i\NN}
  \HS i \in \set{1, 2}
}{
  \judg{\tctx}{\inin{\tm}}{(\typ_1 \land \typ_2)\nn}
}
\]
\[
\indrule{\colorand{\Eorp}}{
  \judg{\tctx}{\tm}{(\typ \lor \typtwo)\pp}
  \HS
  \judg{\tctx, \var:\typ\PP}{\tmtwo}{\ev}
  \HS
  \judg{\tctx, \vartwo:\typtwo\PP}{\tmthree}{\ev}
}{
  \judg{\tctx}{\casep{\tm}{\var:\typ\PP}{\tmtwo}{\vartwo:\typtwo\PP}{\tmthree}}{\ev}
}
  \] %
  \[ %
\indrule{\colorand{\Eandn}}{
  \judg{\tctx}{\tm}{(\typ \land \typtwo)\nn}
  \HS
  \judg{\tctx, \var:\typ\NN}{\tmtwo}{\ev}
  \HS
  \judg{\tctx, \vartwo:\typtwo\NN}{\tmthree}{\ev}
}{
  \judg{\tctx}{\casen{\tm}{\var:\typ\NN}{\tmtwo}{\vartwo:\typtwo\NN}{\tmthree}}{\ev}
}
\]
\end{fragBox}

\begin{fragBox}{Implication and co-implication}
\[
\indrule{\colorimp{\Iimpp}}{
  \judg{\tctx,\var:\typ\PP}{\tm}{\typtwo\PP}
}{
  \judg{\tctx}{\lamp{\var:\typ\PP}{\tm}}{(\typ\imp\typtwo)\pp}
}
\!\!
\indrule{\colorimp{\Icoimpn}}{
  \judg{\tctx,\var:\typ\NN}{\tm}{\typtwo\NN}
}{
  \judg{\tctx}{\lamn{\var:\typ\NN}{\tm}}{(\typ\coimp\typtwo)\nn}
}
\!\!
\indrule{\colorimp{\Eimpp}}{
  \judg{\tctx}{\tm}{(\typ\imp\typtwo)\pp}
  \Hs\!
  \judg{\tctx}{\tmtwo}{\typ\PP}
}{
  \judg{\tctx}{\app{\tm}{\tmtwo}}{\typtwo\PP}
}
\]
\[
\indrule{\colorimp{\Ecoimpn}}{
  \judg{\tctx}{\tm}{(\typ\coimp\typtwo)\nn}
  \Hs\!
  \judg{\tctx}{\tmtwo}{\typ\NN}
}{
  \judg{\tctx}{\apn{\tm}{\tmtwo}}{\typtwo\NN}
}
\!\!
\indrule{\colorimp{\Icoimpp}}{
  \judg{\tctx}{\tm}{\typ\NN}
  \HS\!
  \judg{\tctx}{\tmtwo}{\typtwo\PP}
}{
  \judg{\tctx}{\copairp{\tm}{\tmtwo}}{(\typ\coimp\typtwo)\pp}
}
\!\!
\indrule{\colorimp{\Iimpn}}{
  \judg{\tctx}{\tm}{\typ\PP}
  \Hs\!
  \judg{\tctx}{\tmtwo}{\typtwo\NN}
}{
  \judg{\tctx}{\copairn{\tm}{\tmtwo}}{(\typ\imp\typtwo)\nn}
}
\]
\[
\indrule{\colorimp{\Ecoimpp}}{
  \judg{\tctx}{\tm}{(\typ\coimp\typtwo)\pp}
  \Hs\!\!
  \judg{\tctx,\var:\typ\NN,\vartwo:\typtwo\PP}{\tmtwo}{\ev}
}{
  \judg{\tctx}{\colamp{\tm}{\var:\typ\NN}{\vartwo:\typtwo\PP}{\tmtwo}}{\ev}
}
\!\!\!
\indrule{\colorimp{\Eimpn}}{
  \judg{\tctx}{\tm}{(\typ\imp\typtwo)\nn}
  \Hs\!\!
  \judg{\tctx,\var:\typ\PP,\vartwo:\typtwo\NN}{\tmtwo}{\ev}
}{
  \judg{\tctx}{\colamn{\tm}{\var:\typ\PP}{\vartwo:\typtwo\NN}{\tmtwo}}{\ev}
}
\]
\end{fragBox}

\begin{fragBox}{Negation}
\[
\indrule{\colorneg{\Inotp}}{
  \judg{\tctx}{\tm}{\typ\NN}
}{
  \judg{\tctx}{\negip{\tm}}{(\neg\typ)\pp}
}
\indrule{\colorneg{\Inotn}}{
  \judg{\tctx}{\tm}{\typ\PP}
}{
  \judg{\tctx}{\negin{\tm}}{(\neg\typ)\nn}
}
\indrule{\colorneg{\Enotp}}{
  \judg{\tctx}{\tm}{(\neg\typ)\pp}
}{
  \judg{\tctx}{\negep{\tm}}{\typ\NN}
}
\indrule{\colorneg{\Enotn}}{
  \judg{\tctx}{\tm}{(\neg\typ)\nn}
}{
  \judg{\tctx}{\negen{\tm}}{\typ\PP}
}
\]
\end{fragBox}

\begin{fragBox}{Second-order quantification}
\[
  \indrule{\colorall{\Iallp}}{
    \tctx \vdash \tm : \typ\PP
    \HS
    \btyp \notin \ftv{\tctx}
  }{
    \tctx \vdash \lamtp{\btyp}{\tm} : (\all{\btyp}{\typ})\pp
  }
  \indrule{\colorall{\Iexn}}{
    \tctx \vdash \tm : \typ\NN
    \HS
    \btyp \notin \ftv{\tctx}
  }{
    \tctx \vdash \lamtn{\btyp}{\tm} : (\ex{\btyp}{\typ})\nn
  }
  \indrule{\colorall{\Eallp}}{
    \tctx \vdash \tm : (\all{\btyp}{\typtwo})\pp
  }{
    \tctx \vdash \apptp{\tm}{\typ} : \typtwo\PP\sub{\btyp}{\typ}
  }
\]
\[
  \indrule{\colorall{\Eexn}}{
    \tctx \vdash \tm : (\ex{\btyp}{\typtwo})\nn
  }{
    \tctx \vdash \apptn{\tm}{\typ} : \typtwo\NN\sub{\btyp}{\typ}
  }
  \indrule{\colorall{\Iexp}}{
    \tctx \vdash \tm : \typtwo\PP\sub{\btyp}{\typ}
  }{
    \tctx \vdash \patp{\typ}{\tm} : (\ex{\btyp}{\typtwo})\pp
  }
  \indrule{\colorall{\Ialln}}{
    \tctx \vdash \tm : \typtwo\NN\sub{\btyp}{\typ}
  }{
    \tctx \vdash \patn{\typ}{\tm} : (\all{\btyp}{\typtwo})\nn
  }
\]
\[
  \indrule{\colorall{\Eexp}}{
    \tctx \vdash \tm : (\ex{\btyp}{\typ})\pp
    \HS
    \tctx,\var:\typ\PP \vdash \tmtwo : \ev
    \HS
    \btyp \notin \ftv{\tctx,\ev}
  }{
    \tctx \vdash \optp{\tm}{\btyp}{\var:\typ\PP}{\tmtwo} : \ev
  }
  \] %
  \[ %
  \indrule{\colorall{\Ealln}}{
    \tctx \vdash \tm : (\all{\btyp}{\typ})\nn
    \HS
    \tctx,\var:\typ\NN \vdash \tmtwo : \ev
    \HS
    \btyp \notin \ftv{\tctx,\ev}
  }{
    \tctx \vdash \optn{\tm}{\btyp}{\var:\typ\NN}{\tmtwo} : \ev
  }
\]
\end{fragBox}

The typing rules may be informally explained as follows.
$\Ax$ is the standard axiom.
The {\em absurdity rule} ($\Abs$) allows to derive any conclusion
from a strong proof and a strong refutation of $\typ$.
Introduction and elimination rules for
weak affirmation and denial ($\Icpn$, $\Ecpn$)
follow the principle that a weak affirmation $\typ\PP$
behaves like an implication ``$\typ\NN \to \typ\pp$''.
Indeed, $\Icp$ and $\Ecp$ have the same structure as the introduction
and the elimination rule for an implication ``$\typ\NN \to \typ\pp$'',
where $\claslamp{\var}{\tm}$ and $\clasapp{\tm}{\tmtwo}$
are akin to $\lambda$-abstraction and application.
The intuition behind this is that $\typ\PP$ is the type of
weak proofs of a proposition $\typ$,
where a weak proof proceeds by {\em reductio ad absurdum},
assuming a weak refutation ($\typ\NN$),
and providing a {\em strong} proof~($\typ\pp$).
Dually, a weak denial $\typ\NN$ behaves like ``$\typ\PP \to \typ\nn$''.

The remaining rules are introduction and elimination rules
for positive and negative strong connectives.
These rules come in dual pairs: {\em for each rule for a connective with
positive sign there is a symmetric rule for the dual connective
with negative sign}.
For example, the introduction rule for positive conjunction ($\Iandp$)
states that to strongly prove $\typ\land\typtwo$
it suffices to weakly prove $\typ$ and weakly prove $\typtwo$.
Dually, the introduction rule for negative disjunction ($\Iorn$)
states that to strongly refute $\typ\lor\typtwo$
it suffices to weakly refute $\typ$ and weakly refute $\typtwo$.

The introduction and elimination rules for most logical connectives
($\land$, $\lor$, $\imp$, $\coimp$, $\forall$, $\exists$)
are mechanically derived from the standard natural deduction rules
following this methodology, taking in account that the dual pairs
of connectives are
$(\land,\lor)$, $(\imp,\coimp)$, and $(\forall,\exists)$.
In general,
{\em introduction rules have weak premises and strong conclusions},
whereas {\em elimination rules have strong premises and weak conclusions}.

The typing rules for conjunction ($\Iandp,\Eandp,\Iandn,\Eandn$),
disjunction ($\Iorp,\Eorp,\Iorn,\Eorn$),
positive implication ($\Iimpp,\Eimpp$),
negative co-implication ($\Icoimpn,\Ecoimpn$),
universal ($\Iallp,\Eallp,\Ialln,\Ealln$)
and existential quantification ($\Iexp,\Eexp,\Iexn,\Eexn$)
are typical, so for instance
$\pairpn{\tm}{\tmtwo}$ forms a pair,
$\projipn{\tm}$ is the $i$-th projection,
$\inipn{\tm}$ is the $i$-th injection into a disjoint union type,
$\casepn{\tm}{\var}{\tmtwo}{\vartwo}{\tmthree}$
is a pattern matching construct, and so on.
The rules differ from usual typed $\lambda$-calculi
only in the signs and strengths that decorate premises and conclusions.

The typing rules for positive co-implication ($\Icoimpp,\Ecoimpp$),
sometimes called {\em subtraction}~\cite{Crolard01},
and negative implication ($\Iimpn,\Eimpn$)
follow the rough interpretation of $\typ\coimp\typtwo$
as $\neg\typ\land\typtwo$,
so a strong proof of a co-implication $\typ\coimp\typtwo$
is given by a pair $\copairp{\tm}{\tmtwo}$
comprising a weak refutation of $\typ$ and a weak proof of $\typtwo$.
Dually, a strong refutation of $\typ\imp\typtwo$
is given by a pair $\copairn{\tm}{\tmtwo}$
comprising a weak proof of $\typ$ and a weak refutation of $\typtwo$.
The eliminators $\colampn{\tm}{\var}{\vartwo}{\tmtwo}$ are presented
as generalized elimination rules, in multiplicative style.

The typing rules for negation ($\Inotp,\Enotp,\Inotn,\Enotn$) express
that to strongly prove $\neg\typ$ is the same as to weakly refute $\typ$,
and dually for strong refutations of $\neg\typ$.

\begin{example}
Let $\top \eqdef \all{\btyp}{(\btyp\imp\btyp)}$
and $\bot \eqdef \all{\btyp}{\btyp}$.
Recall from~\cite{BFlics21} that
the weak non-contradiction principle $\log{\tctx}{(\typ\land\neg\typ)\NN}$
holds in $\lambdaPRK$.
Then $\log{}{\top\PP}$ and $\log{\bot\PP}{\typ\PP}$
hold, where $\typ$ stands for any pure type:
\[
  \begin{array}{c@{\hspace{2cm}}c}
    \indruleN{\Icp}{
      \indruleN{\Iallp}{
        \indruleN{\Icp}{
          \indruleN{\Iimpp}{
            \indruleN{\Ax}{
            }{
              \log{\top\NN,(\btyp\imp\btyp)\NN,\btyp\PP}{\btyp\PP}
            }
          }{
            \log{\top\NN,(\btyp\imp\btyp)\NN}{(\btyp\imp\btyp)\pp}
          }
        }{
          \log{\top\NN}{(\btyp\imp\btyp)\PP}
        }
      }{
        \log{\top\NN}{\top\pp}
      }
    }{
      \log{}{\top\PP}
    }
  &
    \indruleN{\Abs'}{
      \indruleN{\Ax}{
      }{\log{\bot\PP}{\bot\PP}}
      \HS
      \indruleN{\Icn}{
        \indruleN{\Ialln}{
          \indruleN{}{
            \text{(By weak non-contradiction.)}
          }{
            \log{\bot\PP,\bot\PP}{(\typtwo\land\neg\typtwo)\NN}
          }
        }{
          \log{\bot\PP,\bot\PP}{\bot\nn}
        }
      }{
        \log{\bot\PP}{\bot\NN}
      }
    }{
      \log{\bot\PP}{\typ\PP}
    }
  \end{array}
\]
\end{example}

\subparagraph{The $\lambdaPRK$-calculus}
The opposite type $\ev\OP$ of a given type $\ev$
is defined by flipping the sign,
\ie
$(\typ\PP)\OP \eqdef \typ\NN$;
$(\typ\NN)\OP \eqdef \typ\PP$;
$(\typ\pp)\OP \eqdef \typ\nn$;
and $(\typ\nn)\OP \eqdef \typ\pp$.
If $\judgPRK{\tctx}{\tm}{\ev}$ and $\judgPRK{\tctx}{\tmtwo}{\ev\OP}$
then a term $\abs{\evtwo}{\tm}{\tmtwo}$ may be constructed
such that $\judgPRK{\tctx}{\abs{\evtwo}{\tm}{\tmtwo}}{\evtwo}$,
as follows:
\[
  \begin{array}{rl@{\hspace{.5cm}}rl}
  \abs{\evtwo}{\tm}{\tmtwo} \eqdef \strongabs{\evtwo}{\tm}{\tmtwo}
    & \text{if $\ev = \typ\pp$}
  &
  \abs{\evtwo}{\tm}{\tmtwo} \eqdef \strongabs{\evtwo}{\tmtwo}{\tm}
    & \text{if $\ev = \typ\nn$}
  \\
  \abs{\evtwo}{\tm}{\tmtwo} \eqdef
    \strongabs{\evtwo}{(\clasapp{\tm}{\tmtwo})}{(\clasapn{\tmtwo}{\tm})}
    & \text{if $\ev = \typ\PP$}
  &
  \abs{\evtwo}{\tm}{\tmtwo} \eqdef
    \strongabs{\evtwo}{(\clasapp{\tmtwo}{\tm})}{(\clasapn{\tm}{\tmtwo})}
    & \text{if $\ev = \typ\NN$}
  \\
  \end{array}
\]

We endow typable $\PRK$ terms with a notion of reduction,
defining the {\bf $\lambdaPRK$-calculus}
by a binary rewriting relation $\toa{}$
on typable $\PRK$ terms,
given by the rewriting rules below,
and closed by compatibility under arbitrary contexts.
Rules are presented following the convention that,
if many occurrences of ``$\pm$'' appear in the same expression,
they are all supposed to stand for the same sign:
\[
{\small
  \begin{array}{r@{\,}l@{\,}l@{\,\,\,}@{\,\,\,}r@{\,}l@{\,}l}
    \clasappn{(\claslampn{\var}{\tm})}{\tmtwo}
    & \toa{\ruleBetaC} &
    \tm\sub{\var}{\tmtwo}
  \\
    \projipn{\pairpn{\tm_1}{\tm_2}}
    & \toa{\colorand{\ruleProj}} &
    \tm_i
  &
    \casepn{(\inipn{\tm})}{\var}{\tmtwo_1}{\var}{\tmtwo_2}
    & \toa{\colorand{\ruleCase}} &
    \tmtwo_i\sub{\var}{\tm}
  \\
    \appn{(\lampn{\var}{\tm})}{\tmtwo}
    & \toa{\colorimp{\ruleBetaB}} &
    \tm\sub{\var}{\tmtwo}
  &
    \colampn{\copairpn{\tm}{\tmtwo}}{\var}{\vartwo}{\tmthree}
    & \toa{\colorimp{\ruleCoproj}} &
    \tmthree\sub{\var}{\tm}\sub{\vartwo}{\tmtwo}
  \\
    \negepn{(\negipn{\tm})}
    & \toa{\colorneg{\ruleNeg}} &
    \tm
  \\
    \apptpn{(\lamtpn{\btyp}{\tm})}{\typ}
    & \toa{\colorall{\ruleAppT}} &
    \tm\sub{\btyp}{\typ}
  &
    \optpn{\patpn{\typ}{\tm}}{\btyp}{\var}{\tmtwo}
    & \toa{\colorall{\ruleOpen}} &
    \tmtwo\sub{\btyp}{\typ}\sub{\var}{\tm}
  \\
    \strongabs{}{\pairp{\tm_1}{\tm_2}}{\inin{\tmtwo}}
    & \toa{\colorand{\ruleAbsPairInj}} &
    \abs{}{\tm_i}{\tmtwo}
  &
    \strongabs{}{\inip{\tm}}{\pairn{\tmtwo_1}{\tmtwo_2}}
    & \toa{\colorand{\ruleAbsInjPair}} &
    \abs{}{\tm}{\tmtwo_i}
  \\
    \strongabs{}{\lamp{\var}{\tm}}{\copairn{\tmtwo}{\tmthree}}
    & \toa{\colorimp{\ruleAbsLamCopair}} &
    \abs{}{\tm\sub{\var}{\tmtwo}}{\tmthree}
  &
    \strongabs{}{\copairp{\tm}{\tmtwo}}{\lamn{\var}{\tmthree}}
    & \toa{\colorimp{\ruleAbsCopairLam}} &
    \abs{}{\tmtwo}{\tmthree\sub{\var}{\tm}}
  \\
    \strongabs{}{(\negip{\tm})}{(\negin{\tmtwo})}
    & \toa{\colorneg{\ruleAbsNeg}} &
    \abs{}{\tm}{\tmtwo}
  \\
    \strongabs{}{(\lamtp{\btyp}{\tm})}{\patn{\typ}{\tmtwo}}
    & \toa{\colorall{\ruleAbsLamPairT}} &
    \abs{}{\tm\sub{\btyp}{\typ}}{\tmtwo}
  &
    \strongabs{}{\patp{\typ}{\tm}}{(\lamtn{\btyp}{\tmtwo})}
    & \toa{\colorall{\ruleAbsPairLamT}} &
    \abs{}{\tm}{\tmtwo\sub{\btyp}{\typ}}
  \end{array}
}
\]

The $\lambdaPRK$-calculus has two kinds of rules:
``$\beta$'' rules, akin to proof normalization rules in natural deduction,
and ``$\strongabs{}{}{}$'' rules, akin to cut elimination rules in
sequent calculus.
The $\ruleBetaC$ rules are exactly like the standard $\beta$-rule
of the $\lambda$-calculus, with the difference that the
abstraction $\claslamp{\var:\typ\NN}{\tm}$
is not an introduction of an implication ``$\typ\NN \to \typ\pp$''
but rather the introduction of a weak affirmation $\typ\PP$.
The $\ruleBetaB$ rules also describe a similar behavior,
where $\lamp{\var:\typ\PP}{\tm}$ is of type $(\typ\to\typtwo)\PP$.
The remaining $\beta$ rules are straightforward,
encoding projection ($\ruleProj$), pattern matching ($\ruleCase$),
etcetera.

The $\strongabs{}{}$ rules simplify
an absurdity $(\strongabs{}{\tm}{\tmtwo})$ as much as possible,
but they are never able to get rid of the absurdity.
Indeed, note that the right-hand side of all the $\strongabs{}{}$ rules
include the generalized absurdity operator ($\abs{}{}{}$),
which is in turn defined in terms of the absurdity operator
($\strongabs{}{}$).
Recall, for example, that if $\tm:\typ\PP$ and $\tmtwo:\typ\NN$
then
$\abs{}{\tm}{\tmtwo} =
 \strongabs{}{(\clasapp{\tm}{\tmtwo})}{(\clasapn{\tmtwo}{\tm})}$.
This means that an instance of the absurdity
is a proof which provides {\em no relevant information}.
For example, if $\ev = (\typtwo\land\typthree)\pp$
then $\projip[1]{\strongabs{\ev}{\tm}{\tmtwo}}$ is a well-formed term
of type $\typtwo\PP$;
but the argument of the projection is a term $\strongabs{\ev}{\tm}{\tmtwo}$
which may never become a pair $\pairp{p}{q}$.
This means that the normal form will be a ``stuck''
term of the form $\projip[1]{\strongabs{\ev}{\tm'}{\tmtwo'}}$.

\begin{example}[Reduction in $\lambdaPRK$]
Let $\judgPRK{\tctx}{\tm}{\typ\PP}$ and $\judgPRK{\tctx}{\tmtwo}{\typ\NN}$.
Then:
\[
\begin{array}{ll}
&
  \strongabs{}{
    (\lamtp{\btyp}{
      \claslamp{\under:(\btyp\imp\btyp)\NN}{
        \lamp{\var:\btyp\PP}{
          \var
        }
      }
    })
  }{
    \patn{\typ}{
      \claslamn{\under:(\btyp\imp\btyp)\PP}{
        \copairn{
          \tm
        }{
          \tmtwo
        }
      }
    }
  }
\\
\toa{\ruleAbsLamPairT} &
  \abs{}{
    (\claslamp{\under:(\typ\imp\typ)\NN}{
        \lamp{\var:\typ\PP}{
          \var
        }
      })
  }{
    (\claslamn{\under:(\typ\imp\typ)\PP}{
      \copairn{
        \tm
      }{
        \tmtwo
      }
    })
  }
\\
= &
    (\clasapp{
      (\claslamp{\under:(\typ\imp\typ)\NN}{
          \lamp{\var:\typ\PP}{
            \var
          }
        })
    }{
      (\claslamn{\under:(\typ\imp\typ)\PP}{
        \copairn{
          \tm
        }{
          \tmtwo
        }
      })
    })
    \,\strongabssym{}\,
\\
&
    (\clasapn{
      (\claslamn{\under:(\typ\imp\typ)\PP}{
        \copairn{
          \tm
        }{
          \tmtwo
        }
      })
    }{
      (\claslamp{\under:(\typ\imp\typ)\NN}{
          \lamp{\var:\typ\PP}{
            \var
          }
        })
    })
\\
\rtoa{{\small\ruleBClasPN}} &
  \strongabs{}{
    (\lamp{\var:\typ\PP}{ \var })
  }{
    \copairn{
      \tm
    }{
      \tmtwo
    }
  }
\,\,\toa{\ruleAImp}\,\, 
  \abs{}{\tm}{\tmtwo}
  =
  \strongabs{}{
    (\clasapp{
      \tm
    }{
      \tmtwo
    })
  }{
    (\clasapn{
      \tmtwo
    }{
      \tm
    })
  }
\end{array}
\]
\end{example}

An $\eta$-like rewriting rule can also be incorporated to $\lambdaPRK$
as done in~\cite[Thm.~37]{BFlics21},
declaring that
$\claslampn{\var}{(\clasappn{\tm}{\var})} \toa{\ruleEta} \tm$
if $\var\notin\fv{\tm}$.
The $\lambdaPRK$-calculus (with or without the $\ruleEta$ rule)
enjoys the following properties:
\begin{theorem}
\lthm{subject_reduction}
\lthm{lambdaPRK_confluence}
\lthm{lambdaPRK_refinement}
\quad
\begin{enumerate}
\item \resultName{Subject reduction.}
  If $\judgPRK{\tctx}{\tm}{\ev}$ and $\tm \toa{} \tmtwo$,
  then $\judgPRK{\tctx}{\tmtwo}{\ev}$.
\item \resultName{Confluence.}
  The $\lambdaPRK$-calculus has the Church--Rosser property.
\item \resultName{Classical refinement.}
  $\log{\typ\PP_1,\hdots,\typ\PP_n}{\typtwo\PP}$
  holds in $\lambdaPRK$
  if and only if
  $\log{\typ_1,\hdots,\typ_n}{\typtwo}$
  holds in the classical second-order natural deduction system $\NK$.
\end{enumerate}
\end{theorem}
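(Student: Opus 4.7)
The plan is to tackle the three statements in the order given, using standard proof-theoretic techniques.

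\textbf{Subject reduction (1).} I would first prove the two standard substitution lemmas: (i) if $\judgPRK{\tctx,\var:\ev}{\tm}{\evtwo}$ and $\judgPRK{\tctx}{\tmthree}{\ev}$ then $\judgPRK{\tctx}{\tm\sub{\var}{\tmthree}}{\evtwo}$, and (ii) if $\judgPRK{\tctx}{\tm}{\evtwo}$ then $\judgPRK{\tctx\sub{\btyp}{\typ}}{\tm\sub{\btyp}{\typ}}{\evtwo\sub{\btyp}{\typ}}$ for any pure type $\typ$. Both follow by routine induction on the derivation, with the usual $\alpha$-renaming at binder cases. I would also verify the derived rule for the generalized absurdity operator: if $\judgPRK{\tctx}{\tm}{\ev}$ and $\judgPRK{\tctx}{\tmtwo}{\ev\OP}$ then $\judgPRK{\tctx}{\abs{\evtwo}{\tm}{\tmtwo}}{\evtwo}$, by case analysis on the four possible modes of $\ev$, reading off the definition and applying $\Abs$, $\Ecp$, $\Ecn$ as appropriate. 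With these in hand, subject reduction becomes a rule-by-rule check: each $\beta$-rule reconstructs the type of the RHS via (i), each quantifier-rule via (ii), and each $\strongabssym{}{}{}$-rule via the derived absurdity rule. Closure under arbitrary contexts is a straightforward induction on the context.

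\textbf{Confluence (2).} I would use the Tait--Martin-Löf technique, defining a parallel reduction $\Rrightarrow$ that contracts any set of non-overlapping redexes simultaneously, closed under every constructor and satisfying $\to\;\subseteq\;\Rrightarrow\;\subseteq\;\to^*$. Every rewrite rule is left-linear, and the possible overlaps between left-hand sides occur at dual sign positions whose patterns are syntactically disjoint, so the diamond property for $\Rrightarrow$ reduces to commuting term and type substitutions and is proved by induction on the parallel-reduction derivation. Confluence of $\to$ then follows by the standard squaring argument. The new second-order rules ($\ruleAppT$, $\ruleOpen$, $\ruleAbsLamPairT$, $\ruleAbsPairLamT$) do not interact with the propositional redexes of~\cite{BFlics21}, so the case analysis extends the earlier proof smoothly.

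\textbf{Classical refinement (3).} Each direction is proved by induction. For the forward direction $\NK \Rightarrow \lambdaPRK$, I would simulate every NK rule by a short $\lambdaPRK$ derivation operating on weak formulas: an NK introduction for a connective is rendered by the corresponding positive strong introduction (e.g.\ $\Iandp$) wrapped in $\Icp$ to lift the result from $\typ\pp$ to $\typ\PP$; an NK elimination is rendered by $\Ecp$ followed by the corresponding strong elimination; the quantifier rules translate analogously via $\Iallp$/$\Eallp$ and $\Iexp$/$\Eexp$; and the classical axiom of NK is absorbed using the weak excluded middle $(\typ\lor\neg\typ)\PP$, which is derivable in $\lambdaPRK$ as already noted in~\cite{BFlics21}. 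For the backward direction, I would define a forgetful translation on types by $|\typ\PP| = |\typ\pp| = \typ$ and $|\typ\NN| = |\typ\nn| = \neg\typ$, extended pointwise to contexts, and show by induction on the PRK derivation that $\judgPRK{\tctx}{\tm}{\ev}$ entails $\logNK{|\tctx|}{|\ev|}$: the $\Abs$ rule becomes ex falso; the $\Icp$/$\Ecp$ pair becomes classical double-negation introduction/elimination (from $\neg\typ \vdash \typ$ conclude $\typ$, and conversely); and every dual rule on a negative connective translates to its classical contrapositive, which is always NK-derivable.

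The main obstacle is the forward direction of the refinement: one has to be methodical about where the modes $\pp,\nn,\PP,\NN$ are introduced and discharged so that the simulation commutes with every NK structural rule, and the fact that the weak excluded middle is derivable in $\lambdaPRK$ is precisely what supplies the classical content needed to absorb NK's non-intuitionistic axiom.
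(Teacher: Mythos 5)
Your treatment of subject reduction matches the paper's: the same two substitution lemmas (term substitution and admissible type substitution) plus the derived typing rule for the generalized absurdity operator $\abs{}{}{}$, followed by a rule-by-rule check of the reduction rules. For confluence you take a genuinely different route: the paper simply observes that $\lambdaPRK$ can be modeled as an orthogonal higher-order rewriting system in the sense of Nipkow and invokes the general confluence theorem for that class, whereas you propose a self-contained Tait--Martin-L\"of parallel-reduction argument. Both work, since the system is left-linear and the left-hand sides of dual rules are syntactically disjoint, so there are no critical pairs; the paper's route is shorter but leans on an external metatheorem, yours is longer but elementary. Your conservativity direction of the refinement is also the paper's proof: the same forgetful translation $\classem{\typ\PP}=\classem{\typ\pp}=\typ$ and $\classem{\typ\NN}=\classem{\typ\nn}=\neg\typ$, with $\Icp$ mapped to a classically valid principle (it is \emph{consequentia mirabilis} rather than double negation elimination, but the two are classically interderivable, so nothing breaks).

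The one genuine gap is in the embedding direction, in exactly the elimination cases that the paper flags as subtle. Your recipe ``an NK elimination is rendered by $\Ecp$ followed by the corresponding strong elimination'' cannot be followed literally: $\Ecp$ requires a proof of $\typ\PP$ \emph{and} a proof of $\typ\NN$, and no term of type, say, $(\typ_1\land\typ_2)\NN$ is available in a context consisting only of weak affirmations. The actual construction (see the \NDEand{} case in the paper's appendix) must first open an $\Icp$ for the \emph{conclusion} $\typ_i\PP$, obtaining a negative counterfactual $\var:\typ_i\NN$; this counterfactual is then contraposed into the required negative argument (here $\claslamn{\under}{\inin{\var}}$ of type $(\typ_1\land\typ_2)\NN$) so that $\Ecp$ can fire, and the result of the strong elimination, of type $\typ_i\PP$, is finally applied to $\var$ once more to land in $\typ_i\pp$ and close the outer $\Icp$. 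The same pattern --- counterfactual of the conclusion, contraposition to feed the weak elimination, re-application to the counterfactual --- is what makes \NDEor, \NDEimp, \NDEall{} and \NDEex{} go through; without it the elimination cases of your induction do not type-check. Your closing remark correctly identifies the forward direction as the main obstacle, but attributes the difficulty to bookkeeping of modes rather than to this missing mechanism, which is the actual content of those cases.
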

\begin{proof}
\resultName{Subject reduction}
is a straightforward extension of~\cite[Prop.~24]{BFlics21},
with minor adaptations to account for
\colorimp{implication, co-implication},
and \colorall{second-order quantification}.
\resultName{Confluence} follows from the fact that $\lambdaPRK$
can be modeled as an orthogonal higher-order rewriting system
in the sense of Nipkow~\cite{nipkow1991higher}.
\resultName{Classical refinement}
is an extension of Prop.~38 and Thm.~39 from~\cite{BFlics21};
this theorem has two parts:
\begin{itemize}
\item
  The ``only if'' direction
  ($\logPRK{\typ\PP_1,\hdots,\typ\PP_n}{\typtwo\PP}$ implies
   $\logNK{\typ_1,\hdots,\typ_n}{\typtwo}$)
  means that $\PRK$ is a {\em conservative extension}
  of classical second-order logic.
  To prove this statement, we generalize the statement as follows:
  if $\logPRK{\ev_1,\hdots,\ev_n}{\evtwo}$
  then $\logNK{\classem{\ev_1},\hdots,\classem{\ev_n}}{\classem{\evtwo}}$,
  where $\classem{\typ\PP} = \classem{\typ\pp} = \typ$
  and $\classem{\typ\NN} = \classem{\typ\nn} = \neg\typ$.
  This can be shown by a straightforward induction on the derivation
  of the first judgment.
\item
  The ``if'' direction
  ($\logNK{\typ_1,\hdots,\typ_n}{\typtwo}$ implies
   $\logPRK{\typ\PP_1,\hdots,\typ\PP_n}{\typtwo\PP}$)
  means that classical logic can be {\em embedded} into $\PRK$.
  The essence of the proof is showing that all the inference rules of
  classical second-order natural deduction are admissible
  in $\lambdaPRK$, taking the weak affirmation of all propositions
  (\ie decorating all formulas with ``$\PP$'').
  Some cases are subtle, especially elimination rules.
  Here we show the introduction and elimination rules for quantifiers
  (\SeeAppendix{see Sections~\ref{section:appendix:subject_reduction},
  \ref{section:appendix:prk_refinement} in the appendix for complete
  proofs}):
  \begin{enumerate}
  \item {\em Universal introduction.}
    Let $\judgPRK{\tctx}{\tm}{(\all{\btyp}{\typtwo})\PP}$,
    and define $\apptc{\tm}{\typ}$ as the following term:
      $
        \claslamp{(\var:(\typtwo\sub{\btyp}{\typ})\NN)}{
          (\clasapp{
            \apptp{
              (\clasapp{
                \tm
              }{
                \claslamp{(\under:(\all{\btyp}{\typtwo})\PP)}{
                  \patn{
                    \typ
                  }{
                    \var
                  }
                }
              })
            }{
              \typ
            }
          }{
            \var
          })
        }
      $.
    Then we have that
      $\judgPRK{\tctx}{\apptc{\tm}{\typ}}{\typtwo\sub{\btyp}{\typ}\PP}$.
  \item {\em Universal elimination.}
    Let $\judgPRK{\tctx}{\tm}{(\all{\btyp}{\typtwo})\PP}$.
    Define $\apptc{\tm}{\typ}$ as the following term:
      $
        \claslamp{(\var:(\typtwo\sub{\btyp}{\typ})\NN)}{
          (\clasapp{
            \apptp{
              (\clasapp{
                \tm
              }{
                \claslamp{(\under:(\all{\btyp}{\typtwo})\PP)}{
                  \patn{
                    \typ
                  }{
                    \var
                  }
                }
              })
            }{
              \typ
            }
          }{
            \var
          })
        }
      $.
      Then we have that
      $\judgPRK{\tctx}{\apptc{\tm}{\typ}}{\typtwo\sub{\btyp}{\typ}\PP}$.
  \item {\em Existential introduction.}
    Let $\judgPRK{\tctx}{\tm}{(\typtwo\sub{\btyp}{\typ})\PP}$.
    Define
      $\patc{\typ}{\tm}$
    as the following term:
      $
        \claslamp{(\under:(\ex{\btyp}{\typtwo})\NN)}{
          \patp{\typ}{
            \tm
          }
        }
      $.
    Then we have that
    $\judgPRK{\tctx}{\patc{\typ}{\tm}}{(\ex{\btyp}{\typtwo})\PP}$.
  \item {\em Existential elimination.}
    Let
    $\judgPRK{\tctx}{\tm}{(\ex{\btyp}{\typ})\PP}$
    and
    $\judgPRK{\tctx,\var:\typ\PP}{\tmtwo}{\typtwo\PP}$
    with $\btyp \not\in \ftv{\tctx,\ev}$.
    Define
      $\optj{\tm}{\btyp}{\var}{\tmtwo}$
    as the following term:
      $
        \claslamp{(\vartwo:\typtwo\NN)}{
          (\clasapp{
            \optp{
              \tm'
            }{\btyp}{\var}{
              \tmtwo
            }
          }{
            \vartwo
          })
        }
      $
    where $\tm' \eqdef 
            \clasapp{
              \tm
            }{
              \claslamn{(\under:(\ex{\btyp}{\typ})\PP)}{
                \lamtn{\btyp}{
                  \claslamn{(\var:\typ\PP)}{
                    (\abs{\typ\nn}{
                      \tmtwo
                    }{ 
                      \vartwo
                    })
                  }
                }
              }
            }$.
    Then
    $\judgPRK{\tctx}{\optc{\tm}{\btyp}{\var}{\tmtwo}}{\typtwo\PP}$.
    \qedhere
  \end{enumerate}
\end{itemize}
\end{proof}


\section{B\"ohm--Berarducci Encodings }
\lsec{bohm_berarducci_encodings}

It is well-known that, in System~F, logical connectives
such as $\top$, $\bot$, $\land$, $\lor$, $\exists$,
as well as inductive data types,
can be represented using only $\forall$ and $\imp$
by means of their {\em B\"ohm--Berarducci encodings}~\cite{BohmB85},
which can be understood as
{\em universal properties} or {\em structural induction principles}.
B\"ohm--Berarducci encodings can be reproduced in~$\lambdaPRK$.
In the following subsections we study the encoding of connectives
in terms of universal quantification and implication.

The encoding of conjunction, for instance, can be taken to be
$\typ\land\typtwo \eqdef \all{\btyp}{((\typ\to\typtwo\to\btyp)\to\btyp)}$.
Then {\bf positive typing rules for conjunction},
analogous to $\Iandp$ and $\Eandp$
are derivable, and their constructions simulate the $\ruleBAndP$ rule.
Indeed, let $X := (\typ_1\imp\typ_2\imp\typ_i)\imp\btyp$
and $Y := \typ_1\imp\typ_2\imp\btyp$.
Moreover,
let $X_i := (\typ_1\imp\typ_2\imp\typ_i)\imp\typ_i$
and $Y_i := \typ_1\imp\typ_2\imp\typ_i$.
Given $\judg{\tctx}{\tm_1}{\typ_1\PP}$ and $\judg{\tctx}{\tm_2}{\typ_2\PP}$
and $\judg{\tctx}{\tmtwo}{(\typ_1\land\typ_2)\pp}$, define:
\begin{itemize}
\item
  $
    \pairp{\tm_1}{\tm_2} \eqdef
      \lamtp{\btyp}{
        \claslamp{(\under:X\NN)}{
          \lamp{(\var:Y\PP)}{
            \claslamp{(\vartwo:\btyp\NN)}{
              \clasapp{
                \app{
                  \clasapp{
                    \app{
                      \clasapp{
                        \var
                      }{
                        (\claslamn{(\under:Y\PP)}{
                          \copairn{
                            \tm_1
                          }{
                            \tmthree
                          }
                        })
                      }
                    }{
                      \tm_1
                    }
                  }{
                    \tmthree
                  }
                }{
                  \tm_2
                }
              }{
                \vartwo
              }
            }
          }
        }
      }
  $, \\
  where
  $
    \tmthree \eqdef
       \claslamn{(\under:(\typtwo\to\btyp)\PP)}{
         \copairn{
           \tm_2
         }{
           \vartwo
         }
       }
  $.
\item
  $
    \projip{\tmtwo} \eqdef
      \claslamp{(\var:\typ_i\NN)}{
        (\clasap{
          \app{
            \clasapp{
              \apptp{
                \tm_1
              }{
                \typ_i
              }
            }{
              (\claslamp{(\under:X_i\PP)}{
                \copairn{
                  \tmfour
                }{
                  \var
                }
              })
            }
          }{
            \tmfour
          }
        }{
          \var
        })
      }
  $, \\
  where
  $
    \tmfour \eqdef
      \claslamp{(\under:Y_i\NN)}{
        \lamp{(\vartwo_1:\typ_1\PP)}{
          \claslamp{(\under:(\typ_2\imp\typ_i)\NN)}{
            \lamp{(\vartwo_2:\typ_2\PP)}{
              \vartwo_i
            }
          }
        }
      }
  $.
\end{itemize}
  Then $\judg{\tctx}{\pairp{\tm_1}{\tm_2}}{(\typ_1\land\typ_2)\pp}$
  and $\judg{\tctx}{\projip{\tmtwo}}{\typ_i\PP}$
  and it can be easily checked that
  $\projip{\pairp{\tm_1}{\tm_2}} \rto \tm_i$
  (using $\ruleEta$).

On the other hand, {\bf negative typing rules for conjunction},
analogous to $\Iandn$ and a weak variant of $\Eandn$ can also be derived.
First note that, given terms $\judg{\tctx}{\tmfive}{(\typ\to\typtwo)\PP}$
and $\judg{\tctx}{\tmsix}{\typ\PP}$,
a term $\appc{\tmfive}{\tmsix}$ may be defined
in such a way that $\judgPRK{\tctx}{\appc{\tmfive}{\tmsix}}{\typtwo\PP}$.
An explicit construction is
  $
   \appc{\tmfive}{\tmsix} \eqdef
    \claslamp{(\var:\typtwo\NN)}{
      (\clasapp{
        \app{
          \clasapp{
            \tmfive
          }{
            (\claslamn{(\under:(\typ\to\typtwo)\PP)}{
              \copairn{
                \tmsix
              }{
                \var
              }
            })
          }
        }{
          \tmsix
        }
      }{
        \var
      })
    }
  $.
  Then we may encode $\Iandn$ and a weak variant of $\Eandn$
  as follows:
  \begin{itemize}
  \item
    $
      \inin{\tm} \eqdef
        \patn{\typ_i}{
          \claslamn{(\under:X_i\PP)}{
            \copairn{
              \tmfour
            }{
              \tm
            }
          }
        }
    $,\\
    where
    $
      \tmfour \eqdef
        \claslamp{(\under:Y_i\NN)}{
          \lamp{(\vartwo_1:\typ_1\PP)}{
            \claslamp{(\under:(\typ_2\imp\typ_i)\NN)}{
              \lamp{(\vartwo_2:\typ_2\PP)}{
                \vartwo_i
              }
            }
          }
        }
    $.
  \item
    $
      \casen{\tm}{a_1}{\tmtwo_1}{a_2}{\tmtwo_2} \eqdef
          \optn{
            \tm
          }{
            \btyp
          }{
            \var:X\NN
          }{
            \claslamn{(c:\typthree\PP)}{
              \clasapn{
                \tmtwo'_1
              }{
                c
              }
            }
          }
    $,\\
    where
    $
      \tmtwo'_1 \eqdef
        \tmtwo_1\sub{a_1}{
          \claslamn{(\varthree_1:\typ_1\PP)}{
            (\abs{\typ\nn}{
              \tmtwo'_2
            }{ 
              c
            })
          }
        }
    $,
    and
    $
      \tmtwo'_2 \eqdef
        \tmtwo_2\sub{a_2}{
          \claslamn{(\varthree_2:\typ_2\PP)}{
            (\abs{\typ\nn}{
              \tmthree
            }{ 
              \var
            })
          }
        }
    $,
    and
    $
      \tmthree \eqdef
        \claslamp{\under:X\NN}{
          \lamp{\vartwo:Y\PP}{
            (\appc{
              \appc{
                \vartwo
              }{
                \varthree_1
              }
            }{
              \varthree_2
            })
          }
        }
    $.
  \end{itemize}
Note that $\judg{\tctx}{\inin{\tm}}{(\typ_1\land\typ_2)\nn}$
and $\judg{\tctx}{\casen{\tm}{a_1}{\tmtwo_1}{a_2}{\tmtwo_2}}{\typthree\NN}$.
However, unfortunately, it is
{\bf not} the case that 
$\casen{\inin{\tm}}{a_1}{\tmtwo_1}{a_2}{\tmtwo_2} \rto \tmtwo_i\sub{a_i}{\tm}$;
in fact the computation becomes stuck.

In general, these kinds of encodings are able to simulate reduction
for the positive half of the system but not for the negative half\footnote{Naturally,
one may consider dual encodings in terms of $\exists$ and $\coimp$,
for example
$\typ \lor \typtwo = \ex{\btyp}{((\typ\coimp\typtwo\coimp\btyp)\coimp\btyp)}$,
which behave well only for the negative half of the system.}.
This seems to suggest that $\lambdaPRK$ cannot be fully
simulated by the $\set{\forall,\imp}$ fragment, although we do not
know of a proof of this fact and there might exist
other encodings which allow simulating the full $\lambdaPRK$ calculus.

\SeeAppendix{In~\rsec{appendix:bohm_encoding} of the appendix,
encodings for (positive) disjunction and existential quantification
in terms of $\set{\forall,\imp}$ are also studied.}

\section{Normalization of Second-Order $\lambdaPRK$}
\lsec{strong_normalization}

In this section we construct a {\bf reducibility model}
for $\lambdaPRK$ and we prove {\bf adequacy}~(\rthm{adequacy}) of the model,
from which {\bf strong normalization}
of second-order $\lambdaPRK$ follows.
We only discuss the proof of strong normalization for the calculus without
the $\ruleEta$ rule\footnote{Strong
normalization for the full $\lambdaPRK$-calculus with the $\ruleEta$ rule comes
out as a relatively easy corollary by postponing $\ruleEta$ steps
(see \eg~\cite[Theorem~37]{BFlics21} for a similar result).}.

In~\cite{BFlics21}, strong normalization for the propositional fragment
of $\lambdaPRK$ is shown via a translation to System~F extended
with recursive type constraints
enjoying a (non-strict) positivity condition.
This technique does not seem to extend to the second-order case.
The problem is that the translation given in~\cite{BFlics21}
is {\em not closed under type substitution}.
More precisely, if we denote the translation by $\semF{-}$,
an equality such that
$\semF{\tm\sub{\btyp}{\typ}} = \semF{\tm}\sub{\btyp}{\semF{\typ}}$
does not hold in general, making the proof fail.

Our proof of strong normalization is based on an adaptation of
Girard's technique of reducibility candidates.
Specifically, we adapt Mendler's proof of strong normalization
for the extended System~F given in~\cite{mendler1991inductive}.
We begin by defining an {\em untyped} version of $\lambdaPRK$:

\subparagraph{The untyped $\lambdaPRK$-calculus ($\lambdaPRKU$)}
By $\UTerms$ we denote the set of {\em untyped terms},
given by the following grammar:
\[
  \begin{array}{rrl}
  \atm,\atmtwo,\atmthree,\hdots & ::= &
       \var
  \mid \strongabs{}{\atm}{\atmtwo}
  \mid \pair{\atm}{\atmtwo}
  \mid \proji{\atm}
  \mid \ini{\atm}
  \mid \case{\atm}{\var}{\atmtwo}{\vartwo}{\atmthree}
  \mid \lam{\var}{\atm}
  \mid \ap{\atm}{\atmtwo}
  \mid \copair{\atm}{\atmtwo}
  \mid \colam{\atm}{\var}{\vartwo}{\atmtwo}
  \\
  & \mid &
       \negi{\atm}
  \mid \nege{\atm}
  \mid \lamtu{\atm}
  \mid \apptu{\atm}
  \mid \patu{\atm}
  \mid \optu{\atm}{\var}{\atmtwo}
  \end{array}
\]
The reduction relation $\tou \subseteq \UTerms \times \UTerms$
of the $\lambdaPRKU$-calculus
is defined by the following reduction rules,
closed by compatibility under arbitrary reduction contexts:
\[
  \begin{array}{rcl@{\HS}rcl}
    \proji{\pair{\atm_1}{\atm_2}}
    & \tou &
    \atm_i
  &
    \case{(\ini{\atm})}{\var}{\atmtwo_1}{\var}{\atmtwo_2}
    & \tou &
    \atmtwo_i\sub{\var}{\atm}
  \\
    \ap{(\lam{\var}{\atm})}{\atmtwo}
    & \tou &
    \atm\sub{\var}{\atmtwo}
  &
    \colam{\copair{\atm_1}{\atm_2}}{\var}{\vartwo}{\atmtwo}
    & \tou &
    \atmtwo\sub{\var}{\atm_1}\sub{\vartwo}{\atm_2}
  \\
    \nege{(\negi{\atm})}
    & \tou &
    \atm
  \\
    \apptu{(\lamtu{\atm})}
    & \tou &
    \atm
  &
    \optu{\patu{\atm}}{\var}{\atmtwo}
    & \tou &
    \atmtwo\sub{\var}{\atm}
  \\
    \strongabs{}{\pair{\atm_1}{\atm_2}}{\ini{\atmtwo}}
    & \tou &
    \abs{}{\atm_i}{\atmtwo}
  &
    \strongabs{}{\ini{\atm}}{\pair{\atmtwo_1}{\atmtwo_2}}
    & \tou &
    \abs{}{\atm}{\atmtwo_i}
  \\
    \strongabs{}{\lam{\var}{\atm}}{\copair{\atmtwo}{\atmthree}}
    & \tou &
    \abs{}{\atm\sub{\var}{\atmtwo}}{\atmthree}
  &
    \strongabs{}{\copair{\atm}{\atmtwo}}{\lam{\var}{\atmthree}}
    & \tou &
    \abs{}{\atmtwo}{\atmthree\sub{\var}{\atm}}
  \\
    \strongabs{}{(\negi{\atm})}{(\negi{\atmtwo})}
    & \tou &
    \abs{}{\atmtwo}{\atm}
  \\
    \strongabs{}{\lamtu{\atm}}{\patu{\atmtwo}}
    & \tou &
    \abs{}{\atm}{\atmtwo}
  &
    \strongabs{}{\patu{\atm}}{\lamtu{\atmtwo}}
    & \tou &
    \abs{}{\atm}{\atmtwo}
  \end{array}
\]
where
$\abs{}{\atm}{\atmtwo} \eqdef
  \strongabs{}{(\ap{\atm}{\atmtwo})}{(\ap{\atmtwo}{\atm})}$.
The set $\CanTerms \subseteq \UTerms$ of
{\em canonical terms} is the set of terms
built with a constructor, \ie of any of the forms:
  $\pair{\atm}{\atmtwo}$,
  $\ini{\atm}$,
  $\lam{\var}{\atm}$,
  $\copair{\atm}{\atmtwo}$,
  $\negi{\atm}$,
  $\lamtu{\atm}$,
  $\patu{\atm}$.

Note that the untyped calculus $\lambdaPRKU$ is obtained from $\lambdaPRK$ by
erasing all signs and type annotations from terms,
replacing types and type variables by a placeholder ``$\dummy$''
in introductors and eliminators for quantifiers,
and identifying\footnote{This identification is not essential,
but just a matter of syntactic economy.}
``weak'' abstraction and application
($\claslam{\var}{\tm}$ and $\clasap{\tm}{\tmtwo}$)
with regular abstraction and application
($\lam{\var}{\tm}$ and $\ap{\tm}{\tmtwo}$).
It is easy to note that the $\tou$ reduction is confluent,
observing that it can be modeled as an
orthogonal higher-order rewriting system~\cite{nipkow1991higher}.

One difficult aspect of the strong normalization proof is that
terms of type $\typ\PP$ behave as functions ``$\typ\NN \to \typ\pp$''
and, dually,
terms of $\typ\NN$ behave as functions ``$\typ\PP \to \typ\nn$''.
Consequently,
sets of {\em reducible terms} cannot be defined by straightforward recursion,
as this would lead to a non-well-founded mutual dependency
between reducible terms of types $\typ\PP$ and $\typ\NN$.
To address this difficulty, we follow Mendler's approach
of taking fixed points in the
{\em complete lattice of reducibility candidates}.

\subsection{A Reducibility Model for $\lambdaPRK$}

We begin by recalling a few standard notions from order theory.
A {\em complete lattice} is a partially ordered set $(A,\leq)$
such that every subset $B \subseteq A$
has a least upper bound and a greatest lower bound,
denoted respectively by $\join{B}$ and $\meet{B}$.
Then (see~\cite[Thm.~2.35]{davey1990}):
\begin{theorem}[Knaster--Tarski fixed point theorem]
\lthm{knaster_tarski}
If $(A,\leq)$ is a complete lattice and
$f : A \to A$ is an order-preserving map,
\ie $a \leq a' \implies f(a) \leq f(a')$,
then $f$ has a {\em least fixed point}
and a {\em greatest fixed point}, given respectively by:
$
  \lfpF{f} = \meet\set{a \in A \ST f(a) \leq a}
$
and
$
  \gfpF{f} = \join\set{a \in A \ST a \leq f(a)}
$.
\end{theorem}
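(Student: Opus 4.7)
The plan is to prove only the least-fixed-point statement, since the greatest fixed point follows by a symmetric argument on the dual lattice $(A,\geq)$, which is again complete. Let $S \eqdef \set{a \in A \ST f(a) \leq a}$ be the set of prefixed points, and set $m \eqdef \meet S$, which exists by completeness. The goal is to establish that $m$ is a fixed point of $f$ and that it is the least one. Note that $S$ is nonempty: if $\top \eqdef \join A$ denotes the top element of $A$, then $f(\top) \leq \top$ trivially, so $\top \in S$ and $m$ is well-defined.

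First I would show $f(m) \leq m$. For each $a \in S$, monotonicity applied to $m \leq a$ gives $f(m) \leq f(a) \leq a$, where the second inequality holds because $a \in S$. Hence $f(m)$ is a lower bound of $S$, and by the universal property of the meet, $f(m) \leq m$.

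Next I would show the reverse inequality $m \leq f(m)$. Applying monotonicity to $f(m) \leq m$ yields $f(f(m)) \leq f(m)$, which is precisely the statement that $f(m) \in S$. Therefore $m \leq f(m)$ because $m$ is a lower bound of $S$. Combined with the previous step, $f(m) = m$. For minimality, any fixed point $m^*$ satisfies $f(m^*) = m^* \leq m^*$, so $m^* \in S$ and hence $m \leq m^*$. Dualizing the entire argument (prefixed points become postfixed points, meets become joins) establishes the formula for $\gfpF{f}$.

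There is no genuine obstacle here; this is the classical textbook argument. The only subtle point worth flagging is that the proof uses essentially nothing about $A$ beyond the existence of $\meet S$ (for the least fixed point) and $\join T$ for $T \eqdef \set{a \in A \ST a \leq f(a)}$ (for the greatest fixed point); full completeness is a convenient but not strictly necessary hypothesis, though it is the form in which the result will be applied to the lattice of reducibility candidates in the sequel.
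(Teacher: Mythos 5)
Your proof is correct and is the standard textbook argument; the paper itself does not prove this theorem but simply cites it (Davey--Priestley, Thm.~2.35), and your argument is exactly the one found there. The side remark that only the existence of $\meet S$ (resp.\ $\join T$) is needed is accurate, and the nonemptiness of $S$ is not actually required since $\meet\emptyset = \join A$ exists in a complete lattice anyway.
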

\noindent We write
$\lfp{\rc}{f(\rc)}$ for $\lfpF{f}$
and
$\gfp{\rc}{f(\rc)}$ for $\gfpF{f}$.

\subparagraph{Reducibility candidates}
Let $\SNTerms \subseteq \UTerms$ denote the set of
{\em strongly normalizing} terms, with respect to $\tou$.
A set $\rc \subseteq \SNTerms$ is {\em closed by reduction}
if for every $\atm,\atmtwo\in\UTerms$ such that $\atm \in \rc$
and $\atm \tou \atmtwo$,
one has that $\atmtwo \in \rc$.
A set $\rc \subseteq \SNTerms$ is {\em complete}
if for every $\atm \in \SNTerms$ the following holds:
\[
  \text{
    ($\forall \atmtwo \in \CanTerms.\ %
         ((\atm \rtou \atmtwo) \implies \atmtwo \in \rc$))
    \HS implies \HS
    $\atm \in \rc$
  }
\]
A set $\rc \subseteq \SNTerms$ is a {\em reducibility candidate}
(or a \RC for short)
if it is closed by reduction and complete.
We write $\RCSet$ for the set of all \RCs,
that is,
$\RCSet \eqdef
 \set{\rc \subseteq \SNTerms \ST \rc \text{ is a \RC}}$.

It is easy to see that reducibility candidates are non-empty.
In particular, for every $\rc \in \RCSet$
we have that any variable $\var \in \rc$
is strongly normalizing and it vacuously verifies the property
$\forall \atmthree \in \CanTerms.\ %
 ((\var \rtou \atmthree) \implies \atmthree \in \rc)$
so, since $\rc$ is complete, we have that $\var \in \rc$.
Moreover, the set $\RCSet$ forms a complete lattice
ordered by inclusion $\subseteq$.
Following Mendler~\cite[Prop.~2]{mendler1991inductive},
the greatest lower bound of $\set{\rc_i}_{i\inI}$
is given by the intersection $\cap_{i\inI}{\rc_i}$,
and the bottom element is the set
$\rcbot = \set{\atm \in \SNTerms \ST \nexists\atmtwo\in\CanTerms.\,\atm\rtou\atmtwo}$
of terminating terms that have no canonical reduct.
\SeeAppendix{%
See~\rsec{appendix:strong_normalization} in the appendix for details.}

\subparagraph{Operations on reducibility candidates}
For each set of canonical terms $X \subseteq \CanTerms$, we define its
{\em closure} $\rclosure{X}$
as the set of all strongly normalizing terms
whose canonical reducts are in $X$. More precisely,
$
  \rclosure{X} \eqdef
    \set{\atm \in \SNTerms \ST
      \forall\atmtwo\in\CanTerms.
      ((\atm \rtou \atmtwo) \implies \atmtwo \in X)
    }
$.
If $\rc_1,\rc_2$ are \RCs and if $\set{\rc_i}_{i\inI}$ is a set of \RCs, we define
the following operations:
\[
{\small
  \begin{array}{rcl@{\HS}rcl}
    (\rc_1 \rctimes \rc_2)
    & \eqdef &
    \rclosure{\set{\pair{\atm_1}{\atm_2} \ST \atm_1 \in \rc_1, \atm_2 \in \rc_2}}
  &
    (\rc_1 \rcplus \rc_2)
    & \eqdef &
    \rclosure{\set{\ini{\atm} \ST i \in \set{1,2}, \atm \in \rc_i}}
  \\
    (\rc_1 \rcimp \rc_2)
    & \eqdef & 
     \set{\atm \in \SNTerms \ST
       \forall\atmtwo\in\rc_1.\ \ap{\atm}{\atmtwo} \in \rc_2
     }
  &
    (\rc_1 \rccoimp \rc_2)
    & \eqdef &
    \rclosure{\set{\copair{\atm_1}{\atm_2} \ST \atm_1 \in \rc_1, \atm_2 \in \rc_2}}
  \\
    \rcneg\rc
    & \eqdef &
    \rclosure{\set{\negi{\atm} \ST \atm \in \rc}}
  \\
    \rcall{i\inI}{\rc_i} & \eqdef &
      \set{\atm \in \SNTerms \ST \forall i \inI.\ \apptu{\atm} \in \rc_i}
  &
    \rcex{i\inI}{\rc_i} & \eqdef &
      \rclosure{\set{\patu{\atm} \ST \exists i \in I.\ \atm \in \rc_i}}
  \end{array}
}
\]
It can be checked that all these operations
map \RCs to \RCs.
\SeeAppendix{See~\rsec{appendix:strong_normalization} for details.}

\noindent A straightforward observation is that the arrow operator
is order-reversing on the left,
\ie that
if $\rc_1 \subseteq \rc'_1$
then $(\rc'_1\rcimp\rc_2) \subseteq (\rc_1\rcimp\rc_2)$.

\subparagraph{Orthogonality}
The idea of the normalization proof is, as usual, to associate, to each type $\ev$,
a set of {\em reducible terms} $\red{\ev}{} \in \RCSet$.
The interpretation of a type variable,
such as $\red{\btyp\pp}{}$ or $\red{\btyp\nn}{}$
shall be given by an {\em environment} $\env$,
mapping type variables to \RCs.
However, the sets $\red{\btyp\pp}{}$ and $\red{\btyp\nn}{}$
{\em should not be chosen independently of each other}:
we require them to be orthogonal in the following sense.

Two reducibility candidates
$\rc_1,\rc_2\in\RCSet$ are {\em orthogonal},
if for all $\atm_1\in\rc_1$ and $\atm_2\in\rc_2$
we have that $(\strongabs{}{\atm_1}{\atm_2}) \in \SNTerms$.
We write $\RCPerp$ for the set of all pairs $(\rc_1,\rc_2) \in \RCSet^2$
such that $\rc_1$ and $\rc_2$ are orthogonal.

\subparagraph{Reducible terms}
The set of reducible terms is defined by induction on the following
notion of {\em measure} $\#(-)$ of a type $\ev$,
given by $\#(\typ\pp) = \#(\typ\nn) \eqdef 2 |\typ|$
and $\#(\typ\PP) = \#(\typ\NN) \eqdef 2 |\typ| + 1$,
where $|\typ|$ denotes the {\em size}, \ie the number of symbols,
of the pure type $\typ$.
Note for example that
$\#((\typ\land\typtwo)\PP) > \#((\typ\land\typtwo)\pp) > \#(\typ\PP)$.

An {\em environment}
is a function $\env$
mapping each type variable $\btyp^{\pm}$,
with either positive or negative sign,
to a reducibility candidate
$\env(\btyp^{\pm}) \in \RCSet$,
in such a way that $\rcperp{\env(\btyp\pp)}{\env(\btyp\nn)}$.
If $(\rc\pp,\rc\nn) \in \RCPerp$,
we write $\env\esub{\btyp}{\rc\pp,\rc\nn}$
for the environment $\env'$ that extends $\env$
in such a way that $\env'(\btyp\pp) = \rc\pp$
and $\env'(\btyp\nn) = \rc\nn$
and $\env'(\btyptwo^{\pm}) = \env(\btyptwo^{\pm})$
for any other type variable $\btyptwo \neq \btyp$.

Given an environment $\env$,
we define the set of {\em reducible terms} of type $\ev$
under the environment $\env$,
written $\red{\ev}{\env}$,
by induction on the measure $\#(\ev)$
as follows:
\[
{\small
\begin{array}{r@{\,\,}c@{\,\,}l@{\,\,}r@{\,\,}c@{\,\,}l}
  \red{\btyp\pp}{\env} & \eqdef & \env(\btyp\pp)
&
  \red{\btyp\nn}{\env} & \eqdef & \env(\btyp\nn)
\\
  \red{(\typ\land\typtwo)\pp}{\env}
  & \eqdef &
  \red{\typ\PP}{\env}
  \rctimes
  \red{\typtwo\PP}{\env}
&
  \red{(\typ\land\typtwo)\nn}{\env}
  & \eqdef &
  \red{\typ\NN}{\env}
  \rcplus
  \red{\typtwo\NN}{\env}
\\
  \red{(\typ\lor\typtwo)\pp}{\env}
  & \eqdef &
  \red{\typ\PP}{\env}
  \rcplus
  \red{\typtwo\PP}{\env}
&
  \red{(\typ\lor\typtwo)\nn}{\env}
  & \eqdef &
  \red{\typ\NN}{\env}
  \rctimes
  \red{\typtwo\NN}{\env}
\\
  \red{(\typ\imp\typtwo)\pp}{\env}
  & \eqdef &
  \red{\typ\PP}{\env} \rcimp \red{\typtwo\PP}{\env}
&
  \red{(\typ\imp\typtwo)\nn}{\env}
  & \eqdef &
  \red{\typ\PP}{\env} \rccoimp \red{\typtwo\NN}{\env}
\\
  \red{(\typ\coimp\typtwo)\pp}{\env}
  & \eqdef &
  \red{\typ\NN}{\env} \rccoimp \red{\typtwo\PP}{\env}
&
  \red{(\typ\coimp\typtwo)\nn}{\env}
  & \eqdef &
  \red{\typ\NN}{\env} \rcimp \red{\typtwo\NN}{\env}
\\
  \red{(\neg\typ)\pp}{\env}
  & \eqdef &
  \rcneg\red{\typ\NN}{\env}
&
  \red{(\neg\typ)\nn}{\env}
  & \eqdef &
  \rcneg\red{\typ\PP}{\env}
\\
  \red{(\all{\btyp}{\typ})\pp}{\env}
  & \eqdef &
  \rcall{(\rc\pp,\rc\nn)\in\RCPerp}{
    \red{\typ\PP}{\env\esub{\btyp}{\rc\pp,\rc\nn}}
  }
&
  \red{(\all{\btyp}{\typ})\nn}{\env}
  & \eqdef &
  \rcex{(\rc\pp,\rc\nn)\in\RCPerp}{
    \red{\typ\NN}{\env\esub{\btyp}{\rc\pp,\rc\nn}}
  }
\\
  \red{(\ex{\btyp}{\typ})\pp}{\env}
  & \eqdef &
  \rcex{(\rc\pp,\rc\nn)\in\RCPerp}{
    \red{\typ\PP}{\env\esub{\btyp}{\rc\pp,\rc\nn}}
  }
&
  \red{(\ex{\btyp}{\typ})\nn}{\env}
  & \eqdef &
  \rcall{(\rc\pp,\rc\nn)\in\RCPerp}{
    \red{\typ\NN}{\env\esub{\btyp}{\rc\pp,\rc\nn}}
  }
\\
  \red{\typ\PP}{\env}
  & \eqdef &
  \lfp{\rc}{((\rc\rcimp\red{\typ\nn}{\env})\rcimp\red{\typ\pp}{\env})}
&
  \red{\typ\NN}{\env}
  & \eqdef &
  \gfp{\rc}{((\rc\rcimp\red{\typ\pp}{\env})\rcimp\red{\typ\nn}{\env})}
\end{array}
}
\]

It is straightforward to check
for each type $\ev$ and each environment $\env$
that $\red{\ev}{\env}$ is a reducibility candidate,
by induction on the measure $\#(\ev)$.
In the case of $\red{\typ\PP}{\env}$,
by the Knaster--Tarski theorem~(\rthm{knaster_tarski}),
to see that the least fixed point exists,
it suffices to observe that the mapping
$f(\rc) = ((\rc\rcimp\red{\typ\nn}{\env})\rcimp\red{\typ\pp}{\env})$
is order-preserving.
The case of $\red{\typ\NN}{\env}$ is similar.

\subparagraph{Adequacy of the reducibility model}
For each term $\tm$ of $\lambdaPRK$,
we define an untyped term $\erase{\tm}$ of $\lambdaPRKU$
via the obvious forgetful map.
For instance $\erase{\claslamp{(\var:(\ex{\btyp}{\btyp})\NN)}{\patp{\typtwo}{\varthree^{\typtwo\PP}}}} = \lam{\var}{\patu{\varthree}}$.
Note that each reduction step $\tm \toa{} \tmtwo$ in $\lambdaPRK$
is mapped to a reduction step $\erase{\tm} \tou \erase{\tmtwo}$
in $\lambdaPRKU$.
Hence, if $\erase{\tm}$ is strongly normalizing with respect to $\tou$,
then $\tm$ is strongly normalizing with respect to $\toa{}$.

A {\em substitution} is a function $\subst$ mapping
each variable to a term in $\UTerms$.
We write $\atm^\subst$ for the term that results from
the capture-avoiding substitution
of each free occurrence of each variable $\var$ in $\atm$
by $\subst(\var)$.
We say that the substitution $\subst$ is
{\em adequate} for the typing context $\tctx$ under the environment $\env$,
and we write $\judgSubst{\subst}{\env}{\tctx}$,
if for each type assignment $(\var:\ev)\in\tctx$
we have that $\subst(\var) \in \red{\ev}{\env}$.
We are finally able to state the key result:
\begin{theorem}[Adequacy]
\lthm{adequacy}
If $\judg{\tctx}{\tm}{\ev}$
and $\judgSubst{\subst}{\env}{\tctx}$
then $\erase{\tm}^{\subst} \in \red{\ev}{\env}$.
\end{theorem}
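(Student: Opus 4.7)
The plan is to prove \rthm{adequacy} by induction on the typing derivation $\judg{\tctx}{\tm}{\ev}$. The delicate aspect is preparing the model so that the inductive step for each rule proceeds smoothly; in particular, one must establish how the weak candidates $\red{\typ\PP}{\env}$ and $\red{\typ\NN}{\env}$, defined as fixed points, relate to the strong candidates $\red{\typ\pp}{\env}$ and $\red{\typ\nn}{\env}$.

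The key technical ingredient, adapted from Mendler, is a pair of \emph{duality inclusions}. From the fixed-point equations $\red{\typ\PP}{\env} = (\red{\typ\PP}{\env}\rcimp\red{\typ\nn}{\env})\rcimp\red{\typ\pp}{\env}$ and $\red{\typ\NN}{\env} = (\red{\typ\NN}{\env}\rcimp\red{\typ\pp}{\env})\rcimp\red{\typ\nn}{\env}$, together with the anti-monotonicity of $\rcimp$ on its left argument, I derive: (A) $\red{\typ\PP}{\env} \subseteq \red{\typ\NN}{\env}\rcimp\red{\typ\pp}{\env}$, by exhibiting the right-hand side as a pre-fixed point of the operator whose least fixed point is $\red{\typ\PP}{\env}$; and (B) $\red{\typ\PP}{\env}\rcimp\red{\typ\nn}{\env} \subseteq \red{\typ\NN}{\env}$, dually, by exhibiting the left-hand side as a post-fixed point of the operator whose greatest fixed point is $\red{\typ\NN}{\env}$. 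The symmetric duals (A'), (B') follow by swapping signs. Morally, these inclusions encode ``a weak proof applied to a weak refutation is a strong proof''. Armed with them, I prove an \emph{orthogonality lemma} by induction on $\#(\typ)$ showing that both $(\red{\typ\pp}{\env},\red{\typ\nn}{\env})$ and $(\red{\typ\PP}{\env},\red{\typ\NN}{\env})$ lie in $\RCPerp$: the weak case is essentially trivial since the only canonical forms at weak type are abstractions $\lam{\var}{-}$, which trigger no $\strongabs{}{}{}$-reduction, so strong normalization of $\strongabs{}{\atm_1}{\atm_2}$ reduces to that of its components; the strong case unfolds each $\strongabs{}{}{}$-rule and uses (A), (A') to reduce the resulting $\abs{}{-}{-}$ terms to the IH at strictly smaller type measure. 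Finally, I record a routine \emph{type substitution lemma} $\red{\ev\sub{\btyp}{\typ}}{\env} = \red{\ev}{\env\esub{\btyp}{\red{\typ\pp}{\env},\red{\typ\nn}{\env}}}$ (well-formed thanks to orthogonality) and a standard \emph{saturation lemma}: an SN term whose canonical reducts all lie in an RC already lies in it.

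The main induction is then uniform. The axiom case is immediate from $\judgSubst{\subst}{\env}{\tctx}$, and $\Abs$ holds because $\strongabs{}{\erase{\tm}^\subst}{\erase{\tmtwo}^\subst}$ is SN by orthogonality and has no canonical reducts, so it inhabits $\rcbot \subseteq \red{\ev}{\env}$. The strong rules for $\land,\lor,\imp,\coimp,\neg$ reduce to the operations $\rctimes,\rcplus,\rcimp,\rccoimp,\rcneg$ on RCs plus saturation. For the weak rules, (A) and (B) are the crux: for $\Icp$ one unfolds the fixed point for $\red{\typ\PP}{\env}$ and, given any $\atmtwo\in\red{\typ\PP}{\env}\rcimp\red{\typ\nn}{\env}$, uses (B) to place $\atmtwo\in\red{\typ\NN}{\env}$, so the IH yields $\erase{\tm}^\subst\sub{\var}{\atmtwo}\in\red{\typ\pp}{\env}$ and saturation closes the redex; $\Icn$ is dual, and $\Ecp, \Ecn$ invoke (A), (A') directly. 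For the quantifier rules $\Iallp, \Eallp$ (and their $\exists$-counterparts), the IH is applied under an extended environment $\env\esub{\btyp}{\rc\pp,\rc\nn}$, justified by $\btyp\notin\ftv{\tctx}$, and the substitution lemma converts between $\red{\typ\PP\sub{\btyp}{\typtwo}}{\env}$ and $\red{\typ\PP}{\env\esub{\btyp}{\red{\typtwo\pp}{\env},\red{\typtwo\nn}{\env}}}$.

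The main obstacle is the interlocked development of (A), (B), and orthogonality, together with the preliminary verification that the operators defining $\red{\typ\PP}{\env}$ and $\red{\typ\NN}{\env}$ are order-preserving and RC-valued, so that \rthm{knaster_tarski} applies in $\RCSet$. The simultaneous induction on $\#(\ev)$ has to be orchestrated so that each strong-mode step is permitted to invoke the weak-mode versions of (A), (B) and orthogonality at the same measure, while each weak-mode step calls only on strong-mode versions at strictly smaller measure; this asymmetric dependency mirrors the asymmetry between least and greatest fixed points in the definitions of the weak candidates.
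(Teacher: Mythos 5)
Your overall strategy is the paper's own: Knaster--Tarski fixed points in the complete lattice of reducibility candidates, identification of the weak candidates with arrow candidates over the strong ones, orthogonality of the strong candidates by induction on the type, per-constructor adequacy lemmas, and a final induction on the typing derivation. Your inclusions (A) and (B) are in fact equalities (the paper's \rlem{reducible_terms_of_classical_type} shows that $\red{\typ\NN}{\env}\rcimp\red{\typ\pp}{\env}$ is itself a fixed point of the relevant operator, hence simultaneously a pre- and post-fixed point), and your handling of $\Icp$ by unfolding the fixed point and invoking (B) is a legitimate, slightly more roundabout variant of the paper's route through the equality. The treatment of $\Abs$ via $\rcbot$, the substitution and irrelevance lemmas for the quantifier cases, and the per-rule structure of the main induction all coincide with the paper's proof.

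There is, however, one genuinely false auxiliary claim: orthogonality of the weak candidates, $\rcperp{\red{\typ\PP}{\env}}{\red{\typ\NN}{\env}}$. Your justification --- ``the only canonical forms at weak type are abstractions'' --- conflates the typed calculus with the untyped model: $\red{\typ\PP}{\env}$ is an arrow candidate, and membership in a set of the form $\rc_1\rcimp\rc_2$ places no constraint on the shape of a term's canonical reducts. Concretely, take $\env(\btyp\pp)=\env(\btyp\nn)=\rcbot$ and $p=\lam{\var}{\ap{\var}{\var}}$. Then $\pair{p}{\vartwo}\in\red{\btyp\PP}{\env}$ and $\ini[1]{p}\in\red{\btyp\NN}{\env}$: in both cases, applying a pair or an injection to an argument never creates a root redex, so the application is strongly normalizing with no canonical reducts and therefore lies in $\rcbot=\red{\btyp\pp}{\env}=\red{\btyp\nn}{\env}$. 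Yet $\strongabs{}{\pair{p}{\vartwo}}{\ini[1]{p}}$ reduces at the root to $\strongabs{}{(\ap{p}{p})}{(\ap{p}{p})}$, and $\ap{p}{p}$ diverges. So the weak candidates are not orthogonal in general; it is precisely because the strong candidates at composite types are defined as closures of sets of canonical terms with components in prescribed candidates that the strong orthogonality argument goes through. Fortunately the weak claim is never needed: $\Abs$ has strong premises, the right-hand sides of the $\strongabssym{}$-rules produce applications landing in strong candidates, and the quantifier instantiations only require $\rcperp{\red{\typ\pp}{\env}}{\red{\typ\nn}{\env}}$. Drop the weak half of your orthogonality lemma and the rest of the argument stands.
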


The proof of the adequacy theorem relies on a number of auxiliary lemmas
stating properties such as
  $\red{\typ\PP}{\env} = \red{\typ\NN}{\env} \rcimp \red{\typ\pp}{\env}$
and
  $\rcperp{\red{\typ\pp}{\env}}{\red{\typ\nn}{\env}}$.
\SeeAppendix{%
See~\rsec{appendix:strong_normalization} in the appendix for the detailed proof.}
From this we obtain as an easy corollary
that the $\lambdaPRK$-calculus is strongly normalizing,
taking $\env$ as the environment that maps all type variables
to the bottom reducibility candidate,
and $\subst$ as the identity substitution.

\section{Intuitionistic Proofs and Refutations}
\lsec{prj}

In natural deduction, it is well-known that classical logic can be
obtained from the intuitionistic system by adding a single classical
axiom, such as excluded middle or double negation elimination.
In sequent calculus, it is well-known that
intuitionistic logic can be obtained by restricting sequents
$\typ_1,\hdots,\typ_n\vdash\typtwo_1,\hdots,\typtwo_m$
to have at most one formula on the right.
As we have seen, $\lambdaPRK$ refines classical logic.
It is a natural question to ask what subsystem of $\lambdaPRK$
corresponds to intuitionistic logic.

In this section we characterize a restricted subsystem
of $\lambdaPRK$ that corresponds to intuitionistic logic,
called $\lambdaPRJ$,
by imposing a syntactic restriction on the shape of $\lambdaPRK$
proofs, that forbids certain specific patterns of reasoning.
In particular, a variable $\var$ introduced by a
positive weak introduction $\claslamp{\var}{\tm}$
can only occur free in $\tm$ inside the arguments of weak eliminations.
The main result in this section is that
{\bf $\lambdaPRJ$ refines intuitionistic second-order
logic}~(\rthm{lambdaPRJ_refinement}).
\medskip

As mentioned before, proofs of {\em strong} propositions in $\PRK$ must be
constructive.
However, this is only true for the toplevel logical connective in the
formula.
In general, a proof of $\typ\pp$ in $\PRK$ does not necessarily correspond
to an intuitionistic proof of $\typ$.
For example, a canonical proof of $(\typ\land\typtwo)\pp$
is given by a proof of $\typ\PP$ and a proof of $\typtwo\PP$,
but these subproofs may resort to classical reasoning principles.

The key to identify an intuitionistic subset of
$\lambdaPRK$ is to disallow inference rules which embody
classical principles.
One example is the $\Enotn$ rule, which derives $\typ\PP$ from $(\neg\typ)\nn$.
This rule embodies the classical principle
of double negation elimination ($\neg\neg\typ \to \typ$).
Another important example is the $\Icp$ rule,
which derives $\typ\PP$ from $\typ\NN\vdash\typ\pp$.
This rule embodies the classical principle of
{\em consequentia mirabilis} ($(\neg\typ \to \typ) \to \typ$).

The analysis of the $\Icp$ rule suggests that,
in the intuitionistic fragment, $\typ\PP$
should not be identified with ``$\typ\NN \to \typ\pp$'',
but directly with $\typ\pp$.
One natural idea would be to impose an invariant over terms
of the form $\claslamp{(\var:\typ\NN)}{\tm}$,
in such a way that the body~$\tm$ may have no free occurrences 
of the negative counterfactual $\var$.
With this invariant,
\begin{wrapfigure}{r}{5.5cm}
$
  \indrule{$\Icp$ \textup{(variant)}}{
    \judg{\tctx}{\tm}{\typ\pp}
    \HS
    \var\notin\fv{\tm}
  }{
    \judg{\tctx}{\claslamp{(\var:\typ\NN)}{\tm}}{\typ\PP}
  }
$
\end{wrapfigure} 
all instances of the $\Icp$ rule are actually instances of the
variant of $\Icp$ shown on the right.
This in turn means that, in an application $\clasapp{\tm}{\tmtwo}$,
the argument $\tmtwo$ is {\em useless}.
Indeed, if $\tm$ becomes $\claslamp{(\var:\typ\NN)}{\tm'}$,
the invariant ensures that $\var\notin\fv{\tm'}$, so
$\clasapp{(\claslamp{(\var:\typ\NN)}{\tm'})}{\tmtwo} \to \tm'$,
which does not depend on the specific choice of $\tmtwo$.

Rather than completely forbidding classical reasoning principles,
we relax this condition so that classical principles
are allowed as long as they are useless, \ie inside the argument
of an application $\clasapp{\tm}{\tmtwo}$.
Furthermore, the invariant over terms of the form
$\claslamp{(\var:\typ\NN)}{\tm}$, requesting that $\var\notin\fv{\tm}$,
can also be relaxed, in such a way that $\var$ is allowed to occur in $\tm$
as long as all of its free occurrences are useless.
Formally:

\begin{definition}[Intuitionistic terms]
A subterm of a term $\tm$ is said to be {\bf useless}
if it lies inside the argument of a positive weak elimination.
More precisely, given a term $\tm = \gctxof{\tmtwo}$,
we say that the subterm $\tmtwo$ under the context $\gctx$
is useless if and only if there exist contexts $\gctx_1,\gctx_2$
and a term $\tmthree$ such that $\gctx$ is of the form
$\gctx_1\ctxof{\clasapp{\tmthree}{\gctx_2\ctxof{\ctxhole}}}$.
A subterm of $\tm$ is {\bf useful} if it is not useless.
A term is said to be {\bf intuitionistic} if and only if the
two following conditions hold:
\begin{enumerate}
\item
  {\bf Useless negative eliminations ($\Eandn$, $\Eimpn$, $\Enotn$, $\Ealln$).}
  There are no useful subterms of any of the following forms:
  $\casen{\tm}{(\var:\typ\NN)}{\tmtwo}{(\vartwo:\typtwo\NN)}{\tmthree}$,
  $\colamn{\tm}{(\var:\typ\PP)}{(\vartwo:\typtwo\NN)}{\tmtwo}$,
  $\negen{\tm}$,
  $\optn{\tm}{\btyp}{\var}{\tmtwo}$.
\item
  {\bf Useless negative counterfactuals.}
  In every useful subterm of the form $\claslamp{(\var:\typ\NN)}{\tm}$,
  there are no useful occurrences of $\var$ in $\tm$.
\end{enumerate}
\end{definition}

\subparagraph{The $\lambdaPRJ$ type system}
The type system $\lambdaPRJ$
is defined by imposing the restriction
on $\lambdaPRK$ that terms be intuitionistic.
More precisely,
we say that a judgment $\judg{\tctx}{\tm}{\ev}$
holds in $\PRJ$,
and in this case we write $\judgPRJ{\tctx}{\tm}{\ev}$,
if the judgment holds in $\PRK$ and furthermore $\tm$
is an intuitionistic term.
We also write $\logPRJ{\ev_1,\hdots,\ev_n}{\evtwo}$
if there exists a term $\tm$ such that
$\judgPRJ{\var_1:\ev_1,\hdots,\var_n:\ev_n}{\tm}{\evtwo}$.

\begin{example}
The weak variant of the law of excluded middle, $(\typ\lor\neg\typ)\PP$,
can be proven in $\PRK$. For example, if we define $\lemP{\typ}$:
\[
  \!\!
  {\small
    \begin{array}{r@{\,}c@{\,}l}
    \lemP{\typ} & \eqdef &
      \claslamp{(\var:(\typ\lor\neg\typ)\NN)}{
        \inip[2]{
          \claslamp{(\vartwo:\neg\typ\NN)}{
            \negip{
              \projin[1]{
                \clasapn{
                  \var
                }{
                  \lemPinner{\vartwo}{\typ}
                }
              }
            }
          }
        }
      }
    \\
    &&\text{where }
    \lemPinner{\vartwo}{\typ} \eqdef
      \claslamp{(\under:(\typ\lor\neg\typ)\NN)}{
        \inip[1]{
          \claslamp{(\varthree:\typ\NN)}{
            (\abs{
              \typ\pp
            }{
              \vartwo
            }{
              \claslamp{(\under:\neg\typ\NN)}{
                \negip{
                  \varthree
                }
              }
            })
          }
        }
      }
    \end{array}
  }
\]
we can note that $\judgPRK{}{\lemP{\typ}}{(\typ\lor\neg\typ)\PP}$ holds.
However, $\lemP{\typ}$ is not intuitionistic,
due to the fact that
{\em there is a useful occurrence of the negative counterfactual $\var$}.
\end{example}

The intuitionistic fragment is stable by reduction:
\begin{proposition}[Subject reduction for $\PRJ$]
Let $\judgPRJ{\tctx}{\tm}{\ev}$ and $\tm \to \tmtwo$.
Then $\judgPRJ{\tctx}{\tmtwo}{\ev}$.
\end{proposition}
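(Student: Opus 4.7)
The plan is to observe first that, since $\judgPRJ{\tctx}{\tm}{\ev}$ entails $\judgPRK{\tctx}{\tm}{\ev}$, the typing half follows immediately from \rthm{subject_reduction}: we have $\judgPRK{\tctx}{\tmtwo}{\ev}$ for free. Hence the whole content of the proposition lies in showing that if $\tm$ is an intuitionistic term and $\tm \to \tmtwo$, then $\tmtwo$ is still intuitionistic. I would proceed by case analysis on the reduction rule used, writing $\tm = \gctxof{\tm_1}$ and $\tmtwo = \gctxof{\tm_2}$ where $\tm_1 \to \tm_2$ is a head reduction.

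Before the case analysis, I would establish two structural lemmas. The first is a \emph{context lemma}: if $\gctxof{\tm_1}$ is intuitionistic, $\tm_2$ is intuitionistic, and every useful occurrence of a free variable $\var$ in $\tm_2$ corresponds to a useful occurrence already present in $\tm_1$ (taking into account the usefulness status of the hole in $\gctx$), then $\gctxof{\tm_2}$ is intuitionistic. The point is that subterms lying entirely in $\gctx$ are unaffected, subterms lying entirely in $\tm_2$ are handled by the assumption on $\tm_2$, and no binder of $\gctx$ can acquire new useful occurrences of its bound negative counterfactual. The second is a \emph{substitution lemma}: if $\tm$ is intuitionistic and $\tmtwo$ is intuitionistic, then $\tm\sub{\var}{\tmtwo}$ is intuitionistic, with the refinement that any free variable of $\tmtwo$ occurs usefully in $\tm\sub{\var}{\tmtwo}$ only if $\var$ occurs usefully in $\tm$ (so that the intuitionistic invariant of enclosing binders in $\gctx$ is not compromised). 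Type-level substitution $\tm\sub{\btyp}{\typ}$ is trivial since it does not affect term structure.

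With these lemmas in hand, the case analysis is mechanical. For each $\beta$-style rule ($\ruleBetaC$, $\ruleProj$, $\ruleCase$, $\ruleBetaB$, $\ruleCoproj$, $\ruleNeg$, $\ruleAppT$, $\ruleOpen$), the right-hand side is either a subterm of the left-hand side or a single substitution, so the substitution lemma applies directly. For the positive version of $\ruleBetaC$, the critical observation is that the redex $\clasapp{(\claslamp{(\var:\typ\NN)}{\tm})}{\tmtwo}$ places $\tmtwo$ in a useless position, and the intuitionistic invariant on $\claslamp{(\var:\typ\NN)}{\tm}$ forbids useful occurrences of $\var$ in $\tm$, so all occurrences of $\tmtwo$ produced by substitution land in useless positions and create no new violations. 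For the $\strongabs{}{}$ rules, the right-hand sides are built using $\abs{}{}{}$, and unpacking the definition of $\abs{\evtwo}{\tm'}{\tmtwo'}$ one checks that whenever a subterm is placed inside a $\clasap$ argument (hence useless), no condition needs to be verified, while the remaining useful occurrences come from subterms of the redex that were already useful and already intuitionistic.

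The main obstacle is the substitution lemma, and specifically condition (2) about negative counterfactuals: one has to ensure that no $\claslamp{(\vartwo:\typtwo\NN)}{\tmthree}$ appearing in $\tm\sub{\var}{\tmtwo}$ contains new useful occurrences of $\vartwo$. Up to $\alpha$-renaming this can only be an issue for binders in $\tm$ whose body absorbs a piece of $\tmtwo$; since $\tmtwo$ brings only its own free variables, the refinement in the substitution lemma (useful free variables of $\tmtwo$ in the result come from useful occurrences of $\var$ in $\tm$) suffices to rule this out. Once that refinement is properly stated and proved by induction on $\tm$, all reduction cases close uniformly.
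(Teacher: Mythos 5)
Your overall strategy is the same as the paper's: the typing half is delegated to subject reduction for $\PRK$, the real content is preservation of the intuitionistic invariant, the crux is the positive weak $\beta$-rule, and the key observation there is that every copy of the argument lands in a useless position because the negative counterfactual has only useless occurrences in the body. The paper packages your ``refinement'' differently — it generalizes the judgment to $\judgPRJ[\varset]{\tctx}{\tm}{\ev}$, meaning $\tm$ is intuitionistic \emph{and} has no useful free occurrences of variables in $\varset$, and proves subject reduction for that generalized statement — but this plays exactly the role of your context lemma plus the tracking of which free variables of the contractum occur usefully.

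There is one point you must repair: the hypothesis of your substitution lemma. As stated it assumes $\tmtwo$ is intuitionistic, but in the crucial case $\clasapp{(\claslamp{(\var:\typ\NN)}{\tm})}{\tmtwo} \toa{\ruleBClasP} \tm\sub{\var}{\tmtwo}$ the argument $\tmtwo$ sits entirely inside the argument of a positive weak elimination, so the intuitionistic conditions impose \emph{no} constraint on its internal structure: all that is available is $\judgPRK{\tctx}{\tmtwo}{\typ\NN}$, and $\tmtwo$ taken as a standalone term may perfectly well contain, say, a $\negen{\tmthree}$ at a useful position. Your case-analysis prose describes the correct argument, but the lemma you would invoke at that point does not apply to it. The fix is to split the substitution lemma in two, as the paper does in \rlem{prj_admissible_rules}: an intuitionistic cut ($\ICut$: $\tmtwo$ intuitionistic, $\var$ arbitrary), which covers $\ruleBetaB$, $\ruleCase$, $\ruleCoproj$ and $\ruleOpen$, where the substituted term occupies a useful position in the redex and is therefore itself intuitionistic; and a counterfactual cut ($\CCut$: $\tmtwo$ merely $\PRK$-typable, but $\var$ having only useless occurrences in $\tm$), which is the one needed for $\ruleBClasP$. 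With that split, and your stated refinement about which free variables of the contractum occur usefully, the remaining cases close as you describe.
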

\begin{proof}
If $\varset$ is a set of variables,
we say that a term $\tm$ is {\em $\varset$-intuitionistic}
if it is intuitionistic and, furthermore,
it has no {\em useful} free occurrences of variables in $\varset$.
We write $\judgPRJ[\varset]{\tctx}{\tm}{\ev}$,
if the judgment is derivable in $\PRK$ and $\tm$ is $\varset$-intuitionistic.
The statement of subject reduction is generalized as follows:
$\judgPRJ[\varset]{\tctx}{\tm}{\ev}$ and $\tm \to \tmtwo$,
then $\judgPRJ[\varset]{\tctx}{\tmtwo}{\ev}$.

The interesting case is the $\beta$ rule for positive weak proofs,
$\clasapp{(\claslamp{\var:\typ\NN}{\tm})}{\tmtwo} \toa{\ruleBClasP} \tm\sub{\var}{\tmtwo}$.
By hypothesis, $\judgPRJ[\varset]{\tctx}{\clasapp{(\claslamp{\var:\typ\NN}{\tm})}{\tmtwo}}{\typ\pp}$.
This judgment can only be derived from the $\Icp$ rule,
so $\judgPRJ[\varset]{\tctx,\var:\typ\NN}{\tm}{\typ\pp}$.
Moreover, $\var$ is a negative counterfactual,
so there cannot be useful free occurrences of $\var$ in $\tm$,
which means that $\judgPRJ[\varset\cup\set{\var}]{\tctx,\var:\typ\NN}{\tm}{\typ\pp}$.
On the other hand, $\tmtwo$ lies inside a positive application,
so it is not necessarily $X$-intuitionistic,
\ie we only know $\judgPRK{\tctx}{\tmtwo}{\typ\NN}$.
The key observation is that all the copies of $\tmtwo$ on the
right-hand side $\tm\sub{\var}{\tmtwo}$ must be useless,
because all the occurrences of $\var$ in $\tm$ are useless.
More precisely,
from $\judgPRJ[\varset\cup\set{\var}]{\tctx,\var:\typ\NN}{\tm}{\typ\pp}$
and $\judgPRK{\tctx}{\tmtwo}{\typ\NN}$
one concludes $\judgPRJ[\varset]{\tctx}{\tm\sub{\var}{\tmtwo}}{\ev}$
by induction on $\tm$.
\end{proof}

Reasoning principles in $\PRJ$ differ from those of $\PRK$.
For example, if $\ev$ is a weak formula,
a sequent $\judgPRK{\tctx,\var:\ev}{\tm}{\evtwo}$ valid in $\PRK$
can always be {\em contraposed} to a sequent of the form
$\judgPRK{\tctx,\vartwo:\evtwo\OP}{\tm'}{\ev\OP}$.
The analogous of this contraposition principle in $\PRJ$
depends on the sign of $\ev$.
If $\ev$ is positive, \ie $\ev = \typ\PP$
the sequent $\judgPRJ{\tctx,\var:\typ\PP}{\tm}{\evtwo}$
can always be contraposed to $\judgPRJ{\tctx,\vartwo:\evtwo\OP}{\tm'}{\typ\NN}$.
But if $\ev$ is negative, \ie $\ev = \typ\NN$
the sequent $\judgPRJ{\tctx,\var:\typ\NN}{\tm}{\evtwo}$
can only be contraposed to $\judgPRJ{\tctx,\vartwo:\evtwo\OP}{\tm'}{\typ\PP}$
if there are no useful occurrences of $\var$ in $\tm$.

The following theorem is an analog of~\rthm{lambdaPRK_refinement}
for $\lambdaPRJ$. \SeeAppendix{We omit the proof for lack of space;
see Sections~\ref{section:appendix:prj_principles},
\ref{section:appendix:prjv_definition}, \ref{section:appendix:prj_refinement}
in the appendix.}

\begin{theorem}[Intuitionistic refinement]
\lthm{lambdaPRJ_refinement}
$\log{\typ\PP_1,\hdots,\typ\PP_n}{\typtwo\PP}$ holds in $\lambdaPRJ$
if and only if
$\log{\typ_1,\hdots,\typ_n}{\typtwo}$ holds in intuitionistic
second-order logic.
\end{theorem}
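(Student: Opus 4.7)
The plan is to mirror the proof of \rthm{lambdaPRK_refinement}, establishing the two directions of the biconditional separately.

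For the ($\Leftarrow$) direction, I would show that each inference rule of intuitionistic second-order natural deduction is admissible in $\lambdaPRJ$ under the weak-affirmation encoding $\typ \mapsto \typ\PP$. Many of the term constructions from the proof of \rthm{lambdaPRK_refinement} can be reused, but each construction must additionally be verified to yield an \emph{intuitionistic} term: (i) no useful subterm is a negative elimination ($\Enotn$, $\Ealln$, $\Eandn$, $\Eimpn$), and (ii) no useful subterm of the form $\claslamp{(\var:\typ\NN)}{\tm}$ contains useful occurrences of $\var$. Since the NJ rules correspond to constructive reasoning, a case-by-case inspection should confirm the intuitionistic invariant; this is partly automatic, since the problematic classical rules (such as double negation elimination) are precisely the ones that the restriction forbids.

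For the ($\Rightarrow$) direction, I would define a translation on modes by $\iota(\typ\PP) = \iota(\typ\pp) = \typ$ and $\iota(\typ\NN) = \iota(\typ\nn) = \neg\typ$, extended pointwise to contexts, and then prove by induction on a $\PRJ$ derivation that $\judgPRJ{\tctx}{\tm}{\ev}$ implies $\logNJ{\iota(\tctx)}{\iota(\ev)}$. The critical observation is that although the typing rules $\Icp$ and $\Ecp$ naively demand classical reasoning (consequentia mirabilis and reductio ad absurdum respectively), the intuitionistic restrictions on $\tm$ render such uses harmless: the argument of a positive weak application $\clasapp{\tm}{\tmtwo}:\typ\pp$ sits in useless position, so the translation need not produce a proof for it and $\iota(\clasapp{\tm}{\tmtwo})$ can be defined directly from $\iota(\tm):\typ$. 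Dually, the bound variable of $\claslamp{\var:\typ\NN}{\tm}$ has no useful occurrences in $\tm$, so the translation of $\tm$ can simply ignore $\var$ and the NJ derivation produced only uses the remaining context. The forbidden negative eliminations are similarly never reached by the translation, since they can only occur inside useless arguments.

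The main obstacle will be making this ``useless subterms are never translated'' strategy formally sound across the induction, since usefulness is a global property of a term and the substitution arising in the $\beta$-rule for positive weak proofs can in principle migrate a useless subterm into a useful position. The cleanest remedy, suggested by the footnote reference to the appendix on $\lambdaPRJV$, is to introduce an intermediate system $\lambdaPRJV$ that internalises the useful/useless distinction by decorating each typing judgment with a set $\varset$ of variables forbidden from occurring usefully, so that the admissibility of $\Icp$ and of the useless negative eliminations becomes part of the type system rather than a syntactic side condition. One then proves that $\lambdaPRJ$ and $\lambdaPRJV$ derive the same judgments, and separately that $\lambdaPRJV$ admits a structural translation to NJ along the lines sketched above. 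This layered approach handles the substitution bookkeeping cleanly and yields, as a specialisation, the statement for sequents consisting entirely of weakly affirmed formulas.
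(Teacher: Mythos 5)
Your proposal matches the paper's own proof essentially step for step: the embedding direction reuses the classical constructions and checks the intuitionistic invariant case by case, and the conservativity direction goes through exactly the intermediate system $\lambdaPRJV$ with judgments indexed by a set $\varset$ of forbidden counterfactuals, proves its equivalence with $\lambdaPRJ$, and then translates structurally to $\NJ$ with the forbidden variables' types dropped from the context so that $\Icp$ and $\Ecp$ become harmless. No gaps; this is the paper's argument.
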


\section{Canonicity}
\lsec{canonicity}

In sequent calculus and natural deduction,
an {\em indirect proof} (\eg with cuts),
can always be mechanically converted into
a {\em canonical proof} (\eg cut-free), in which the justification
for the conclusion is immediately available,
as is known from the
works of Gentzen~\cite{gentzen1935untersuchungen}
and Prawitz~\cite{prawitz1965natural}.
Its philosophical importance is that the validity of an indirect proof
can thus be justified by understanding it as a notation for describing
a canonical proof. A practical consequence is that
an explicit witness may be extracted from a proof of existence.

In this section, we formulate a {\bf canonicity} result
strengthening those of~\cite{BFlics21}.
We start by introducing some nomenclature.
{\em Neutral terms} ($\neu,\hdots$) and
{\em normal terms} ($\nf,\hdots$) are
given by
$\neu ::= 
       \var
  \mid \strongabs{\ev}{\neu}{\nf}
  \mid \strongabs{\ev}{\nf}{\neu}
  \mid \colorand{\projipn{\neu}}
  \mid \colorand{\casepn{\neu}{\var}{\nf}{\var}{\nf}}
  \mid
       \colorimp{\appn{\neu}{\nf}}
  \mid \colorimp{\colampn{\neu}{\var}{\vartwo}{\nf}}
  \mid \colorneg{\negepn{\neu}}
  \mid \colorall{\apptpn{\neu}{\typ}}
  \mid \colorall{\optpn{\neu}{\var}{\btyp}{\nf}}
  \mid \clasappn{\neu}{\nf}
$
and
$
  \nf ::=
       \neu
  \mid \colorand{\pairpn{\nf}{\nf}}
  \mid \colorand{\inipn{\nf}}
  \mid \colorimp{\lampn{\var}{\nf}}
  \mid \colorimp{\copairpn{\nf}{\nf}}
  \mid \colorneg{\negipn{\nf}}
  \mid \colorall{\lamtpn{\btyp}{\nf}}
  \mid \colorall{\patpn{\typ}{\nf}}
  \mid \claslampn{\var}{\nf}
$.
Terms built with an introduction rule:
$\colorand{\pairpn{\tm}{\tmtwo}}$,
$\colorand{\inipn{\tm}}$,
$\colorimp{\lampn{\var}{\tm}}$,
$\colorimp{\copairpn{\tm}{\tmtwo}}$,
$\colorneg{\negipn{\tm}}$,
$\colorall{\lamtpn{\btyp}{\tm}}$,
$\colorall{\patpn{\typ}{\tm}}$,
$\claslampn{\var}{\tm}$
are called {\em canonical}. Then
(\SeeAppendix{see~\rthm{appendix:canonicity} in the appendix for details}):

\begin{theorem}[Canonicity]
\lthm{canonicity}
\quad
\begin{enumerate}
\item
  If $\judgPRK{}{\tm}{\ev}$,
  then $\tm$ reduces to a canonical normal form $\nf$
  such that $\judgPRK{}{\nf}{\ev}$.
\item
  If $\judgPRK{}{\tm}{\ev}$, where $\ev$ is weak,
  then a canonical normal form $\nf$ can be effectively found
  such that $\judgPRK{}{\claslampn{(\var:\ev\OP)}{\nf}}{\ev}$
\end{enumerate}
\end{theorem}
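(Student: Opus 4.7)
The plan is to combine strong normalization (from \rthm{adequacy}), confluence (\rthm{lambdaPRK_confluence}), and subject reduction (\rthm{subject_reduction}), together with a structural analysis of normal forms. A preliminary observation, verified by inspecting each reduction rule, is that the rewrite-theoretic $\toa{}$-normal forms coincide with the syntactic category of $\nf$ given by the grammar above.

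For Part 1, strong normalization and confluence deliver a unique normal form $\nf^*$ with $\tm \rto \nf^*$, and subject reduction gives $\judgPRK{}{\nf^*}{\ev}$. The essential step is the lemma \emph{every neutral normal term has a free variable}, proved by a straightforward induction on $\neu$: the variable case is immediate, while in each other case---the absurdities $\strongabs{\ev}{\neu}{\nf}$ and $\strongabs{\ev}{\nf}{\neu}$, and every elimination form ($\projipn{\neu}$, $\casepn{\neu}{\var}{\nf}{\var}{\nf}$, $\appn{\neu}{\nf}$, $\colampn{\neu}{\var}{\vartwo}{\nf}$, $\negepn{\neu}$, $\apptpn{\neu}{\typ}$, $\optpn{\neu}{\var}{\btyp}{\nf}$, $\clasappn{\neu}{\nf}$)---the induction hypothesis applies to an immediate neutral subterm. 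Since $\nf^*$ is closed it cannot be neutral, and is therefore canonical.

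For Part 2, Part 1 supplies $\tm \rto \claslampn{(\vartwo:\ev\OP)}{\nf_0}$ whose body $\nf_0$ is a normal form of the strong dual type. I would then build the required canonical $\nf$ by case analysis on the principal connective of the underlying pure type $\typ$. For connectives with a unique introduction rule ($\land$, $\imp$, $\coimp$, $\neg$, $\forall$) one wraps normalized eliminations of $\clasappn{\tm}{\vartwo}$ in the corresponding introductor; for example, for $\typ = \typ_1 \land \typ_2$ set $\nf = \pairp{\nf_1}{\nf_2}$ with $\nf_i$ the normal form of $\projip[i]{\clasapp{\tm}{\vartwo}}$, which is well-typed by $\Iandp/\Eandp$ together with subject reduction and normal because the outer $\pairp$ does not form a redex with its subterms. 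The main obstacle is disjunction and existential quantification, whose introductions require an effective choice of disjunct or witness. Here the choice is read off $\nf_0$ by a recursive structural analysis: either $\nf_0$ is already a canonical introductor (an injection $\inipn{\cdot}$ or a pattern $\patpn{\typ}{\cdot}$), or it is neutral; in the latter case, tracking how the sole free variable $\vartwo$ of weak negative type can propagate to a strong positive conclusion forces $\nf_0$ to split along an absurdity whose positive side is a strictly smaller normal form of the same strong positive type, so iterating must eventually expose a canonical introductor from which the disjunct index or type witness is extracted and the desired $\nf$ assembled.
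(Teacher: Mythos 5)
Your Part~1 is essentially correct and, for that half, more elementary than the paper's own argument: where the paper invokes a ``shape of neutral terms'' lemma phrased via critical contexts and capsules, you only need the observation that every neutral term keeps a free variable (the neutral subterm of each elimination form sits outside that form's binders), which already rules out neutral closed normal forms. The paper proves the heavier shape lemma because it is needed again in Part~2. Your $\eta$-expansion construction for the connectives with a unique introduction rule is also fine as far as it goes.

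The gap is in the recursive analysis you sketch for $\lor$ and $\exists$. The claim that a neutral normal form $\nf_0$ with $\judgPRK{\vartwo:\typ\NN}{\nf_0}{\typ\pp}$ must ``split along an absurdity whose positive side is a strictly smaller normal form of the same strong positive type'' is false on two counts. First, even when an absurdity $\strongabs{\typ\pp}{\tm}{\tmtwo}$ occurs, its premises have types $\typ'\pp$ and $\typ'\nn$ for whatever formula $\typ'$ is being refuted there, and $\typ'$ is in general unrelated to the goal type $\typ$; recursing into the positive side neither preserves the type nor exposes the needed injection or witness. Second, no absurdity is forced to occur at all: the spine from the head variable to the root can instead pass through a negative case analysis, a co-implication elimination, etc., since all of these admit an arbitrary conclusion $\ev$. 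The correct locus of the recursion --- the content of the paper's \rlem{appendix:shape_of_neutral_terms} --- is the \emph{argument of the innermost weak elimination}: over a context consisting only of variables of type $\typ\NN$, every neutral normal form of strong type has the shape $\crictxof{\clasapn{\vartwo_i}{\tmfive}}$ with $\tmfive$ a capsule of type $\typ\PP$; since no context variable has that type, $\tmfive$ must be $\claslamp{(\varthree:\typ\NN)}{\nf''}$, where $\nf''$ is a strictly smaller normal form of the \emph{same} type $\typ\pp$ over a context of the same shape, and the induction proceeds on $\nf''$. To run this you must also generalize the statement to contexts $\vartwo_1:\typ\NN,\hdots,\vartwo_n:\typ\NN$, since each unfolding introduces a fresh such hypothesis. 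Without this shape lemma (or an equivalent), the disjunct/witness extraction step of your Part~2 does not go through.
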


Note that this canonicity theorem applies to closed terms only,
so there is no need to include commutative conversions,
such as
$\clasapp{\casep{\var}{\vartwo}{\tm}{\varthree}{\tmtwo}}{\tmthree}
\to
\casep{\var}{\vartwo}{\clasapp{\tm}{\tmthree}}{\varthree}{\clasapp{\tmtwo}{\tmthree}}$,
to unblock redexes.
The preceding canonicity result extends and strengthens~Thm.~35 of~\cite{BFlics21}.
The first part of the theorem confirms the intuition,
mentioned in the introduction,
that canonical proofs of strong propositions
are always constructed with an introduction rule
for the corresponding logical connective,
while canonical proofs of weak propositions
proceed by {\em reductio ad absurdum}.
For example, if $\judg{}{\tm}{(\typ_1\land\typ_2)\pp}$
is the {\em strong} proof of a conjunction,
its normal form must be a pair $\pairp{\tm_1}{\tm_2}$
and we know that $\judg{}{\tm_i}{\typ_i\PP}$ must hold for $i \in \set{1,2}$.
On the other hand, if $\judg{}{\tmtwo}{(\typ_1\land\typ_2)\pp}$
is the {\em weak} proof of a conjunction,
we can only assure that
its normal form is of the form $\claslamp{(\var:(\typ_1\land\typ_2)\NN)}{\tmtwo'}$,
where $\judg{\var:(\typ_1\land\typ_2)\NN}{\tmtwo'}{(\typ_1\land\typ_2)\pp}$.
The second part of the theorem provides the stronger guarantee
that, in such case, one can compute a {\em canonical} $\tmtwo''$
such that
$\judg{\var:(\typ_1\land\typ_2)\NN}{\tmtwo''}{(\typ_1\land\typ_2)\pp}$,
so in particular one has that $\tmtwo'' = \pairp{\tmtwo_1}{\tmtwo_2}$
and that
$\judg{\var:(\typ_1\land\typ_2)\NN}{\tmtwo_i}{\typ_i\PP}$
for all $i \in \set{1,2}$.

Canonicity can also
be used to obtain a (weak) form of {\em disjunctive property}.
In particular, from $\judg{}{\tm}{(\typ_1\lor\typ_2)\PP}$
one can always find an $i \in \set{1,2}$ and a term
$\judg{}{\claslamp{(\var:(\typ_1\lor\typ_2)\NN)}{\inip{\tm'}}}{(\typ_1\lor\typ_2)\PP}$
such that $\judg{\var:(\typ_1\lor\typ_2)\NN}{\tm'}{\typ_i\PP}$.
Similarly, a (weak) form of {\em witness extraction} can
be obtained:
given $\judg{}{\tm}{(\ex{\btyp}{\typ})\PP}$
one can always find a term
$\judg{}{\claslamp{(\var:(\ex{\btyp}{\typ})\NN)}{\patp{\typtwo}{\tm'}}}{(\ex{\btyp}{\typ})\PP}$
where $\judg{\var:(\ex{\btyp}{\typ})\NN}{\tm'}{\typ\sub{\btyp}{\typtwo}\PP}$.

Furthermore,
canonicity provides a purely syntactic proof of the consistency of $\PRK$\footnote{Another
way to prove consistency is using \rthm{lambdaPRK_refinement},
noting that $\logPRK{}{\btyp\pp}$ implies $\logNK{}{\btyp}$.}.
Note, for example, that if $\btyp$ is a base type,
there is no canonical term $\tm$ such that $\judgPRK{}{\tm}{\btyp\pp}$.

\section{Conclusion}
\lsec{conclusion}

In this paper we have extended the $\lambdaPRK$-calculus of~\cite{BFlics21}
to incorporate implication and co-implication,
as well as second-order quantifiers.
From the logical point of view, this extension of $\lambdaPRK$
{\bf refines classical second order logic} (\rthm{lambdaPRK_refinement}).
From the computational point of view, it is
{\bf confluent}~(\rthm{lambdaPRK_confluence})
and {\bf strongly normalizing}~(\rsec{strong_normalization}).
These ingredients constitute a computational interpretation for
second-order classical logic.
We have identified a well-behaved subset of the system,
called $\lambdaPRJ$, that 
{\bf refines intuitionistic second-order logic} (\rthm{lambdaPRJ_refinement}).
We have also formulated a {\bf canonicity}~(\rthm{canonicity})
result that strengthens results of previous works.
One noteworthy property of $\lambdaPRK$
is that both typing and reduction rules are fully symmetric
with respect to the operation
that flips signs and exchanges the roles of dual connectives,
while still being confluent.

\subparagraph{Related work}
The ``$\Icp$''/``$\Icn$'' rules in $\lambdaPRK$
encode a primitive variant of {\em consequentia mirabilis} ($(\neg\typ\to\typ)\to\typ$),
while the ``$\lambda$'' rule in
Barbanera and Berardi's calculus~\cite{symmetric-Barbanera-berardi},
as well as the ``$\mu$'' rule in
Parigot's $\lambda\mu$-calculus~\cite{lambdamu-parigot},
encode a primitive variant of {\em double negation elimination} ($\neg\neg\typ\to\typ$).
To prove $\typ$ using double negation elimination
one may assume $\neg\typ$ and then provide a proof of $\bot$.
This proof {\em cannot be canonical},
as there are no introduction rules for the empty type.
This motivates that we instead rely on {\em consequentia mirabilis}.

Strong normalization proofs are often based on reducibility candidates.
Yamagata proves strong normalization for second-order
formulations of classical calculi~\cite{Yamagata02,Yamagata04},
via reducibility candidates.
Our proof is inspired by ideas known from the literature
of logical relations and biorthogonality:
for instance,
the notions of {\em orthogonal} \RCs and {\em closure} of a \RC
can be traced back to Krivine's work on classical realizability~\cite{krivine2009},
Pitts' $\top\top$-closed logical relations~\cite{Pitts00},
and related notions (see \eg~\cite{DownenJA20}).
The challenging aspect of $\lambdaPRK$ is the mutually recursive dependency
between $\typ\PP$ and $\typ\NN$,
for which
our key reference is Mendler's work~\cite{mendler1991inductive}.

The problem of finding a good calculus for
classical logic has not been unquestionably settled.
Current proof assistants based on type theory, such as \textsc{Coq},
allow classical reasoning by postulating axioms with no
computational content, which breaks canonicity.
An established classical calculus
is Parigot's $\lambda\mu$,
whose metatheory has been thoroughly developed;
see for instance~\cite{DBLP:journals/mscs/Groote98,david2001lambdamu,DBLP:journals/tcs/Laurent03,saurin2005separation,DBLP:conf/csl/Saurin08,pedrot2016classical,DBLP:journals/lmcs/KesnerV19,DBLP:conf/csl/KesnerBV20}.
One difference between $\lambda\mu$ and $\lambdaPRK$
is that $\lambdaPRK$ computational rules are based on the standard
operation of substitution, while $\lambda\mu$ is based on an {\em ad hoc}
substitution operator.
Another difference is that the embedding of the
$\lambda$-calculus into $\lambda\mu$ is an inclusion,
whereas the embedding into $\lambdaPRK$ is much more convoluted.

Another established classical calculus
is Curien and Herbelin's~\cite{Curien00theduality}
$\bar{\lambda}\mu\tilde{\mu}$, whose study is also quite
mature; see for instance~\cite{Polonovski04,Herbelin08,DoughertyGL08,Bakel10,CurienM10,ariola2011classical,AriolaDHNS12,Miquey19}.
One difference between $\bar{\lambda}\mu\tilde{\mu}$ and $\lambdaPRK$
is that $\bar{\lambda}\mu\tilde{\mu}$ is not confluent unless
a reduction strategy is fixed in the presence of a specific critical
pair, whereas $\lambdaPRK$ is orthogonal.
Another difference is that $\bar{\lambda}\mu\tilde{\mu}$ is derived
from a proof term assignment for classical sequent calculus,
while $\lambdaPRK$ is defined in natural deduction style with
four forms of judgment (given by the modes $\typ\pp$, $\typ\nn$,
$\typ\PP$, $\typ\NN$).
Munch-Maccagnoni~\cite{Munch-Maccagnoni09} proposes a classical calculus
by {\em polarizing} Curien and Herbelin's calculus,
in such a way that the reduction strategy becomes determined by the polarities.

As mentioned in the introduction, $\lambdaPRK$ is related to Nelson's
constructible falsity~\cite{nelson1949constructible}.
Parigot~\cite{Parigot91}
studies {\em free deduction}, a system for classical logic
in which natural deduction and sequent calculus can both be embedded.
Rumfitt~\cite{10.2307/2660026} proposes {\em bilateral} logical systems,
in which assertion and denial judgments, with dual rules, are
formulated.
Zeilberger~\cite{Zeilberger08} studies a polarized logical system with
proofs and refutations distinguishing between verificationist and pragmatist
connectives.
These systems, however, do not distinguish between weak and strong
propositions, and they do not have rules analogous to $\Icpn/\Ecpn$.

\subparagraph{Future work}
The merely {\em logical} correspondence between $\lambdaPRK$ and
other classical calculi is immediate, given the fact that $\lambdaPRK$ refines
second-order classical logic~(\rthm{lambdaPRK_refinement}).
However, it is not obvious what their relation is from
the {\em computational} point of view.

In order to be able to build programming languages and proof assistants
based on the principles of $\lambdaPRK$, it would be convenient to
study dependently typed extensions of the system. It is not {\em a priori}
clear what such an extension would look like.

It is known that, in classical logic, {\em reductio ad absurdum} can be
postponed, in such a way that it is used at most once,
as the last rule in the derivation~\cite{DBLP:journals/jsyml/Seldin86,DBLP:journals/sLogica/GuerrieriN19}.
It would be interesting to see if this result can be reproduced in $\PRK$.




\bibliography{biblio}

\begin{thebibliography}{10}

\bibitem{AriolaDHNS12}
Zena~M. Ariola, Paul Downen, Hugo Herbelin, Keiko Nakata, and Alexis Saurin.
\newblock Classical call-by-need sequent calculi: The unity of semantic
  artifacts.
\newblock In Tom Schrijvers and Peter Thiemann, editors, {\em Functional and
  Logic Programming - 11th International Symposium, {FLOPS} 2012, Kobe, Japan,
  May 23-25, 2012. Proceedings}, volume 7294 of {\em Lecture Notes in Computer
  Science}, pages 32--46. Springer, 2012.

\bibitem{ariola2011classical}
Zena~M Ariola, Hugo Herbelin, and Alexis Saurin.
\newblock Classical call-by-need and duality.
\newblock In {\em International Conference on Typed Lambda Calculi and
  Applications}, pages 27--44. Springer, 2011.

\bibitem{symmetric-Barbanera-berardi}
Franco Barbanera and Stefano Berardi.
\newblock A symmetric lambda calculus for ``classical'' program extraction.
\newblock In Masami Hagiya and John~C. Mitchell, editors, {\em Theoretical
  Aspects of Computer Software}, pages 495--515, Berlin, Heidelberg, 1994.
  Springer Berlin Heidelberg.

\bibitem{BFlics21}
Pablo Barenbaum and Teodoro Freund.
\newblock A constructive logic with classical proofs and refutations.
\newblock In {\em 36th Annual {ACM/IEEE} Symposium on Logic in Computer
  Science, {LICS} 2021, Rome, Italy, June 29 - July 2, 2021}, pages 1--13.
  {IEEE}, 2021.

\bibitem{bierman2000intuitionistic}
Gavin~M. Bierman and Valeria~CV de~Paiva.
\newblock On an intuitionistic modal logic.
\newblock {\em Studia Logica}, 65(3):383--416, 2000.

\bibitem{BohmB85}
Corrado B{\"{o}}hm and Alessandro Berarducci.
\newblock Automatic synthesis of typed lambda-programs on term algebras.
\newblock {\em Theor. Comput. Sci.}, 39:135--154, 1985.

\bibitem{CoquandH88}
Thierry Coquand and G{\'{e}}rard~P. Huet.
\newblock The calculus of constructions.
\newblock {\em Inf. Comput.}, 76(2/3):95--120, 1988.

\bibitem{Crolard01}
Tristan Crolard.
\newblock Subtractive logic.
\newblock {\em Theor. Comput. Sci.}, 254(1-2):151--185, 2001.
\newblock \href {https://doi.org/10.1016/S0304-3975(99)00124-3}
  {\path{doi:10.1016/S0304-3975(99)00124-3}}.

\bibitem{Curien00theduality}
Pierre-Louis Curien and Hugo Herbelin.
\newblock The duality of computation, 2000.

\bibitem{CurienM10}
Pierre{-}Louis Curien and Guillaume Munch{-}Maccagnoni.
\newblock The duality of computation under focus.
\newblock In Cristian~S. Calude and Vladimiro Sassone, editors, {\em
  Theoretical Computer Science - 6th {IFIP} {TC} 1/WG 2.2 International
  Conference, {TCS} 2010, Held as Part of {WCC} 2010, Brisbane, Australia,
  September 20-23, 2010. Proceedings}, volume 323 of {\em {IFIP} Advances in
  Information and Communication Technology}, pages 165--181. Springer, 2010.

\bibitem{davey1990}
Brian~A. Davey and Hilary~A. Priestley.
\newblock {\em Introduction to lattices and order}.
\newblock Cambridge University Press, Cambridge, 1990.

\bibitem{david2001lambdamu}
Ren{\'e} David and Walter Py.
\newblock $\lambda$$\mu$-calculus and b{\"o}hm's theorem.
\newblock {\em The Journal of Symbolic Logic}, 66(1):407--413, 2001.

\bibitem{DBLP:journals/jacm/DaviesP01}
Rowan Davies and Frank Pfenning.
\newblock A modal analysis of staged computation.
\newblock {\em J. {ACM}}, 48(3):555--604, 2001.

\bibitem{debruijn1970mathematical}
Nicolaas~Govert De~Bruijn.
\newblock The mathematical language automath, its usage, and some of its
  extensions.
\newblock In {\em Symposium on automatic demonstration}, pages 29--61.
  Springer, 1970.

\bibitem{DBLP:journals/mscs/Groote98}
Philippe de~Groote.
\newblock An environment machine for the lambda-mu-calculus.
\newblock {\em Math. Struct. Comput. Sci.}, 8(6):637--669, 1998.

\bibitem{DoughertyGL08}
Daniel~J. Dougherty, Silvia Ghilezan, and Pierre Lescanne.
\newblock Characterizing strong normalization in the curien-herbelin symmetric
  lambda calculus: Extending the coppo-dezani heritage.
\newblock {\em Theor. Comput. Sci.}, 398(1-3):114--128, 2008.

\bibitem{DownenJA20}
Paul Downen, Philip Johnson{-}Freyd, and Zena~M. Ariola.
\newblock Abstracting models of strong normalization for classical calculi.
\newblock {\em J. Log. Algebraic Methods Program.}, 111:100512, 2020.
\newblock \href {https://doi.org/10.1016/j.jlamp.2019.100512}
  {\path{doi:10.1016/j.jlamp.2019.100512}}.

\bibitem{FelleisenFKD87}
Matthias Felleisen, Daniel~P. Friedman, Eugene~E. Kohlbecker, and Bruce~F.
  Duba.
\newblock A syntactic theory of sequential control.
\newblock {\em Theor. Comput. Sci.}, 52:205--237, 1987.
\newblock \href {https://doi.org/10.1016/0304-3975(87)90109-5}
  {\path{doi:10.1016/0304-3975(87)90109-5}}.

\bibitem{gentzen1935untersuchungen}
Gerhard Gentzen.
\newblock Untersuchungen {\"u}ber das logische schlie{\ss}en. i.
\newblock {\em Mathematische zeitschrift}, 39(1):176--210, 1935.

\bibitem{thesisgirard}
Jean-Yves Girard.
\newblock {\em Interpr\'etation fonctionnelle et \'elimination des coupures de
  l'arithm\'etique d'ordre sup\'erieur}.
\newblock Ph{D} thesis, Universit\'e Paris 7, 1972.

\bibitem{girard1987linear}
Jean-Yves Girard.
\newblock Linear logic.
\newblock {\em Theoretical computer science}, 50(1):1--101, 1987.

\bibitem{griffin1989formulae}
Timothy~G Griffin.
\newblock A formulae-as-type notion of control.
\newblock In {\em Proceedings of the 17th ACM SIGPLAN-SIGACT symposium on
  Principles of programming languages}, pages 47--58, 1989.

\bibitem{DBLP:journals/sLogica/GuerrieriN19}
Giulio Guerrieri and Alberto Naibo.
\newblock Postponement of {\textdollar}{\textdollar}{\textbackslash}mathsf
  \{raa\}{\textdollar}{\textdollar} raa and glivenko's theorem, revisited.
\newblock {\em Stud Logica}, 107(1):109--144, 2019.
\newblock \href {https://doi.org/10.1007/s11225-017-9781-5}
  {\path{doi:10.1007/s11225-017-9781-5}}.

\bibitem{Herbelin08}
Hugo Herbelin and Silvia Ghilezan.
\newblock An approach to call-by-name delimited continuations.
\newblock {\em SIGPLAN Not.}, 43(1):383–394, jan 2008.
\newblock URL: \url{https://doi.org/10.1145/1328897.1328484}.

\bibitem{DBLP:conf/csl/KesnerBV20}
Delia Kesner, Eduardo Bonelli, and Andr{\'{e}}s Viso.
\newblock Strong bisimulation for control operators (invited talk).
\newblock In Maribel Fern{\'{a}}ndez and Anca Muscholl, editors, {\em 28th
  {EACSL} Annual Conference on Computer Science Logic, {CSL} 2020, January
  13-16, 2020, Barcelona, Spain}, volume 152 of {\em LIPIcs}, pages 4:1--4:23.
  Schloss Dagstuhl - Leibniz-Zentrum f{\"{u}}r Informatik, 2020.

\bibitem{DBLP:journals/lmcs/KesnerV19}
Delia Kesner and Pierre Vial.
\newblock Non-idempotent types for classical calculi in natural deduction
  style.
\newblock {\em Log. Methods Comput. Sci.}, 16(1), 2020.

\bibitem{krivine2009}
Jean-Louis Krivine.
\newblock Realizability in classical logic.
\newblock {\em Panoramas et synthèses}, 27, 01 2009.

\bibitem{DBLP:journals/tcs/Laurent03}
Olivier Laurent.
\newblock Polarized proof-nets and lambda-{\(\mathrm{\mu}\)}-calculus.
\newblock {\em Theor. Comput. Sci.}, 290(1):161--188, 2003.

\bibitem{martinlof1971theory}
Per Martin-L{\"o}f.
\newblock A theory of types, 1971.

\bibitem{mendler1991inductive}
Nax~Paul Mendler.
\newblock Inductive types and type constraints in the second-order lambda
  calculus.
\newblock {\em Annals of pure and Applied logic}, 51(1-2):159--172, 1991.

\bibitem{Miquey19}
{\'{E}}tienne Miquey.
\newblock A classical sequent calculus with dependent types.
\newblock {\em {ACM} Trans. Program. Lang. Syst.}, 41(2):8:1--8:47, 2019.

\bibitem{Munch-Maccagnoni09}
Guillaume Munch{-}Maccagnoni.
\newblock Focalisation and classical realisability.
\newblock In Erich Gr{\"{a}}del and Reinhard Kahle, editors, {\em Computer
  Science Logic, 23rd international Workshop, {CSL} 2009, 18th Annual
  Conference of the EACSL, Coimbra, Portugal, September 7-11, 2009.
  Proceedings}, volume 5771 of {\em Lecture Notes in Computer Science}, pages
  409--423. Springer, 2009.
\newblock \href {https://doi.org/10.1007/978-3-642-04027-6\_30}
  {\path{doi:10.1007/978-3-642-04027-6\_30}}.

\bibitem{nelson1949constructible}
David Nelson.
\newblock Constructible falsity.
\newblock {\em The Journal of Symbolic Logic}, 14(1):16--26, 1949.

\bibitem{nipkow1991higher}
Tobias Nipkow.
\newblock Higher-order critical pairs.
\newblock In {\em Proceedings 1991 Sixth Annual IEEE Symposium on Logic in
  Computer Science}, pages 342--343. IEEE Computer Society, 1991.

\bibitem{Parigot91}
Michel Parigot.
\newblock Free deduction: An analysis of "computations" in classical logic.
\newblock In Andrei Voronkov, editor, {\em Logic Programming, First Russian
  Conference on Logic Programming, Irkutsk, Russia, September 14-18, 1990 -
  Second Russian Conference on Logic Programming, St. Petersburg, Russia,
  September 11-16, 1991, Proceedings}, volume 592 of {\em Lecture Notes in
  Computer Science}, pages 361--380. Springer, 1991.
\newblock \href {https://doi.org/10.1007/3-540-55460-2\_27}
  {\path{doi:10.1007/3-540-55460-2\_27}}.

\bibitem{lambdamu-parigot}
Michel Parigot.
\newblock $\lambda$$\mu$-calculus: An algorithmic interpretation of classical
  natural deduction.
\newblock In Andrei Voronkov, editor, {\em Logic Programming and Automated
  Reasoning}, pages 190--201, Berlin, Heidelberg, 1992. Springer Berlin
  Heidelberg.

\bibitem{pedrot2016classical}
Pierre-Marie P{\'e}drot and Alexis Saurin.
\newblock Classical by-need.
\newblock In {\em European Symposium on Programming}, pages 616--643. Springer,
  2016.

\bibitem{Pitts00}
Andrew~M. Pitts.
\newblock Parametric polymorphism and operational equivalence.
\newblock {\em Math. Struct. Comput. Sci.}, 10(3):321--359, 2000.
\newblock URL:
  \url{http://journals.cambridge.org/action/displayAbstract?aid=44651}.

\bibitem{Polonovski04}
Emmanuel Polonovski.
\newblock Strong normalization of $\lambda$$\mu$$\mu$-calculus with explicit
  substitutions.
\newblock {\em Lecture Notes in Computer Science}, pages 423--437, 2004.

\bibitem{prawitz1965natural}
Dag Prawitz.
\newblock {\em Natural deduction: a proof-theoretical study}.
\newblock PhD thesis, Almqvist \& Wiksell, 1965.

\bibitem{reynolds1974towards}
John~C Reynolds.
\newblock Towards a theory of type structure.
\newblock In {\em Programming Symposium}, pages 408--425. Springer, 1974.

\bibitem{10.2307/2660026}
Ian Rumfitt.
\newblock "yes" and "no".
\newblock {\em Mind}, 109(436):781--823, 2000.

\bibitem{saurin2005separation}
Alexis Saurin.
\newblock Separation with streams in the/spl lambda//spl mu/-calculus.
\newblock In {\em 20th Annual IEEE Symposium on Logic in Computer Science
  (LICS'05)}, pages 356--365. IEEE, 2005.

\bibitem{DBLP:conf/csl/Saurin08}
Alexis Saurin.
\newblock On the relations between the syntactic theories of lambda-mu-calculi.
\newblock In Michael Kaminski and Simone Martini, editors, {\em Computer
  Science Logic, 22nd International Workshop, {CSL} 2008, 17th Annual
  Conference of the EACSL, Bertinoro, Italy, September 16-19, 2008.
  Proceedings}, volume 5213 of {\em Lecture Notes in Computer Science}, pages
  154--168. Springer, 2008.

\bibitem{DBLP:journals/jsyml/Seldin86}
Jonathan~P. Seldin.
\newblock On the proof theory of the intermediate logic {MH}.
\newblock {\em J. Symb. Log.}, 51(3):626--647, 1986.
\newblock \href {https://doi.org/10.2307/2274019} {\path{doi:10.2307/2274019}}.

\bibitem{Bakel10}
Steffen van Bakel.
\newblock Completeness and partial soundness results for intersection and union
  typing for
  lambda\({}^{\mbox{{\_}}}\){\(\mathrm{\mu}\)}{\(\mathrm{\mu}\)}\({}^{\mbox{{\_}}}\).
\newblock {\em Ann. Pure Appl. Log.}, 161(11):1400--1430, 2010.

\bibitem{Yamagata02}
Yoriyuki Yamagata.
\newblock Strong normalization of a symmetric lambda calculus for second-order
  classical logic.
\newblock {\em Arch. Math. Log.}, 41(1):91--99, 2002.

\bibitem{Yamagata04}
Yoriyuki Yamagata.
\newblock Strong normalization of the second-order symmetric lambda mu
  -calculus.
\newblock {\em Inf. Comput.}, 193(1):1--20, 2004.

\bibitem{Zeilberger08}
Noam Zeilberger.
\newblock On the unity of duality.
\newblock {\em Ann. Pure Appl. Log.}, 153(1-3):66--96, 2008.
\newblock \href {https://doi.org/10.1016/j.apal.2008.01.001}
  {\path{doi:10.1016/j.apal.2008.01.001}}.

\end{thebibliography}

\newpage
\appendix
\section{Second-Order Natural Deduction}
\lsec{appendix:natural_deduction}

\begin{definition}[Second-Order Natural Deduction]
Formulas are given by:
\[
  \begin{array}{rrl}
  \typ & ::=  & \btyp
           \mid \bot
           \mid \colorand{\typ \land \typ}
           \mid \colorand{\typ \lor \typ}
           \mid \colorimp{\typ \imp \typ}
           \mid \colorimp{\typ \coimp \typ}
           \mid \colorneg{\neg\typ}
           \mid \colorall{\all{\btyp}{\typ}}
           \mid \colorall{\ex{\btyp}{\typ}}
  \end{array}
\]
The intuitionistic second-order natural deduction system
$\NJ$ is given by the following inference rules.

\begin{fragBox}{Basic rules}
\[
\indrule{\NDAx}{
  \emptyPremise
}{
  \tctx, \typ \vdash \typ
}
\indrule{\NDExpl}{
  \tctx \vdash \bot
}{
  \tctx \vdash \typ
}
\]
\end{fragBox}

\begin{fragBox}{Conjunction and disjunction}
\[
\indrule{\colorand{\NDIand}}{
  \tctx \vdash \typ
  \HS
  \tctx \vdash \typtwo
}{
  \tctx \vdash \typ \land \typtwo
}
\indrule{\colorand{\NDEand}}{
  \tctx \vdash \typ_1 \land \typ_2
}{
  \tctx \vdash \typ_i
}
\]
\[
\indrule{\colorand{\NDIor}}{
  \tctx \vdash \typ_i
}{
  \tctx \vdash \typ_1 \lor \typ_2
}
\indrule{\colorand{\NDEor}}{
  \tctx \vdash \typ \lor \typtwo
  \HS
  \tctx, \typ \vdash \typthree
  \HS
  \tctx, \typtwo \vdash \typthree
}{
  \tctx \vdash \typthree
}
\]
\end{fragBox}

\begin{fragBox}{Implication and co-implication}
\[
\indrule{\colorimp{\NDIimp}}{
  \tctx, \typ \vdash \typtwo
}{
  \tctx \vdash \typ \imp \typtwo
}
\indrule{\colorimp{\NDEimp}}{
  \tctx \vdash \typ \imp \typtwo
  \HS
  \tctx \vdash \typ
}{
  \tctx \vdash \typtwo
}
\]
\[
\indrule{\colorimp{\NDIcoimp}}{
  \tctx \vdash \neg\typ
  \HS
  \tctx \vdash \typtwo
}{
  \tctx \vdash \typ \coimp \typtwo
}
\indrule{\colorimp{\NDEcoimp}}{
  \tctx \vdash \typ \coimp \typtwo
  \HS
  \tctx,\neg\typ,\typtwo \vdash \typthree
}{
  \tctx \vdash \typthree
}
\]
\end{fragBox}

\begin{fragBox}{Negation}
\[
\indrule{\colorneg{\NDInot}}{
  \tctx, \typ \vdash \bot
}{
  \tctx \vdash \lnot \typ
}
\indrule{\colorneg{\NDEnot}}{
  \tctx \vdash \neg\typ
  \HS
  \tctx \vdash \typ
}{
  \tctx \vdash \bot
}
\]
\end{fragBox}

\begin{fragBox}{Second-order quantification}
\[
  \indrule{\colorall{\NDIall}}{
    \tctx \vdash \typ
    \HS
    \btyp \not\in \fv{\tctx}
  }{
    \tctx \vdash \all{\btyp}{\typ}
  }
  \indrule{\colorall{\NDEall}}{
    \tctx \vdash \all{\btyp}{\typtwo}
  }{
    \tctx \vdash \typtwo\sub{\btyp}{\typ}
  }
\]
\[
  \indrule{\colorall{\NDIex}}{
    \tctx \vdash \typtwo\sub{\btyp}{\typ}
  }{
    \tctx \vdash \ex{\btyp}{\typtwo}
  }
  \indrule{\colorall{\NDEex}}{
    \tctx \vdash \ex{\btyp}{\typ}
    \HS
    \tctx,\typ \vdash \typtwo
    \HS
    \btyp \not\in \fv{\tctx,\typtwo}
  }{
    \tctx \vdash \typtwo
  }
\]
\end{fragBox}

The classical second-order natural deduction system $\NK$
is obtained by extending $\NJ$ with the law of excluded middle:
\[
\indrule{\NDLem}{
}{
  \tctx \vdash \typ \lor \neg\typ
}
\]
Furthermore, the following rules are admissible in $\NJ$ and $\NK$.
\[
  \indrule{\NDW}{
    \log{\tctx}{\typ}
  }{
    \log{\tctxtwo}{\typ}
  }
  \indrule{\NDCut}{
    \log{\tctx,\typtwo}{\typ}
    \HS
    \log{\tctx}{\typtwo}
  }{
    \log{\tctx}{\typ}
  }
\]
\bigskip

\noindent
We write $\logNJ{\tctx}{\typ}$ (respectively, $\logNK{\tctx}{\typ}$)
if the sequent $\log{\tctx}{\typ}$ holds in $\NJ$ (respectively, $\NK$).
\end{definition}

\newpage
\section{Subject Reduction for $\lambdaPRK$}
\lsec{appendix:subject_reduction}

\begin{lemma}
\llem{lambdaTwoC:subT_admissible}
The substitution rule is admissible in $\lambdaPRK$:

\[
  \indrule{\SubT}{
    \tctx \vdash \tm : \ev
  }{
    \tctx\sub{\btyp}{\typ} \vdash \tm\sub{\btyp}{\typ} : \ev\sub{\btyp}{\typ}
  }
\]
\end{lemma}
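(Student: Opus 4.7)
The plan is to proceed by straightforward induction on the derivation of $\judg{\tctx}{\tm}{\ev}$. Since the type substitution $\sub{\btyp}{\typ}$ acts homomorphically on types, contexts, and terms (passing through every term constructor except the binders for type variables), almost every case consists of applying the induction hypothesis to each premise and reassembling the corresponding rule, using the fact that substitution commutes with the logical connectives and modes, e.g.\ $(\typtwo\land\typthree)\PP\sub{\btyp}{\typ} = (\typtwo\sub{\btyp}{\typ}\land\typthree\sub{\btyp}{\typ})\PP$.

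Before the induction, I would establish, as a standalone combinatorial fact on pure types, the usual substitution-of-substitutions identity
\[
  (\typtwo\sub{\btyptwo}{\typthree})\sub{\btyp}{\typ}
  \;=\;
  (\typtwo\sub{\btyp}{\typ})\sub{\btyptwo}{\typthree\sub{\btyp}{\typ}}
  \HS \text{whenever } \btyptwo\neq\btyp \text{ and } \btyptwo\notin\ftv{\typ},
\]
which is needed to justify the conclusions of the eliminator rules \Eallp{} and \Eexn{}, where the type of the conclusion is itself a substitution instance $\typtwo\sub{\btyptwo}{\typ'}$. The \Ax{} case uses that $(\var:\evtwo)\in\tctx$ implies $(\var:\evtwo\sub{\btyp}{\typ})\in\tctx\sub{\btyp}{\typ}$. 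The \Abs{} case and all the propositional introduction/elimination rules (for $\land,\lor,\imp,\coimp,\neg$) are immediate from the IH. The weak rules $\Icpn$ and $\Ecpn$ follow the same pattern, noting $(\typtwo\PP)\sub{\btyp}{\typ} = \typtwo\sub{\btyp}{\typ}\PP$ and similarly for the other modes.

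The only delicate cases are the quantifier rules $\Iallp,\Iexn,\Eexp,\Ealln$, where the term binds a type variable $\btyptwo$ subject to a freshness side condition. By $\alpha$-renaming, I may assume that the bound variable $\btyptwo$ is fresh, in particular $\btyptwo\neq\btyp$ and $\btyptwo\notin\ftv{\typ}\cup\ftv{\tctx}$. Then the IH applied to the premise yields $\tctx\sub{\btyp}{\typ}\vdash\tm\sub{\btyp}{\typ}:\typ'\sub{\btyp}{\typ}\PP$ (or the analogous judgment), and the original freshness assumption together with the choice of $\btyptwo$ ensures $\btyptwo\notin\ftv{\tctx\sub{\btyp}{\typ}}$, so the same introduction/elimination rule applies. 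For the eliminator cases \Eexp{} and \Ealln{}, one additionally checks that the binding condition $\btyptwo\notin\ftv{\tctx,\ev}$ on the conclusion type is preserved by the renaming.

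The main obstacle is purely bureaucratic: coordinating $\alpha$-renaming of bound type variables with the substitution $\sub{\btyp}{\typ}$, so that no capture occurs and the side conditions on quantifier rules remain satisfied after the substitution. Once this is handled uniformly via the Barendregt convention on type variables, every case reduces to applying the IH and reassembling the original rule.
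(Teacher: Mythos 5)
Your proposal is correct: the paper states this lemma without giving a proof, treating it as routine, and your induction on the typing derivation—handling the quantifier binders via the Barendregt convention and the substitution-composition identity for the substituted conclusion types—is exactly the standard argument the paper leaves implicit. One small point: the composition identity $(\typtwo\sub{\btyptwo}{\typthree})\sub{\btyp}{\typ} = (\typtwo\sub{\btyp}{\typ})\sub{\btyptwo}{\typthree\sub{\btyp}{\typ}}$ is needed not only in the eliminator cases \Eallp{} and \Eexn{} but also in the introduction cases \Iexp{} and \Ialln{}, whose premises are typed at a substitution instance $\typtwo\PP\sub{\btyptwo}{\typthree}$ (resp.\ $\typtwo\NN\sub{\btyptwo}{\typthree}$); the same identity handles them without further work.
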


\begin{proposition}[Subject Reduction]
  If $\judgPRK{\tctx}{\tm}{\ev}$
  and $\tm \toa{} \tmtwo$, then $\judgPRK{\tctx}{\tmtwo}{\ev}$.
\end{proposition}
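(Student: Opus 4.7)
\medskip
\noindent\textbf{Proof proposal.}
The plan is to extend the subject reduction argument of~\cite[Prop.~24]{BFlics21} by induction on the derivation of $\tm \toa{} \tmtwo$. Since $\toa{}$ is defined as the compatible closure of the base rewrite rules, the induction splits into two kinds of cases: the \emph{compatibility} cases, in which the reduction occurs inside some subterm, and the \emph{base} cases, one for each rewrite rule. Before attacking the induction, I would record three standard auxiliary results: (i)~\emph{weakening}, stating that if $\judgPRK{\tctx}{\tm}{\ev}$ and $\tctx\subseteq\tctxtwo$ then $\judgPRK{\tctxtwo}{\tm}{\ev}$; (ii)~\emph{term substitution}, stating that if $\judgPRK{\tctx,\var:\ev'}{\tm}{\ev}$ and $\judgPRK{\tctx}{\tmtwo}{\ev'}$ then $\judgPRK{\tctx}{\tm\sub{\var}{\tmtwo}}{\ev}$; and (iii)~\emph{type substitution} ($\SubT$), already available as~\rlem{lambdaTwoC:subT_admissible}. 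All three are proved by routine induction on typing derivations.

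A second preliminary step is to verify that the generalized absurdity operator is well-formed: whenever $\judgPRK{\tctx}{\tm}{\ev}$ and $\judgPRK{\tctx}{\tmtwo}{\ev\OP}$, then $\judgPRK{\tctx}{\abs{\evtwo}{\tm}{\tmtwo}}{\evtwo}$ for any $\evtwo$. This amounts to case-splitting on the sign and strength of $\ev$ and inspecting the four clauses of the definition of $\abs{\evtwo}{\tm}{\tmtwo}$; each clause is a straightforward application of $\Abs$, $\Ecp$ and $\Ecn$. Having this in place, the compatibility cases of the main induction become immediate: invert the typing of the surrounding constructor/eliminator, apply the inductive hypothesis to the reduced subterm, and reassemble the derivation.

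The base cases are then handled one rewrite rule at a time. For the $\beta$-style rules ($\ruleBetaC$, $\ruleProj$, $\ruleCase$, $\ruleBetaB$, $\ruleCoproj$, $\ruleNeg$), inverting the typing of the redex exposes exactly the premises needed to retype the contractum; the rules involving binders ($\ruleBetaC$, $\ruleBetaB$, $\ruleCoproj$, $\ruleCase$) invoke the term substitution lemma. For the quantifier rules ($\ruleAppT$, $\ruleOpen$), inversion produces a derivation of the body under the assumption $\btyp \notin \ftv{\tctx}$, and $\SubT$ replaces $\btyp$ by the instantiating type. For the ``absurdity'' rules ($\ruleAbsPairInj$, $\ruleAbsInjPair$, $\ruleAbsLamCopair$, $\ruleAbsCopairLam$, $\ruleAbsNeg$, $\ruleAbsLamPairT$, $\ruleAbsPairLamT$), inversion of $\Abs$ applied to the two canonical forms yields subderivations whose conclusions are pairs of opposite types, at which point the well-typedness of $\abs{\evtwo}{\cdot}{\cdot}$ established above (combined with term substitution, or $\SubT$ in the case of $\ruleAbsLamPairT$ and $\ruleAbsPairLamT$) gives the desired conclusion with the same target type $\ev$.

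The main obstacle is bookkeeping: the rule set is large and each mode (${}\pp,{}\nn,{}\PP,{}\NN$) must be tracked carefully, particularly in the dual pairs of rules where the roles of premises are swapped. The genuinely new content compared with~\cite{BFlics21} is concentrated in the quantifier rules $\ruleAppT,\ruleOpen,\ruleAbsLamPairT,\ruleAbsPairLamT$, where the interaction between ordinary substitution on terms and the type substitution $\sub{\btyp}{\typ}$ must be verified; the side condition $\btyp\notin\ftv{\tctx}$ at the introduction rules $\Iallp,\Iexn,\Ialln,\Iexp$ is precisely what makes $\SubT$ applicable and guarantees that the substituted type appears on both sides of the judgment in a compatible way.
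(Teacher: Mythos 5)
Your proposal matches the paper's own proof in all essentials: the paper likewise proceeds by case analysis on the reduction step, relies on the admissibility of type substitution ($\SubT$, \rlem{lambdaTwoC:subT_admissible}) together with term substitution ($\CCut$) and the well-typedness of the generalized absurdity operator $\abs{\evtwo}{\cdot}{\cdot}$, and isolates the quantifier rules $\ruleAppT$, $\ruleOpen$, $\ruleAbsLamPairT$, $\ruleAbsPairLamT$ as the genuinely new cases, using the freshness condition $\btyp\notin\ftv{\tctx}$ to ensure $\tctx=\tctx\sub{\btyp}{\typ}$. The only nitpick is that the side condition $\btyp\notin\ftv{\tctx}$ is attached to $\Iallp$/$\Iexn$ (and, in the form $\btyp\notin\ftv{\tctx,\ev}$, to $\Eexp$/$\Ealln$) rather than to $\Ialln$/$\Iexp$, but this does not affect the argument.
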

\begin{proof}
This extends the proof of Prop.~24 from~\cite{BFlics21}.
The proof follows the usual methodology, by case analysis on
the derivation of the reduction step.
We focus on the more interesting cases, namely the
second-order quantifiers. We only study the positive cases,
the negative cases being symmetric.

\medskip \noindent{\bf $\ruleAppT$ rule.}
Let:
\[
  \prooftree 
  \[
    \[
      \typingDerivation
      \justifies 
      \tctx \vdash \tm : \typtwo\PP 
    \]
    \HS 
    \btyp \not\in \fv{\tctx}
    \justifies \tctx \vdash \lamtp{\btyp}{\tm} : \all{\btyp}{\typtwo}\pp
    \using{\indrulename{\Iallp}}
    \thickness=0.05em
  \]
  \justifies \tctx \vdash \apptp{(\lamtp{\btyp}{\tm})}{\typ} : \typtwo\PP\sub{\btyp}{\typ}
  \thickness=0.05em
  \using{\indrulename{\Eallp}}
  \endprooftree
\]
Then:
\[
   \prooftree 
   \[
     \typingDerivation
     \justifies 
     \tctx \vdash \tm : \typtwo\PP 
   \]
   \justifies 
   \tctx \vdash \tm\sub{\btyp}{\typ} : \typtwo\PP\sub{\btyp}{\typ}
   \thickness=0.05em
   \using{\indrulename{\SubT}}
   \endprooftree
\]
Note that $\tctx = \tctx\sub{\btyp}{\typ}$ holds
since $\btyp \not\in \fv{\tctx}$.

\medskip \noindent{\bf $\ruleOpen$ rule.}
Let:
\[
  \prooftree 
  \[
    \[
      \typingDerivation
      \justifies
      \tctx \vdash \tm : \typtwo\PP\sub{\btyp}{\typ}
    \]
    \justifies
    \tctx \vdash \patpn{\typ}{\tm} : (\ex{\btyp}{\typtwo})\pp
    \using{\indrulename{\Iexp}}
  \]
  \[
    \typingDerivation'
    \justifies
    \tctx,\var:\typtwo\PP \vdash \tmtwo : \ev
  \]
  \justifies \tctx \vdash \openp{\btyp}{\var}{\patp{\typ}{\tm}}{\tmtwo} : \ev
  \thickness=0.05em
  \using{\indrulename{\Eexp}}
  \endprooftree
\]
where $\btyp \not\in \fv{\tctx,\ev}$. Then:
\[
  \prooftree 
  \[
    \typingDerivation
    \justifies
    \tctx \vdash \tm : \typtwo\PP\sub{\btyp}{\typ}
  \]
  \[
    \[
      \typingDerivation'
      \justifies
      \tctx, \var : \typtwo\PP \vdash \tmtwo : \ev             
    \]
    \justifies
    \tctx, \var:\typtwo\PP\sub{\btyp}{\typ} \vdash \tmtwo\sub{\btyp}{\typ} : \ev
    \using{\indrulename{\SubT}}
  \]
  \justifies 
  \tctx \vdash \tmtwo\sub{\btyp}{\typ}\sub{\var}{\tm} : \ev
  \thickness=0.05em
  \using{\indrulename{\CCut}}
  \endprooftree
\]

\medskip \noindent{\bf $\ruleAbsLamPairT$ rule.}
Let:
\[
  \prooftree
  \[
    \[
      \typingDerivation
      \justifies
      \tctx \vdash \tm : \typtwo\PP
    \]
    \justifies
    \tctx \vdash \lamtp{\btyp}{\tm} : \all{\btyp}{\typtwo}\pp
    \using{\indrulename{\Iallp}}
  \]
  \[
    \[
      \typingDerivation'
      \justifies
      \tctx \vdash \tmtwo : \typtwo\NN\sub{\btyp}{\typ}
    \]
    \justifies
    \tctx \vdash \patn{\typ}{\tmtwo} : \all{\btyp}{\typtwo}\nn
    \using{\indrulename{\Ialln}}
  \]
  \justifies \tctx \vdash \strongabs{\ev}{(\lamtp{\btyp}{\tm})}{\patn{\typ}{\tmtwo}} : \ev
      \thickness=0.05em
      \using{\indrulename{\Abs}}
      \endprooftree
\]
where $\btyp \not\in \ftv{\tctx}$.
Then:
\[
  \prooftree
  \[
    \[
      \typingDerivation
      \justifies
      \tctx \vdash \tm : \typtwo\PP
    \]
    \justifies
    \tctx \vdash \tm\sub{\btyp}{\typ} : \typtwo\PP\sub{\btyp}{\typ}
    \using{\indrulename{\SubT}}
  \]
  \[
    \typingDerivation'
    \justifies
    \tctx \vdash \tmtwo : \typtwo\NN\sub{\btyp}{\typ}
  \]
  \justifies \tctx \vdash  \abs{\ev}{\tm\sub{\btyp}{\typ}}{\tmtwo} : \ev
  \thickness=0.05em
  \using{\indrulename{\Abs}}
  \endprooftree
\]

\medskip \noindent{\bf $\toa{\ruleAbsPairLamT}$ rule.}
  Symmetric to $\toa{\ruleAbsLamPairT}$.
\end{proof}

\newpage
\section{$\lambdaPRK$ Refines Classical Second-Order Logic}
\lsec{appendix:prk_refinement}

The proof that $\lambdaPRK$ refines classical second-order logic
is an extension of analogous results for
classical propositional logic in~\cite{BFlics21}.
The proof of~\rthm{lambdaPRK_refinement}
is split into two lemmas:
{\bf Classical~Conservativity}~(\rlem{lambdaPRK_conservativity})
proves the implication $1 \implies 2$, and
{\bf Classical~Embedding}~(\rlem{lambdaPRK_embedding})
proves the implication $2 \implies 1$.

\begin{lemma}[Classical Conservativity]
\llem{lambdaPRK_conservativity}
If $\log{\ev_1,\hdots,\ev_n}{\evtwo}$ holds in $\lambdaPRK$,
then $\log{\classem{\ev_1},\hdots,\classem{\ev_n}}{\classem{\evtwo}}$
holds in $\lambdaPRK$.
\end{lemma}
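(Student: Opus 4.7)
The plan is to proceed by structural induction on the $\lambdaPRK$-derivation of $\log{\ev_1,\hdots,\ev_n}{\evtwo}$, showing that for every typing rule the translated sequent is derivable in classical natural deduction $\NK$. The translation collapses modes: $\typ\pp$ and $\typ\PP$ both map to $\typ$, while $\typ\nn$ and $\typ\NN$ both map to $\neg\typ$. A context $\tctx = \var_1:\ev_1,\hdots,\var_n:\ev_n$ therefore translates to the antecedent $\classem{\ev_1},\hdots,\classem{\ev_n}$, and the goal for each rule is to build an $\NK$-derivation of the translated conclusion from the translated premises, which are available as induction hypotheses.

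Most cases are routine. The axiom $\Ax$ becomes $\NDAx$; the rule $\Abs$, from $\classem{\tctx}\vdash\typ$ and $\classem{\tctx}\vdash\neg\typ$, concludes $\classem{\ev}$ via $\NDEnot$ followed by $\NDExpl$; the weak eliminations $\Ecp$ and $\Ecn$ are immediate, since their premises and conclusion translate to the same classical formula. The introduction and elimination rules for positive $\land,\lor,\imp,\coimp,\forall,\exists$ and the four negation rules $\Inotp,\Inotn,\Enotp,\Enotn$ mirror their $\NK$-counterparts directly. The remaining dual rules that do not embed a De Morgan law---namely $\Iorn, \Eorn, \Iandn, \Iimpn, \Iexn, \Eexn, \Ialln$ and $\Ecoimpn$---are handled intuitionistically: one assumes the underlying positive form, applies the standard $\NK$ elimination rule, and derives a contradiction against the available negative hypothesis.

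The genuinely classical cases are of two kinds. First, the weak introductions $\Icp$ and $\Icn$ translate to consequentia mirabilis $(\neg\typ\imp\typ)\imp\typ$ and its dual, both derivable in $\NK$ via $\NDLem$. Second, the negative eliminations $\Eandn$, $\Eimpn$, and $\Ealln$ encode De Morgan laws for $\neg(\typ\land\typtwo)$, $\neg(\typ\imp\typtwo)$, and $\neg\all{\btyp}{\typ}$. For instance, $\Eandn$ is discharged by applying $\NDLem$ to $\typ_1$: if $\typ_1$ holds, a further $\NDLem$ on $\typ_2$ either contradicts $\neg(\typ_1\land\typ_2)$ via $\NDIand$ or invokes the second branch with $\neg\typ_2$; if $\neg\typ_1$, the first branch yields $\classem{\ev}$ directly. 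The case $\Ealln$ proceeds analogously but by double-negation elimination, exploiting the freshness condition $\btyp\notin\ftv{\tctx,\ev}$ in order to apply $\NDIall$ within an inner subderivation, thereby re-closing the universal before contradicting $\neg\all{\btyp}{\typ}$.

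\textbf{Main obstacle.} The principal technical hurdle lies in the negative elimination cases $\Eandn$, $\Eimpn$, and $\Ealln$, where classicality is genuinely required and must be invoked through $\NDLem$ (or double-negation elimination). The quantifier case $\Ealln$ additionally calls for careful bookkeeping of the freshness side condition so that $\NDIall$ applies within the inner subderivation. All other rules propagate the induction hypothesis in a routine manner.
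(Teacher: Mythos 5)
Your proposal is correct and matches the paper's proof: both proceed by induction on the $\lambdaPRK$ derivation, showing each typing rule admissible in $\NK$ under the mode-collapsing translation, with classicality entering exactly at $\Icp$ (consequentia mirabilis) and at the rules embedding the non-intuitionistic De Morgan directions. The only differences are presentational: the paper writes out just the positive rules and dispatches their negative duals by appeal to the classical De Morgan equivalences, whereas you treat the negative rules explicitly; note also that $\Enotn$ (whose translation is $\neg\neg\typ\vdash\typ$) belongs with your ``genuinely classical'' cases rather than among the rules that mirror $\NK$ directly.
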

\begin{proof}
By induction on the derivation of $\log{\ev_1,\hdots,\ev_n}{\evtwo}$,
it suffices to show that all rules of $\lambdaPRK$
are, from the logical point of view,
admissible in classical second-order logic.
The formal target system is
the classical second-order natural deduction system $\NK$
described in \rsec{appendix:natural_deduction}
of the appendix.

Recall that double negation $\typ \leftrightarrow \neg\neg\typ$,
as well as all De~Morgan's laws hold in classical second-order logic;
for example
$\logNK{}{\neg(\typ\land\typtwo) \leftrightarrow \neg\typ\lor\neg\typtwo}$
and
$\logNK{}{\neg\all{\btyp}{\typ} \leftrightarrow \ex{\btyp}{\neg\typ}}$.
Hence the proof can be reduced to studying the rules for positive
connectives, noting that the proof for the
negative rule for the dual connective must be symmetric:
\begin{enumerate}
\item $\Ax$:
  Let $\judgPRK{\tctx}{\var}{\ev}$
  where $(\var:\ev) \in \tctx$.
  Then $\logNK{\classem{\tctx}}{\classem{\ev}}$ by \NDAx.
\item $\Abs$:
  Note that $\logNK{}{\typ \imp \neg\typ \imp \classem{\ev}}$.
\item $\Icp$:
  Let $\judgPRK{\tctx}{\claslamp{(\var:\typ\NN)}{\tm}}{\typ\PP}$
  be derived from $\judgPRK{\tctx,\var:\typ\NN}{\tm}{\typ\pp}$.
  By \ih we have that $\logNK{\classem{\tctx},\neg\typ}{\typ}$
  which, {\bf classically}, implies
  $\logNK{\classem{\tctx}}{\typ}$.
\item $\Ecp$:
  Let $\judgPRK{\tctx}{\clasapp{\tm}{\tmtwo}}{\typ\pp}$
  be derived from $\judgPRK{\tctx}{\tm}{\typ\PP}$
  and $\judgPRK{\tctx}{\tmtwo}{\typ\NN}$.
  By \ih on the first premise, we have that $\logNK{\classem{\tctx}}{\typ}$,
  as required.
\item $\colorand{\Iandp}$:
  Note that $\logNK{}{\typ \imp \typtwo \imp (\typ\land\typtwo)}$.
\item $\colorand{\Eandp}$:
  Note that $\logNK{}{(\typ_1\land\typ_2) \imp \typ_i}$.
\item $\colorand{\Iorp}$:
  Note that $\logNK{}{\typ_i \imp (\typ_1\lor\typ_2)}$.
\item $\colorand{\Eorp}$:
  Let
  $\judgPRK{\tctx}{\casep{\tm}{\var:\typ\PP}{\tmtwo}{\vartwo:\typtwo\PP}{\tmthree}}{\ev}$
  be derived from $\judgPRK{\tctx}{\tm}{(\typ \lor \typtwo)\pp}$
  and $\judgPRK{\tctx,\var:\typ\PP}{\tmtwo}{\ev}$
  and $\judgPRK{\tctx,\vartwo:\typtwo\PP}{\tmthree}{\ev}$.
  By \ih, $\logNK{\classem{\tctx}}{\typ \lor \typtwo}$
  and $\logNK{\classem{\tctx},\typ}{\classem{\ev}}$
  and $\logNK{\classem{\tctx},\typtwo}{\classem{\ev}}$,
  which imply $\logNK{\classem{\tctx}}{\classem{\ev}}$
  by \NDEor.
\item $\colorimp{\Iimpp}$:
  Let $\judgPRK{\tctx}{\lamp{(\var:\typ\PP)}{\tm}}{(\typ\imp\typtwo)\pp}$
  be derived from $\judgPRK{\tctx,\var:\typ\PP}{\tm}{\typtwo\PP}$.
  By \ih, $\logNK{\classem{\tctx},\typ}{\typtwo}$,
  which implies $\logNK{\classem{\tctx}}{\typ\imp\typtwo}$
  by \NDIimp.
\item $\colorimp{\Eimpp}$:
  Note that $\logNK{}{(\typ\imp\typtwo) \imp \typ \imp \typtwo}$.
\item $\colorimp{\Icoimpp}$:
  Note that $\logNK{}{\neg\typ \imp \typtwo \imp (\typ\coimp\typtwo)}$.
\item $\colorimp{\Ecoimpp}$:
  Note that $\logNK{}{(\typ\coimp\typtwo) \imp (\neg\typ \imp \typtwo \imp \typthree) \imp \typthree}$.
\item $\colorneg{\Inotp}$:
  Note that $\logNK{}{\neg\typ \imp \neg\typ}$.
  For the dual rule $\Inotn$,
  note that $\logNK{}{\typ \imp \neg\neg\typ}$.
\item $\colorneg{\Enotp}$:
  Note that $\logNK{}{\neg\typ \imp \neg\typ}$.
  For the dual rule $\Enotn$,
  note that $\logNK{}{\neg\neg\typ \imp \typ}$,
  which holds {\bf classically}.
\item $\colorall{\Iallp}$:
  Let
    $\judgPRK{\tctx}{\lamtp{\btyp}{\tm}}{(\all{\btyp}{\typ})\pp}$
  be derived from
    $\judgPRK{\tctx}{\tm}{\typ\PP}$,
  where $\btyp \notin \ftv{\tctx}$.
  By \ih, $\logNK{\classem{\tctx}}{\typ}$.
  Moreover, note that $\btyp \notin \ftv{\classem{\tctx}}$
  since $\btyp \notin \ftv{\tctx}$.
  Hence by $\NDIall$ we have that
  $\logNK{\classem{\tctx}}{\all{\btyp}{\typ}}$.
\item $\colorall{\Eallp}$:
  Let
    $\judgPRK{\tctx}{\apptp{\tm}{\typ}}{\typtwo\PP\sub{\btyp}{\typ}}$
  be derived from
    $\judgPRK{\tctx}{\tm}{(\all{\btyp}{\typtwo})\pp}$.
  By \ih, $\logNK{\classem{\tctx}}{\all{\btyp}{\typtwo}}$,
  which implies
    $\logNK{\classem{\tctx}}{\typtwo\sub{\btyp}{\typ}}$
  by $\NDEall$.
\item $\colorall{\Iexp}$:
  Let
    $\judgPRK{\tctx}{\patp{\typ}{\tm}}{(\ex{\btyp}{\typtwo})\pp}$
  be derived from
    $\judgPRK{\tctx}{\tm}{\typtwo\PP\sub{\btyp}{\typ}}$.
  By \ih,
    $\logNK{\classem{\tctx}}{\typtwo\sub{\btyp}{\typ}}$,
  which implies
    $\logNK{\classem{\tctx}}{\ex{\btyp}{\typtwo}}$
  by \NDIex.
\item $\colorall{\Eexp}$:
  Let
    $\judgPRK{\tctx}{\optp{\tm}{\btyp}{\var}{\tmtwo}}{\ev}$
  be derived from
    $\judgPRK{\tctx}{\tm}{(\ex{\btyp}{\typ})\pp}$
  and
    $\judgPRK{\tctx,\var:\typ\PP}{\tmtwo}{\ev}$,
  where
    $\btyp \not\in \ftv{\tctx,\ev}$.
  By \ih we have that
    $\logNK{\classem{\tctx}}{\ex{\btyp}{\typ}}$
  and
    $\logNK{\classem{\tctx},\typ}{\classem{\ev}}$.
  Moreover, note that $\btyp\notin\ftv{\classem{\tctx},\classem{\ev}}$.
  Hence by $\NDEex$ we have $\logNK{\classem{\tctx}}{\classem{\ev}}$,
  as required.
\end{enumerate}
\end{proof}

Before proving embedding, we recall some auxiliary lemmas
from~\cite{BFlics21}:

\begin{lemma}[Excluded middle and non-contradiction]
\llem{prk_lemP_and_lemN}
For every pure type $\typ$,
there exist terms $\lemP{\typ}$ and $\lemN{\typ}$
such that:
\begin{enumerate}
\item {\bf Excluded middle.}
  $\judgPRK{\tctx}{\lemP{\typ}}{(\typ \lor \neg\typ)\PP}$
\item {\bf Non-contradiction.}
  $\judgPRK{\tctx}{\lemN{\typ}}{(\typ \land \neg\typ)\NN}$
\end{enumerate}
\end{lemma}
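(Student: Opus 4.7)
The plan is to exhibit both terms by explicit construction and then verify their typing by a direct traversal of the $\lambdaPRK$ rules. The first term, $\lemP{\typ}$, has in fact already been displayed in the example preceding the lemma, so the only remaining task there is to confirm that the exhibited term really has type $(\typ\lor\neg\typ)\PP$. Reading off its shape: one applies $\Icp$ to assume $\var : (\typ\lor\neg\typ)\NN$ and then builds $(\typ\lor\neg\typ)\pp$ by choosing the right disjunct via $\Iorp$, reducing the goal to $\neg\typ\PP$; a second $\Icp$ introduces $\vartwo : \neg\typ\NN$, after which $\Inotp$ composed with $\Eorn$ applied to $\clasapn{\var}{\lemPinner{\vartwo}{\typ}}$ delivers the required $\typ\NN$. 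The auxiliary $\lemPinner{\vartwo}{\typ}$ supplies a weak proof of the disjunction by choosing the left disjunct and combining $\vartwo$ with a locally bound $\varthree : \typ\NN$ (which yields $\neg\typ\PP$ via $\Inotp$) through the generalized absurdity operator $\abs{\typ\pp}{-}{-}$.

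For $\lemN{\typ}$, I would dualize the entire construction under the symmetry that swaps $\pp \leftrightarrow \nn$, $\PP \leftrightarrow \NN$, and each connective with its De~Morgan dual. Concretely, I would take
$\lemN{\typ} \eqdef \claslamn{(\var:(\typ\land\neg\typ)\PP)}{\inin[2]{\claslamn{(\vartwo:\neg\typ\PP)}{\negin{\projip[1]{\clasapp{\var}{\lemNinner{\vartwo}{\typ}}}}}}}$,
with
$\lemNinner{\vartwo}{\typ} \eqdef \claslamn{(\under:(\typ\land\neg\typ)\PP)}{\inin[1]{\claslamn{(\varthree:\typ\PP)}{\abs{\typ\nn}{\vartwo}{\claslamn{(\under:\neg\typ\PP)}{\negin{\varthree}}}}}}$.
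Every step in the derivation of $\lemP{\typ}$ has a mirror image here, and the typing follows by the dual sequence of rules $\Icn$, $\Iandn$, $\Inotn$, $\Eandp$, and $\Ecn$.

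The verification itself is mechanical and presents no real obstacle. The only point meriting care is the innermost use of the generalized absurdity, where the two arguments must carry opposite modes in the sense demanded by the definition of $\abs{}{}{}$; once signs are tracked carefully this falls out immediately. The essential content of the lemma is thus that the reductio ad absurdum principle built into $\Icpn$, combined with $\abs{}{}{}$, already suffices to produce weak excluded middle and weak non-contradiction uniformly in an arbitrary pure type $\typ$, without any need to induct on the structure of $\typ$.
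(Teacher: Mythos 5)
Your proposal is correct and takes essentially the same route as the paper: the terms you exhibit for $\lemP{\typ}$ and $\lemN{\typ}$ are literally those given in the paper's own proof (the former already appears as an example in the main text), and the typing is verified by the same direct traversal of the rules. The only cosmetic slip is that the application $\clasapp{\var}{\lemNinner{\vartwo}{\typ}}$ inside $\lemN{\typ}$ is typed by $\Ecp$ rather than $\Ecn$ as stated in your closing rule list, but the term itself is written with the correct positive eliminator, so nothing in the construction is affected.
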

\begin{proof}
For {\bf excluded middle}, take:
  \[
  \!\!
  {\small
    \begin{array}{r@{\,}c@{\,}l}
    \lemP{\typ} & \eqdef &
      \claslamp{(\var:(\typ\lor\neg\typ)\NN)}{
        \inip[2]{
          \claslamp{(\vartwo:\neg\typ\NN)}{
            \negip{
              \projin[1]{
                \clasapn{
                  \var
                }{
                  \lemPinner{\vartwo}{\typ}
                }
              }
            }
          }
        }
      }
    \\
    \lemPinner{\vartwo}{\typ} & \eqdef &
      \claslamp{(\under:(\typ\lor\neg\typ)\NN)}{
        \inip[1]{
          \claslamp{(\varthree:\typ\NN)}{
            (\abs{
              \typ\pp
            }{
              \vartwo
            }{
              \claslamp{(\under:\neg\typ\NN)}{
                \negip{
                  \varthree
                }
              }
            })
          }
        }
      }
    \end{array}
  }
  \]
For {\bf non-contradiction}, take:
  \[
  \!\!
  {\small
    \begin{array}{r@{\,}c@{\,}l}
    \lemN{\typ} & \eqdef &
      \claslamn{(\var:(\typ \land \neg\typ)\PP)}{
        \inin[2]{
          \claslamn{(\vartwo:\neg\typ\PP)}{
            \negin{
              \projip[1]{
                \clasapp{
                  \var
                }{
                  \lemNinner{\vartwo}{\typ}
                }
              }
            }
          }
        }
      }
    \\
    \lemNinner{\vartwo}{\typ} & \eqdef &
      \claslamn{(\under:(\typ \land \neg\typ)\PP)}{
        \inin[1]{
          \claslamn{(\varthree:\typ\PP)}{
            (\abs{
              \typ\nn
            }{
              \vartwo
            }{
              \claslamn{(\under:\neg\typ\PP)}{
                \negin{
                  \varthree
                }
              }
            })
          }
        }
      }
    \end{array}
  }
  \]
\end{proof}

\begin{lemma}[Classical contraposition]
If $\ev$ is weak and
$\judgPRK{\tctx, \var : \ev}{\tm}{\evtwo}$,
there is a term $\ccontrapose{\var}{\vartwo}{\tm}$
such that
$\judgPRK{\tctx, \vartwo : \evtwo\OP}{\ccontrapose{\var}{\vartwo}{\tm}}{\ev\OP}$.
\end{lemma}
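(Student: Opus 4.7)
The plan is to proceed by case analysis on the two possible shapes of the weak type $\ev$, namely $\ev = \typ\PP$ and $\ev = \typ\NN$. In each case, the strategy is to derive $\ev\OP$ by reductio ad absurdum (using $\Icn$ in the first case and $\Icp$ in the second), combining $\tm$ with $\vartwo$ via the generalized absurdity operator $\abs{-}{-}{-}$ from \rsec{second_order_prk}, which is defined precisely for a pair of dual types.

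In the positive case, where $\ev = \typ\PP$ and thus $\ev\OP = \typ\NN$, I propose the term
\[
  \ccontrapose{\var}{\vartwo}{\tm} \eqdef
  \claslamn{(\varthree:\typ\PP)}{\abs{\typ\nn}{\tm\sub{\var}{\varthree}}{\vartwo}}
\]
with $\varthree$ fresh. The body is well-typed at $\typ\nn$ because, by $\alpha$-renaming of the hypothesis, $\tm\sub{\var}{\varthree}$ has type $\evtwo$ in the context $\tctx, \vartwo:\evtwo\OP, \varthree:\typ\PP$, while $\vartwo$ has type $\evtwo\OP$ there, so the generalized absurdity yields a term of the chosen target type $\typ\nn$. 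Applying $\Icn$ to discharge $\varthree$ then produces the required type $\typ\NN$. The case $\ev = \typ\NN$ is completely symmetric: take
\[
  \ccontrapose{\var}{\vartwo}{\tm} \eqdef
  \claslamp{(\varthree:\typ\NN)}{\abs{\typ\pp}{\tm\sub{\var}{\varthree}}{\vartwo}},
\]
apply $\Icp$ in place of $\Icn$, and argue by the same template.

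The only point worth singling out, which also explains the role of the hypothesis that $\ev$ is weak, is that the reductio rules $\Icn$ and $\Icp$ require their body to have a \emph{strong} type (respectively $\typ\nn$ and $\typ\pp$). This matches naturally here because the generalized absurdity $\abs{\evthree}{-}{-}$ is parametric in its target type $\evthree$ and can therefore be instantiated at the strong types required. Beyond this sign-and-strength bookkeeping there is no real obstacle: the construction is completely uniform in $\evtwo$ since $\abs{}{}{}$ already encapsulates the four combinations of sign and strength of its operands, relying internally on the primitive absurdity $\strongabs{}{}{}$ together with the weak eliminations $\Ecp$ and $\Ecn$ when $\evtwo$ itself is weak.
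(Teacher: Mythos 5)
Your construction is exactly the paper's: for $\ev=\typ\PP$ the paper takes $\ccontrapose{\var}{\vartwo}{\tm}\eqdef\claslamn{(\var:\typ\PP)}{(\abs{\typ\nn}{\tm}{\vartwo})}$ (reusing $\var$ as the bound variable rather than introducing a fresh $\varthree$ and substituting, which is the same term up to $\alpha$-equivalence), and symmetrically with $\claslamp$ and $\abs{\typ\pp}{-}{-}$ for $\ev=\typ\NN$. The proposal is correct and takes essentially the same approach; the only quibble is expository—the weakness of $\ev$ is needed because $\Icp/\Icn$ can only discharge a \emph{weak} assumption, not because of the strength of the body, which the generalized absurdity handles for any target type.
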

\begin{proof}
As in~\cite{BFlics21}, it suffices to take:
\[
  \ccontrapose{\var}{\vartwo}{\tm} \eqdef
    \begin{cases}
      \claslamn{(\var:\typ\PP)}{
        (\abs{\typ\nn}{
          \tm
        }{ 
          \vartwo
        })
      }
      & \text{if $\ev = \typ\PP$}
    \\
      \claslamn{(\var:\typ\NN)}{
        (\abs{\typ\pp}{
          \tm
        }{ 
          \vartwo
        })
      }
      & \text{if $\ev = \typ\NN$}
    \end{cases}
\]
\end{proof}

\begin{lemma}[Weak negation]
\quad
\begin{enumerate}
\item {\bf Weak negation introduction}:
  If $\judgPRK{\tctx}{\tm}{\typ\NN}$,
  there is a term $\negiP{\tm}$
  such that $\judgPRK{\tctx}{\negiP{\tm}}{(\neg\typ)\PP}$.
\item {\bf Weak negation elimination}:
  If $\judgPRK{\tctx}{\tm}{(\neg\typ)\PP}$,
  there is a term $\negeP{\tm}$
  such that $\judgPRK{\tctx}{\negeP{\tm}}{\typ\NN}$.
\end{enumerate}
\end{lemma}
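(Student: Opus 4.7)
The plan is to construct each witness term directly and verify it via a routine typing derivation, reusing the strong negation rules together with the weak introduction rules $\Icp$, $\Icn$ and, for the elimination case, the absurdity rule $\Abs$. No induction is required.

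For weak negation introduction, the key observation is that once we already have a strong refutation of $\typ$, the strong rule $\Inotp$ immediately yields a strong proof of $\neg\typ$, which can be trivially wrapped by $\Icp$ to produce a weak proof. Concretely, take
\[
  \negiP{\tm} \,\eqdef\, \claslamp{(\under:(\neg\typ)\NN)}{\negip{\tm}}.
\]
The negative counterfactual is simply discarded; typability is immediate: $\negip{\tm}:(\neg\typ)\pp$ by $\Inotp$ applied to $\tm$ in the extended context, after which $\Icp$ closes the derivation at $(\neg\typ)\PP$.

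For weak negation elimination, the argument is slightly more involved but still direct. By $\Icn$ it suffices to produce, under the assumption $\var:\typ\PP$, a strong refutation of type $\typ\nn$, which we obtain by an absurdity that exhibits in parallel both a strong proof and a strong refutation of $\neg\typ$. A strong refutation $(\neg\typ)\nn$ is simply $\negin{\var}$, by $\Inotn$. For the strong proof, one applies the hypothesis $\tm:(\neg\typ)\PP$ to a weak refutation of $\neg\typ$ via $\Ecp$; and such a weak refutation $(\neg\typ)\NN$ can itself be manufactured from $\var$ by $\Icn$ wrapping another copy of $\negin{\var}$. Combining the pieces through $\Abs$ gives
\[
  \negeP{\tm} \,\eqdef\, \claslamn{(\var:\typ\PP)}{
    \strongabs{\typ\nn}{
      \clasapp{\tm}{\claslamn{(y:(\neg\typ)\PP)}{\negin{\var}}}
    }{
      \negin{\var}
    }
  },
\]
whose type is $\typ\NN$ via the rule chain $\Inotn$, $\Icn$, $\Ecp$, $\Abs$, $\Icn$.

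The only point that requires care is sign-and-strength bookkeeping in the elimination case: with four modes ($\typ\pp$, $\typ\nn$, $\typ\PP$, $\typ\NN$) crossed with the connective $\neg$, it is easy to silently conflate $\typ\NN$ with $\typ\nn$ or $(\neg\typ)\PP$ with $(\neg\typ)\pp$. Beyond this, no genuine obstacle arises; the verification is essentially mechanical.
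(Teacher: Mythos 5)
Your proof is correct and follows essentially the same construction as the paper: the witness for weak negation introduction is literally the paper's term, and the witness for elimination has the same skeleton (an $\Icn$ wrapped around an $\Ecp$ application of $\tm$ to a weak refutation of $\neg\typ$ manufactured from $\negin{\var}$). The only difference is the final step of the elimination case --- the paper turns the resulting $(\neg\typ)\pp$ into $\typ\nn$ via $\Enotp$ followed by $\Ecn$ against $\var$, whereas you pair it with $\negin{\var}:(\neg\typ)\nn$ through $\Abs$; both instances typecheck, and since the lemma only asserts derivability the discrepancy is immaterial.
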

\begin{proof}
For {\bf weak negation introduction},
let $\judgPRK{\tctx}{\tm}{\typ\NN}$ and take:
\[
  \negiP{\tm} \eqdef
  \claslamp{(\under:(\neg\typ)\NN)}{
    \negip{
      \tm
    }
  }
\]
For {\bf weak negation elimination},
let $\judgPRK{\tctx}{\tm}{(\neg\typ)\PP}$ and take:
\[
  \negeP{\tm} \eqdef
  \claslamn{(\var:\typ\PP)}{
    \clasappn{
      \negep{
        (\clasapp{
          \tm
        }{
          \claslamn{\under:(\neg\typ)\PP}{
            \negin{
              \var
            }
          }
        })
      }
    }{
      \var
    }
  }
\]

\end{proof}

\begin{lemma}[Classical Embedding]
\llem{lambdaPRK_embedding}
If $\log{\ev_1,\hdots,\ev_n}{\evtwo}$ holds in
classical second-order logic,
then $\log{\ev\PP_1,\hdots,\ev\PP_n}{\evtwo}$ holds in $\lambdaPRK$.
\end{lemma}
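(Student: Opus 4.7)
The proof is by induction on the derivation of $\log{\typ_1,\hdots,\typ_n}{\typtwo}$ in $\NK$. For each inference rule of $\NK$, I will exhibit an admissible counterpart in $\lambdaPRK$ in which every formula is weakened with ``$\PP$''. The overall strategy is uniform: since $\typ\PP$ is provably isomorphic to ``$\typ\NN \imp \typ\pp$'' in $\PRK$, every elimination rule can be simulated by introducing a fresh negative counterfactual $\var : \typtwo\NN$ of the target's negative type, then applying the underlying strong elimination rule within the body, and finally discharging via $\claslamp$.

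First I would dispatch the structural rules. The axiom $\NDAx$ is immediate. Weakening ($\NDW$) is admissible in $\lambdaPRK$, and cut ($\NDCut$) can be simulated by substitution. The explosion rule $\NDExpl$ is handled essentially as in the second example from \rsec{second_order_prk}: from $\tm : \bot\PP$ we obtain a proof of any $\typtwo\PP$ using $\Abs$ together with the weak non-contradiction principle from \rlem{prk_lemP_and_lemN}. The law of excluded middle $\NDLem$ is directly witnessed by $\lemP{\typ}$ from \rlem{prk_lemP_and_lemN}.

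Next I would handle the propositional connectives. For the introduction rules $\NDIand$, $\NDIor$, $\NDIimp$, $\NDIcoimp$, $\NDInot$, the $\lambdaPRK$ rules $\Iandp$, $\Iorp$, $\Iimpp$, $\Icoimpp$ (combined with $\negiP$), $\Inotp$ directly produce a term of type $\typ\pp$ from weak premises; wrapping this with $\claslamp$ under a useless counterfactual $\var:\typ\NN$ yields $\typ\PP$. For the elimination rules, the pattern is slightly more delicate, since the strong eliminators require a strong scrutinee. For, say, the disjunction elimination $\NDEor$ with premises $\tm:(\typ\lor\typtwo)\PP$, $\tmtwo:\typthree\PP$ under $\var:\typ\PP$, and $\tmthree:\typthree\PP$ under $\vartwo:\typtwo\PP$, I would set
\[
  \claslamp{(\varthree:\typthree\NN)}{
    \clasapp{\big(\casep{
      (\clasapp{\tm}{(\claslamn{\under}{\cdots})})
    }{\var}{\tmtwo}{\vartwo}{\tmthree}\big)}{\varthree}
  }
\]
following the same template that \rthm{lambdaPRK_refinement} uses for existential elimination. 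Conjunction elimination uses $\projip$, implication elimination uses $\lamp$/application with appropriate counterfactual plumbing, co-implication elimination uses $\colamp$, and $\NDEnot$ follows from $\negeP$ combined with the weak non-contradiction principle.

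For the second-order quantifiers, the constructions of $\apptc{\tm}{\typ}$, $\patc{\typ}{\tm}$, and $\optc{\tm}{\btyp}{\var}{\tmtwo}$ given in the proof of \rthm{lambdaPRK_refinement} within the excerpt directly simulate $\NDEall$, $\NDIex$, and $\NDEex$; universal introduction $\NDIall$ is handled by first obtaining the strong term of type $(\all{\btyp}{\typ})\pp$ via $\Iallp$ (using $\ih$ with the eigenvariable condition $\btyp\notin\ftv{\tctx}$ preserved since typing contexts consist of weak formulas $\typ\PP_i$, whose free type variables match those of the $\typ_i$), then wrapping with $\claslamp$.

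The main obstacle, and the one case genuinely requiring ingenuity rather than bookkeeping, is existential elimination $\NDEex$: one must simulate a binder that extracts a witness from a weak existential while keeping the eigenvariable $\btyp$ out of the conclusion. As the excerpt shows, this is achieved by a nested use of $\Eexp$ inside $\clasapp$, relying crucially on $\Icn$ to convert the ``classical continuation'' of type $\ex{\btyp}{\typ}\PP \to \typtwo\PP$ into a witness of $(\ex{\btyp}{\typ})\NN$; verifying that the eigenvariable condition is preserved (using $\btyp\notin\ftv{\classem{\tctx},\classem{\typtwo}}$) is the delicate point. Once each rule is shown admissible, the induction on the $\NK$ derivation concludes the proof.
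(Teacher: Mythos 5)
Your plan is correct and follows essentially the same route as the paper's proof: induction on the $\NK$ derivation, showing each rule admissible under the uniform ``$\PP$'' decoration by wrapping the corresponding strong introduction or elimination in a fresh negative counterfactual discharged by $\Icp$, with $\lemP{\typ}$/$\lemN{\typ}$ handling excluded middle and explosion, classical contraposition handling the branching eliminators, and the quantifier constructions you cite being exactly the ones the paper uses. The only slip is in the $\NDIcoimp$ case, where the inductive hypothesis of type $(\neg\typ)\PP$ must be converted into the $\typ\NN$ required by $\Icoimpp$ via weak negation \emph{elimination} $\negeP{\tm}$, not $\negiP{\tm}$.
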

\begin{proof}
We proceed by induction of the derivation
of $\logNK{\typ_1,\hdots,\typ_n}{\typtwo}$
in the classical second-order natural deduction system $\NK$.
\begin{enumerate}
\item \NDAx:
  Let $\log{\typ_1,\hdots,\typ_n}{\typ_i}$ be derived from the $\NDAx$ rule.
  Then $\judgPRK{\var_1:\typ_1\PP,\hdots,\var_n:\typ_n\PP}{\var_i}{\typ_i\PP}$
  by the $\Ax$ rule.
\item \colorand{\NDIand}:
  Suppose by \ih that
  $\judgPRK{\tctx}{\tm}{\typ\PP}$ and $\judgPRK{\tctx}{\tmtwo}{\typtwo\PP}$.
  Take:
  \[
    \pairc{\tm}{\tmtwo}
    \eqdef
    \claslamp{\under:(\typ\land\typtwo)\NN}{
      \pairp{\tm}{\tmtwo}
    }
  \]
  Then $\judgPRK{\tctx}{\pairc{\tm}{\tmtwo}}{(\typ\land\typtwo)\PP}$.
\item \colorand{\NDEand}:
  Suppose by \ih that $\judgPRK{\tctx}{\tm}{(\typ_1\land\typ_2)\PP}$
  and let $i \in \set{1,2}$.
  Let $\projic{\tm}$ be the term:
  \[
    \claslamp{(\var:\typ_i\NN)}{
      \clasapp{
        \projip{
          \clasapp{
            \tm
          }{
            \claslamn{(\under:(\typ_1\land\typ_2)\PP)}{
              \inin{\var}
            }
          }
        }
      }{
        \var
      }
    }
  \]
  Then $\judgPRK{\tctx}{\projic{\tm}}{\typ_i\PP}$.
\item \colorand{\NDIor}:
  Suppose by \ih that $\judgPRK{\tctx}{\tm}{\typ_i\PP}$
  for some $i \in \set{1,2}$. Take:
  \[
    \inic{\tm} \eqdef
    \claslamp{(\under:(\typ_1\lor\typ_2)\NN)}{
      \inip{
        \tm
      }
    }
  \]
  Then $\judgPRK{\tctx}{\inic{\tm}}{(\typ_1\lor\typ_2)\PP}$.
\item \colorand{\NDEor}:
  Suppose by \ih that
  $\judgPRK{\tctx}{\tm}{(\typ \lor \typtwo)\PP}$
  and
  $\judgPRK{\tctx,\var:\typ\PP}{\tmtwo}{\typthree\PP}$
  and
  $\judgPRK{\tctx,\var:\typtwo\PP}{\tmthree}{\typthree\PP}$.
  Let $\casec{\tm}{(\var:\typ\PP)}{\tmtwo}{(\var:\typtwo\PP)}{\tmthree}$
  be the term:
  \[
    \claslamp{(\vartwo:\typthree\NN)}{
        (\clasapp{
          \casep{
            (\clasapp{
              \tm
            }{
              \claslamn{(\under:(\typ\lor\typtwo)\PP)}{
                \pairn{
                  \ccontrapose{\var}{\vartwo}{\tm}
                }{
                  \ccontrapose{\var}{\vartwo}{\tmtwo}
                }
              }
            })
          }{(\var:\typ\PP)}{
            \tmtwo
          }{(\var:\typtwo\PP)}{
            \tmthree
          }
      }{
        \vartwo
      })
    }
  \]
  Then
  $\judgPRK{\tctx}{\casec{\tm}{(\var:\typ\PP)}{\tmtwo}{(\var:\typtwo\PP)}{\tmthree}}{\typthree\PP}$.
\item
  \colorimp{\NDIimp}:
  Suppose by \ih that $\judgPRK{\tctx,\var:\typ\PP}{\tm}{\typtwo\PP}$.
  Take:
  \[
    \lamc{(\var:\typ\PP)}{\tm} \eqdef
    \claslamp{(\under:(\typ\to\typtwo)\NN)}{
      \lamp{(\var:\typ\PP)}{
        \tm
      }
    }
  \]
  Note that 
  $\judgPRK{\tctx}{\lamc{(\var:\typ\PP)}{\tm}}{(\typ\imp\typtwo)\PP}$.
\item \colorimp{\NDEimp}:
  Suppose by \ih that $\judgPRK{\tctx}{\tm}{(\typ \imp \typtwo)\PP}$
  and $\judgPRJ{\tctx}{\tmtwo}{\typ\PP}$.
  Let $\appc{\tm}{\tmtwo}$ be the term:
  \[
    \claslamp{(\var:\typtwo\NN)}{
      (\clasapp{
        \app{
          \clasapp{
            \tm
          }{
            (\claslamn{(\under:(\typ\to\typtwo)\PP)}{
              \copairn{
                \tmtwo
              }{
                \var
              }
            })
          }
        }{
          \tmtwo
        }
      }{
        \var
      })
    }
  \]
  Note that 
  $\judgPRK{\tctx}{\appc{\tm}{\tmtwo}}{\typtwo\PP}$.
\item \colorimp{\NDIcoimp}:
  Suppose by \ih that $\judgPRK{\tctx}{\tm}{(\neg\typ)\PP}$
  and $\judgPRK{\tctx}{\tmtwo}{\typtwo\PP}$.
  Take:
  \[
    \copairc{\tm}{\tmtwo}
    \eqdef
    \claslamp{(\under:(\typ\coimp\typtwo)\NN)}{
      \copairp{
        \negeP{\tm}
      }{
        \tmtwo
      }
    }
  \]
  Note that $\judgPRK{\tctx}{\copairc{\tm}{\tmtwo}}{(\typ\coimp\typtwo)\PP}$.
\item \colorimp{\NDEcoimp}:
  Suppose by \ih that 
  $\judgPRK{\tctx}{\tm}{(\typ \coimp \typtwo)\PP}$
  and
  $\judgPRK{\tctx,\var:(\neg\typ)\PP,\vartwo:\typtwo\PP}{\tmtwo}{\typthree\PP}$.
  Let:
  \[
    \colamc{\tm}{\var}{\vartwo}{\tmtwo}
    \eqdef
    \claslamp{\varthree:\typthree\NN}{
      (\clasapp{
        \colamp{
          (\clasapp{
            \tm
          }{
            (\claslamn{\under}{
              \lamn{\var_0:\typ\NN}{
                \claslamn{\vartwo:\typtwo\PP}{
                  (\abs{\typtwo\nn}{
                    \tmtwo'
                  }{
                    \varthree
                  })
                }
              }
            })
          })
        }{\var_0:\typ\NN}{\vartwo:\typtwo\PP}{
          \tmtwo'
        }
      }{
        \varthree
      })
    }
  \]
  where $\tmtwo' \eqdef \tmtwo\sub{\var}{\negiP{\var_0}}$.
  Note that $\judgPRK{\tctx}{\colamc{\tm}{\var}{\vartwo}{\tmtwo}}{\typthree\PP}$.
\item \colorneg{\NDInot}:
  Suppose by \ih that
  $\judgPRK{\tctx,\var:\typ\PP}{\tm}{\bot\PP}$.
  We encode falsity as $\bot \eqdef \btyp_0\land\neg\btyp_0$
  for some fixed base type $\btyp_0$.
  With this encoding of falsity,
  recall that $\judg{}{\lemN{\btyp_0}}{\bot\NN}$
  from \rlem{prk_lemP_and_lemN}.
  Take:
  \[
    \neglamc{\var:\typ\PP}{\tm} \eqdef
    \claslamp{\under:(\neg\typ)\NN}{
      \negip{(
        \ccontrapose{\var}{\vartwo}{\tm}
          \sub{\vartwo}{\lemN{\btyp_0}}
      )}
    }
  \]
  Note that
  $\judgPRK{\tctx}{\neglamc{\var:\typ\PP}{\tm}}{(\neg\typ)\PP}$.
\item \colorneg{\NDEnot}:
  Suppose by \ih that
  $\judgPRK{\tctx}{\tm}{(\neg\typ)\PP}$
  and
  $\judgPRK{\tctx}{\tmtwo}{\typ\PP}$.
  Take:
  \[
    \negapc{\tm}{\tmtwo} \eqdef
    \abs{\bot\PP}{
      \tm
    }{
      \claslamn{(\under:\typ\PP)}{
        \negin{
          \tmtwo
        }
      }
    }
  \]
  Note that $\judgPRK{\tctx}{\negapc{\tm}{\tmtwo}}{\bot\PP}$.
\item \colorall{\NDIall}:
  Suppose by \ih
  that $\judgPRK{\tctx}{\tm}{\typ\PP}$ with $\btyp \not\in \ftv{\tctx}$.
  Take:
  \[
    \lamtc{\btyp}{\tm} \eqdef
    \claslamp{(\under:(\all{\btyp}{\typ})\NN)}{
      \lamtp{\btyp}{
        \tm
      }
    }
  \]
  Note that
  $\judgPRK{\tctx}{\lamtc{\btyp}{\tm}}{(\all{\btyp}{\typ})\PP}$.
\item \colorall{\NDEall}:
  Suppose by \ih that
  $\judgPRK{\tctx}{\tm}{(\all{\btyp}{\typtwo})\PP}$.
  Let $\apptc{\tm}{\typ}$ be the term:
  \[
    \claslamp{(\var:(\typtwo\sub{\btyp}{\typ})\NN)}{
      (\clasapp{
        \apptp{
          \tm'
        }{
          \typ
        }
      }{
        \var
      })
    }
  \]
  where $\tm' \eqdef
          \clasapp{
            \tm
          }{
            \claslamp{(\under:(\all{\btyp}{\typtwo})\PP)}{
              \patn{
                \typ
              }{
                \var
              }
            }
          }$.
  Then $\judgPRK{\tctx}{\apptc{\tm}{\typ}}{\typtwo\sub{\btyp}{\typ}\PP}$.
\item \colorall{\NDIex}:
  Suppose by \ih that
  $\judgPRK{\tctx}{\tm}{(\typtwo\sub{\btyp}{\typ})\PP}$.
  Take:
  \[
    \patc{\typ}{\tm} \eqdef
    \claslamp{(\under:(\ex{\btyp}{\typtwo})\NN)}{
      \patp{\typ}{
        \tm
      }
    }
  \]
  Then $\judgPRK{\tctx}{\patc{\typ}{\tm}}{(\ex{\btyp}{\typtwo})\PP}$.
\item \colorall{\NDEex}:
  Suppose by \ih that
  $\judgPRK{\tctx}{\tm}{(\ex{\btyp}{\typ})\PP}$
  and
  $\judgPRK{\tctx,\var:\typ\PP}{\tmtwo}{\typtwo\PP}$
  with $\btyp \not\in \ftv{\tctx,\ev}$.
  Let $\optj{\tm}{\btyp}{\var}{\tmtwo}$ be the term:
  \[
    \claslamp{(\vartwo:\typtwo\NN)}{
      (\clasapp{
        \optp{
          \tm'
        }{\btyp}{\var}{
          \tmtwo
        }
      }{
        \vartwo
      })
    }
  \]
  where $\tm' \eqdef 
          \clasapp{
            \tm
          }{
            \claslamn{(\under:(\ex{\btyp}{\typ})\PP)}{
              \lamtn{\btyp}{
                \ccontrapose{\var}{\vartwo}{\tmtwo}
              }
            }
          }$.
  Then $\judgPRK{\tctx}{\optc{\tm}{\btyp}{\var}{\tmtwo}}{\typtwo\PP}$.
\item \NDExpl:
  Suppose by \ih that $\judgPRK{\tctx}{\tm}{\bot\PP}$.
  Then
  $\judgPRK{\tctx}{\abs{\typ\PP}{\tm}{\lemN{\btyp_0}}}{\typ\PP}$.
\item \NDLem:
  Note that $\judgPRK{}{\lemP{\btyp_0}}{(\typ\lor\neg\typ)\PP}$
  by \rlem{prk_lemP_and_lemN}.
\end{enumerate}
\end{proof}

\newpage
\section{Encoding Data Types in $\lambdaPRK$}
\lsec{appendix:bohm_encoding}

In this section we show, as examples,
that the Böhm--Berarducci encoding
for disjunction and existential quantification
in terms of universal quantification and implication
allows to prove the positive introduction rules
($\Iorp$ and $\Iexp$), as well as weak variants
of the elimination rules ($\Eorp$ and $\Eexp$).
Furthermore, these constructions simulate the corresponding
reduction rules ($\ruleBOrP$ and $\ruleBExP$).

\subsection{Encoding Disjunction}

Define
$\typ_1\lor\typ_2 \eqdef
 \all{\btyp}{((\typ_1\to\btyp)\to(\typ_2\to\btyp)\to\btyp)}$.
Then positive typing rules for disjunction,
analogous to $\Iorp$ and the weak variant of $\Eorp$,
are derivable, and their constructions simulate the $\ruleBOrP$ rule.
Let:
\[
  \begin{array}{rcl}
  X_{\typtwo}
      & \eqdef & (\typ_1\to\typtwo)\to(\typ_2\to\typtwo)\to\typtwo \\
  X   & \eqdef & (\typ_1\to\btyp)\to(\typ_2\to\btyp)\to\btyp \\
  X'  & \eqdef & (\typ_2\to\btyp)\to\btyp \\
  Y_i & \eqdef & \typ_i\to\btyp \\
  \end{array}
\]
\medskip

\noindent{\em $\bullet$ Typing rule $\Iorp$.}
  Let $\judg{\tctx}{\tm}{\typ_i\PP}$. Take:
  \[
    \begin{array}{r@{\,}c@{\,}l}
    \inip{\tm} & \eqdef &
      \lamtp{\btyp}{
        \claslamp{(\under:X\NN)}{
          \lamp{(\vartwo_1:Y_1\PP)}{
            \claslamp{(\under:X'\NN)}{
              \lamp{(\vartwo_2:Y_2\PP)}{
                \claslamp{(\varthree:\btyp\NN)}{
                  \tmthree
                }
              }
            }
          }
        }
      }
    \\
    \tmthree & \eqdef &
      \clasapp{
        \app{
          \clasapp{
            \vartwo_i
          }{
            (\claslamp{(\under:Y_i\PP)}{
              \copair{
                \tm
              }{
                \varthree
              }
            })
          }
        }{
          \tm
        }
      }{
        \varthree
      }
    \end{array}
  \]
  Then $\judg{\tctx}{\inip{\tm}}{(\typ_1\lor\typ_2)\pp}$.
\medskip

\noindent{\em $\bullet$ Weak variant of the typing rule $\Eorp$.}
  Let $\judg{\tctx}{\tm}{(\typ_1\lor\typ_2)\pp}$
  and $\judg{\tctx,\var_i:\typ_i\PP}{\tmtwo_i}{\typtwo\PP}$
  for each $i \in \set{1,2}$.
  Take:
  \[
    \begin{array}{rcl}
    \casep{\tm}{\var_1}{\tmtwo_1}{\var_2}{\tmtwo_2} & \eqdef &
      \claslamp{(\vartwo:\typtwo\NN)}{
        (\clasapp{
          \app{
            \clasapp{
              \app{
                \tm'
              }{
                \tmfour_1
              }
            }{
              \tmfive
            }
          }{
            \tmfour_2
          }
        }{
          \vartwo
        })
      }
    \\
    \tm' & \eqdef &
      \clasapp{
        \apptp{
          \tm
        }{
          \typtwo
        }
      }{
        (\claslamn{(\under:X_\typtwo\PP)}{
          \copairn{
            \tmfour_1
          }{
            \tmfive
          }
        })
      }
    \\
    \tmfour_i & \eqdef &
      \claslamp{(\under:(\typ_i\to\typtwo)\NN)}{
        \lam{(\var_i:\typ_i\PP)}{
          \tmtwo_i
        }
      }
    \\
    \tmfive & \eqdef &
      \claslamn{(\under:((\typ_2\to\typtwo)\to\typtwo)\PP)}{
        \copairn{
          \tmfour_2
        }{
          \vartwo
        }
      }
    \end{array}
  \]
  Then
  $\judg{\tctx}{\casep{\tm}{\var_1}{\tmtwo_1}{\var_2}{\tmtwo_2}}{\typtwo\PP}$.
\medskip

\noindent {\em $\bullet$ Computation rule $\ruleBOrP$.}
  Let $\judg{\tctx}{\tm}{\typ_i\PP}$
  and $\judg{\tctx,\var_i:\typ_i\PP}{\tmtwo_i}{\typtwo\PP}$
  for each $i \in \set{1,2}$.
  Then:
  \[
    \begin{array}{ll}
    &
      \casep{\ini{\tm}}{\var_1}{\tmtwo_1}{\var_2}{\tmtwo_2}
    \\
    \rtoa{} &
      \claslamp{(\vartwo:\typtwo\NN)}{
        (\clasapp{
          \app{
            \clasapp{
              \tmfour_i
            }{
              (\claslamp{\under}{
                \copair{
                  \tm
                }{
                  \vartwo
                }
              })
            }
          }{
            \tm
          }
        }{
          \vartwo
        })
      }
    \\
    \rtoa{} &
      \claslamp{(\vartwo:\typtwo\NN)}{
        (\clasapp{
          \tmtwo_i\sub{\var_i}{\tm}
        }{
          \vartwo
        })
      }
    \\
    \toa{\ruleEta} &
      \tmtwo_i\sub{\var_i}{\tm}
    \end{array}
  \]

\subsection{Encoding Existential Quantification}

Define
$\ex{\btyp}{\typ} \eqdef
 \all{\btyptwo}{(\all{\btyp}{(\typ\imp\btyptwo)}\imp\btyptwo)}$.
(Note that in our notation quantifiers are of higher precedence than
binary connectives).
Then positive typing rules for existential quantification,
analogous to $\Iexp$ and a weak variant of $\Eexp$,
are derivable, and their constructions simulate the $\ruleBExP$ rule.
Let:
\[
  \begin{array}{rcl}
  X_\typtwo & \eqdef & (\all{\btyp}{(\typ\to\typtwo)})\to\typtwo \\
  Y_\typtwo & \eqdef & \all{\btyp}{(\typ\to\typtwo)} \\
  X         & \eqdef & X_{\btyptwo} \\
  Y         & \eqdef & Y_{\btyptwo} \\
  \end{array}
\]

\noindent{\em $\bullet$ Typing rule $\Iexp$.}
  Let $\judg{\tctx}{\tm}{(\typ\sub{\btyp}{\typtwo})\PP}$. Take:
  \[
    \begin{array}{r@{\,}c@{\,}l}
      \patp{\typtwo}{\tm} & \eqdef &
        \lamtp{\btyptwo}{
          \claslamp{(\under:X\NN)}{
            \lamp{(\vartwo:Y\PP)}{
              \claslamp{(\varthree:\btyptwo\NN)}{
                (\clasapp{
                  \tm'
                }{
                  \varthree
                })
              }
            }
          }
        }
    \\
      \tm' & \eqdef &
        \app{
          \clasapp{
            \apptp{
              \clasapp{
                \vartwo
              }{
                (\claslamn{(\under:Y\PP)}{
                  \patn{\typtwo}{
                    \tmthree
                  }
                })
              }
            }{
              \typtwo
            }
          }{
            \tmthree
          }
        }{
          \tm
        }
    \\
      \tmthree & \eqdef &
        \claslamn{(\under:(\typ\sub{\btyp}{\typtwo}\imp\btyptwo)\PP)}{
          \copairn{
            \tm
          }{
            \varthree
          }
        }
    \end{array}
  \]
  Then $\judg{\tctx}{\patp{\typtwo}{\tm}}{(\ex{\btyp}{\typ})\pp}$.
\medskip

\noindent{\em $\bullet$ Weak variant of the typing rule $\Eexp$.}
  Let $\judg{\tctx}{\tm}{(\ex{\btyp}{\typ})\pp}$
  and $\judg{\tctx,\var:\typ\PP}{\tmtwo}{\typthree\PP}$
  with $\btyp \notin \ftv{\tctx,\typthree\PP}$.
  Take:
  \[
    \begin{array}{r@{\,}c@{\,}l}
      \optp{\tm}{\btyp}{\var}{\tmtwo} & \eqdef &
        \claslamp{(\varthree:\typthree\NN)}{
          (\clasapp{
            \app{
              \clasapp{
                \apptp{
                  \tm
                }{
                  \typthree
                }
              }{
                \tmfour
              }
            }{
              \tmfive
            }
          }{
            \varthree
          })
        }
    \\
    \tmfour & \eqdef &
      \claslamn{(\under:X_\typthree\PP)}{
        \copairn{
          \tmfive
        }{
          \vartwo
        }
      }
    \\
    \tmfive & \eqdef &
      \claslamp{(\under:Y_\typthree\NN)}{
        \lamtp{\btyp}{
          \claslamp{(\under:(\typ\to\typthree)\NN)}{
            \lamp{(\var:\typ\PP)}{
              \tmtwo
            }
          }
        }
      }
    \end{array}
  \]
  Then $\judg{\tctx}{\optp{\tm}{\btyp}{\var}{\tmtwo}}{\typthree\PP}$.
\medskip

\noindent {\em $\bullet$ Computation rule $\ruleBExP$.}
  Let $\judg{\tctx}{\tm}{(\typ\sub{\btyp}{\typtwo})\PP}$
  and $\judg{\tctx,\var:\typ\PP}{\tmtwo}{\typthree\PP}$
  with $\btyp \notin \ftv{\tctx,\typthree\PP}$.
  Then:
  \[
    \begin{array}{ll}
    &
      \optp{\patp{\typtwo}{\tm}}{\btyp}{\var}{\tmtwo}
    \\
    \rtoa{} &
      \claslamp{(\varthree:\typthree\NN)}{
        (\clasapp{
          \app{
            \clasapp{
              \apptp{
                \clasapp{
                  \tmfive
                }{
                  (\claslamn{\under}{
                    \patn{\typtwo}{
                      \tmthree
                    }
                  })
                }
              }{
                \typtwo
              }
            }{
              \tmthree
            }
          }{
            \tm
          }
        }{
          \varthree
        })
      }
    \\
    \rtoa{} &
      \claslamp{(\varthree:\typthree\NN)}{
        (\clasapp{
          \tmtwo\sub{\btyp}{\typtwo}\sub{\var}{\tm}
        }{
          \varthree
        })
      }
    \\
    \toa{\ruleEta} &
      \tmtwo\sub{\btyp}{\typtwo}\sub{\var}{\tm}
    \end{array}
  \]

\newpage
\section{Strong Normalization}
\lsec{appendix:strong_normalization}

\subsection{Properties of Reducibility Candidates}

We recall the following well-known fact from order theory (see \eg \cite[Lem.~2.30]{davey1990}):
\begin{lemma}
\llem{all_glbs_implies_complete_lattice}
If $(A,\leq)$ is such that $\meet{B}$
exists for every $B \subseteq A$,
then $(A,\leq)$ is a complete lattice.
\end{lemma}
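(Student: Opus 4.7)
The plan is to construct least upper bounds from greatest lower bounds, using the standard order-theoretic trick. Fix an arbitrary subset $B \subseteq A$; I must exhibit $\join B$.

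First I would define the set of upper bounds of $B$, namely $U_B \eqdef \set{a \in A \ST \forall b \in B.\ b \leq a}$. By the hypothesis, the greatest lower bound $u \eqdef \meet U_B$ exists. Note that $U_B$ is nonempty even when $B = \emptyset$: the top element of $A$, which exists as $\meet \emptyset$ by hypothesis, is trivially an upper bound for every subset.

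Next I would verify that $u$ is the least upper bound of $B$ in two steps. For the upper bound property, fix $b \in B$; then $b$ is a lower bound of $U_B$ because every $a \in U_B$ satisfies $b \leq a$ by definition; hence $b \leq \meet U_B = u$. This holds for all $b \in B$, so $u$ is an upper bound. For the leastness, let $a$ be any upper bound of $B$; then $a \in U_B$, and so $u = \meet U_B \leq a$. Therefore $u = \join B$.

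Since $B$ was arbitrary, every subset of $A$ has a least upper bound, and together with the assumed existence of greatest lower bounds this shows $(A,\leq)$ is a complete lattice. There is no real obstacle: the only subtle point is remembering that the hypothesis covers $B = \emptyset$, which ensures $A$ has a top element and hence that $U_B$ is never empty.
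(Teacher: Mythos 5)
Your proof is correct and is exactly the standard argument (meet of the set of upper bounds) that the paper implicitly defers to by citing Davey--Priestley without giving a proof. The aside about $U_B$ being nonempty is harmless but not actually needed: the leastness and upper-bound properties of $\meet U_B$ go through for any $U_B$, empty or not.
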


\begin{proposition}
\lprop{appendix:reducibility_candidates_form_complete_lattice}
The set $\RCSet$ forms a complete lattice
ordered by inclusion $\subseteq$.
\end{proposition}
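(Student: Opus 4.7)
The plan is to apply \rlem{all_glbs_implies_complete_lattice}: it suffices to show that every family $\{\rc_i\}_{i\inI} \subseteq \RCSet$ (indexed by an arbitrary, possibly empty, set $I$) admits a greatest lower bound in $\RCSet$. Following the hint given in the main text, the natural candidate for the meet is the set-theoretic intersection $\bigcap_{i\inI} \rc_i$, with the convention that the empty intersection yields the top element $\SNTerms$ itself.

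First I would handle the degenerate case $I = \emptyset$ by verifying that $\SNTerms \in \RCSet$. Closure by $\tou$ is immediate since strong normalization is preserved by reduction; completeness holds vacuously, because any $\atm \in \SNTerms$ already witnesses membership regardless of its canonical reducts. Thus $\SNTerms$ is the top element of $\RCSet$, which handles the empty meet.

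For the nonempty case, let $\rc \eqdef \bigcap_{i\inI} \rc_i$. I would then verify the two defining properties of a reducibility candidate. For closure under $\tou$: if $\atm \in \rc$ and $\atm \tou \atmtwo$, then $\atm \in \rc_i$ for every $i$, hence $\atmtwo \in \rc_i$ by closure of each $\rc_i$, so $\atmtwo \in \rc$. For completeness: if $\atm \in \SNTerms$ is such that $\atmthree \in \rc$ for every $\atmthree \in \CanTerms$ with $\atm \rtou \atmthree$, then for each $i\inI$ we have $\atmthree \in \rc_i$ for all such canonical reducts, so by completeness of $\rc_i$ we conclude $\atm \in \rc_i$; since this holds for every $i$, we get $\atm \in \rc$. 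Finally, $\rc \subseteq \SNTerms$ since the family is nonempty and each $\rc_i \subseteq \SNTerms$. Therefore $\rc \in \RCSet$, and by construction it is the greatest lower bound of $\{\rc_i\}_{i\inI}$ with respect to $\subseteq$.

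Having established that arbitrary greatest lower bounds exist in $(\RCSet, \subseteq)$, \rlem{all_glbs_implies_complete_lattice} immediately yields that $\RCSet$ is a complete lattice. I do not anticipate any serious obstacle: the argument is entirely routine set-theoretic bookkeeping, and the only subtle point is remembering to treat the empty meet separately so that the top element is well-defined as $\SNTerms$ rather than as the whole universe of untyped terms.
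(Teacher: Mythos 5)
Your proposal is correct and follows essentially the same route as the paper: both invoke \rlem{all_glbs_implies_complete_lattice} and exhibit the greatest lower bound of a family of reducibility candidates as the set-theoretic intersection, with the empty meet conventionally equal to $\SNTerms$. Your explicit verification that $\SNTerms$ itself is a reducibility candidate is a harmless (and slightly more careful) elaboration of what the paper records separately in the remark identifying $\rctop = \SNTerms$.
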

\begin{proof}
By~\rlem{all_glbs_implies_complete_lattice},
it suffices to show that every subset $\rcfamily \subseteq \RCSet$ has
a greatest lower bound.
Let $\rcfamily \subseteq \RCSet$ be a family of \RCs.
Take $\meet\rcfamily := \bigcap\set{\rc \ST \rc \in \rcfamily}$,
where, if $\rcfamily$ is empty,
this means that $\meet\rcfamily = \SNTerms$.
Note that $\meet\rcfamily$ is a \RC and a greatest lower bound:
\begin{enumerate}
\item {\bf Closed by reduction.}
  Let $\atm \in \meet\rcfamily$ and $\atm \tou \atmtwo$.
  By definition, $\atm \in \rc$ for every $\rc \in \rcfamily$.
  Since each $\rc \in \rcfamily$ is a \RC,
  we have that $\atmtwo \in \rc$ for every $\rc \in \rcfamily$.
  Hence $\atmtwo \in \meet\rcfamily$.
\item {\bf Complete.}
  Let $\atm \in \SNTerms$ be such that that
  $\forall \atmtwo \in \CanTerms.
       ((\atm \rtou \atmtwo) \implies \atmtwo \in \meet\rcfamily)$.
  Then note that, for each $\rc \in \rcfamily$ one has that
  $\forall \atmtwo \in \CanTerms.
       ((\atm \rtou \atmtwo) \implies \atmtwo \in \rc)$
  holds.
  Hence, since each $\rc \in \rcfamily$ is a \RC,
  $\atm \in \rc$ for every $\rc \in \rcfamily$.
  Hence $\atm \in \meet\rcfamily$, as required.
\item {\bf Greatest lower bound.} 
  By standard properties of the set-theoretic intersection,
  $\meet\rcfamily = \bigcap\set{\rc \ST \rc \in \rcfamily}$
  is the greatest lower bound of $\rcfamily$ with respect to inclusion.
\end{enumerate}
\end{proof}

\begin{remark}
The top element of $\RCSet$
is given by $\rctop := \meet\emptyset = \SNTerms$.
The bottom element of $\RCSet$
is given by the set of terms that do not reduce to a canonical
term,
that is,
$\rcbot := \set{\atm \in \SNTerms \ST \forall \atmtwo \in \CanTerms.\ \neg(\atm \rtou \atmtwo)}$.
Note that $\rcbot$ is a \RC and the least element:
\begin{enumerate}
\item {\bf Closed by reduction.}
  Let $\atm \in \rcbot$ and $\atm \tou \atmtwo$.
  Then $\atmtwo$ does not reduce to a canonical term, for otherwise
  $\atm$ would reduce to a canonical term.
  Hence $\atmtwo \in \rcbot$.
\item {\bf Complete.}
  Let $\atm \in \SNTerms$ be such that
  $(\atm \rtou \atmtwo) \implies \atmtwo \in \rcbot$ holds
  for each canonical term $\atmtwo \in \CanTerms$.
  Observe that $\atmtwo \notin \rcbot$,
  given that a canonical term always reduces to itself.
  Hence, by the contrapositive, we conclude that
  $\neg(\atm \rtou \atmtwo)$.
  Hence we have that
  $\forall\atmtwo\in\CanTerms.\,\neg(\atm \rtou \atmtwo)$,
  that is $\atm \in \rcbot$, as required.
\item {\bf Least element.} 
  We argue that $\rcbot \subseteq \rc$
  for each $\rc \in \RCSet$.
  Indeed, let $\atm \in \rcbot$,
  and let us show that $\atm \in \rc$.
  Note that $\atm$ does not reduce to any canonical term, so
  since $\rc$ is complete, we have that $\atm \in \rc$.
\end{enumerate}
\end{remark}

\begin{remark}
If $\atm \in \CanTerms$ is a canonical term,
$\atm \in \rclosure{X}$ implies $\atm \in X$.
\end{remark}

\begin{lemma}[Operations are well-defined on \RCs]
\llem{appendix:operations_on_reducibility_candidates_well_defined}
If $\rc_1,\rc_2$ are \RCs and $\set{\rc_i}_{i\inI}$ is a set of \RCs,
then $\rc_1 \rctimes \rc_2$, $\rc_1 \rcplus \rc_2$, $\rc_1 \rcimp \rc_2$,
$\rc_1 \rccoimp \rc_2$, $\rcneg\rc$, $\rcall{i\inI}{\rc_i}$,
and $\rcex{i\inI}{\rc_i}$ are well-defined \RCs. 
\end{lemma}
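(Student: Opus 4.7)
The plan is to first establish one auxiliary fact and then dispatch the seven cases. The auxiliary fact is that, for every $X \subseteq \CanTerms$, the closure $\rclosure{X}$ is automatically a reducibility candidate. Indeed, $\rclosure{X} \subseteq \SNTerms$ by definition. For closure under $\tou$, if $\atm \in \rclosure{X}$ and $\atm \tou \atmtwo$, then $\atmtwo \in \SNTerms$ and every canonical reduct of $\atmtwo$ is also a canonical reduct of $\atm$, hence in $X$. For completeness, if $\atm \in \SNTerms$ has every canonical reduct $\atmtwo$ lying in $\rclosure{X}$, then since each such $\atmtwo$ is itself canonical and trivially satisfies $\atmtwo \rtou \atmtwo$, the definition of $\rclosure{-}$ forces $\atmtwo \in X$; hence $\atm \in \rclosure{X}$.

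With this in hand, the five cases $\rctimes$, $\rcplus$, $\rccoimp$, $\rcneg$, and $\rcex$ are immediate, since each of these operations is defined as $\rclosure{X}$ for a set $X \subseteq \CanTerms$ of introducer terms. The two remaining cases $\rcimp$ and $\rcall$ require direct verification. For $\rc_1 \rcimp \rc_2$, inclusion in $\SNTerms$ is explicit in the definition, and closure under $\tou$ is immediate by contextuality: if $\atm \tou \atm'$ and $\atmtwo \in \rc_1$, then $\ap{\atm}{\atmtwo} \tou \ap{\atm'}{\atmtwo}$ lies in $\rc_2$ by closure of $\rc_2$.

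The main obstacle is the completeness argument for $\rc_1 \rcimp \rc_2$. Given $\atm \in \SNTerms$ such that every canonical $\atm'$ with $\atm \rtou \atm'$ belongs to $\rc_1 \rcimp \rc_2$, and given $\atmtwo \in \rc_1$, I must show $\ap{\atm}{\atmtwo} \in \rc_2$. By completeness of $\rc_2$, it suffices to verify that $\ap{\atm}{\atmtwo} \in \SNTerms$ and that every canonical reduct $\atmthree$ of $\ap{\atm}{\atmtwo}$ lies in $\rc_2$. The second half is the crux: since applications are never canonical, any reduction $\ap{\atm}{\atmtwo} \rtou \atmthree$ with $\atmthree$ canonical must pass through a head $\beta$-step, so $\atm \rtou \lam{\var}{\atm'}$ and $\atmtwo \rtou \atmtwo^*$, then $\ap{\lam{\var}{\atm'}}{\atmtwo^*} \tou \atm'\sub{\var}{\atmtwo^*}$ and further reductions reach $\atmthree$. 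By hypothesis $\lam{\var}{\atm'} \in \rc_1 \rcimp \rc_2$, and by closure of $\rc_1$ under reduction $\atmtwo^* \in \rc_1$; hence $\ap{\lam{\var}{\atm'}}{\atmtwo^*} \in \rc_2$, and closure of $\rc_2$ gives $\atmthree \in \rc_2$. That $\ap{\atm}{\atmtwo}$ is SN follows by a standard induction on the lexicographic pair of SN measures of $\atm$ and $\atmtwo$, using the same reduct-shape analysis: any infinite reduction either reduces $\atm$ or $\atmtwo$ in place (contradicting their SN status) or fires a head $\beta$-step, after which the resulting term lies in $\rc_2 \subseteq \SNTerms$ by the argument just given. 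The case $\rcall{i\in I}{\rc_i}$ is formally analogous, with $\apptu{\atm}$ replacing $\ap{\atm}{\atmtwo}$ and the type-redex $\apptu{\lamtu{\atm'}} \tou \atm'$ playing the role of $\beta$, the key structural observation being that eliminators are never canonical.
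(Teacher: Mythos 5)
Your proposal is correct and follows essentially the same route as the paper: establish once that $\rclosure{X}$ is always a reducibility candidate, dispatch the five closure-defined operations immediately, and then verify $\rcimp$ and $\rcall{}{}$ directly via the observation that a reduction from an eliminator to a canonical term must pass through a root redex whose constructor is, by hypothesis, already in the candidate. You are in fact slightly more careful than the paper on one point, namely explicitly checking that $\ap{\atm}{\atmtwo}$ (resp.\ $\apptu{\atm}$) is strongly normalizing before invoking completeness of $\rc_2$, a step the paper's write-up of this lemma leaves implicit.
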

\begin{proof}
First, we claim that, for any set $X \subseteq \CanTerms$,
its closure $\rclosure{X}$ is a reducibility candidate:
\begin{enumerate}
\item Closed by reduction:
  Let $\atm \in \rclosure{X}$ and $\atm \tou \atm'$.
  We claim that $\atm' \in \rclosure{X}$.
  Indeed, if $\atm'$ reduces to a canonical term,
  \ie $\atm' \rtou \atmtwo \in \CanTerms$
  then also $\atm \rtou \atmtwo$,
  so $\atmtwo \in X$, as required.
\item Complete:
  Let $\atm \in \SNTerms$ be such that
  for every $\atmtwo \in \CanTerms$
  we have that $\atm \rtou \atmtwo$ implies $\atmtwo \in \rclosure{X}$.
  We claim that $\atm \in \rclosure{X}$.
  Indeed, suppose that $\atm \rtou \atmtwo \in \CanTerms$.
  Then by hypothesis $\atmtwo \in \rclosure{X}$.
  But $\atmtwo$ is a canonical term and it reduces to itself
  in zero steps, so $\atmtwo \in X$, as required.
\end{enumerate}
Second, we prove, for each each operation, that the resulting set is
a \RC, \ie closed by reduction and complete.
If $\rc_1,\rc_2 \in \RCSet$ are reducibility candidates
then it is immediate to conclude that
$\rc_1\rctimes\rc_2$ is a reducibility candidate,
given that the product is defined as a closure.
Similarly for the sum, co-implication product,
negation, and indexed sum,
\ie if $\rc_1,\rc_2\in\RCSet$ then $\rc_1\rcplus\rc_2\in\RCSet$;
if $\rc_1,\rc_2\in\RCSet$ then $\rc_1\rccoimp\rc_2\in\RCSet$;
if $\rc\in\RCSet$ then $\rcneg\rc\in\RCSet$; and 
if $\set{\rc_i}_{i\inI} \subseteq \RCSet$ then $\rcex{i\inI}{\rc_i}\in\RCSet$.
The remaining operations are:
\begin{enumerate}
\item {\bf Arrow.}
  Let $\rc_1,\rc_2 \in \RCSet$.
  Then $\rc_1\rcimp\rc_2 \in \RCSet$:
  \begin{enumerate}
  \item Closed by reduction:
    Let $\atm \in \rc_1\rcimp\rc_2$ and $\atm \tou \atm'$.
    We claim that $\atm' \in \rc_1\rcimp\rc_2$.
    Indeed, let $\atmtwo \in \rc_1$,
    and let us check that $\ap{\atm'}{\atmtwo} \in \rc_2$.
    Note that, by definition, $\ap{\atm}{\atmtwo} \in \rc_2$,
    and moreover $\ap{\atm}{\atmtwo} \tou \ap{\atm'}{\atmtwo}$.
    Since $\rc_2$ is closed by reduction, $\ap{\atm'}{\atmtwo} \in \rc_2$,
    as required.
  \item Complete:
    Let $\atm \in \SNTerms$ be such that
    for every $\atmtwo \in \CanTerms$
    we have that $\atm \rtou \atmtwo$ implies $\atmtwo \in \rc_1\rcimp\rc_2$.
    We claim that $\atm \in \rc_1\rcimp\rc_2$.
    Indeed, let $\atmthree \in \rc_1$ and
    let us show that $\ap{\atm}{\atmthree} \in \rc_2$.
    Since $\rc_2$ is complete, it suffices to show that
    if $\atmtwo \in \CanTerms$ is a canonical term
    and $\ap{\atm}{\atmthree} \rtou \atmtwo$
    then $\atmtwo \in \rc_2$.
    Observe that any reduction
    $\ap{\atm}{\atmthree} \rtou \atmtwo \in \CanTerms$
    must be of the form
    $\ap{\atm}{\atmthree}
     \rtou \ap{(\lam{\var}{\atm'})}{\atmthree'}
     \tou \atm'\sub{\var}{\atmthree'}
     \rtou \atmtwo$
    with $\atm \rtou \lam{\var}{\atm'}$
    and $\atmthree \rtou \atmthree'$.
    By hypothesis, $\lam{\var}{\atm'} \in \rc_1\rcimp\rc_2$.
    Furthermore, since $\rc_1$ is closed by reduction,
    we have that $\atmthree' \in \rc_1$.
    Therefore $\ap{(\lam{\var}{\atm'})}{\atmthree'} \in \rc_2$.
    Finally, since $\rc_2$ is closed by reduction,
    we conclude that $\atmtwo \in \rc_2$, as required.
  \end{enumerate}
\item {\bf Indexed product.}
  Let $\set{\rc_i}_{i\inI} \subseteq \RCSet$.
  Then $\rcall{i\inI}{\rc_i} \in \RCSet$:
  \begin{enumerate}
  \item Closed by reduction:
    Let $\atm \in \rcall{i\inI}{\rc_i}$ and $\atm \tou \atm'$.
    We claim that $\atm' \in \rcall{i\inI}{\rc_i}$.
    Let $i \in I$ and note that $\apptu{\atm} \in \rc_i$
    and $\apptu{\atm} \tou \apptu{\atm'}$.
    Since $\rc_i$ is closed by reduction, $\apptu{\atm'} \in \rc_i$.
    Hence $\apptu{\atm'} \in \rc_i$ for arbitrary $i\inI$,
    which means that $\atm' \in \rcall{i\inI}{\rc_i}$.
  \item Complete:
    Let $\atm \in \SNTerms$ be such that
    for every $\atmtwo \in \CanTerms$
    we have that $\atm \rtou \atmtwo$ implies $\atmtwo \in \rcall{i\inI}{\rc_i}$.
    We claim that $\atm \in \rcall{i\inI}{\rc_i}$.
    Indeed, let $i \in I$ and let us show that $\apptu{\atm} \in \rc_i$.
    Since $\rc_i$ is complete, it suffices to show that
    if $\atmtwo \in \CanTerms$ is a canonical term
    and $\apptu{\atm} \rtou \atmtwo$
    then $\atmtwo \in \rc_i$.
    Observe that any reduction $\apptu{\atm} \rtou \atmtwo$
    must be of the form
    $\apptu{\atm} \rtou \apptu{(\lamtu{\atm'})} \tou \atm' \rtou \atmtwo$
    with $\atm \rtou \lamtu{\atm'}$.
    By hypothesis, $\lamtu{\atm'} \in \rcall{i\inI}{\rc_i}$.
    Therefore $\apptu{(\lamtu{\atm'})} \in \rc_i$.
    Finally, since $\rc_i$ is closed by reduction, 
    we conclude that $\atmtwo \in \rc_i$, as required.
  \end{enumerate}
\end{enumerate}
\end{proof}

\begin{lemma}[Arrow is order-reversing on the left]
\llem{rcimp_order_reversing}
\llem{appendix:rcimp_order_reversing}
Let $\rc_1$, $\rc'_1$, $\rc_2$, and $\rc_3$ denote
reducibility candidates.
\begin{enumerate}
\item
  If $\rc_1 \subseteq \rc'_1$ then
  $(\rc'_1\rcimp\rc_2) \subseteq (\rc_1\rcimp\rc_2)$.
\item
  If $\rc_1 \subseteq \rc'_1$ then
  $((\rc_1\rcimp\rc_2)\rcimp\rc_3) \subseteq
   ((\rc'_1\rcimp\rc_2)\rcimp\rc_3)$.
\end{enumerate}
\end{lemma}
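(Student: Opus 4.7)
The proof plan is to unfold the definition of $\rcimp$ and argue directly by set inclusion; part 2 then follows from part 1 by applying it twice, exploiting that composing two order-reversing operations yields an order-preserving one.

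For part 1, I would fix $\atm \in (\rc'_1 \rcimp \rc_2)$ and show that $\atm \in (\rc_1 \rcimp \rc_2)$. By definition, this means verifying $\atm \in \SNTerms$ (immediate, since $(\rc'_1 \rcimp \rc_2) \subseteq \SNTerms$) and that for every $\atmtwo \in \rc_1$ we have $\ap{\atm}{\atmtwo} \in \rc_2$. Given any such $\atmtwo \in \rc_1$, the hypothesis $\rc_1 \subseteq \rc'_1$ yields $\atmtwo \in \rc'_1$, so by assumption on $\atm$ we get $\ap{\atm}{\atmtwo} \in \rc_2$, as needed.

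For part 2, the key observation is that part 1 has just established that $\rcimp$ is order-reversing in its left argument. Applying part 1 to the inclusion $\rc_1 \subseteq \rc'_1$ gives
\[
  (\rc'_1 \rcimp \rc_2) \;\subseteq\; (\rc_1 \rcimp \rc_2).
\]
Now applying part 1 once more, this time to the inclusion $(\rc'_1 \rcimp \rc_2) \subseteq (\rc_1 \rcimp \rc_2)$ (viewed as the hypothesis $\rc_1 \subseteq \rc'_1$ in the statement, with $(\rc'_1 \rcimp \rc_2)$ and $(\rc_1 \rcimp \rc_2)$ playing the roles of $\rc_1$ and $\rc'_1$ respectively), yields
\[
  ((\rc_1 \rcimp \rc_2) \rcimp \rc_3) \;\subseteq\; ((\rc'_1 \rcimp \rc_2) \rcimp \rc_3),
\]
which is precisely the desired inclusion.

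There is no substantive obstacle here: both parts are direct consequences of unfolding the definition of the arrow operation on reducibility candidates, and part 2 is a mechanical iteration of part 1. The only point warranting a brief check is that the sets in question are indeed reducibility candidates, which is guaranteed by \rlem{appendix:operations_on_reducibility_candidates_well_defined}, so the argument can be carried out entirely at the level of set inclusion without revisiting closure or completeness.
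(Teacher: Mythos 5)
Your proof is correct and takes essentially the same route as the paper: part 1 is the same direct unfolding of the definition of $\rcimp$, and part 2 is obtained by applying part 1 twice, which is exactly what the paper means when it says the second item is an immediate consequence of the first.
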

\begin{proof}
Suppose that $\rc_1 \subseteq \rc'_1$,
let $\atm \in (\rc'_1\rcimp\rc_2)$,
and let us show that $\atm \in (\rc_1\rcimp\rc_2)$.
By definition, consider an arbitrary $\atmtwo \in \rc_1$
and let us show that $\ap{\atm}{\atmtwo} \in \rc_2$.
But $\atmtwo \in \rc'_1$, so in fact  
$\ap{\atm}{\atmtwo} \in \rc_2$.
The second item is an immediate consequence of the first.
\end{proof}

\begin{remark}
The set $\RCPerp$ is non-empty.
Note that $(\rcbot,\rcbot) \in \RCPerp$.
Indeed, if $\atm,\atmtwo\in\rcbot$
note that $\atm$ and $\atmtwo$ are both strongly normalizing.
Moreover, all the reducts of
$\strongabs{}{\atm}{\atmtwo}$
are of the form
$\strongabs{}{\atm'}{\atmtwo'}$
with $\atm \rtou \atm'$ and $\atmtwo \rtou \atmtwo'$,
because $\atm$ and $\atmtwo$ do not reduce to canonical
terms, so $\strongabs{}{\atm'}{\atmtwo'}$ does not have a
redex at the root.
\end{remark}

\begin{lemma}[Reducible terms are well-defined]
For each type $\ev$ and each environment $\env$,
the set $\red{\ev}{\env}$ is a reducibility candidate.
\end{lemma}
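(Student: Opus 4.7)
The plan is to proceed by strong induction on the measure $\#(\ev)$, treating each form of $\ev$ in turn. For each clause in the definition of $\red{\ev}{\env}$, I need to check that the right-hand side is indeed a reducibility candidate. For the base cases $\ev = \btyp\pp$ or $\ev = \btyp\nn$, the set is $\env(\btyp\pp)$ or $\env(\btyp\nn)$, which lies in $\RCSet$ by the very definition of an environment.

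For strong composite types ($\typ\pp$ or $\typ\nn$ with $\typ$ non-atomic), the right-hand side is built by applying one of the operations $\rctimes$, $\rcplus$, $\rcimp$, $\rccoimp$, $\rcneg$, $\rcall$, $\rcex$ to smaller interpretations. By \rlem{appendix:operations_on_reducibility_candidates_well_defined} these operations preserve membership in $\RCSet$, so it suffices to verify that the component interpretations are \RCs by IH. A quick arithmetic check on the measure confirms this in every case: for instance, $\#(\typ\PP) = 2|\typ|+1 < 2(|\typ|+|\typtwo|+1) = \#((\typ\land\typtwo)\pp)$, and for quantifiers $\#(\typ\PP) = 2|\typ|+1 < 2|\typ|+2 = \#((\all{\btyp}{\typ})\pp)$, noting that the extension $\env\esub{\btyp}{\rc\pp,\rc\nn}$ is again an environment because the pair $(\rc\pp,\rc\nn)$ is chosen from $\RCPerp$. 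The indexed operations $\rcall$ and $\rcex$ then apply to the family $\set{\red{\typ\PP}{\env\esub{\btyp}{\rc\pp,\rc\nn}}}_{(\rc\pp,\rc\nn)\in\RCPerp}$ of \RCs.

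The interesting case is that of the weak propositions $\typ\PP$ and $\typ\NN$, defined as fixed points. Here, by IH applied to $\red{\typ\pp}{\env}$ and $\red{\typ\nn}{\env}$ (both of strictly smaller measure $2|\typ| < 2|\typ|+1$), these sets are \RCs. I then consider the map $f(\rc) \eqdef ((\rc \rcimp \red{\typ\nn}{\env}) \rcimp \red{\typ\pp}{\env})$ on the complete lattice $(\RCSet,\subseteq)$ established in \rprop{appendix:reducibility_candidates_form_complete_lattice}. The second item of \rlem{appendix:rcimp_order_reversing} says precisely that $f$ is order-preserving: reversing twice on the left restores monotonicity. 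Hence by the Knaster--Tarski theorem (\rthm{knaster_tarski}) the least fixed point $\lfpF{f}$ exists in $\RCSet$, and $\red{\typ\PP}{\env} = \lfpF{f}$ is a \RC. The negative case $\red{\typ\NN}{\env}$ is entirely symmetric, using the greatest fixed point.

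The main obstacle is bookkeeping: making sure that, in every clause, all subexpressions one invokes the IH on have strictly smaller measure, and that the map whose fixed point is taken is genuinely order-preserving as an operator on $\RCSet$ (which in turn relies on the left-order-reversing behaviour of $\rcimp$). Once these two observations are in place, the rest of the argument is a routine case-by-case check.
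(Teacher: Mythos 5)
Your proposal is correct and follows essentially the same route as the paper: induction on the measure $\#(\ev)$, delegating the composite strong cases to the lemma that the operations on reducibility candidates are well defined, and handling $\typ\PP$/$\typ\NN$ via Knaster--Tarski, with order-preservation of $f(\rc) = ((\rc\rcimp\red{\typ\nn}{\env})\rcimp\red{\typ\pp}{\env})$ obtained from the left-order-reversing property of $\rcimp$ applied twice. The extra bookkeeping you supply (the measure arithmetic and the observation that $\env\esub{\btyp}{\rc\pp,\rc\nn}$ is again an environment because $(\rc\pp,\rc\nn)\in\RCPerp$) is consistent with what the paper leaves implicit.
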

\begin{proof}
By induction on the measure $\#(\ev)$.
Most cases are straightforard by induction hypothesis,
using the fact that
operations on reducibility candidates ($\rctimes$, $\rcplus$, etc.)
are well defined~(\rlem{appendix:operations_on_reducibility_candidates_well_defined}).
The interesting cases are $\red{\typ\PP}{\env}$ and $\red{\typ\NN}{\env}$.
We study the positive case; the negative case is similar.

To see that
$\red{\typ\PP}{\env} =
  \lfp{\rc}{((\rc\rcimp\red{\typ\nn}{\env})\rcimp\red{\typ\pp}{\env})}
  \in \RCSet$
note that, by \ih,
$\red{\typ\nn}{\env} \in \RCSet$ and
$\red{\typ\pp}{\env} \in \RCSet$.
By the Knaster--Tarski theorem~(\rthm{knaster_tarski}),
to see that the least fixed point exists,
it suffices to show that the mapping
$f(\rc) = ((\rc\rcimp\red{\typ\nn}{\env})\rcimp\red{\typ\pp}{\env})$
is order-preserving.
This results from \rlem{rcimp_order_reversing}.
\end{proof}

\begin{lemma}[Irrelevance for reducible terms]
\llem{reducible_terms_irrelevance}
\llem{appendix:reducible_terms_irrelevance}
Let $\env,\env'$ be environments
that agree on all the free type variables of $\ev$.
More precisely,
suppose that
$\env(\btyp\pp) = \env'(\btyp\pp)$
and
$\env(\btyp\nn) = \env'(\btyp\nn)$
for every $\btyp \in \ftv{\ev}$.
Then $\red{\ev}{\env} = \red{\ev}{\env'}$.
\end{lemma}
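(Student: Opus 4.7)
The plan is to proceed by induction on the measure $\#(\ev)$ used to define $\red{\ev}{\env}$, treating the four modes ($\typ\pp$, $\typ\nn$, $\typ\PP$, $\typ\NN$) in a uniform way and relying on the fact that, for each defining clause, the subterms on the right-hand side have strictly smaller measure.

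For the base cases $\ev = \btyp\pp$ and $\ev = \btyp\nn$, the type variable $\btyp$ is by assumption free in $\ev$, so by hypothesis $\env(\btyp^{\pm}) = \env'(\btyp^{\pm})$, and the result is immediate from the definition $\red{\btyp^{\pm}}{\env} = \env(\btyp^{\pm})$. For the non-quantifier strong cases ($\typ\land\typtwo$, $\typ\lor\typtwo$, $\typ\imp\typtwo$, $\typ\coimp\typtwo$, $\neg\typ$ in both signs), the set $\red{\ev}{\env}$ is built by one of the operations $\rctimes,\rcplus,\rcimp,\rccoimp,\rcneg$ from sets of the form $\red{\evtwo}{\env}$ whose pure part is a strict subformula of $\ev$; since $\ftv{\evtwo}\subseteq\ftv{\ev}$ and $\#(\evtwo)<\#(\ev)$, the inductive hypothesis gives $\red{\evtwo}{\env}=\red{\evtwo}{\env'}$, and the equality lifts through the operations.

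The quantifier cases ($\all{\btyp}{\typ}$ and $\ex{\btyp}{\typ}$, both signs) require a short side-argument: fix any $(\rc\pp,\rc\nn)\in\RCPerp$, and observe that the extended environments $\env\esub{\btyp}{\rc\pp,\rc\nn}$ and $\env'\esub{\btyp}{\rc\pp,\rc\nn}$ agree on $\ftv{\typ}\subseteq\ftv{\all{\btyp}{\typ}}\cup\{\btyp\}$, because they agree on $\btyp$ by construction and on every other variable in $\ftv{\typ}$ by hypothesis. Since $\#(\typ^{\pm})<\#((\all{\btyp}{\typ})^{\pm})$, the inductive hypothesis yields $\red{\typ^{\pm}}{\env\esub{\btyp}{\rc\pp,\rc\nn}}=\red{\typ^{\pm}}{\env'\esub{\btyp}{\rc\pp,\rc\nn}}$ for every such pair, and taking the indexed product or sum preserves the equality.

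Finally, for the weak cases $\red{\typ\PP}{\env}$ and $\red{\typ\NN}{\env}$, note that $\#(\typ\pp)=\#(\typ\nn)=2|\typ|<2|\typ|+1=\#(\typ\PP)=\#(\typ\NN)$, so the inductive hypothesis applies to $\red{\typ\pp}{\env}$ and $\red{\typ\nn}{\env}$. The monotone operators whose least (resp. greatest) fixed points define $\red{\typ\PP}{\env}$ and $\red{\typ\NN}{\env}$ therefore coincide under $\env$ and $\env'$, so their fixed points, obtained by \rthm{knaster_tarski}, are equal. No step presents a genuine obstacle; the only point of care is in the quantifier clause, to ensure the extended environments truly agree on all free type variables of the body before invoking the inductive hypothesis.
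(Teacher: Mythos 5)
Your proof is correct and takes exactly the approach the paper itself uses (the paper dismisses this lemma as ``straightforward by induction on the measure $\#(\ev)$''); your write-up simply fills in the details, including the one genuinely delicate point, namely that the extended environments in the quantifier case still agree on all free type variables of the body. One minor notational quibble: in the quantifier case the bodies of the indexed product/sum are the weak interpretations $\red{\typ\PP}{\cdot}$ and $\red{\typ\NN}{\cdot}$ rather than the strong ones you denote $\typ^{\pm}$, but the measure inequality $\#(\typ\PP) = 2|\typ|+1 < 2|\all{\btyp}{\typ}| = \#((\all{\btyp}{\typ})\pp)$ still holds, so the argument goes through unchanged.
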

\begin{proof}
Straightforward by induction on the measure $\#(\ev)$.
\end{proof}

\begin{lemma}[Substitution for reducible terms]
\llem{reducible_terms_substitution}
\llem{appendix:reducible_terms_substitution}
$\red{\ev\sub{\btyp}{\typ}}{\env} =
 \red{\ev}{
   \env\esub{\btyp}{\red{\typ\pp}{\env},\red{\typ\nn}{\env}}
 }$.
\end{lemma}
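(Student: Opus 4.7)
The plan is to proceed by induction on the measure $\#(\ev)$, case-analyzing the shape of $\ev$. Before starting, I would note that for the right-hand side to be a valid environment extension we need $(\red{\typ\pp}{\env}, \red{\typ\nn}{\env}) \in \RCPerp$, which is one of the orthogonality properties advertised just after \rthm{adequacy}; I will simply cite it. Throughout the proof, let $\env' \eqdef \env\esub{\btyp}{\red{\typ\pp}{\env}, \red{\typ\nn}{\env}}$.

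For the base cases, when $\ev = \btyp^{\pm}$ the substitution yields $\typ^{\pm}$ and the equation reduces to $\red{\typ^{\pm}}{\env} = \env'(\btyp^{\pm}) = \red{\typ^{\pm}}{\env}$ by definition; when $\ev = \btyptwo^{\pm}$ for $\btyptwo \neq \btyp$, both sides equal $\env(\btyptwo^{\pm})$. For the strong compound connectives ($\land$, $\lor$, $\imp$, $\coimp$, $\neg$), substitution commutes syntactically with the connective and the IH applies directly to the immediate subformulas, whose measure is strictly smaller; the result follows because the operations $\rctimes, \rcplus, \rcimp, \rccoimp, \rcneg$ are defined pointwise.

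The quantifier cases ($\forall$, $\exists$) are where the first subtlety appears. Consider $\ev = (\all{\btyptwo}{\typtwo})\pp$. By $\alpha$-renaming, assume $\btyptwo$ is fresh and in particular $\btyptwo \notin \ftv{\typ}\cup\ftv{\env}$. Unfolding on each side, I need to compare
\[
  \rcall{(\rc\pp,\rc\nn)\in\RCPerp}{\red{(\typtwo\sub{\btyp}{\typ})\PP}{\env\esub{\btyptwo}{\rc\pp,\rc\nn}}}
  \quad\text{and}\quad
  \rcall{(\rc\pp,\rc\nn)\in\RCPerp}{\red{\typtwo\PP}{\env'\esub{\btyptwo}{\rc\pp,\rc\nn}}}.
\]
Applying the IH to $\typtwo\PP$ inside each quantified clause, the first expression becomes $\rcall{(\rc\pp,\rc\nn)}{\red{\typtwo\PP}{\env\esub{\btyptwo}{\rc\pp,\rc\nn}\esub{\btyp}{\red{\typ\pp}{\env\esub{\btyptwo}{\rc\pp,\rc\nn}},\,\red{\typ\nn}{\env\esub{\btyptwo}{\rc\pp,\rc\nn}}}}}$. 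Freshness of $\btyptwo$ and \rlem{appendix:reducible_terms_irrelevance} allow replacing $\red{\typ^{\pm}}{\env\esub{\btyptwo}{\rc\pp,\rc\nn}}$ by $\red{\typ^{\pm}}{\env}$, and then commuting the order of the two environment extensions (justified again by irrelevance, since the updated variables are distinct) yields exactly the second expression.

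The remaining, and principal, obstacle is the weak modes $\typ_1\PP$ and $\typ_1\NN$, since these are defined via a fixed point rather than by a straightforward structural clause. For $\ev = \typ_1\PP$, note $\ev\sub{\btyp}{\typ} = (\typ_1\sub{\btyp}{\typ})\PP$, so
\[
  \red{\ev\sub{\btyp}{\typ}}{\env}
  = \lfp{\rc}{\bigl((\rc\rcimp\red{(\typ_1\sub{\btyp}{\typ})\nn}{\env})\rcimp\red{(\typ_1\sub{\btyp}{\typ})\pp}{\env}\bigr)}.
\]
Because $\#(\typ_1\pp) = \#(\typ_1\nn) < \#(\typ_1\PP)$, the induction hypothesis applies to the strong components, giving $\red{(\typ_1\sub{\btyp}{\typ})\pp}{\env} = \red{\typ_1\pp}{\env'}$ and similarly for $\nn$. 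Hence the order-preserving operator whose least fixed point defines the left-hand side coincides pointwise with the one defining $\red{\typ_1\PP}{\env'}$, and so the least fixed points agree by \rthm{knaster_tarski}; the case of $\typ_1\NN$ is dual, using the greatest fixed point. The delicate point throughout is ensuring that the measure $\#(\cdot)$ was chosen precisely so that the weak-mode clauses may invoke the IH on the corresponding strong-mode clauses, which keeps this last step sound.
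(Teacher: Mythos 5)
Your proof is correct and follows essentially the same route as the paper's: induction on the measure $\#(\ev)$, with the irrelevance lemma handling the second-order quantifier cases and the Knaster--Tarski characterization handling the weak modes $\typ_1\PP/\typ_1\NN$. Your additional observation that $(\red{\typ\pp}{\env},\red{\typ\nn}{\env})\in\RCPerp$ is needed for the right-hand side to be a well-formed environment is a legitimate (and non-circular) point the paper glosses over, since the orthogonality lemma is proved independently of the substitution lemma.
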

\begin{proof}
Straightforward by induction on the measure $\#(\ev)$,
resorting to the irrelevance lemma~(\rlem{reducible_terms_irrelevance})
in the cases of second-order quantifiers.
\end{proof}

\begin{lemma}[Reducible terms of weak type]
\llem{reducible_terms_of_classical_type}
\llem{appendix:reducible_terms_of_classical_type}
The following hold:
\begin{enumerate}
\item
  $\red{\typ\PP}{\env} = \red{\typ\NN}{\env} \rcimp \red{\typ\pp}{\env}$
\item
  $\red{\typ\NN}{\env} = \red{\typ\PP}{\env} \rcimp \red{\typ\nn}{\env}$
\end{enumerate}
\end{lemma}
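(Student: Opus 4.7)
The plan is to prove both parts simultaneously by a symmetric fixed-point argument. To lighten notation, I will abbreviate $A := \red{\typ\PP}{\env}$, $B := \red{\typ\NN}{\env}$, $P := \red{\typ\pp}{\env}$, $N := \red{\typ\nn}{\env}$, and consider the two maps
\[
  g(\rc) := (\rc\rcimp N)\rcimp P,
  \qquad
  f(\rc) := (\rc\rcimp P)\rcimp N
\]
on $\RCSet$. By definition $A = \lfpF{g}$ and $B = \gfpF{f}$, so by the Knaster--Tarski theorem (\rthm{knaster_tarski}) we have the two fixed-point equations $A = (A\rcimp N)\rcimp P$ and $B = (B\rcimp P)\rcimp N$. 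The goal reduces to showing $A = B\rcimp P$ and $B = A\rcimp N$.

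First I would observe that $B\rcimp P$ is itself a fixed point of $g$: indeed $g(B\rcimp P) = ((B\rcimp P)\rcimp N)\rcimp P = B\rcimp P$, using the fixed-point equation for $B$. Since $A$ is the \emph{least} fixed point of $g$, this yields $A \subseteq B\rcimp P$. Dually, $A\rcimp N$ is a fixed point of $f$: $f(A\rcimp N) = ((A\rcimp N)\rcimp P)\rcimp N = A\rcimp N$, using the fixed-point equation for $A$. Since $B$ is the \emph{greatest} fixed point of $f$, we obtain $A\rcimp N \subseteq B$.

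The converse inclusions follow by applying the order-reversing property of $\rcimp$ on the left (\rlem{rcimp_order_reversing}). From $A\rcimp N \subseteq B$ one gets $B\rcimp P \subseteq (A\rcimp N)\rcimp P = A$, so combined with the first inclusion this gives $A = B\rcimp P$, which is part 1. Symmetrically, from $A \subseteq B\rcimp P$ one gets $(B\rcimp P)\rcimp N \subseteq A\rcimp N$, and the left-hand side equals $B$ by the fixed-point equation for $B$; hence $B \subseteq A\rcimp N$, and combined with the earlier inclusion this gives $B = A\rcimp N$, which is part 2.

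I do not foresee a real obstacle: the argument is a routine two-sided use of Knaster--Tarski, and the only subtle step is the bookkeeping around the direction of inclusion when unfolding $\rcimp$, which is cleanly handled by \rlem{rcimp_order_reversing}. One small point to double-check is that all the reducibility candidates involved are in fact \RCs (so that $\rcimp$ and the fixed-point characterizations apply), but this has already been established in the definition of $\red{-}{\env}$ and in \rlem{appendix:operations_on_reducibility_candidates_well_defined}.
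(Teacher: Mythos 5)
Your proof is correct and follows essentially the same route as the paper's: both rest on the Knaster--Tarski fixed-point equations for $\red{\typ\PP}{\env}$ and $\red{\typ\NN}{\env}$ together with the left order-reversal of $\rcimp$ (\rlem{rcimp_order_reversing}). The only difference is organizational: the paper proves item 1 by showing $\red{\typ\NN}{\env}\rcimp\red{\typ\pp}{\env}$ is the \emph{least} fixed point of the relevant map, quantifying over an arbitrary fixed point $\rc_0$, whereas you obtain both inclusions directly from lfp-minimality and gfp-maximality applied to the two specific candidates $\red{\typ\NN}{\env}\rcimp\red{\typ\pp}{\env}$ and $\red{\typ\PP}{\env}\rcimp\red{\typ\nn}{\env}$, which handles both items symmetrically at once and is, if anything, slightly cleaner.
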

\begin{proof}
We only show the first item; the second one is similar.
Let $f(\rc) = ((\rc\rcimp\red{\typ\nn}{\env})\rcimp\red{\typ\pp}{\env})$
and $g(\rc) = ((\rc\rcimp\red{\typ\pp}{\env})\rcimp\red{\typ\nn}{\env})$.
Recall that, by definition,
  $\red{\typ\PP}{\env} = \lfpF{f}$
and
  $\red{\typ\NN}{\env} = \gfpF{f}$.
To prove the equation of the first item
it suffices to show that 
$\red{\typ\NN}{\env} \rcimp \red{\typ\pp}{\env}$
is the least fixed point of $f$:
\begin{enumerate}
\item
  {\bf Fixed point.}
  \[
    \begin{array}{rcll}
    &&
      \red{\typ\NN}{\env} \rcimp \red{\typ\pp}{\env}
    \\
    & = &
      g(\red{\typ\NN}{\env}) \rcimp \red{\typ\pp}{\env}
    \\
    & = &
      ((\red{\typ\NN}{\env}\rcimp\red{\typ\pp}{\env})\rcimp\red{\typ\nn}{\env})
      \rcimp \red{\typ\pp}{\env}
    \\
    & = &
      f(\red{\typ\NN}{\env}\rcimp\red{\typ\pp}{\env})
    \end{array}
  \]
  The first equality is justified because $\red{\typ\NN}{\env} = \gfpF{g}$.
\item
  {\bf Least of the fixed points.}
  Suppose that $\rc_0 = f(\rc_0)$ is another fixed point of $f$,
  and let us show that
  $(\red{\typ\NN}{\env} \rcimp \red{\typ\pp}{\env}) \subseteq \rc_0$.
  Observe that $\rc_0\rcimp\red{\typ\nn}{\env}$ is a fixed point of $g$:
  \[
    \begin{array}{rcll}
    &&
      \rc_0\rcimp\red{\typ\nn}{\env}
    \\
    & = &
      f(\rc_0)\rcimp\red{\typ\nn}{\env}
    \\
    & = &
      ((\rc_0\rcimp\red{\typ\nn}{\env})\rcimp\red{\typ\pp}{\env})
      \rcimp\red{\typ\nn}{\env}
    \\
    & = &
      g(\rc_0\rcimp\red{\typ\nn}{\env})
    \end{array}
  \]
  Hence $(\rc_0\rcimp\red{\typ\nn}{\env}) \subseteq \gfpF{g}$,
  and we have:
  \[
    \begin{array}{rcll}
    &&
    \red{\typ\NN}{\env} \rcimp \red{\typ\pp}{\env}
    \\
    & = &
    \gfpF{g} \rcimp \red{\typ\pp}{\env}
    \\
    & \subseteq &
      (\rc_0\rcimp\red{\typ\nn}{\env})\rcimp\red{\typ\pp}{\env}
      & \text{by~\rlem{rcimp_order_reversing}}
    \\
    & = &
      f(\rc_0)
    \\
    & = &
      \rc_0
    \end{array}
  \]
\end{enumerate}
\end{proof}

\begin{lemma}[Reducible terms of opposite strong types are orthogonal]
\llem{red_opposite_orthogonal}
\llem{appendix:red_opposite_orthogonal}
$\rcperp{\red{\typ\pp}{\env}}{\red{\typ\nn}{\env}}$.
\end{lemma}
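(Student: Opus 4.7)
The plan is to proceed by induction on the size of the pure type $\typ$. The base case $\typ = \btyp$ is immediate: $\red{\btyp\pp}{\env} = \env(\btyp\pp)$ and $\red{\btyp\nn}{\env} = \env(\btyp\nn)$, and orthogonality $(\env(\btyp\pp), \env(\btyp\nn)) \in \RCPerp$ is built into the definition of an environment. For the inductive step, fix $\atm_1 \in \red{\typ\pp}{\env}$ and $\atm_2 \in \red{\typ\nn}{\env}$; both lie in $\SNTerms$. I would show $\strongabs{}{\atm_1}{\atm_2} \in \SNTerms$ by a secondary well-founded induction on the sum of the lengths of the longest reductions out of $\atm_1$ and $\atm_2$. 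Any one-step reduct is either an internal step $\strongabs{}{\atm'_1}{\atm_2}$ or $\strongabs{}{\atm_1}{\atm'_2}$, handled by the secondary IH since $\red{\typ\pp}{\env}$ and $\red{\typ\nn}{\env}$ are closed under reduction; or it is a root $\strongabs$ redex, which we analyse case by case on the head connective of $\typ$.

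For $\typ = \typtwo_1 \land \typtwo_2$, both candidates are defined via closure ($\rctimes$ on the positive side, $\rcplus$ on the negative), which constrains canonical reducts of $\atm_1$ to pairs $\pair{\atm_{11}}{\atm_{12}}$ with $\atm_{1j} \in \red{\typtwo_j\PP}{\env}$ and those of $\atm_2$ to injections $\ini{\atmtwo}$ with $\atmtwo \in \red{\typtwo_i\NN}{\env}$. The reduct $\abs{}{\atm_{1i}}{\atmtwo}$ unfolds to $\strongabs{}{(\ap{\atm_{1i}}{\atmtwo})}{(\ap{\atmtwo}{\atm_{1i}})}$; by \rlem{reducible_terms_of_classical_type}, its components land in $\red{\typtwo_i\pp}{\env}$ and $\red{\typtwo_i\nn}{\env}$, so the primary IH on the strictly smaller type $\typtwo_i$ yields SN. The cases of $\lor$ and $\neg$ are symmetric. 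For $\imp$ and $\coimp$ only one of the two candidates is a closure (one side's shape is free), but a root $\strongabs$ redex can fire only when the unconstrained side also reduces to the matching canonical shape $\lam{x}{\atm''}$ or $\copair{\atmthree}{\atmfour}$; then the defining property of $\rcimp$ together with closure of the codomain candidate under reduction places $\atm''\sub{x}{\atmthree}$ in the expected weak candidate, after which \rlem{reducible_terms_of_classical_type} and the primary IH on the smaller body type conclude.

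The main obstacle is the case of second-order quantifiers, where one of the two candidates is an indexed product and imposes no shape constraint on its members. Take $\typ = \all{\btyp}{\typtwo}$: the positive side $\rcall{(\rc\pp,\rc\nn)\in\RCPerp}{\red{\typtwo\PP}{\env\esub{\btyp}{\rc\pp,\rc\nn}}}$ places no shape on $\atm_1$, but the negative side $\rcex{(\rc\pp,\rc\nn)\in\RCPerp}{\red{\typtwo\NN}{\env\esub{\btyp}{\rc\pp,\rc\nn}}}$ is a closure, so canonical reducts of $\atm_2$ are of the form $\patu{\atmtwo'}$ with $\atmtwo' \in \red{\typtwo\NN}{\env\esub{\btyp}{\rc\pp,\rc\nn}}$ for some witness $(\rc\pp,\rc\nn) \in \RCPerp$. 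A root redex fires only when $\atm_1$ reduces to a matching $\lamtu{\atm''}$; at the very same witness pair, the reduction $\apptu{\atm_1} \rtou \atm''$ combined with closure under reduction places $\atm'' \in \red{\typtwo\PP}{\env\esub{\btyp}{\rc\pp,\rc\nn}}$. The reduct $\abs{}{\atm''}{\atmtwo'} = \strongabs{}{(\ap{\atm''}{\atmtwo'})}{(\ap{\atmtwo'}{\atm''})}$ then has components in the strong-positive and strong-negative candidates at the extended environment by \rlem{reducible_terms_of_classical_type}, and the primary IH on the strictly smaller body $\typtwo$, applied at $\env\esub{\btyp}{\rc\pp,\rc\nn}$, yields SN. The crucial point is that the primary induction descends only to the body $\typtwo$ and never to substituted instances, preserving well-foundedness of the size measure; the $\ex{\btyp}{\typtwo}$ case is symmetric.
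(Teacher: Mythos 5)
Your proposal is correct and follows essentially the same route as the paper's proof: an outer induction on the pure type (with the statement implicitly quantified over all environments, so the quantifier case can invoke the IH on the body at the extended environment $\env\esub{\btyp}{\rc\pp,\rc\nn}$), an inner induction on the combined reduction length of the two terms, closure-defined candidates pinning down the canonical shapes at a root redex, and \rlem{reducible_terms_of_classical_type} to push the components of the unfolded $\abs{}{}{}$ into the strong candidates where the outer IH applies. No gaps.
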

\begin{proof}
We proceed by induction on $\typ$:
\begin{enumerate}
\item {\bf Type variable, $\typ = \btyp$.}
  Then, indeed, we have that $\rcperp{\env(\btyp\pp)}{\env(\btyp\nn)}$
  because, by definition,
  an environment $\env$ maps each pair of type variables
  $\btyp\pp,\btyp\nn$ to orthogonal reducibility candidates.
\item {\bf Conjunction, $\typ = \typ_1\land\typ_2$.}
  Let $\atm \in \red{(\typ_1\land\typ_2)\pp}{\env}$
  and $\atmtwo \in \red{(\typ_1\land\typ_2)\nn}{\env}$.
  Since these sets are reducibility candidates,
  we have that $\atm,\atmtwo \in \SNTerms$.
  To show that $(\strongabs{}{\atm}{\atmtwo}) \in \SNTerms$,
  we proceed by induction on $\snsize{\atm} + \snsize{\atmtwo}$.
  It suffices to show that all the one-step
  reducts of $\strongabs{}{\atm}{\atmtwo}$
  are strongly normalizing, \ie that if
  $(\strongabs{}{\atm}{\atmtwo}) \tou \atmthree$
  then $\atmthree \in \SNTerms$.
  There are three subcases for the step:
  \begin{enumerate}
  \item
    {\bf Step internal to $\atm$.}
    That is,
    $(\strongabs{}{\atm}{\atmtwo}) \tou (\strongabs{}{\atm'}{\atmtwo})$
    with $\atm \tou \atm'$.
    Since reducibility candidates are closed by reduction,
    we have that
    $\atm' \in \red{(\typ_1\land\typ_2)\pp}{\env}$
    and moreover $\snsize{\atm} > \snsize{\atm'}$.
    By the inner \ih, $(\strongabs{}{\atm'}{\atmtwo}) \in \SNTerms$.
  \item
    {\bf Step internal to $\atmtwo$.}
    That is,
    $(\strongabs{}{\atm}{\atmtwo}) \tou (\strongabs{}{\atm'}{\atmtwo})$
    with $\atmtwo \tou \atmtwo'$.
    Since reducibility candidates are closed by reduction,
    we have that
    $\atmtwo' \in \red{(\typ_1\land\typ_2)\nn}{\env}$
    and moreover $\snsize{\atmtwo} > \snsize{\atmtwo'}$.
    By the inner \ih, $(\strongabs{}{\atm}{\atmtwo'}) \in \SNTerms$.
  \item
    {\bf Step at the root.}
    If there is a step at the root, then,
    by the forms of the left-hand sides of rewriting rules
    involving $\strongabs{}{}{}$,
    we know that $\atm$ and $\atmtwo$ must be canonical terms.
    Recall that
    $\red{(\typ_1\land\typ_2)\pp}{\env}
     = \red{\typ_1\PP}{\env}\rctimes\red{\typ_2\PP}{\env}$
    is defined as the closure of
    $X = \set{\pair{\atm_1}{\atm_2} \ST
           \atm_1 \in \red{\typ_1\PP}{\env},
           \atm_2 \in \red{\typ_2\PP}{\env}}$.
    Also, recall that any canonical term in $\rclosure{X}$
    must be an element of $X$.
    In particular, $\atm \in X$,
    so it must be of the form $\atm = \pair{\atm_1}{\atm_2}$ 
    for some $\atm_1 \in \red{\typ_1\PP}{\env}$
    and some $\atm_2 \in \red{\typ_2\PP}{\env}$.

    Similarly, recall that
    $\red{(\typ_1\land\typ_2)\nn}{\env}
     = \red{\typ_1\NN}{\env}\rcplus\red{\typ_2\NN}{\env}$
    is defined as the closure of
    $Y = \set{\ini{\atmtwo'} \ST
           i \in \set{1,2},
           \atmtwo' \in \red{\typ_i\NN}{\env}}$.
    Also, recall that any canonical term in $\rclosure{Y}$
    must be an element of $Y$.
    In particular, $\atmtwo \in Y$,
    so it must be of the form $\atmtwo = \ini{\atmtwo'}$ 
    for some $i \in \set{1,2}$
    and some $\atmtwo' \in \red{\typ_i\NN}{\env}$.

    Then the step at the root is of the form
    $(\strongabs{}{\pair{\atm_1}{\atm_2}}{\ini{\atmtwo'}}) \tou
     (\strongabs{}{(\ap{\atm_i}{\atmtwo'})}{(\ap{\atmtwo'}{\atm_i})})$.
    By \rlem{reducible_terms_of_classical_type},
    note that
    $\atm_i \in \red{\typ_i\PP}{\env}
              = (\red{\typ_i\NN}{\env}\rcimp\red{\typ_i\pp}{\env})$
    so $\ap{\atm_i}{\atmtwo'} \in \red{\typ_i\pp}{\env}$.
    Similarly,
    by \rlem{reducible_terms_of_classical_type},
    note that
    $\atmtwo' \in \red{\typ_i\NN}{\env}
                = (\red{\typ_i\PP}{\env}\rcimp\red{\typ_i\nn}{\env})$
    so $\ap{\atmtwo'}{\atm_i} \in \red{\typ_i\nn}{\env}$.
    Finally, note that by the outer \ih we know that
    $\rcperp{\red{\typ_i\pp}{\env}}{\red{\typ_i\nn}{\env}}$.
    Therefore we have that
    $(\strongabs{}{(\ap{\atm_i}{\atmtwo'})}{(\ap{\atmtwo'}{\atm_i})})
     \in \SNTerms$,
    as required.
  \end{enumerate}
\item {\bf Disjunction, $\typ = \typ_1\lor\typ_2$.}
  Dual to the previous case.
\item {\bf Implication, $\typ = \typ_1\imp\typ_2$.}
  Let $\atm \in \red{(\typ_1\imp\typ_2)\pp}{\env}$
  and $\atmtwo \in \red{(\typ_1\imp\typ_2)\nn}{\env}$.
  As in the case of conjunction,
  note that $\atm,\atmtwo \in \SNTerms$
  and proceed by induction on $\snsize{\atm}+\snsize{\atmtwo}$
  to show that for each step $(\strongabs{}{\atm}{\atmtwo}) \tou \atmthree$
  we have $\atmthree \in \SNTerms$.
  The interesting case is when there is a step at the root.

  If there is a step at the root, then, by the forms of
  the left-hand sides of rewriting rules involving $\strongabs{}{}{}$,
  we know that $\atm$ and $\atmtwo$ must be canonical terms.
  Recall that $\red{(\typ_1\imp\typ_2)\nn}{\env}
              = \red{\typ_1\PP}{\env}\rccoimp\red{\typ_2\NN}{\env}$
  is defined as the closure of
  $X = \set{\copair{\atmtwo_1}{\atmtwo_2} \ST
         \atmtwo_1 \in \red{\typ_1\PP}{\env},
         \atmtwo_2 \in \red{\typ_2\NN}{\env}
       }$.
  Also, recall that any canonical term in $\rclosure{X}$
  must be an element of $X$.
  In particular, $\atmtwo \in X$,
  so it must be of the form $\atmtwo = \copair{\atmtwo_1}{\atmtwo_2}$
  for some $\atmtwo_1 \in \red{\typ_1\PP}{\env}$,
  and some $\atmtwo_2 \in \red{\typ_2\NN}{\env}$.

  Note that there is only one rewriting rule that may apply at the root
  in this case, so the step is of the form
  $(\strongabs{}{(\lam{\var}{\atm'})}{\copair{\atmtwo_1}{\atmtwo_2}})
   \tou
   \strongabs{}{(\ap{\atm\sub{\var}{\atmtwo_1}}{\atmtwo_2})}{(\ap{\atmtwo_2}{\atm\sub{\var}{\atmtwo_1}})}$
  where $\atm = \lam{\var}{\atm'}$.
  Moreover, since
  $\atm = \lam{\var}{\atm'}
        \in \red{(\typ_1\imp\typ_2)\pp}{\env}
          = \red{\typ_1\PP}{\env}\rcimp\red{\typ_2\PP}{\env}$,
  we have that
  $\ap{(\lam{\var}{\atm'})}{\atmtwo_1} \in \red{\typ_2\PP}{\env}$.
  Furthermore, there is a reduction step
  $\ap{(\lam{\var}{\atm'})}{\atmtwo_1} \tou \atm'\sub{\var}{\atmtwo_1}$
  so, since \RCs are closed by reduction,
  we have that $\atm'\sub{\var}{\atmtwo_1} \in \red{\typ_2\PP}{\env}$.

  By \rlem{reducible_terms_of_classical_type},
  note that
  $\atm'\sub{\var}{\atmtwo_1}
   \in \red{\typ_2\PP}{\env}
    = (\red{\typ_2\NN}{\env}\rcimp\red{\typ_2\pp}{\env})$,
  so
  $\ap{\atm'\sub{\var}{\atmtwo_1}}{\atmtwo_2} \in \red{\typ_2\pp}{\env}$.
  Similarly, by \rlem{reducible_terms_of_classical_type},
  note that
  $\atmtwo_2 \in \red{\typ_2\NN}{\env}
               = (\red{\typ_2\PP}{\env}\rcimp\red{\typ_2\nn}{\env})$,
  so
  $\ap{\atmtwo_2}{\atm'\sub{\var}{\atmtwo_1}} \in \red{\typ_2\nn}{\env}$.
  By the outer \ih we know that
  $\rcperp{\red{\typ_2\pp}{\env}}{\red{\typ_2\nn}{\env}}$.
  Therefore
  $(\strongabs{}{
     \ap{\atm'\sub{\var}{\atmtwo_1}}{\atmtwo_2}
   }{
     \ap{\atmtwo_2}{\atm'\sub{\var}{\atmtwo_1}}
   }) \in \SNTerms$,
  as required.
\item {\bf Co-implication, $\typ = \typ_1\coimp\typ_2$.}
  Dual to the previous case.
\item {\bf Negation, $\typ = \neg\typtwo$.}
  Suppose that $\atm \in \red{(\neg\typtwo)\pp}{\env}$
  and $\atmtwo \in \red{(\neg\typtwo)\nn}{\env}$.
  As in the case of conjunction,
  note that $\atm,\atmtwo \in \SNTerms$
  and proceed by induction on $\snsize{\atm}+\snsize{\atmtwo}$
  to show that for each step
  $(\strongabs{}{\atm}{\atmtwo}) \tou \atmthree$
  we have $\atmthree \in \SNTerms$.
  The interesting case is when there is a step at the root.

  If there is a step at the root, then, by the forms of the left-hand
  sides of rewriting rules involving $\strongabs{}{}{}$,
  we know that $\atm$ and $\atmtwo$ must be canonical terms.
  Recall that $\red{(\neg\typtwo)\pp}{\env} = \rcneg\red{\typtwo\NN}{\env}$
  is defined as the closure of
  $X = \set{\negi{\atm'} \ST \atm' \in \red{\typtwo\NN}{\env}}$.
  Also, recall that any canonical term in $\rclosure{X}$ must
  be an element of $X$. In particular, $\atm \in X$,
  so it must be of the form $\atm = \negi{\atm'}$
  for some $\atm' \in \red{\typtwo\NN}{\env}$.

  Similarly, $\atmtwo$ must be of the form $\atmtwo = \negi{\atmtwo'}$
  for some $\atmtwo' \in \red{\typtwo\PP}{\env}$.

  Then the step at the root is of the form
  $(\strongabs{}{\negi{\atm'}}{\negi{\atmtwo'}})
   \tou (\strongabs{}{(\ap{\atmtwo'}{\atm'})}{(\ap{\atm'}{\atmtwo'})})$.
  By \rlem{reducible_terms_of_classical_type},
  note that
  $\atm'
   \in \red{\typtwo\NN}{\env}
     = (\red{\typtwo\PP}{\env}\rcimp\red{\typtwo\nn}{\env})$
  so $\ap{\atm'}{\atmtwo'} \in \red{\typtwo\nn}{\env}$.
  Similarly, by \rlem{reducible_terms_of_classical_type},
  $\atmtwo'
   \in \red{\typtwo\PP}{\env}
     = (\red{\typtwo\NN}{\env}\rcimp\red{\typtwo\pp}{\env})$
  so $\ap{\atmtwo'}{\atm'} \in \red{\typtwo\pp}{\env}$.
  Finally, note that by the outer \ih we know that
  $\rcperp{\red{\typtwo\pp}{\env}}{\red{\typtwo\nn}{\env}}$.
  Therefore 
  $\strongabs{}{(\ap{\atmtwo'}{\atm'})}{(\ap{\atm'}{\atmtwo'})} \in \SNTerms$,
  as required.
\item {\bf Universal quantification, $\typ = \all{\btyp}{\typtwo}$.}
  Suppose that $\atm \in \red{(\all{\btyp}{\typtwo})\pp}{\env}$
  and $\atmtwo \in \red{(\all{\btyp}{\typtwo})\nn}{\env}$.
  As in the case of conjunction, note that $\atm,\atmtwo \in \SNTerms$
  and proceed by induction on $\snsize{\atm}+\snsize{\atmtwo}$
  to show that for each step $(\strongabs{}{\atm}{\atmtwo}) \tou \atmthree$
  we have $\atmthree \in \SNTerms$.
  The interesting case is when there is a step at the root.

  If there is a step at the root, then, by the forms of the left-hand
  sides of rewriting rules involving $\strongabs{}{}{}$,
  we know that $\atm$ and $\atmtwo$ must be canonical terms.
  Recall that
  $\red{(\all{\btyp}{\typtwo})\nn}{\env}
   = \rcex{(\rc\pp,\rc\nn)\in\RCPerp}{
       \red{\typtwo\NN}{\env\esub{\btyp}{\rc\pp,\rc\nn}}
     }$
  is defined as the closure of:\\
  $
    X = \set{\patu{\atmtwo'} \ST
         \exists{(\rc\pp,\rc\nn)\in\RCPerp}.\,%
           \atmtwo'~\in~\red{\typtwo\NN}{\env\esub{\btyp}{\rc\pp,\rc\nn}}
       }
  $
  \\
  Also, recall that any canonical term in $\rclosure{X}$
  must be an element of $X$.
  In particular, $\atmtwo \in X$,
  so there must exist $(\rc_0\pp,\rc_0\nn) \in \RCPerp$
  such that $\atmtwo = \patu{\atmtwo'}$
  for some $\atmtwo' \in \red{\typtwo\NN}{\env\esub{\btyp}{\rc_0\pp,\rc_0\nn}}$.

  Note that there is only one rewriting rule that may apply at the root
  in this case, so the step is of the form
  $(\strongabs{}{(\lamtu{\atm'})}{\patu{\atmtwo'}})
   \tou
   (\strongabs{}{(\ap{\atm'}{\atmtwo'})}{(\ap{\atmtwo'}{\atm'})})$
  where $\atm = \lamtu{\atm'}$.
  Moreover, since
  $\atm = \lamtu{\atm'}
   \in \red{(\all{\btyp}{\typtwo})\pp}{\env}
   = \rcall{(\rc\pp,\rc\nn)\in\RCPerp}{
       \red{\typtwo\PP}{\env\esub{\btyp}{\rc\pp,\rc\nn}}}$,
  we have in particular that 
  $\apptu{(\lamtu{\atm'})}
   \in \red{\typtwo\PP}{\env\esub{\btyp}{\rc_0\pp,\rc_0\nn}}$.
  Furthermore, there is a reduction step 
  $\apptu{(\lamtu{\atm'})} \tou \atm'$ so,
  since \RCs are closed by reduction, we have that
  $\atm' \in \red{\typtwo\PP}{\env\esub{\btyp}{\rc_0\pp,\rc_0\nn}}$.

  By~\rlem{reducible_terms_of_classical_type},
  note that
  $\atm'
   \in \red{\typtwo\PP}{\env\esub{\btyp}{\rc_0\pp,\rc_0\nn}}$
  which means that
  $\atm'
     (\red{\typtwo\NN}{\env\esub{\btyp}{\rc_0\pp,\rc_0\nn}}
     \rcimp
     \red{\typtwo\pp}{\env\esub{\btyp}{\rc_0\pp,\rc_0\nn}})$
  so $\ap{\atm'}{\atmtwo'}
      \in \red{\typtwo\pp}{\env\esub{\btyp}{\rc_0\pp,\rc_0\nn}}$.
  Similarly, by~\rlem{reducible_terms_of_classical_type},
  note that
  $\atmtwo'
   \in \red{\typtwo\NN}{\env\esub{\btyp}{\rc_0\pp,\rc_0\nn}}
     = (\red{\typtwo\PP}{\env\esub{\btyp}{\rc_0\pp,\rc_0\nn}}
       \rcimp
       \red{\typtwo\nn}{\env\esub{\btyp}{\rc_0\pp,\rc_0\nn}})$
  so $\ap{\atmtwo'}{\atm'}
      \in \red{\typtwo\nn}{\env\esub{\btyp}{\rc_0\pp,\rc_0\nn}}$.
  By the outer \ih,
  $\rcperp{
     \red{\typtwo\pp}{\env\esub{\btyp}{\rc_0\pp,\rc_0\nn}}
   }{
     \red{\typtwo\nn}{\env\esub{\btyp}{\rc_0\pp,\rc_0\nn}}
   }$.
  Therefore
  $(\strongabs{}{(\ap{\atm'}{\atmtwo'})}{(\ap{\atmtwo'}{\atm'})})
   \in \SNTerms$,
  as required.
\item {\bf Existential quantification, $\typ = \ex{\btyp}{\typtwo}$.}
  Dual to the previous case.
\end{enumerate}
\end{proof}

\subsection{Adequacy of the Reducibility Model}

\begin{lemma}[Type erasure preserves non-termination]
\llem{erasure_preserves_non_termination}
If $\erase{\tm}$ is strongly normalizing with respect to $\tou$,
then $\tm$ is strongly normalizing with respect to $\toa{}$.
\end{lemma}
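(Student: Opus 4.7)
The plan is to prove the contrapositive: if $\tm$ is not strongly normalizing with respect to $\toa{}$, then $\erase{\tm}$ is not strongly normalizing with respect to $\tou$. The entire argument hinges on a simulation property between the typed and untyped calculi, which I would state and verify as a preliminary lemma: for every step $\tm \toa{} \tmtwo$ in $\lambdaPRK$, one has $\erase{\tm} \tou \erase{\tmtwo}$ in $\lambdaPRKU$. This is already mentioned in the surrounding text, so I would prove it by a routine inspection of the rewriting rules, checking that erasing the signs, type annotations, and decorations from either side of each rule (both the $\beta$-like rules and the $\strongabs{}{}{}$-rules) yields exactly one of the reduction rules of $\tou$. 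The compatibility closure under arbitrary contexts is preserved because $\erase{-}$ commutes with term formation and substitution, i.e.\ $\erase{\tm\sub{\var}{\tmtwo}} = \erase{\tm}\sub{\var}{\erase{\tmtwo}}$ and $\erase{\tm\sub{\btyp}{\typ}} = \erase{\tm}$, which I would record as an auxiliary observation.

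With the simulation lemma in hand, the main statement is essentially immediate. Suppose for contradiction that $\erase{\tm} \in \SNTerms$ but $\tm$ is not strongly normalizing under $\toa{}$. Then there exists an infinite sequence $\tm = \tm_0 \toa{} \tm_1 \toa{} \tm_2 \toa{} \hdots$. Applying the simulation lemma pointwise gives $\erase{\tm_0} \tou \erase{\tm_1} \tou \erase{\tm_2} \tou \hdots$, which is an infinite $\tou$-reduction starting from $\erase{\tm}$, contradicting the assumption that $\erase{\tm}$ is strongly normalizing.

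The only subtlety, and the one place where I would be careful, is the handling of second-order substitution in the typed rules $\ruleAppT$, $\ruleOpen$, $\ruleAbsLamPairT$, and $\ruleAbsPairLamT$: the typed right-hand side contains a type substitution $\tm\sub{\btyp}{\typ}$, but after erasure the type $\typ$ disappears entirely and becomes the placeholder $\dummy$. I would check explicitly that $\erase{\tm\sub{\btyp}{\typ}} = \erase{\tm}$ so that the simulated step matches the corresponding untyped rule (e.g.\ $\apptu{(\lamtu{\atm})} \tou \atm$) exactly. All other cases of the simulation lemma are straightforward, since the typed and untyped rules differ only in the annotations that erasure discards.
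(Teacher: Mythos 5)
Your proposal is correct and follows essentially the same route as the paper: both rest on the one-step simulation $\tm \toa{} \tmtwo \implies \erase{\tm} \tou \erase{\tmtwo}$ and conclude by transporting an infinite $\toa{}$-sequence to an infinite $\tou$-sequence. Your extra care with $\erase{\tm\sub{\var}{\tmtwo}} = \erase{\tm}\sub{\var}{\erase{\tmtwo}}$ and $\erase{\tm\sub{\btyp}{\typ}} = \erase{\tm}$ is exactly the detail the paper leaves implicit under ``straightforward.''
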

\begin{proof}
It is straightforward
to show that if $\tm \toa{} \tmtwo$
then $\erase{\tm} \tou \erase{\tmtwo}$.
Hence an infinite reduction sequence
$\tm_1 \toa{} \tm_2 \toa{} \hdots$
induces an infinite reduction sequence
$\erase{\tm_1} \tou{} \erase{\tm_2} \tou{} \hdots$.
\end{proof}

\begin{lemma}[Adequacy of absurdity]
\llem{adequacy_absurdity}
Let $\rc_1,\rc_2,\rc' \in \RCSet$ be such that $\rcperp{\rc_1}{\rc_2}$.
If $\atm \in \rc_1$ and $\atmtwo \in \rc_2$
then $(\strongabs{}{\atm}{\atmtwo}) \in \rc'$.
\end{lemma}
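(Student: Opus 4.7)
The plan is to exploit completeness of $\rc'$ together with the observation that a term headed by $\strongabssym{}$ can never reduce to a canonical term.

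First, I would note that $\strongabs{}{\atm}{\atmtwo} \in \SNTerms$: this is exactly the definition of orthogonality, since by hypothesis $\rcperp{\rc_1}{\rc_2}$ holds and $\atm \in \rc_1$, $\atmtwo \in \rc_2$.

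Next, I would argue that $\strongabs{}{\atm}{\atmtwo}$ has \emph{no} canonical reducts. To see this, I would inspect the rewriting rules of $\lambdaPRKU$ and observe that every rule whose left-hand side is headed by $\strongabssym{}$ has a right-hand side of the form $\abs{}{p}{q}$, which by definition unfolds to $\strongabs{}{(\ap{p}{q})}{(\ap{q}{p})}$. Hence every reduct of $\strongabs{}{\atm}{\atmtwo}$ is of the form $\strongabs{}{\atm'}{\atmtwo'}$ for some $\atm',\atmtwo'$. Since the set of canonical terms $\CanTerms$ consists exclusively of terms headed by a constructor ($\pair{-}{-}$, $\ini{-}$, $\lam{-}{-}$, $\copair{-}{-}$, $\negi{-}$, $\lamtu{-}$, $\patu{-}$), none of which matches $\strongabs{}{-}{-}$, no reduct of $\strongabs{}{\atm}{\atmtwo}$ lies in $\CanTerms$.

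Finally, since $\rc'$ is a reducibility candidate, it is complete: any strongly normalizing term all of whose canonical reducts lie in $\rc'$ must itself lie in $\rc'$. Since $\strongabs{}{\atm}{\atmtwo}$ has no canonical reducts at all, the universally quantified premise of completeness is vacuously satisfied, and we conclude $\strongabs{}{\atm}{\atmtwo} \in \rc'$. There is no real obstacle here; the only subtle point is the syntactic observation about the shape of reducts of $\strongabssym{}$-headed terms, which is a routine case analysis over the rewriting rules of $\lambdaPRKU$.
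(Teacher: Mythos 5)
Your proof is correct and follows essentially the same route as the paper's: strong normalization comes directly from the orthogonality hypothesis, and membership in $\rc'$ follows from completeness because every reduct of a $\strongabssym{}$-headed term is again $\strongabssym{}$-headed (the right-hand sides of the relevant rules unfold to $\strongabs{}{(\ap{p}{q})}{(\ap{q}{p})}$), so the term has no canonical reducts and the completeness premise holds vacuously.
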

\begin{proof}
First, since $\rcperp{\rc_1}{\rc_2}$
we have that $(\strongabs{}{\atm}{\atmtwo}) \in \SNTerms$.
Second, to see that $(\strongabs{}{\atm}{\atmtwo}) \in \rc'$,
since $\rc'$ is complete,
it suffices to show that all the canonical reducts
of $\strongabs{}{\atm}{\atmtwo}$ are in $\rc'$.
Indeed, this holds vacuously, because
a term of the form $\strongabs{}{\atm}{\atmtwo}$
has no canonical reducts.
Note that
its reducts are always of the form $\strongabs{}{\atm'}{\atmtwo'}$,
as can be checked by inspection of all the rewriting rules
defining $\tou$.
\end{proof}

\begin{lemma}[Adequacy of pairing]
\llem{adequacy_pairing}
Let $\rc_1,\rc_2 \in \RCSet$. Then:
\begin{enumerate}
\item
  If $\atm_1 \in \rc_1$ and $\atm_2 \in \rc_2$,
  then $\pair{\atm_1}{\atm_2} \in \rc_1\rctimes\rc_2$.
\item
  If $\atm_1 \in \rc_1$ and $\atm_2 \in \rc_2$,
  then $\copair{\atm_1}{\atm_2} \in \rc_1\rccoimp\rc_2$.
\end{enumerate}
\end{lemma}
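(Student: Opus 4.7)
The plan is to unfold the definitions directly. Recall that $\rc_1\rctimes\rc_2 = \rclosure{X}$ where $X = \set{\pair{\atmtwo_1}{\atmtwo_2} \ST \atmtwo_1 \in \rc_1, \atmtwo_2 \in \rc_2}$, so by definition of closure I need to verify two things about $\pair{\atm_1}{\atm_2}$: that it lies in $\SNTerms$, and that every canonical reduct of it belongs to $X$.

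For strong normalization, I would observe that $\pair{\cdot}{\cdot}$ is a constructor and does not appear at the head of any left-hand side in the definition of $\tou$ (the rules involving pairs, namely $\ruleProj$ and $\ruleAbsPairInj$, require a projection or an $\strongabs{}{}{}$ surrounding the pair). Hence every one-step reduct of $\pair{\atm_1}{\atm_2}$ is either $\pair{\atm'_1}{\atm_2}$ with $\atm_1 \tou \atm'_1$ or $\pair{\atm_1}{\atm'_2}$ with $\atm_2 \tou \atm'_2$. A standard argument by induction on $\snsize{\atm_1} + \snsize{\atm_2}$ (both well-defined since reducibility candidates consist of strongly normalizing terms) then gives $\pair{\atm_1}{\atm_2} \in \SNTerms$.

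The same observation about the shape of reducts shows that any reduction sequence $\pair{\atm_1}{\atm_2} \rtou \atm$ produces a term of the form $\atm = \pair{\atm'_1}{\atm'_2}$ with $\atm_1 \rtou \atm'_1$ and $\atm_2 \rtou \atm'_2$. In particular, every canonical reduct has this shape (and indeed a pair is always canonical). Since $\rc_1$ and $\rc_2$ are closed by reduction, $\atm'_1 \in \rc_1$ and $\atm'_2 \in \rc_2$, so $\pair{\atm'_1}{\atm'_2} \in X$, as needed. This yields $\pair{\atm_1}{\atm_2} \in \rclosure{X} = \rc_1\rctimes\rc_2$.

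The second item is entirely analogous: $\copair{\cdot}{\cdot}$ is likewise a constructor (the only rule putting a copair at the head requires a surrounding $\colam{}{}{}{}$ or $\strongabs{}{}{}$), so the same reasoning about SN and about the shape of reducts applies verbatim, with $X$ replaced by $\set{\copair{\atmtwo_1}{\atmtwo_2} \ST \atmtwo_1 \in \rc_1, \atmtwo_2 \in \rc_2}$. There is no real obstacle here; the argument is essentially a sanity check that the closure operation behaves as intended on its generators.
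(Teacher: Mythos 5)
Your proof is correct and follows essentially the same route as the paper's: both arguments reduce to the observation that no rewrite rule has a bare pair (or copair) at the root, so every reduct of $\pair{\atm_1}{\atm_2}$ is again a pair of reducts of the components, whence strong normalization is inherited from $\atm_1,\atm_2$ and every canonical reduct lies in the generating set by closure of $\rc_1,\rc_2$ under reduction. The only difference is cosmetic: you spell out the SN step by induction on $\snsize{\atm_1}+\snsize{\atm_2}$ where the paper simply declares it immediate.
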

\begin{proof}
We only prove the first item; the second one is similar.
First, note that $\atm_1$ and $\atm_2$ are both strongly normalizing
since $\atm_1 \in \rc_1$ and $\atm_2 \in \rc_2$.
From this it is immediate to conclude
that $\pair{\atm_1}{\atm_2} \in \SNTerms$.
Second, to see that $\pair{\atm_1}{\atm_2} \in \rc_1\rctimes\rc_2$,
by definition of the product $\rc_1\rctimes\rc_2$,
it suffices to show that all canonical reducts of $\pair{\atm_1}{\atm_2}$
are of the form $\pair{\atm'_1}{\atm'_2}$
with $\atm'_1 \in \rc_1$ and $\atm'_2 \in \rc_2$.
Indeed, let $\pair{\atm_1}{\atm_2} \rtou \atmthree \in \CanTerms$,
and note that the reduction must be of the form
$\pair{\atm_1}{\atm_2} \rtou \pair{\atm'_1}{\atm'_2} = \atmthree$
with $\atm_1 \rtou \atm'_1$ and $\atm_2 \rtou \atm'_2$.
Since $\rc_1$ and $\rc_2$ are closed by reduction, we have that
$\atm'_1 \in \rc_1$ and $\atm'_2 \in \rc_2$, as required.
\end{proof}

\begin{lemma}[Adequacy of projection]
\llem{adequacy_projection}
Let $\rc_1,\rc_2 \in \RCSet$.
If $\atm \in \rc_1\rctimes\rc_2$ then $\proji{\atm} \in \rc_i$.
\end{lemma}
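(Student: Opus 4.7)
The plan is to verify the two defining properties of a reducibility candidate membership for $\proji{\atm}$, exploiting completeness of $\rc_i$ so that it suffices to (i) show $\proji{\atm} \in \SNTerms$ and (ii) show every canonical reduct of $\proji{\atm}$ lies in $\rc_i$.

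First I would unpack the hypothesis. By definition, $\rc_1\rctimes\rc_2 = \rclosure{\{\pair{\atmtwo_1}{\atmtwo_2} \ST \atmtwo_1\in\rc_1,\atmtwo_2\in\rc_2\}}$, so $\atm \in \SNTerms$ and every canonical reduct of $\atm$ is a pair $\pair{\atmtwo_1}{\atmtwo_2}$ with $\atmtwo_i \in \rc_i$. A quick inspection of the rewriting rules shows that any reduction from $\proji{\atm}$ has the form $\proji{\atm} \rtou \proji{\atm'} \rtou \cdots$, possibly followed, at the moment when $\atm$ reaches a pair, by firing a $\ruleProj$ step at the root and then continuing inside $\atm'_i$. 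Strong normalization of $\proji{\atm}$ therefore follows from strong normalization of $\atm$ together with strong normalization of every $\atm'_i$ obtained this way, since each such $\atm'_i$ belongs to $\rc_i \subseteq \SNTerms$.

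Next, for (ii), let $\atmtwo \in \CanTerms$ be a canonical reduct of $\proji{\atm}$. Because projection is not itself a canonical term constructor, the reduction $\proji{\atm} \rtou \atmtwo$ must at some point fire a root $\ruleProj$-step; that is, there exists a decomposition $\proji{\atm} \rtou \proji{\pair{\atm'_1}{\atm'_2}} \tou \atm'_i \rtou \atmtwo$ with $\atm \rtou \pair{\atm'_1}{\atm'_2}$ a canonical reduct of $\atm$. By the closure hypothesis on $\rc_1\rctimes\rc_2$, we have $\atm'_i \in \rc_i$, and since $\rc_i$ is closed by reduction we conclude $\atmtwo \in \rc_i$.

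Finally, invoking completeness of $\rc_i$ on $\proji{\atm} \in \SNTerms$ with the property established in (ii), we obtain $\proji{\atm} \in \rc_i$, as required. The only mildly delicate step is the case analysis for strong normalization in (i) — checking that no infinite reduction from $\proji{\atm}$ exists — but this reduces straightforwardly to the SN of $\atm$ and of its canonical reducts' $i$-th components; there is no genuine obstacle since $\ruleProj$ can only fire once per projection and the resulting subterm is already in $\rc_i$.
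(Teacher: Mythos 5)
Your proposal is correct and follows essentially the same route as the paper: establish $\proji{\atm}\in\SNTerms$ by observing that any reduction either stays inside $\atm$ or fires the root $\ruleProj$ step once (landing in some $\atm'_i\in\rc_i\subseteq\SNTerms$), then invoke completeness of $\rc_i$ after checking that every canonical reduct of $\proji{\atm}$ factors through such a root step and hence lies in $\rc_i$ by closure under reduction. The paper merely formalizes your step (i) as an induction on $\snsize{\atm}$ over one-step reducts, which is the same argument in a different packaging.
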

\begin{proof}
First we claim that $\proji{\atm} \in \SNTerms$.
Note that $\atm \in \SNTerms$ since $\atm \in \rc_1\rctimes\rc_2$.
By induction on $\snsize{\atm}$,
we argue that if $\proji{\atm} \tou \atmtwo$ then $\atmtwo \in \SNTerms$.
We consider two cases, depending on whether the reduction step is
internal to $\atm$ or at the root:
\begin{enumerate}
\item
  If the reduction step is internal to $\atm$,
  that is $\proji{\atm} \tou \proji{\atm'}$ with $\atm \tou \atm'$,
  then $\snsize{\atm} > \snsize{\atm'}$.
  Note that $\atm' \in \rc_1\rctimes\rc_2$
  because $\rc_1\rctimes\rc_2$ is closed by reduction.
  Hence, by \ih, we have that $\proji{\atm'} \in \SNTerms$.
\item
  If the reduction step is at the root,
  then the step must be of the form
  $\proji{\pair{\atm_1}{\atm_2}} \tou \atm_i$
  where $\atm = \pair{\atm_1}{\atm_2}$.
  Since $\atm = \pair{\atm_1}{\atm_2}$ is canonical
  and $\pair{\atm_1}{\atm_2} \in \rc_1\rctimes\rc_2$,
  by definition of the product $\rc_1\rctimes\rc_2$,
  we have that $\atm_i \in \rc_i$.
  Hence $\atm_i \in \SNTerms$.
\end{enumerate}
Second, to see that $\proji{\atm} \in \rc_i$,
since $\rc_i$ is complete, it suffices to show that all canonical
reducts of $\proji{\atm}$ are in $\rc_i$.
That is, let $\proji{\atm} \rtou \atmthree \in \CanTerms$
and let us show that $\atmthree \in \rc_i$.
Note that the reduction sequence $\proji{\atm} \rtou \atmthree$
must be of the form
$\proji{\atm}
 \rtou \proji{\pair{\atm_1}{\atm_2}}
 \tou \atm_i
 \rtou \atmthree$
with $\atm \rtou \pair{\atm_1}{\atm_2}$.
By definition of the product $\rc_1\rctimes\rc_2$,
this means that $\atm_i \in \rc_i$,
and since $\rc_i$ is closed by reduction
we conclude that $\atmthree \in \rc_i$, as required.
\end{proof}

\begin{lemma}[Adequacy of injection]
\llem{adequacy_injection}
Let $\rc_1, \rc_2 \in \RCSet$, and let $i \in \set{1,2}$.
If $\atm \in \rc_i$ then $\ini{\atm} \in \rc_1\rcplus\rc_2$.
\end{lemma}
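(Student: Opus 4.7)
The proof plan mirrors the structure of \rlem{adequacy_pairing}, exploiting the fact that $\rc_1 \rcplus \rc_2$ is defined as the closure $\rclosure{X}$ of the set $X = \set{\inisym[j](\atmtwo) \ST j \in \set{1,2}, \atmtwo \in \rc_j}$. By unfolding the definition of $\rclosure{X}$, it suffices to establish two things: that $\ini{\atm} \in \SNTerms$, and that every canonical reduct of $\ini{\atm}$ lies in $X$.

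First, I would argue strong normalization of $\ini{\atm}$. Since $\atm \in \rc_i \subseteq \SNTerms$, we have $\atm \in \SNTerms$. Any reduction step $\ini{\atm} \tou \atmthree$ must be internal to $\atm$, \ie of the form $\ini{\atm} \tou \ini{\atm'}$ with $\atm \tou \atm'$, because there is no rewriting rule whose left-hand side has $\ini{\cdot}$ at the root. A straightforward induction on $\snsize{\atm}$ (as in \rlem{adequacy_projection}), together with the fact that $\rc_i$ is closed by reduction, then gives $\ini{\atm} \in \SNTerms$.

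Second, I would verify the membership condition for canonical reducts. Suppose $\ini{\atm} \rtou \atmthree$ with $\atmthree \in \CanTerms$. By the same observation about the shape of reduction steps, the reduction must take the form $\ini{\atm} \rtou \ini{\atm'} = \atmthree$ with $\atm \rtou \atm'$. Since $\rc_i$ is closed by reduction, $\atm' \in \rc_i$, hence $\ini{\atm'} \in X$ as required.

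There is no real obstacle here; the lemma is entirely routine and the proof is dual (and structurally simpler) to that of \rlem{adequacy_pairing}. The only minor point worth spelling out explicitly is the observation that $\ini{\cdot}$ cannot be the root of any redex in $\tou$, which ensures that reduction from $\ini{\atm}$ is confined to the subterm $\atm$ and that canonical reducts retain the injection at the head.
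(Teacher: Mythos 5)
Your proof is correct and follows essentially the same route as the paper's: strong normalization of $\ini{\atm}$ follows from $\atm\in\rc_i\subseteq\SNTerms$ since reduction is confined to the subterm $\atm$, and every canonical reduct has the form $\ini{\atm'}$ with $\atm'\in\rc_i$ by closure under reduction. The paper merely states the strong-normalization step as immediate, whereas you spell out the induction; the substance is identical.
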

\begin{proof}
First note that $\atm \in \SNTerms$ since $\atm \in \rc_i$.
From this it is immediate to conclude that $\ini{\atm} \in \SNTerms$.
Second, to see that $\ini{\atm} \in \rc_1\rcplus\rc_2$,
by definition of the sum,
it suffices to show that all canonical reducts of $\ini{\atm}$
are of the form $\ini{\atm'}$ with $\atm' \in \rc_i$.
Indeed, let $\ini{\atm} \rtou \atmthree \in \CanTerms$,
and note that the reduction must be of the form
$\ini{\atm} \rtou \ini{\atm'} = \atmthree$
with $\atm \rtou \atm'$.
Since $\rc_i$ is closed by reduction, we have that $\atm' \in \rc_i$,
as required.
\end{proof}

\begin{lemma}[Adequacy of case]
\llem{adequacy_case}
Let $\rc_1,\rc_2,\rc' \in \RCSet$.
Let $\atm \in \rc_1\rcplus\rc_2$,
and let $\atmtwo_1,\atmtwo_2$
be terms such that
  for all $\atm' \in \rc_1$ we have $\atmtwo_1\sub{\var}{\atm'} \in \rc'$,
and
  for all $\atm' \in \rc_2$ we have $\atmtwo_2\sub{\var}{\atm'} \in \rc'$.
Then $\case{\atm}{\var}{\atmtwo_1}{\var}{\atmtwo_2} \in \rc'$.
\end{lemma}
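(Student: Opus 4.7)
The plan is to follow the same template as \rlem{adequacy_projection} and \rlem{adequacy_injection}: invoke completeness of $\rc'$ by separately establishing that (i) $\case{\atm}{\var}{\atmtwo_1}{\var}{\atmtwo_2}$ is strongly normalizing, and (ii) every canonical reduct lies in $\rc'$. The proof of (ii) is essentially a bookkeeping exercise, while (i) is the main technical point and will require a strong induction on a suitable measure.

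First I would collect the basic facts. From $\atm \in \rc_1 \rcplus \rc_2 \subseteq \SNTerms$ we get $\atm \in \SNTerms$. To see that $\atmtwo_1,\atmtwo_2 \in \SNTerms$, I use the standard observation (already noted in the paper) that every variable belongs to every reducibility candidate; in particular $\var \in \rc_i$, so by hypothesis $\atmtwo_i = \atmtwo_i\sub{\var}{\var} \in \rc'$, hence $\atmtwo_i \in \SNTerms$. Next, for (ii), I observe that any reduction $\case{\atm}{\var}{\atmtwo_1}{\var}{\atmtwo_2} \rtou \atmthree$ with $\atmthree$ canonical must pass through a root step, so it factors as $\case{\atm}{\var}{\atmtwo_1}{\var}{\atmtwo_2} \rtou \case{\ini{\atm''}}{\var}{\atmtwo'_1}{\var}{\atmtwo'_2} \tou \atmtwo'_i\sub{\var}{\atm''} \rtou \atmthree$, where $\atm \rtou \ini{\atm''}$ and $\atmtwo_j \rtou \atmtwo'_j$. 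Since $\ini{\atm''}$ is a canonical reduct of $\atm \in \rc_1 \rcplus \rc_2$, the definition of $\rcplus$ via closure forces $\atm'' \in \rc_i$. By hypothesis $\atmtwo_i\sub{\var}{\atm''} \in \rc'$, and this term reduces to $\atmthree$ via $\atmtwo'_i\sub{\var}{\atm''}$, so $\rc'$-closure under reduction gives $\atmthree \in \rc'$.

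The main obstacle is (i), the strong normalization claim, which I would prove by induction on $\snsize{\atm} + \snsize{\atmtwo_1} + \snsize{\atmtwo_2}$, showing that every one-step reduct of $\case{\atm}{\var}{\atmtwo_1}{\var}{\atmtwo_2}$ is in $\SNTerms$. A step internal to $\atm$ yields $\case{\atm'}{\var}{\atmtwo_1}{\var}{\atmtwo_2}$ with $\atm' \in \rc_1 \rcplus \rc_2$ (closure under reduction) and strictly smaller measure, so the inductive hypothesis applies. A step internal to, say, $\atmtwo_1$ yields $\case{\atm}{\var}{\atmtwo'_1}{\var}{\atmtwo_2}$; here the delicate point is verifying that the substitution hypothesis is preserved, namely that $\atmtwo'_1\sub{\var}{\atm'} \in \rc'$ for every $\atm' \in \rc_1$. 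This follows because $\atmtwo_1 \tou \atmtwo'_1$ induces $\atmtwo_1\sub{\var}{\atm'} \tou \atmtwo'_1\sub{\var}{\atm'}$ and $\rc'$ is closed under reduction. Finally, a root step forces $\atm$ to be canonical of the form $\ini{\atm'}$ with $\atm' \in \rc_i$, and the reduct $\atmtwo_i\sub{\var}{\atm'}$ lies in $\rc' \subseteq \SNTerms$ directly by hypothesis. Combining (i) and (ii) with completeness of $\rc'$ yields the conclusion.
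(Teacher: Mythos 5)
Your proposal is correct and follows essentially the same route as the paper's proof: the same two-part structure (strong normalization by induction on $\snsize{\atm}+\snsize{\atmtwo_1}+\snsize{\atmtwo_2}$ with the same four-way case split, then membership in $\rc'$ via completeness and the factorization of any reduction to a canonical term through a root step), including the same observations that $\var\in\rc_i$ yields $\atmtwo_i\in\SNTerms$ and that the substitution hypothesis is preserved under internal reduction of the branches. No gaps.
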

\begin{proof}
First we claim that
$\case{\atm}{\var}{\atmtwo_1}{\var}{\atmtwo_2} \in \SNTerms$.
Note that $\atm \in \SNTerms$ since $\atm \in \rc_1\rcplus\rc_2$.
Moreover, $\atmtwo_1 \in \SNTerms$
because $\var \in \rc_1$ so $\atmtwo_1 = \atmtwo_1\sub{\var}{\var} \in \rc'$.
Similarly, $\atmtwo_2 \in \SNTerms$.
By induction on $\snsize{\atm} + \snsize{\atmtwo_1} + \snsize{\atmtwo_2}$,
we argue that if
$\case{\atm}{\var}{\atmtwo_1}{\var}{\atmtwo_2} \tou \atmthree$
then $\atmthree \in \SNTerms$.
We consider four cases, depending on whether the reduction step
is internal to $\atm$,
   internal to $\atmtwo_1$,
   internal to $\atmtwo_2$,
or at the root:
\begin{enumerate}
\item
  If the reduction step is internal to $\atm$,
  that is, the step is of the form
  $\case{\atm}{\var}{\atmtwo_1}{\var}{\atmtwo_2}
   \tou \case{\atm'}{\var}{\atmtwo_1}{\var}{\atmtwo_2}$
  with $\atm \tou \atm'$,
  then
  $\snsize{\atm} + \snsize{\atmtwo_1} + \snsize{\atmtwo_2} >
   \snsize{\atm'} + \snsize{\atmtwo_1} + \snsize{\atmtwo_2}$.
  Note that $\atm' \in \rc_1\rcplus\rc_2$ still holds
  since $\rc_1\rcplus\rc_2$ is closed by reduction.
  Hence, by \ih, we have that 
  $\case{\atm'}{\var}{\atmtwo_1}{\var}{\atmtwo_2} \in \SNTerms$.
\item
  If the reduction step is internal to $\atmtwo_1$,
  that is, the step is of the form
  $\case{\atm}{\var}{\atmtwo_1}{\var}{\atmtwo_2}
   \tou \case{\atm}{\var}{\atmtwo'_1}{\var}{\atmtwo_2}$
  with $\atmtwo_1 \tou \atmtwo'_1$,
  then
  $\snsize{\atm} + \snsize{\atmtwo_1} + \snsize{\atmtwo_2} >
   \snsize{\atm} + \snsize{\atmtwo'_1} + \snsize{\atmtwo_2}$.
  Note that $\atmtwo'_1$ still has the property that
  for all $\atm' \in \rc_1$ we have $\atmtwo'_1\sub{\var}{\atm'} \in \rc'$,
  because
  $\atmtwo_1\sub{\var}{\atm'} \tou \atmtwo'_1\sub{\var}{\atm'}$
  and $\rc'$ is closed by reduction,
  and furthermore $\atmtwo_1\sub{\var}{\atm'} \in \rc'$ holds by hypothesis.
  Hence, by \ih, we have that
  $\case{\atm}{\var}{\atmtwo'_1}{\var}{\atmtwo_2} \in \SNTerms$.
\item
  If the reduction step is internal to $\atmtwo_2$,
  the proof is similar to the previous case.
\item
  If the reduction step is at the root, then the step must be of the form
  $\case{\ini{\atm'}}{\var}{\atmtwo_1}{\var}{\atmtwo_2} \tou
   \atmtwo_i\sub{\var}{\atm'}$
  where $\atm = \ini{\atm'}$ for some $i \in \set{1,2}$.
  Since $\atm = \ini{\atm'}$ is canonical and
  $\ini{\atm'} \in \rc_1\rcplus\rc_2$, by definition of the sum,
  we have that $\atm' \in \rc_i$.
  Thus, by hypothesis, $\atmtwo_i\sub{\var}{\atm'} \in \rc'$,
  which implies that $\atmtwo_i\sub{\var}{\atm'} \in \SNTerms$.
\end{enumerate}
Second, to see that
$\case{\atm}{\var}{\atmtwo_1}{\var}{\atmtwo_2} \in \rc'$,
since $\rc'$ is complete,
it suffices to show that all canonical reducts of
$\case{\atm}{\var}{\atmtwo_1}{\var}{\atmtwo_2}$ are in $\rc'$.
That is, let
$\case{\atm}{\var}{\atmtwo_1}{\var}{\atmtwo_2} \rtou \atmthree \in \CanTerms$
and let us show that $\atmthree \in \rc'$.
Note that the reduction sequence
$\case{\atm}{\var}{\atmtwo_1}{\var}{\atmtwo_2} \rtou \atmthree$
must be of the form
$\case{\atm}{\var}{\atmtwo_1}{\var}{\atmtwo_2}
 \rtou \case{\ini{\atm'}}{\var}{\atmtwo'_1}{\var}{\atmtwo'_2}
 \tou \atmtwo'_i\sub{\var}{\atm'}
 \rtou \atmthree$
where $i \in \set{1,2}$
and $\atm \rtou \ini{\atm'}$
and $\atmtwo_1 \rtou \atmtwo'_1$
and $\atmtwo_2 \rtou \atmtwo'_2$.
Since $\ini{\atm'}$ is a canonical reduct of $\atm \in \rc_1\rcplus\rc_2$,
by definition of the sum, we have that $\atm' \in \rc_i$.
Moreover, by hypothesis $\atmtwo_i\sub{\var}{\atm'} \in \rc'$
and, by compatibility of $\tou$-reduction under substitution,
$\atmtwo_i\sub{\var}{\atm'}
\rtou \atmtwo'_i\sub{\var}{\atm'}
\rtou \atmthree$.
Since $\rc'$ is closed by reduction, this implies that
$\atmthree \in \rc'$, as required.
\end{proof}

\begin{lemma}[Adequacy of abstraction]
\llem{adequacy_abstraction}
Let $\rc_1, \rc_2 \in \RCSet$.
Let $\atm$ be such that for every $\atmtwo \in \rc_1$
we have that $\atm\sub{\var}{\atmtwo} \in \rc_2$.
Then $\lam{\var}{\atm} \in \rc_1\rcimp\rc_2$.
\end{lemma}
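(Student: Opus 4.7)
The goal is to verify the two defining conditions of $\rc_1\rcimp\rc_2$, namely (a)~$\lam{\var}{\atm}\in\SNTerms$, and (b)~$\ap{\lam{\var}{\atm}}{\atmtwo}\in\rc_2$ for every $\atmtwo\in\rc_1$. I would follow exactly the pattern already used in the preceding adequacy lemmas (projection, case), combining closure under reduction of $\rc_1,\rc_2$ with completeness of $\rc_2$.

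For (a), I would exploit the remark that every reducibility candidate contains all variables: taking $\atmtwo=\var\in\rc_1$, the hypothesis yields $\atm=\atm\sub{\var}{\var}\in\rc_2\subseteq\SNTerms$, hence $\lam{\var}{\atm}\in\SNTerms$.

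For (b), fix $\atmtwo\in\rc_1$. Since $\rc_2$ is complete, it is enough to show $\ap{\lam{\var}{\atm}}{\atmtwo}\in\SNTerms$ together with the fact that every canonical reduct of $\ap{\lam{\var}{\atm}}{\atmtwo}$ belongs to $\rc_2$. Strong normalization I would prove by induction on $\snsize{\atm}+\snsize{\atmtwo}$: an internal step in $\atm$ or $\atmtwo$ strictly decreases the measure and preserves the hypotheses, using that $\rc_1$ and $\rc_2$ are closed by reduction (so if $\atmtwo\tou\atmtwo'$ then $\atmtwo'\in\rc_1$ and $\atm\sub{\var}{\atmtwo'}\in\rc_2$ still holds, similarly for $\atm\tou\atm'$ via $\atm\sub{\var}{\atmtwo'}\rtou\atm'\sub{\var}{\atmtwo'}$); the only root step is the $\beta$-contraction $\ap{\lam{\var}{\atm}}{\atmtwo}\tou\atm\sub{\var}{\atmtwo}$, whose reduct lies in $\rc_2\subseteq\SNTerms$ by hypothesis.

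Finally, to check that every canonical reduct $\atmthree$ of $\ap{\lam{\var}{\atm}}{\atmtwo}$ is in $\rc_2$, I would argue that any such reduction factors as $\ap{\lam{\var}{\atm}}{\atmtwo}\rtou\ap{\lam{\var}{\atm'}}{\atmtwo'}\tou\atm'\sub{\var}{\atmtwo'}\rtou\atmthree$ with $\atm\rtou\atm'$ and $\atmtwo\rtou\atmtwo'$ (since the outer application is not canonical, the head $\beta$-redex must fire at some point). Closure of $\rc_1$ gives $\atmtwo'\in\rc_1$, so the hypothesis yields $\atm\sub{\var}{\atmtwo'}\in\rc_2$; by compatibility of reduction with substitution, $\atm\sub{\var}{\atmtwo'}\rtou\atm'\sub{\var}{\atmtwo'}\rtou\atmthree$, and closure of $\rc_2$ under reduction delivers $\atmthree\in\rc_2$. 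The only mildly subtle point is the factorisation of the reduction sequence through the head $\beta$-step, which follows from the fact that the outer redex is not destroyed by internal reductions and that a canonical term is never of the form $\ap{\,}{\,}$.
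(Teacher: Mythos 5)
Your proposal is correct and follows essentially the same route as the paper's proof: establish $\atm\in\rc_2$ (hence strong normalization) by instantiating the hypothesis at the variable $\var\in\rc_1$, prove $\ap{(\lam{\var}{\atm})}{\atmtwo}\in\SNTerms$ by induction on the sum of the sizes with the same three-way case split, and then use completeness of $\rc_2$ together with the factorisation of any reduction to a canonical term through the head $\beta$-step. The only difference is cosmetic (the paper measures $\snsize{\lam{\var}{\atm}}+\snsize{\atmtwo}$ rather than $\snsize{\atm}+\snsize{\atmtwo}$), so nothing further is needed.
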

\begin{proof}
Note that $\var \in \rc_1$,
so by hypothesis $\atm = \atm\sub{\var}{\var} \in \rc_2$.
In particular, $\atm \in \SNTerms$.
From this it is immediate to conclude that $\lam{\var}{\atm} \in \SNTerms$.
To see that $\lam{\var}{\atm} \in \rc_1\rcimp\rc_2$,
by definition of the arrow operator, let $\atmtwo \in \rc_1$ and let us
show that $\ap{(\lam{\var}{\atm})}{\atmtwo} \in \rc_2$.

First we claim that $\ap{(\lam{\var}{\atm})}{\atmtwo} \in \SNTerms$.
We have already argued that $\lam{\var}{\atm} \in \SNTerms$,
and moreover $\atmtwo \in \SNTerms$ since $\atmtwo \in \rc_1$.
By induction on $\snsize{\lam{\var}{\atm}} + \snsize{\atmtwo}$,
we argue that if $\ap{(\lam{\var}{\atm})}{\atmtwo} \tou \atmthree$
then $\atmthree \in \SNTerms$.
We consider three cases, depending on whether the reduction step
is internal to $\lam{\var}{\atm}$,
   internal to $\atmtwo$,
or at the root:
\begin{enumerate}
\item
  If the reduction step is internal to $\lam{\var}{\atm}$,
  that is, the step is of the form
  $\ap{(\lam{\var}{\atm})}{\atmtwo} \tou \ap{(\lam{\var}{\atm'})}{\atmtwo}$
  with $\lam{\var}{\atm} \tou \lam{\var}{\atm'}$,
  then $\snsize{\lam{\var}{\atm}} + \snsize{\atmtwo} >
        \snsize{\lam{\var}{\atm'}} + \snsize{\atmtwo}$.
  Note that $\atm'$ still has the property that
  for every $\atmthree \in \rc_1$
  we have that $\atm'\sub{\var}{\atmthree} \in \rc'$,
  because $\rc'$ is closed by reduction,
  and $\atm\sub{\var}{\atmthree} \tou \atm'\sub{\var}{\atmthree}$,
  and furthermore $\atm\sub{\var}{\atmthree} \in \rc'$
  holds by hypothesis.
  Hence, by \ih, we have that $\ap{(\lam{\var}{\atm'})}{\atmtwo} \in \SNTerms$.
\item
  If the reduction step is internal to $\atmtwo$,
  that is, the step is of the form
  $\ap{(\lam{\var}{\atm})}{\atmtwo} \tou \ap{(\lam{\var}{\atm})}{\atmtwo'}$
  with $\atmtwo \tou \atmtwo'$,
  then $\snsize{\lam{\var}{\atm}} + \snsize{\atmtwo} >
        \snsize{\lam{\var}{\atm}} + \snsize{\atmtwo'}$.
  Note that $\atmtwo' \in \rc_1$ still holds
  because $\rc_1$ is closed by reduction.
  Hence, by \ih, we have that $\ap{(\lam{\var}{\atm})}{\atmtwo'} \in \SNTerms$.
\item
  If the reduction step is at the root,
  then the step must be of the form
  $\ap{(\lam{\var}{\atm})}{\atmtwo} \tou \atm\sub{\var}{\atmtwo}$.
  Thus, by hypothesis, $\atm\sub{\var}{\atmtwo} \in \rc_2$,
  which implies that $\atm\sub{\var}{\atmtwo} \in \SNTerms$.
\end{enumerate}
Second, to see that $\ap{(\lam{\var}{\atm})}{\atmtwo} \in \rc'$,
since $\rc'$ is complete, it suffices to show that all canonical
reducts of $\ap{(\lam{\var}{\atm})}{\atmtwo}$ are in $\rc'$.
That is, let $\ap{(\lam{\var}{\atm})}{\atmtwo} \rtou \atmthree \in \CanTerms$ 
and let us show that $\atmthree \in \rc'$.
Note that the reduction sequence
$\ap{(\lam{\var}{\atm})}{\atmtwo} \rtou \atmthree$
must be of the form
$\ap{(\lam{\var}{\atm})}{\atmtwo}
 \rtou \ap{(\lam{\var}{\atm'})}{\atmtwo'}
 \tou \atm'\sub{\var}{\atmtwo'}
 \rtou \atmthree$
with $\atm \rtou \atm'$ and $\atmtwo \rtou \atmtwo'$.
Note that $\atmtwo' \in \rc_1$ since $\rc_1$ is closed by reduction.
Hence by hypothesis $\atm\sub{\var}{\atmtwo'} \in \rc'$.
Moreover, by compatibility of $\tou$-reduction under substitution,
$\atm\sub{\var}{\atmtwo'}
 \rtou \atm\sub{\var}{\atmtwo'}
 \rtou \atmthree$.
Since $\rc'$ is closed by reduction, this implies that $\atmthree \in \rc'$,
as required.
\end{proof}

\begin{lemma}[Adequacy of negation introduction]
\llem{adequacy_negi}
Let $\rc \in \RCSet$.
If $\atm \in \rc$ then $\negi{\atm} \in \rcneg\rc$.
\end{lemma}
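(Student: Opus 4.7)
The plan is to follow the same two-step template used in \rlem{adequacy_injection}, since $\rcneg\rc$ is defined as the closure of $\set{\negi{\atmtwo} \ST \atmtwo \in \rc}$, in complete analogy with the sum.

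First, I would observe that $\atm \in \SNTerms$ since $\atm \in \rc$ and reducibility candidates are subsets of $\SNTerms$. Because every reduction of $\negi{\atm}$ is internal to $\atm$ (there is no rewriting rule with a left-hand side of the form $\negi{\cdot}$ at the root), any infinite reduction from $\negi{\atm}$ would induce an infinite reduction from $\atm$, which is impossible. Hence $\negi{\atm} \in \SNTerms$.

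Second, I would show that every canonical reduct of $\negi{\atm}$ lies in the generating set $\set{\negi{\atmtwo} \ST \atmtwo \in \rc}$, from which $\negi{\atm} \in \rcneg\rc$ follows by definition of the closure. Concretely, let $\negi{\atm} \rtou \atmthree \in \CanTerms$. Since reductions from $\negi{\atm}$ can only occur inside $\atm$, the reduction must be of the form $\negi{\atm} \rtou \negi{\atm'} = \atmthree$ for some $\atm'$ with $\atm \rtou \atm'$. Because $\rc$ is closed by reduction, $\atm' \in \rc$, so $\atmthree = \negi{\atm'}$ is of the required form.

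No obstacle is expected here: the proof is essentially mechanical given the definitions of $\rcneg$ and $\rclosure{-}$, and is strictly easier than the previously established \rlem{adequacy_abstraction} or \rlem{adequacy_case} because the constructor $\negi{-}$ is unary and introduces no redex at the root.
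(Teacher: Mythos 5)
Your proposal is correct and matches the paper's own proof essentially step for step: both first deduce $\negi{\atm}\in\SNTerms$ from $\atm\in\SNTerms$ (using that no rewrite rule has $\negi{\cdot}$ alone at the root), and then show every canonical reduct of $\negi{\atm}$ has the form $\negi{\atm'}$ with $\atm'\in\rc$ by closure of $\rc$ under reduction. Your explicit analogy with \rlem{adequacy_injection} is exactly the right observation, since $\rcneg$ is likewise defined as a closure of a set of constructor-headed terms.
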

\begin{proof}
First note that $\atm \in \SNTerms$ because $\atm \in \rc$.
From this it is immediate to conclude that $\negi{\atm} \in \SNTerms$.
Second, to see that $\negi{\atm} \in \rcneg\rc$, by definition of
the negation operator for \RCs,
it suffices to show that all canonical reducts of $\negi{\atm}$
are of the form $\negi{\atm'}$ with $\atm' \in \rc$.
Indeed, let $\negi{\atm} \rtou \atmthree \in \CanTerms$,
and note that the reduction must be of the form
$\negi{\atm} \rtou \negi{\atm'}$ with $\atm \rtou \atm'$.
Since $\rc$ is closed by reduction, we have that $\atm' \in \rc$,
as required.
\end{proof}

\begin{lemma}[Adequacy of co-implication elimination]
\llem{adequacy_coimplication_elimination}
Let $\rc_1,\rc_2,\rc' \in \RCSet$.
Let $\atm \in \rc_1\rccoimp\rc_2$,
and let $\atmtwo$ be a term such that
  for all $\atm_1 \in \rc_1$ and for all $\atm_2 \in \rc_2$
  we have $\atmtwo\sub{\var}{\atm_1}\sub{\vartwo}{\atm_2} \in \rc'$.
Then $\colam{\atm}{\var}{\vartwo}{\atmtwo} \in \rc'$.
\end{lemma}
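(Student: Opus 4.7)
The plan is to mirror the proof of \rlem{adequacy_case} almost verbatim, since $\colam{\atm}{\var}{\vartwo}{\atmtwo}$ is a multiplicative-style binding destructor analogous to the pattern matching construct for disjunction, with the only bookkeeping twist being that two simultaneous substitutions $\sub{\var}{\atm_1}\sub{\vartwo}{\atm_2}$ appear where the disjunction case uses a single substitution. I would first establish strong normalization and then conclude membership in $\rc'$ by appealing to the completeness of $\rc'$.

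For the strong normalization step, note that $\atm \in \SNTerms$ because $\atm \in \rc_1 \rccoimp \rc_2$, and that $\atmtwo \in \SNTerms$ since $\var \in \rc_1$ and $\vartwo \in \rc_2$ (every variable lies in every reducibility candidate), whence by hypothesis $\atmtwo = \atmtwo\sub{\var}{\var}\sub{\vartwo}{\vartwo} \in \rc'$. I would then induct on $\snsize{\atm}+\snsize{\atmtwo}$, showing that every one-step reduct $\colam{\atm}{\var}{\vartwo}{\atmtwo} \tou \atmthree$ is strongly normalizing by case analysis: reductions internal to $\atm$ preserve $\atm \in \rc_1\rccoimp\rc_2$ by closure under reduction; reductions internal to $\atmtwo$ preserve the substitution hypothesis by closure of $\rc'$ under reduction together with compatibility of $\tou$ under substitution; and the single possible root step is the $\ruleCoproj$ rule $\colam{\copair{\atm_1}{\atm_2}}{\var}{\vartwo}{\atmtwo} \tou \atmtwo\sub{\var}{\atm_1}\sub{\vartwo}{\atm_2}$, where since $\copair{\atm_1}{\atm_2}$ is canonical and belongs to $\rclosure{\set{\copair{\atm'_1}{\atm'_2} \ST \atm'_1\in\rc_1, \atm'_2\in\rc_2}}$ we deduce $\atm_1 \in \rc_1$ and $\atm_2 \in \rc_2$, and the hypothesis delivers $\atmtwo\sub{\var}{\atm_1}\sub{\vartwo}{\atm_2} \in \rc' \subseteq \SNTerms$.

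To conclude membership in $\rc'$, I would exploit that $\rc'$ is complete and show that every canonical reduct $\atmthree$ of $\colam{\atm}{\var}{\vartwo}{\atmtwo}$ lies in $\rc'$. Any such reduction sequence must factor through the root redex, so it has the form $\colam{\atm}{\var}{\vartwo}{\atmtwo} \rtou \colam{\copair{\atm_1}{\atm_2}}{\var}{\vartwo}{\atmtwo'} \tou \atmtwo'\sub{\var}{\atm_1}\sub{\vartwo}{\atm_2} \rtou \atmthree$, where $\atm \rtou \copair{\atm_1}{\atm_2}$ and $\atmtwo \rtou \atmtwo'$. As in the SN argument, $\copair{\atm_1}{\atm_2}$ is a canonical reduct of an element of $\rc_1\rccoimp\rc_2$, so $\atm_1 \in \rc_1$ and $\atm_2 \in \rc_2$; the hypothesis then gives $\atmtwo\sub{\var}{\atm_1}\sub{\vartwo}{\atm_2} \in \rc'$, and closure of $\rc'$ under reduction combined with $\atmtwo\sub{\var}{\atm_1}\sub{\vartwo}{\atm_2} \rtou \atmtwo'\sub{\var}{\atm_1}\sub{\vartwo}{\atm_2} \rtou \atmthree$ yields $\atmthree \in \rc'$, as required.

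There is no substantial obstacle here: the lemma is a direct structural analogue of \rlem{adequacy_case}, and the only care needed is in handling the two simultaneous substitutions coherently. The essential ingredients, namely that canonical elements of $\rclosure{X}$ lie in $X$, closure of reducibility candidates under reduction, completeness of $\rc'$, and compatibility of $\rtou$ with substitution, are all already established and applicable unchanged.
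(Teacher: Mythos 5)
Your proposal is correct and follows essentially the same route as the paper's proof: the same two-phase structure (strong normalization by induction on $\snsize{\atm}+\snsize{\atmtwo}$ with the three-way case split, then membership in $\rc'$ via completeness by analyzing canonical reducts factoring through the root redex), with the same appeals to closure under reduction, the definition of $\rccoimp$ as a closure, and compatibility of $\rtou$ with substitution. The paper's proof is indeed a near-verbatim adaptation of the adequacy-of-case argument, exactly as you anticipated.
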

\begin{proof}
First we claim that $\colam{\atm}{\var}{\vartwo}{\atmtwo} \in \SNTerms$.
Note that $\atm \in \SNTerms$ since $\atm \in \rc_1\rccoimp\rc_2$.
Moreover, $\atmtwo \in \SNTerms$
because $\var\in\rc_1$ and $\vartwo\in\rc_2$,
so $\atmtwo = \atmtwo\sub{\var}{\var}\sub{\vartwo}{\vartwo} \in \rc'$.
By induction on $\snsize{\atm} + \snsize{\atmtwo}$,
we argue that if $\colam{\atm}{\var}{\vartwo}{\atmtwo} \tou \atmthree$
then $\atmthree \in \SNTerms$.
We consider three cases, depending on whether the reduction step is internal
to $\atm$, internal to $\atmtwo$, or at the root:
\begin{enumerate}
\item
  If the reduction step is internal to $\atm$,
  that is, the step is of the form
  $\colam{\atm}{\var}{\vartwo}{\atmtwo} \tou \colam{\atm'}{\var}{\vartwo}{\atmtwo}$
  with $\atm \tou \atm'$,
  then $\snsize{\atm} + \snsize{\atmtwo} > \snsize{\atm'} + \snsize{\atmtwo}$.
  Note that $\atm' \in \rc_1\rccoimp\rc_2$ still holds
  since $\rc_1\rccoimp\rc_2$ is closed by reduction.
  Hence, by \ih, we have that 
  $\colam{\atm'}{\var}{\vartwo}{\atmtwo} \in \SNTerms$.
\item
  If the reduction step is internal to $\atmtwo$,
  that is, the step is of the form
  $\colam{\atm}{\var}{\vartwo}{\atmtwo} \tou \colam{\atm}{\var}{\vartwo}{\atmtwo'}$
  with $\atmtwo \tou \atmtwo'$,
  then $\snsize{\atm} + \snsize{\atmtwo} > \snsize{\atm} + \snsize{\atmtwo'}$.
  Note that $\atmtwo'$ still has the property that
  for every $\atm_1 \in \rc_1$ and every $\atm_2 \in \rc_2$
  we have $\atmtwo'\sub{\var}{\atm_1}\sub{\vartwo}{\atm_2} \in \rc'$,
  because
  $\atmtwo\sub{\var}{\atm_1}\sub{\vartwo}{\atm_2} \tou \atmtwo'\sub{\var}{\atm_1}\sub{\vartwo}{\atm_2}$
  and $\rc'$ is closed by reduction
  and furthermore $\atmtwo\sub{\var}{\atm_1}\sub{\vartwo}{\atm_2} \in \rc'$
  holds by hypothesis.
  Hence, by \ih, we have that
  $\colam{\atm}{\var}{\vartwo}{\atmtwo'} \in \SNTerms$.
\item
  If the reduction step is at the root, then the step must be of the form
  $\colam{\copair{\atm_1}{\atm_2}}{\var}{\vartwo}{\atmtwo}
   \to \atmtwo\sub{\var}{\atm_1}\sub{\vartwo}{\atm_2}$,
  where $\atm = \copair{\atm_1}{\atm_2}$.
  Since $\atm = \copair{\atm_1}{\atm_2}$ is canonical
  and $\copair{\atm_1}{\atm_2} \in \rc_1\rccoimp\rc_2$,
  by definition of the co-implication operator on reducibility candidates,
  we have that $\atm_1 \in \rc_1$ and $\atm_2 \in \rc_2$.
  Thus, by hypothesis,
  $\atmtwo\sub{\var}{\atm_1}\sub{\vartwo}{\atm_2} \in \rc'$,
  which implies
  $\atmtwo\sub{\var}{\atm_1}\sub{\vartwo}{\atm_2} \in \SNTerms$.
\end{enumerate}
Second, to see that $\colam{\atm}{\var}{\vartwo}{\atmtwo} \in \rc'$,
since $\rc'$ is complete, it suffices to show that all canonical reducts
of $\colam{\atm}{\var}{\vartwo}{\atmtwo}$ are in $\rc'$.
That is, let $\colam{\atm}{\var}{\vartwo}{\atmtwo} \rtou \atmthree \in \CanTerms$
and let us show that $\atmthree \in \rc'$.
Note that the reduction sequence
$\colam{\atm}{\var}{\vartwo}{\atmtwo} \rtou \atmthree$
must be of the form
$\colam{\atm}{\var}{\vartwo}{\atmtwo}
 \rtou \colam{\copair{\atm_1}{\atm_2}}{\var}{\vartwo}{\atmtwo'}
 \rtou \atmtwo'\sub{\var}{\atm_1}\sub{\vartwo}{\atm_2}
 \rtou \atmthree$
where $\atm \rtou \copair{\atm_1}{\atm_2}$
and $\atmtwo \rtou \atmtwo'$.
Since $\copair{\atm_1}{\atm_2}$
is a canonical reduct of $\atm \in \rc_1\rccoimp\rc_2$,
by definition of the co-implication operator on reducibility candidates,
we have that $\atm_1 \in \rc_1$ and $\atm_2 \in \rc_2$.
Moreover, by hypothesis
$\atmtwo\sub{\var}{\atm_1}\sub{\vartwo}{\atm_2} \in \rc'$ and,
by compatibility of $\tou$-reduction under substitution,
$\atmtwo\sub{\var}{\atm_1}\sub{\vartwo}{\atm_2} \rtou
 \atmtwo'\sub{\var}{\atm_1}\sub{\vartwo}{\atm_2} \rtou \atmthree$.
Since $\rc'$ is closed by reduction, this implies that $\atmthree \in \rc'$,
as required.
\end{proof}

\begin{lemma}[Adequacy of negation elimination]
\llem{adequacy_nege}
Let $\rc \in \RCSet$.
If $\atm \in \rcneg\rc$ then $\nege{\atm} \in \rc$.
\end{lemma}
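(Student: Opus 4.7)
The plan is to mimic the structure used in the earlier adequacy lemmas for eliminators such as projection (\rlem{appendix:adequacy_projection}) and case analysis (\rlem{appendix:adequacy_case}), which all split the argument into a strong-normalization part and a completeness-based membership part. There is no genuine new difficulty here because the negation eliminator is unary and introduces no binding.

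First I would unfold the hypothesis $\atm \in \rcneg\rc = \rclosure{\set{\negi{\atm'} \ST \atm' \in \rc}}$ to observe that $\atm \in \SNTerms$ and that every canonical reduct of $\atm$ has the form $\negi{\atm'}$ with $\atm' \in \rc$. Then I would show $\nege{\atm} \in \SNTerms$ by induction on $\snsize{\atm}$: any one-step reduct of $\nege{\atm}$ is either (i) internal, of the form $\nege{\atm''}$ with $\atm \tou \atm''$, in which case $\atm'' \in \rcneg\rc$ since $\rcneg\rc$ is closed by reduction and $\snsize{\atm''} < \snsize{\atm}$, so the inductive hypothesis gives $\nege{\atm''} \in \SNTerms$; or (ii) at the root, which forces $\atm = \negi{\atm'}$ canonical with $\atm' \in \rc \subseteq \SNTerms$, and the reduct is $\atm'$ itself.

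Second, to conclude $\nege{\atm} \in \rc$, I would use completeness of $\rc$: it suffices to check that every canonical reduct $\atmthree$ of $\nege{\atm}$ lies in $\rc$. Any such reduction must factor as $\nege{\atm} \rtou \nege{(\negi{\atm'})} \tou \atm' \rtou \atmthree$ with $\atm \rtou \negi{\atm'}$. Since $\negi{\atm'}$ is a canonical reduct of $\atm \in \rcneg\rc$, the definition of $\rcneg$ gives $\atm' \in \rc$, and closure of $\rc$ under reduction then yields $\atmthree \in \rc$.

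I do not expect any real obstacle: the only minor point to be careful about is confirming that no root-level reduction rule other than $\ruleNeg$ fires on a term headed by $\nege{-}$, so that the reduction patterns enumerated above are exhaustive; this is immediate by inspection of the rules defining $\tou$. The proof is thus a direct structural analog of \rlem{appendix:adequacy_projection}, replacing pair/projection with $\negi{-}/\nege{-}$.
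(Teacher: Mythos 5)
Your proof is correct and follows essentially the same route as the paper's: the same two-part structure (strong normalization by induction on $\snsize{\atm}$ with an internal/root case split, then membership in $\rc$ via completeness by factoring every canonical reduct through $\nege{(\negi{\atm'})} \tou \atm'$ and using closure under reduction). The observation that $\ruleNeg$ is the only root rule applicable to a term headed by $\nege{-}$ is indeed the only point requiring care, and you handle it as the paper does.
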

\begin{proof}
First we claim that $\nege{\atm} \in \SNTerms$.
Note that $\atm \in \SNTerms$ since $\atm \in \rcneg\rc$.
By induction on $\snsize{\atm}$, we argue that if
$\nege{\atm} \tou \atmthree$ then $\atmthree \in \SNTerms$.
We consider two cases, depending on whether the reduction step
is internal to $\atm$ or at the root:
\begin{enumerate}
\item
  If the reduction step is internal to $\atm$,
  that is, $\nege{\atm} \tou \nege{\atm'}$ with $\atm \tou \atm'$,
  then $\snsize{\atm} > \snsize{\atm'}$.
  Note that $\atm' \in \rcneg\rc$ still holds since $\rcneg\rc$
  is closed by reduction. Hence, by \ih, we have that
  $\nege{\atm'} \in \SNTerms$.
\item
  If the reduction step is at the root, then the step must be of the
  form $\nege{(\negi{\atm'})} \tou \atm'$
  where $\atm = \negi{\atm'}$.
  Since $\atm = \negi{\atm'}$ is canonical and $\negi{\atm'} \in \rcneg\rc$,
  by definition of the negation operator for \RCs,
  we have that $\atm' \in \rc$,
  which implies $\atm' \in \SNTerms$.
\end{enumerate}
Second, to see that $\nege{\atm} \in \rc$, since $\rc$ is complete,
it suffices to show that all canonical reducts of $\nege{\atm}$
are in $\rc$. That is, let $\nege{\atm} \rtou \atmthree \in \CanTerms$
and let us show that $\atmthree \in \rc$.
Note that the reduction sequence $\nege{\atm} \rtou \atmthree$
must be of the form
$\nege{\atm}
 \rtou \nege{(\negi{\atm'})}
 \tou \atm'
 \rtou \atmthree$
with $\atm \rtou \negi{\atm'}$.
Since $\negi{\atm'}$ is a canonical reduct of $\atm \in \rcneg\rc$,
by definition of the negation operator for \RCs,
we have that $\atm' \in \rc$.
Since $\rc$ is closed by reduction, this implies that $\atmthree \in \rc$,
as required.
\end{proof}

\begin{lemma}[Adequacy of universal abstraction]
\llem{adequacy_type_abstraction}
Suppose that $\set{\rc_i}_{i\inI} \subseteq \RCSet$,
where $I$ is assumed to be non-empty.
If $\atm \in \rc_i$ for all $i\inI$,
then $\lamtu{\atm} \in \rcall{i\inI}{\rc_i}$.
\end{lemma}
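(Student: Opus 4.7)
The plan is to follow exactly the same pattern used for \rlem{adequacy_abstraction}, replacing the $\beta$-rule $\ap{(\lam{\var}{\atm'})}{\atmtwo} \tou \atm'\sub{\var}{\atmtwo}$ by the simpler type-level rule $\apptu{(\lamtu{\atm})} \tou \atm$, which has the technical advantage that no substitution is involved.

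First, since $I$ is assumed non-empty, I pick any $i_0 \in I$ and use the hypothesis $\atm \in \rc_{i_0}$, which by $\rc_{i_0} \subseteq \SNTerms$ gives $\atm \in \SNTerms$; it follows immediately that $\lamtu{\atm} \in \SNTerms$. To conclude $\lamtu{\atm} \in \rcall{i \in I}{\rc_i}$, by unfolding the definition of $\rcall{i \in I}{\rc_i}$ I must establish, for each $i \in I$, that $\apptu{(\lamtu{\atm})} \in \rc_i$. Fix such an $i$; the hypothesis gives $\atm \in \rc_i$.

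Second, I verify $\apptu{(\lamtu{\atm})} \in \SNTerms$. By induction on $\snsize{\atm}$, any one-step reduct of $\apptu{(\lamtu{\atm})}$ is either internal, of the form $\apptu{(\lamtu{\atm'})}$ with $\atm \tou \atm'$ (and $\snsize{\atm} > \snsize{\atm'}$, while $\atm' \in \rc_i$ by closure under reduction, so the \ih applies), or the root step $\apptu{(\lamtu{\atm})} \tou \atm$, and $\atm \in \rc_i \subseteq \SNTerms$ by hypothesis.

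Third, to conclude $\apptu{(\lamtu{\atm})} \in \rc_i$ I exploit completeness of $\rc_i$: it suffices to show that every canonical reduct $\apptu{(\lamtu{\atm})} \rtou \atmthree \in \CanTerms$ lies in $\rc_i$. Since the head symbol $\apptu{(-)}$ is not a constructor, any such reduction must at some point contract the unique available root redex, hence factors as $\apptu{(\lamtu{\atm})} \rtou \apptu{(\lamtu{\atm'})} \tou \atm' \rtou \atmthree$ for some $\atm \rtou \atm'$. By closure of $\rc_i$ under reduction, $\atm \in \rc_i$ yields $\atm' \in \rc_i$, and then $\atm' \rtou \atmthree$ yields $\atmthree \in \rc_i$, as required. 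The only mildly delicate point is the factorization of reductions to canonical form; this is routine because the rewriting rules of $\lambdaPRKU$ form an orthogonal system, so no subtle interference between internal steps and the root $\ruleAppT$-step arises.
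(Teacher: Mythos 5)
Your proposal is correct and matches the paper's own proof essentially step for step: non-emptiness of $I$ to get $\lamtu{\atm}\in\SNTerms$, an induction on the size measure to show $\apptu{(\lamtu{\atm})}\in\SNTerms$ by case analysis on internal versus root steps, and completeness of $\rc_i$ applied to the factorization of any reduction to a canonical reduct through the root step. The appeal to orthogonality for the factorization is a slightly more explicit justification than the paper gives, but it is the same argument.
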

\begin{proof}
Since $I$ is non-empty, $\atm \in \rc_{i_0}$
for at least one index $i_0 \in I$.
This implies that $\atm \in \SNTerms$, from which it is immediate to
conclude that $\lamtu{\atm} \in \SNTerms$.
To see that $\lamtu{\atm} \in \rcall{i\inI}{\rc_i}$,
by definition of the indexed product,
let $j \in I$ be an arbitrary index
and let us show that $\apptu{(\lamtu{\atm})} \in \rc_j$.

First, we claim that $\apptu{(\lamtu{\atm})} \in \SNTerms$.
By induction on $\snsize{\lamtu{\atm}}$, we argue that if
$\apptu{(\lamtu{\atm})} \tou \atmthree$ then $\atmthree \in \SNTerms$.
We consider two cases, depending on whether the reduction step is
internal to $\lamtu{\atm}$ or at the root:
\begin{enumerate}
\item
  If the reduction step is internal to $\lamtu{\atm}$,
  that is, the step is of the form
  $\apptu{(\lamtu{\atm})} \tou \apptu{(\lamtu{\atm'})}$
  with $\lamtu{\atm} \tou \lamtu{\atm'}$,
  then $\snsize{\lamtu{\atm}} > \snsize{\lamtu{\atm'}}$.
  Note that $\atm' \in \rc_j$,
  because $\rc_j$ is closed by reduction,
  and $\atm \tou \atm'$,
  and furthermore $\atm \in \rc_j$ holds by hypothesis.
  Hence by \ih we have that $\apptu{(\lamtu{\atm})} \in \SNTerms$.
\item
  If the reduction step is at the root, then the step must be
  of the form $\apptu{(\lamtu{\atm})} \tou \atm$.
  By hypothesis, $\atm \in \rc_j$, which implies that $\atm \in \SNTerms$.
\end{enumerate}
Second, to see that
$\apptu{(\lamtu{\atm})} \in \rc_j$, since $\rc_j$ is complete,
it suffices to show that all canonical reducts of
$\apptu{(\lamtu{\atm})}$ are in $\rc_j$.
That is, let $\apptu{(\lamtu{\atm})} \rtou \atmthree \in \CanTerms$,
and let us show that $\atmthree \in \rc_j$.
Note that the reduction sequence
$\apptu{(\lamtu{\atm})} \rtou \atmthree$
must be of the form
$\apptu{(\lamtu{\atm})}
 \rtou \apptu{(\lamtu{\atm'})}
 \tou \atm'
 \rtou \atmthree$
with $\atm \rtou \atm'$.
Since $\rc_j$ is closed by reduction,
and $\atm \rtou \atm' \rtou \atmthree$,
and furthermore $\atm \in \rc_j$,
we conclude that $\atmthree \in \rc_j$,
as required.
\end{proof}

\begin{lemma}[Adequacy of existential introduction]
\llem{adequacy_existential_introduction}
Let $\set{\rc_i}_{i\inI} \subseteq \RCSet$.
If $\atm \in \rc_j$ for some $j \in I$,
then $\patu{\atm} \in \rcex{i\inI}{\rc_i}$.
\end{lemma}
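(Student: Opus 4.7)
The proof will closely mirror the structure of \rlem{adequacy_injection}, since the definition of $\rcex{i\inI}{\rc_i}$ as a closure parallels the definition of the sum $\rc_1 \rcplus \rc_2$. The plan is to verify the two conditions that membership in $\rclosure{X}$ requires: strong normalization of $\patu{\atm}$ and the fact that every canonical reduct lies in the generating set $X = \set{\patu{\atmtwo} \ST \exists i \in I.\ \atmtwo \in \rc_i}$.

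First, I would argue that $\patu{\atm} \in \SNTerms$. Since $\atm \in \rc_j$ and reducibility candidates are contained in $\SNTerms$, we have $\atm \in \SNTerms$. The constructor $\patu{\cdot}$ has no head redex in the untyped calculus (the only $\tou$ rule involving $\patu{\cdot}$ fires when $\patu{\cdot}$ appears as the subject of $\optu{\cdot}{\var}{\cdot}$ or as an argument of $\strongabs{}{}{}$), so every reduct of $\patu{\atm}$ is of the form $\patu{\atm'}$ with $\atm \rtou \atm'$; hence an infinite reduction from $\patu{\atm}$ would yield an infinite reduction from $\atm$, contradicting $\atm \in \SNTerms$.

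Second, by definition of closure, it suffices to show that whenever $\patu{\atm} \rtou \atmthree$ with $\atmthree \in \CanTerms$, one has $\atmthree \in X$. By the observation above, any such reduction must be of the form $\patu{\atm} \rtou \patu{\atm'} = \atmthree$ for some $\atm'$ with $\atm \rtou \atm'$. Since $\rc_j$ is closed by reduction and $\atm \in \rc_j$, we obtain $\atm' \in \rc_j$, so $\patu{\atm'} = \atmthree$ witnesses (with $i = j$) membership in $X$, as required. This completes the proof that $\patu{\atm} \in \rclosure{X} = \rcex{i\inI}{\rc_i}$.

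No real obstacle arises here: the argument is dual to \rlem{adequacy_injection} and only uses the closure-by-reduction property of the $\rc_i$'s together with the head-normal shape of $\patu{\atm}$. The only point requiring slight care is justifying that reducts of $\patu{\atm}$ remain of the form $\patu{\atm'}$, which is immediate from inspection of the rewriting rules of $\lambdaPRKU$.
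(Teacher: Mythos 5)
Your proof is correct and follows essentially the same route as the paper's: establish $\patu{\atm}\in\SNTerms$ from $\atm\in\SNTerms$, then observe that every canonical reduct of $\patu{\atm}$ has the form $\patu{\atm'}$ with $\atm\rtou\atm'$, so closure by reduction of $\rc_j$ places it in the generating set. The extra justification you give for why reducts of $\patu{\atm}$ keep that shape is a detail the paper leaves implicit, but the argument is the same.
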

\begin{proof}
First note that $\atm \in \SNTerms$ because $\atm \in \rc_j$.
From this it is immediate to conclude that $\patu{\atm} \in \SNTerms$.
Second, to see that $\patu{\atm} \in \rcex{i\inI}{\rc_i}$, by
definition of the indexed sum, it suffices to show that all canonical
reducts of $\patu{\atm}$ are of the form $\patu{\atm'}$ with
$\atm' \in \rc_i$ for some $i\inI$.
More specifically, we argue that, in this case, $\atm' \in \rc_j$.
Indeed, let $\patu{\atm} \rtou \atmthree \in \CanTerms$
and note that the reduction must be of the form
$\patu{\atm} \rtou \patu{\atm'}$ with $\atm \rtou \atm'$.
Since $\rc_j$ is closed by reduction, we have that $\atm' \in \rc_j$,
as required.
\end{proof}

\begin{lemma}[Adequacy of existential elimination]
\llem{adequacy_existential_elimination}
Suppose that $\set{\rc_i}_{i\inI} \subseteq \RCSet$,
where $I$ is assumed to be non-empty,
and let $\rc' \in \RCSet$.
Let $\atm \in \rcex{i\inI}{\rc_i}$,
and let $\atmtwo$ be such that
for all $i\inI$ and for all $\atm'\in\rc_i$
we have that $\atmtwo\sub{\var}{\atm'} \in \rc'$.
Then $\optu{\atm}{\var}{\atmtwo} \in \rc'$.
\end{lemma}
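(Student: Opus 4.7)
The plan is to mimic closely the proofs of \rlem{adequacy_case} (for sums) and \rlem{adequacy_coimplication_elimination} (for co-implication eliminations), since the existential eliminator $\optu{\atm}{\var}{\atmtwo}$ plays structurally the same role for the indexed sum $\rcex{i\inI}{\rc_i}$ as $\case{\atm}{\var}{\atmtwo_1}{\vartwo}{\atmtwo_2}$ plays for the binary sum $\rc_1\rcplus\rc_2$. As in those proofs, I will split the argument into two parts: first establish strong normalization of $\optu{\atm}{\var}{\atmtwo}$, and then use completeness of $\rc'$ to upgrade this to membership in $\rc'$.

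First I would observe that $\atm\in\SNTerms$ (because $\atm\in\rcex{i\inI}{\rc_i}$) and that $\atmtwo\in\SNTerms$: pick any $j\inI$ (using non-emptiness of $I$) and note that $\var\in\rc_j$, so by hypothesis $\atmtwo=\atmtwo\sub{\var}{\var}\in\rc'$. Then I argue $\optu{\atm}{\var}{\atmtwo}\in\SNTerms$ by induction on $\snsize{\atm}+\snsize{\atmtwo}$, showing that every one-step reduct is strongly normalizing. The three cases are: a step internal to $\atm$ (use that $\rcex{i\inI}{\rc_i}$ is closed by reduction, then apply the \ih); a step internal to $\atmtwo$ (the crucial observation is that if $\atmtwo\tou\atmtwo'$, then $\atmtwo'$ still satisfies the substitution hypothesis, because for every $\atm'\in\rc_i$ we have $\atmtwo\sub{\var}{\atm'}\tou\atmtwo'\sub{\var}{\atm'}$, and $\rc'$ is closed by reduction, so $\atmtwo'\sub{\var}{\atm'}\in\rc'$); and a step at the root, which must have the form $\optu{\patu{\atm'}}{\var}{\atmtwo}\tou\atmtwo\sub{\var}{\atm'}$ with $\atm=\patu{\atm'}$ canonical. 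Since $\patu{\atm'}\in\rcex{i\inI}{\rc_i}$ is canonical and the indexed sum is a closure, we get $\atm'\in\rc_j$ for some $j\inI$, and then the hypothesis gives $\atmtwo\sub{\var}{\atm'}\in\rc'\subseteq\SNTerms$.

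For the second part, by completeness of $\rc'$ it suffices to show that every canonical reduct of $\optu{\atm}{\var}{\atmtwo}$ lies in $\rc'$. Any such reduction factors as $\optu{\atm}{\var}{\atmtwo}\rtou\optu{\patu{\atm'}}{\var}{\atmtwo'}\tou\atmtwo'\sub{\var}{\atm'}\rtou\atmthree$ with $\atm\rtou\patu{\atm'}$ and $\atmtwo\rtou\atmtwo'$. Since $\patu{\atm'}$ is a canonical reduct of $\atm\in\rcex{i\inI}{\rc_i}$, the definition of the indexed sum yields some $j\inI$ with $\atm'\in\rc_j$. The hypothesis gives $\atmtwo\sub{\var}{\atm'}\in\rc'$, and by compatibility of $\tou$ with substitution $\atmtwo\sub{\var}{\atm'}\rtou\atmtwo'\sub{\var}{\atm'}\rtou\atmthree$, so closure of $\rc'$ under reduction yields $\atmthree\in\rc'$, as required.

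The main subtlety, as in the case analogue, is the internal-$\atmtwo$ step in the SN argument: one must verify that the substitution hypothesis is stable under reducing $\atmtwo$, which is where closure by reduction of $\rc'$ is essential. The only novelty compared to \rlem{adequacy_case} is that the existential ``tag'' $j\inI$ is chosen by the canonical term $\patu{\atm'}$ itself rather than being fixed by an injection index, but this changes nothing in the argument since the definition of $\rcex{i\inI}{\rc_i}$ as a closure already exposes such a $j$ whenever a canonical reduct is available. The non-emptiness assumption on $I$ is used only to conclude $\atmtwo\in\SNTerms$ at the very beginning.
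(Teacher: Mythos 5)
Your proposal is correct and follows essentially the same two-stage argument as the paper's proof: first strong normalization by induction on $\snsize{\atm}+\snsize{\atmtwo}$ with the same three-case analysis (including the key observation that the substitution hypothesis on $\atmtwo$ is stable under reduction), then membership in $\rc'$ via completeness and the factorization of any reduction to a canonical term through a root step. No differences of substance.
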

\begin{proof}
First we claim that $\optu{\atm}{\var}{\atmtwo} \in \SNTerms$.
Note that $\atm \in \SNTerms$ because $\atm \in \rcex{i\inI}{\rc_i}$.
Moreover, since $I$ is non-empty, there is at least one index $i_0 \in I$,
and $\var\in\rc_{i_0}$, so by hypothesis
$\atmtwo = \atmtwo\sub{\var}{\var} \in \rc'$, which implies that
$\atmtwo \in \SNTerms$.
By induction on $\snsize{\atm} + \snsize{\atmtwo}$,
we argue that if $\optu{\atm}{\var}{\atmtwo} \tou \atmthree$
then $\atmthree \in \SNTerms$.
We consider three cases, depending on whether the reduction
step is internal to $\atm$, internal to $\atmtwo$, or at the root:
\begin{enumerate}
\item
  If the reduction step is internal to $\atm$,
  that is, the step is of the form
  $\optu{\atm}{\var}{\atmtwo} \tou \optu{\atm'}{\var}{\atmtwo}$
  with $\atm \tou \atm'$,
  then $\snsize{\atm} + \snsize{\atmtwo} > \snsize{\atm'} + \snsize{\atmtwo}$.
  Note that $\atm' \in \rcex{i\inI}{\rc_i}$
  still holds since $\rcex{i\inI}{\rc_i}$ is closed by reduction.
  Hence, by \ih, we have that
  $\optu{\atm'}{\var}{\atmtwo} \in \SNTerms$.
\item
  If the reduction step is internal to $\atmtwo$,
  that is, the step is of the form
  $\optu{\atm}{\var}{\atmtwo} \tou \optu{\atm}{\var}{\atmtwo'}$
  with $\atmtwo \tou \atmtwo'$,
  then $\snsize{\atm} + \snsize{\atmtwo} > \snsize{\atm} + \snsize{\atmtwo'}$.
  Note that $\atmtwo'$ still has the property that
  for all $i\inI$ and all $\atm'\in\rc_i$
  we have $\atmtwo'\sub{\var}{\atm'} \in \rc'$,
  because $\atmtwo\sub{\var}{\atm'} \tou \atmtwo'\sub{\var}{\atm'}$
  and $\rc'$ is closed by reduction,
  and furthermore $\atmtwo\sub{\var}{\atm'} \in \rc'$
  holds by hypothesis.
  Hence, by \ih, we have that
  $\optu{\atm}{\var}{\atmtwo'} \in \SNTerms$.
\item
  If the reduction step is at the root, then the step must be of the form
  $\optu{\patu{\atm'}}{\var}{\atmtwo} \tou \atmtwo\sub{\var}{\atm'}$
  where $\atm = \patu{\atm'}$.
  Since $\atm = \patu{\atm'}$ is canonical and
  $\patu{\atm'} \in \rcex{i\inI}{\rc_i}$,
  by definition of the indexed sum,
  we have that $\atm' \in \rc_j$ for some $j\inI$.
  Thus, by hypothesis, $\atmtwo\sub{\var}{\atm'} \in \rc'$,
  which implies that $\atmtwo\sub{\var}{\atm'} \in \SNTerms$.
\end{enumerate}
Second, to see that $\optu{\atm}{\var}{\atmtwo} \in \rc'$,
since $\rc'$ is complete, it suffices to show that all canonical
reducts of $\optu{\atm}{\var}{\atmtwo}$ are in $\rc'$.
That is, let $\optu{\atm}{\var}{\atmtwo} \rtou \atmthree \in \CanTerms$
and let us show that $\atmthree \in \rc'$.
Note that the reduction sequence
$\optu{\atm}{\var}{\atmtwo} \rtou \atmthree$
must be of the form
$\optu{\atm}{\var}{\atmtwo}
 \rtou \optu{\patu{\atm'}}{\var}{\atmtwo'}
 \tou \atmtwo'\sub{\var}{\atm'}
 \rtou \atmthree$.
where $\atm \rtou \patu{\atm'}$ and $\atmtwo \rtou \atmtwo'$.
Since $\patu{\atm'}$ is a canonical reduct of $\atm \in \rcex{i\inI}{\rc_i}$,
by definition of the indexed sum, we have that there exists an
index $j\inI$ such that $\atm' \in \rc_j$.
Moreover, by hypothesis, $\atmtwo\sub{\var}{\atm'} \in \rc'$
and, by compatibility of $\tou$-reduction under substitution,
$\atmtwo\sub{\var}{\atm'}
 \rtou \atmtwo'\sub{\var}{\atm'}
 \rtou \atmthree$.
Since $\rc'$ is closed by reduction, this implies that
$\atmthree \in \rc'$, as required.
\end{proof}


\begin{definition}[Adequate substitutions]
A {\em substitution} is a function $\subst$ mapping
each variable to an term in $\UTerms$.
We write $\subst\esub{\var}{\atm}$ for the substitution $\subst'$
that results from extending $\subst$ in such a way that
$\subst'(\var) = \atm$ and $\subst'(\vartwo) = \subst(\vartwo)$
for any other variable $\vartwo \neq \var$.
We write $\atm^\subst$ for the term that results from
the capture-avoiding substitution
of each free occurrence of each variable $\var$ in $\atm$
by $\subst(\var)$.
We say that the substitution $\subst$ is
{\em adequate} for the typing context $\tctx$ under the environment $\env$,
and we write $\judgSubst{\subst}{\env}{\tctx}$,
if for each type assignment $(\var:\ev)\in\tctx$
we have that $\subst(\var) \in \red{\ev}{\env}$.
\end{definition}

\begin{theorem}[Adequacy]
\lthm{appendix:adequacy}
If $\judg{\tctx}{\tm}{\ev}$
and $\judgSubst{\subst}{\env}{\tctx}$
then $\erase{\tm}^{\subst} \in \red{\ev}{\env}$.
\end{theorem}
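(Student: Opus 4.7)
The plan is to proceed by structural induction on the typing derivation $\judg{\tctx}{\tm}{\ev}$, treating each typing rule as a case that invokes the corresponding adequacy lemma from the preceding development. The $\Ax$ case follows immediately from the hypothesis $\judgSubst{\subst}{\env}{\tctx}$, which directly gives $\subst(\var) \in \red{\ev}{\env}$. For $\Abs$, the induction hypothesis on the two premises yields $\erase{\tm}^\subst \in \red{\typ\pp}{\env}$ and $\erase{\tmtwo}^\subst \in \red{\typ\nn}{\env}$; then \rlem{red_opposite_orthogonal} supplies $\rcperp{\red{\typ\pp}{\env}}{\red{\typ\nn}{\env}}$ and \rlem{adequacy_absurdity} places the absurdity term in the arbitrary candidate $\red{\ev}{\env}$.

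The rules $\Icp$, $\Icn$, $\Ecp$, $\Ecn$ for weak affirmation and denial are the pivotal ones where the fixed-point definition pays off. Applying \rlem{reducible_terms_of_classical_type} rewrites $\red{\typ\PP}{\env}$ as $\red{\typ\NN}{\env}\rcimp\red{\typ\pp}{\env}$ (symmetrically for $\typ\NN$), reducing these rules to standard abstraction and application. Concretely, for $\Icp$, I would invoke \rlem{adequacy_abstraction} after showing that for each $\atmtwo \in \red{\typ\NN}{\env}$ the extended substitution $\subst\esub{\var}{\atmtwo}$ remains adequate for $\tctx,\var:\typ\NN$, so the induction hypothesis yields $\erase{\tm}^{\subst\esub{\var}{\atmtwo}} \in \red{\typ\pp}{\env}$. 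For $\Ecp$, the induction hypothesis places $\erase{\tm}^\subst$ in the arrow $\red{\typ\NN}{\env}\rcimp\red{\typ\pp}{\env}$, so applying it to $\erase{\tmtwo}^\subst \in \red{\typ\NN}{\env}$ produces the required term in $\red{\typ\pp}{\env}$.

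The propositional strong rules are routine applications of the adequacy lemmas: \rlem{adequacy_pairing}/\rlem{adequacy_projection} for $\Iandp$/$\Eandp$ (and dually $\Iorn$/$\Eorn$); \rlem{adequacy_injection}/\rlem{adequacy_case} for $\Iorp$/$\Eorp$ (and dually $\Iandn$/$\Eandn$); \rlem{adequacy_abstraction} combined with unfolding of the arrow for $\Iimpp$/$\Eimpp$ and dually $\Icoimpn$/$\Ecoimpn$; \rlem{adequacy_pairing} for $\Icoimpp$/$\Iimpn$ (noting that co-implication product and negative-implication product are both built via $\rccoimp$); \rlem{adequacy_coimplication_elimination} for $\Ecoimpp$/$\Eimpn$; and \rlem{adequacy_negi}/\rlem{adequacy_nege} for the negation rules. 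In each such case the induction hypothesis, possibly after extending $\subst$ with fresh variable bindings drawn from the appropriate candidate, supplies precisely the premises required by the relevant lemma.

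The technically subtle cases will be the second-order quantifier rules, and I expect the main obstacle to be threading environment extensions correctly. For $\Iallp$, the side-condition $\btyp \notin \ftv{\tctx}$ combined with \rlem{reducible_terms_irrelevance} ensures that $\subst$ remains adequate under any extension $\env\esub{\btyp}{\rc\pp,\rc\nn}$; the induction hypothesis then gives $\erase{\tm}^\subst \in \red{\typ\PP}{\env\esub{\btyp}{\rc\pp,\rc\nn}}$ uniformly over $(\rc\pp,\rc\nn) \in \RCPerp$, and \rlem{adequacy_type_abstraction} concludes. For $\Eallp$, instantiating the induction hypothesis at the specific orthogonal pair $(\red{\typ\pp}{\env},\red{\typ\nn}{\env})$ (orthogonal by \rlem{red_opposite_orthogonal}) lands the term in $\red{\typtwo\PP}{\env\esub{\btyp}{\red{\typ\pp}{\env},\red{\typ\nn}{\env}}}$, which the substitution lemma \rlem{reducible_terms_substitution} identifies with the required $\red{\typtwo\sub{\btyp}{\typ}\PP}{\env}$. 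The existential rules are handled dually via \rlem{adequacy_existential_introduction} and \rlem{adequacy_existential_elimination}, with $\Eexp$ additionally requiring that $\btyp \notin \ftv{\tctx,\ev}$ so that irrelevance can be applied to the conclusion type as well.
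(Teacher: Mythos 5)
Your proposal follows essentially the same route as the paper's proof: induction on the typing derivation, with each rule discharged by the corresponding adequacy lemma, the weak modes $\Icpn/\Ecpn$ handled via \rlem{reducible_terms_of_classical_type} together with \rlem{adequacy_abstraction}, and the quantifier cases threaded through \rlem{reducible_terms_irrelevance}, \rlem{reducible_terms_substitution}, and \rlem{red_opposite_orthogonal} exactly as in the appendix. The only difference is presentational: the paper treats one rule of each dual pair explicitly, while you group the cases by the lemma they invoke.
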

\begin{proof}
We proceed by induction
on the derivation of the typing judgment $\judg{\tctx}{\tm}{\ev}$.
For each pair of dual rules, we only study the positive one
(\eg we study $\Iandp$ but not the dual rule $\Iorn$):
\begin{enumerate}
\item \Ax:
  Let $\judg{\tctx,\var:\ev}{\var}{\ev}$
  and $\judgSubst{\subst}{\env}{\tctx,\var:\ev}$.
  Then $\erase{\var}^\subst = \subst(\var) \in \red{\ev}{\env}$
  by the fact that $\subst$ is adequate.
\item \Abs:
  Let
  $\judg{\tctx}{\strongabs{\ev}{\tm}{\tmtwo}}{\ev}$
  be derived from $\judg{\tctx}{\tm}{\typ\pp}$
  and $\judg{\tctx}{\tmtwo}{\typ\nn}$,
  and let $\judgSubst{\subst}{\env}{\tctx}$.
  By \ih we have that
  $\erase{\tm}^\subst \in \red{\typ\pp}{\env}$
  and
  $\erase{\tmtwo}^\subst \in \red{\typ\nn}{\env}$.
  Recall from~\rlem{red_opposite_orthogonal}
  that $\rcperp{\red{\typ\pp}{\env}}{\red{\typ\nn}{\env}}$.
  By~\rlem{adequacy_absurdity},
  $(\strongabs{}{\erase{\tm}^\subst}{\erase{\tmtwo}^\subst})
   \in \red{\ev}{\env}$.
\item \Icp:
  Let $\judg{\tctx}{\claslamp{(\var:\typ\NN)}{\tm}}{\typ\PP}$
  be derived from $\judg{\tctx, \var : \typ\NN}{\tm}{\typ\pp}$,
  and let $\judgSubst{\subst}{\env}{\tctx}$.
  By \ih, for every substitution $\subst'$
  such that $\judgSubst{\subst'}{\env}{\tctx,\var:\typ\NN}$
  we have that $\erase{\tm}^{\subst'} \in \red{\typ\pp}{\env}$.
  In particular, for every $\atm \in \red{\typ\NN}{\env}$
  we have that $\erase{\tm}^{\subst\esub{\var}{\atm}} \in \red{\typ\pp}{\env}$.
  Moreover, note that
  $\erase{\tm}^{\subst\esub{\var}{\atm}} =
   \erase{\tm}^{\subst\esub{\var}{\var}}\sub{\var}{\atm}$.
  Hence by~\rlem{adequacy_abstraction}
  we have that
  $\erase{\claslamp{(\var:\typ\NN)}{\tm}}^{\subst}
   = \lam{\var}{\erase{\tm}^{\subst\esub{\var}{\var}}}
   \in \red{\typ\NN}{\env}\rcimp\red{\typ\pp}{\env}$.
  To conclude, recall from~\rlem{reducible_terms_of_classical_type}
  that $\red{\typ\NN}{\env}\rcimp\red{\typ\pp}{\env} = \red{\typ\PP}{\env}$,
  so
  $\erase{\claslamp{(\var:\typ\NN)}{\tm}}^{\subst} \in \red{\typ\PP}{\env}$.
\item \Ecp:
  Let $\judg{\tctx}{\clasapp{\tm}{\tmtwo}}{\typ\pp}$
  be derived from $\judg{\tctx}{\tm}{\typ\PP}$
  and $\judg{\tctx}{\tmtwo}{\typ\NN}$,
  and let $\judgSubst{\subst}{\env}{\tctx}$.
  By \ih, $\erase{\tm}^\subst \in \red{\typ\PP}{\env}$
  and $\erase{\tmtwo}^\subst \in \red{\typ\NN}{\env}$.
  Recall from~\rlem{reducible_terms_of_classical_type}
  that $\red{\typ\PP}{\env} = \red{\typ\NN}{\env}\rcimp\red{\typ\pp}{\env}$,
  so
  $\erase{\clasapp{\tm}{\tmtwo}}^\subst
   = \ap{\erase{\tm}^\subst}{\erase{\tmtwo}^\subst}
   \in \red{\typ\pp}{\env}$.
\item \Iandp:
  Let $\judg{\tctx}{\pairp{\tm}{\tmtwo}}{(\typ \land \typtwo)\pp}$
  be derived from $\judg{\tctx}{\tm}{\typ\PP}$
  and $\judg{\tctx}{\tmtwo}{\typtwo\PP}$,
  and let $\judgSubst{\subst}{\env}{\tctx}$.
  By \ih, $\erase{\tm}^\subst \in \red{\typ\PP}{\env}$
  and $\erase{\tmtwo}^\subst \in \red{\typtwo\PP}{\env}$.
  By~\rlem{adequacy_pairing},
  $\pair{\erase{\tm}^\subst}{\erase{\tmtwo}^\subst}
   \in \red{\typ\PP}{\env}\rctimes\red{\typtwo\PP}{\env}
     = \red{(\typ\land\typtwo)\pp}{\env}$.
\item \Eandp:
  Let $\judg{\tctx}{\projip{\tm}}{\typ_i\PP}$
  be derived from
  $\judg{\tctx}{\tm}{(\typ_1 \land \typ_2)\pp}$
  for some $i \in \set{1, 2}$,
  and let $\judgSubst{\subst}{\env}{\tctx}$.
  By \ih,
  $\erase{\tm}^\subst
   \in \red{(\typ_1\land\typ_2)\pp}{\env}
     = \red{\typ_1\PP}{\env}\rctimes\red{\typ_2\PP}{\env}$.
  By \rlem{adequacy_projection},
  $\proji{\erase{\tm}^\subst} \in \red{\typ_i\PP}{\env}$.
\item \Iorp:
  Let $\judg{\tctx}{\inip{\tm}}{(\typ_1 \lor \typ_2)\pp}$
  be derived from $\judg{\tctx}{\tm}{\typ_i\PP}$
  for some $i \in \set{1, 2}$,
  and let $\judgSubst{\subst}{\env}{\tctx}$.
  By \ih, $\erase{\tm}^\subst \in \red{\typ_i\PP}{\env}$.
  By~\rlem{adequacy_injection},
  $\ini{\erase{\tm}^\subst}
   \in \red{\typ_1\PP}{\env}\rcplus\red{\typ_2\PP}{\env}
     = \red{(\typ_1\lor\typ_2)\pp}{\env}$.
\item \Eorp:
  Let
  $\judg{\tctx}{\casep{\tm}{\var:\typ\PP}{\tmtwo}{\vartwo:\typtwo\PP}{\tmthree}}{\ev}$
  be derived from
  $\judg{\tctx}{\tm}{(\typ \lor \typtwo)\pp}$
  and
  $\judg{\tctx, \var:\typ\PP}{\tmtwo}{\ev}$
  and
  $\judg{\tctx, \vartwo:\typtwo\PP}{\tmthree}{\ev}$,
  and let $\judgSubst{\subst}{\env}{\tctx}$.
  By \ih on the first premise,
  $\erase{\tm}^\subst
   \in \red{(\typ\lor\typtwo)\pp}{\env}
   = \red{\typ\PP}{\env}\rcplus\red{\typtwo\PP}{\env}$.
  By \ih on the second premise,
  we have that
  $\erase{\tmtwo}^{\subst'} \in \red{\ev}{\env}$
  for every substitution $\subst'$
  such that $\judgSubst{\subst'}{\env}{\tctx,\var:\typ\PP}$.
  In particular, for every $\atm \in \red{\typ\PP}{\env}$
  we have that
  $\erase{\tmtwo}^{\subst\esub{\var}{\atm}} \in \red{\ev}{\env}$.
  Moreover, note that
  $\erase{\tmtwo}^{\subst\esub{\var}{\atm}} =
   \erase{\tmtwo}^{\subst\esub{\var}{\var}}\sub{\var}{\atm}$.
  Similarly, by \ih on the third premise,
  for every $\atmtwo \in \red{\typtwo\PP}{\env}$,
  we have that
  $\erase{\tmthree}^{\subst\esub{\vartwo}{\vartwo}}\sub{\vartwo}{\atmtwo}
   \in \red{\ev}{\env}$.
  Hence by~\rlem{adequacy_case},
  we have that
  $\erase{\casep{\tm}{\var:\typ\PP}{\tmtwo}{\vartwo:\typtwo\PP}{\tmthree}}^\subst
  = \case{\erase{\tm}^\subst}{\var}{
      \erase{\tmtwo}^{\subst\esub{\var}{\var}}
    }{\vartwo}{
      \erase{\tmthree}^{\subst\esub{\vartwo}{\vartwo}}
    }
  \in \red{\ev}{\env}$,
  as required.
\item \Iimpp:
  Let
  $\judg{\tctx}{\lamp{(\var:\typ\PP)}{\tm}}{(\typ\imp\typtwo)\pp}$
  be derived from
  $\judg{\tctx,\var:\typ\PP}{\tm}{\typtwo\PP}$,
  and let $\judgSubst{\subst}{\env}{\tctx}$.
  By \ih, we have that $\erase{\tm}^{\subst'} \in \red{\typtwo\PP}{\env}$
  for every substitution $\subst'$ such that
  $\judgSubst{\subst'}{\env}{\tctx,\var:\typ\PP}$.
  In particular, for every $\atm \in \red{\typ\PP}{\env}$
  we have that
  $\erase{\tm}^{\subst\esub{\var}{\atm}} \in \red{\typtwo\PP}{\env}$.
  Moreover, note that
  $\erase{\tm}^{\subst\esub{\var}{\atm}} =
   \erase{\tm}^{\subst\esub{\var}{\var}}\sub{\var}{\atm}$.
  Hence by~\rlem{adequacy_abstraction}
  we have that
  $\erase{\lamp{(\var:\typ\PP)}{\tm}}^\subst =
   \lam{\var}{\erase{\tm}^{\subst\esub{\var}{\var}}}
   \in (\red{\typ\PP}{\env}\rcimp\red{\typtwo\PP}{\env})
     = \red{(\typ\imp\typtwo)\pp}{\env}$.
\item \Eimpp:
  Let
  $\judg{\tctx}{\app{\tm}{\tmtwo}}{\typtwo\PP}$
  be derived from $\judg{\tctx}{\tm}{(\typ\imp\typtwo)\pp}$
  and $\judg{\tctx}{\tmtwo}{\typ\PP}$,
  and let $\judgSubst{\subst}{\env}{\tctx}$.
  By \ih on the first premise,
  $\erase{\tm}^\subst
   \in \red{(\typ\imp\typtwo)\pp}{\env}
     = \red{\typ\PP}{\env}\rcimp\red{\typtwo\PP}{\env}$.
  By \ih on the second premise,
  $\erase{\tmtwo}^\subst \in \red{\typ\PP}{\env}$.
  By definition of the arrow operator,
  $\ap{\erase{\tm}^\subst}{\erase{\tmtwo}^\subst} \in \red{\typtwo\PP}{\env}$,
  as required.
\item \Icoimpp:
  Let $\judg{\tctx}{\copairp{\tm}{\tmtwo}}{(\typ\coimp\typtwo)\pp}$
  be derived from $\judg{\tctx}{\tm}{\typ\NN}$
  and $\judg{\tctx}{\tmtwo}{\typtwo\PP}$.
  By \ih, $\erase{\tm}^\subst \in \red{\typ\NN}{\env}$
  and $\erase{\tmtwo}^\subst \in \red{\typtwo\PP}{\env}$.
  By~\rlem{adequacy_pairing},
  $\copair{\erase{\tm}^\subst}{\erase{\tmtwo}^\subst}
   \in \red{\typ\NN}{\env}\rctimes\red{\typtwo\PP}{\env}
     = \red{(\typ\coimp\typtwo)\pp}{\env}$.
\item \Ecoimpp:
  Let $\judg{\tctx}{\colamp{\tm}{\var}{\vartwo}{\tmtwo}}{\ev}$
  be derived from
  $\judg{\tctx}{\tm}{(\typ \coimp \typtwo)\pp}$
  and
  $\judg{\tctx,\var:\typ\NN,\vartwo:\typtwo\PP}{\tmtwo}{\ev}$,
  and let $\judgSubst{\subst}{\env}{\tctx}$.
  By \ih on the first premise,
  $\erase{\tm}^\subst \in \red{(\typ\coimp\typtwo)\pp}{\env}
                        = \red{\typ\NN}{\env}\coimp\red{\typtwo\PP}{\env}$.
  By \ih on the second premise,
  we have that
  $\erase{\tmtwo}^{\subst'} \in \red{\ev}{\env}$
  for every substitution $\subst'$
  such that $\judgSubst{\subst'}{\env}{\tctx,\var:\typ\NN,\vartwo:\typtwo\PP}$.
  In particular,
  for every $\atm_1\in\red{\typ\NN}{\env}$
  and every $\atm_2\in\red{\typtwo\PP}{\env}$
  we have that
  $\erase{\tmtwo}^{\subst\esub{\var}{\atm_1}\esub{\vartwo}{\atm_2}} \in \red{\ev}{\env}$
  Moreover, note that
  $\erase{\tmtwo}^{\subst\esub{\var}{\atm_1}\esub{\vartwo}{\atm_2}}
  = \erase{\tmtwo}^{\subst\esub{\var}{\var}\esub{\vartwo}{\vartwo}}\sub{\var}{\atm_1}\sub{\vartwo}{\atm_2}$.
  Hence by \rlem{adequacy_coimplication_elimination},
  we have that
  $\erase{\colamp{\tm}{\var}{\vartwo}{\tmtwo}}^{\subst}
  = \colam{\erase{\tm}^{\subst}}{\var}{\vartwo}{\erase{\tmtwo}^{\subst\esub{\var}{\var}\esub{\vartwo}{\vartwo}}}
  \in \red{\ev}{\env}$,
  as required.
\item \Inotp:
  Let $\judg{\tctx}{\negip{\tm}}{(\neg\typ)\pp}$
  be derived from $\judg{\tctx}{\tm}{\typ\NN}$,
  and let $\judgSubst{\subst}{\env}{\tctx}$.
  By \ih, $\erase{\tm}^\subst \in \red{\typ\NN}{\env}$.
  By \rlem{adequacy_negi},
  $\negi{\erase{\tm}^\subst}
   \in \rcneg\red{\typ\NN}{\env}
     = \red{(\neg\typ)\pp}{\env}$.
\item \Enotp:
  Let $\judg{\tctx}{\negep{\tm}}{\typ\NN}$
  be derived from $\judg{\tctx}{\tm}{(\neg\typ)\pp}$,
  and let $\judgSubst{\subst}{\env}{\tctx}$.
  By \ih,
  $\erase{\tm}^\subst
   \in \red{(\neg\typ)\pp}{\env}
     = \rcneg\red{\typ\NN}{\env}$.
  By \rlem{adequacy_nege},
  $\nege{\erase{\tm}^\subst} \in \red{\typ\NN}{\env}$.
\item \Iallp:
  Let
    $\tctx \vdash \lamtp{\btyp}{\tm} : (\all{\btyp}{\typ})\pp$
  be derived from
    $\tctx \vdash \tm : \typ\PP$,
  where we assume that $\btyp \notin \ftv{\tctx}$,
  and let $\judgSubst{\subst}{\env}{\tctx}$.
  By \ih,
  for every environment $\env'$
  and every substitution $\subst'$
  such that $\judgSubst{\subst'}{\env'}{\tctx}$,
  we have that $\erase{\tm}^{\subst'} \in \red{\typ\PP}{\env'}$.
  In particular, consider two arbitrary orthogonal reducibility candidates
  $(\rc\pp,\rc\nn) \in \RCPerp$,
  consider the environment $\env' = \env\esub{\btyp}{\rc\pp,\rc\nn}$,
  and observe that $\judgSubst{\subst}{\env'}{\tctx}$,
  because if $(\var:\ev)\in\tctx$
  by irrelevance~\rlem{reducible_terms_irrelevance}
  $\subst(\var) \in \red{\typ\PP}{\env} = \red{\typ\PP}{\env'}$
  since $\btyp\notin\ftv{\ev}$.
  Then
  $\erase{\tm}^{\subst} \in \red{\typ\PP}{\env\esub{\btyp}{\rc\pp,\rc\nn}}$,
  where $(\rc\pp,\rc\nn) \in \RCPerp$ are arbitrary.
  To conclude note that, by \rlem{adequacy_type_abstraction},
  $\lamtu{\erase{\tm}^{\subst}}
   \in \rcall{(\rc\pp,\rc\nn) \in \RCPerp}{
         \red{\typ\PP}{\env\esub{\btyp}{\rc\pp,\rc\nn}}
       }
     = \red{(\all{\btyp}{\typ})\pp}{\env}$.
\item \Eallp:
  Let
    $\tctx \vdash \apptp{\tm}{\typ} : \typtwo\PP\sub{\btyp}{\typ}$
  be derived from
    $\tctx \vdash \tm : (\all{\btyp}{\typtwo})\pp$,
  and let $\judgSubst{\subst}{\env}{\tctx}$.
  By \ih,
  $\erase{\tm}^\subst
   \in \red{(\all{\btyp}{\typtwo})\pp}{\env}
     = \rcall{(\rc\pp,\rc\nn)\in\RCPerp}{
         \red{\typtwo\PP}{\env\esub{\btyp}{\rc\pp,\rc\nn}}
       }$.
  By definition of the indexed product, this means that
  $\apptu{\erase{\tm}^\subst}
   \in \red{\typtwo\PP}{\env\esub{\btyp}{\rc_0\pp,\rc_0\nn}}$
  for an arbitrary choice of
  $(\rc_0\pp,\rc_0\nn)\in\RCPerp$.
  Recall from~\rlem{red_opposite_orthogonal}
  that $\rcperp{\red{\typ\pp}{\env}}{\red{\typ\nn}{\env}}$
  so, in particular,
  taking $\rc_0\pp := \red{\typ\pp}{\env}$
  and $\rc_0\nn := \red{\typ\nn}{\env}$,
  we have that
  $\apptu{\erase{\tm}^\subst}
   \in \red{\typtwo\PP}{
         \env\esub{\btyp}{\red{\typ\pp}{\env},\red{\typ\nn}{\env}}
       }$.
  To conclude, observe that by~\rlem{reducible_terms_substitution}
  $\red{\typtwo\PP}{
         \env\esub{\btyp}{\red{\typ\pp}{\env},\red{\typ\nn}{\env}}
    } = \red{\typtwo\PP\sub{\btyp}{\typ}}{\env}$.
  Hence
  $\apptu{\erase{\tm}^\subst}
   \in \red{\typtwo\PP\sub{\btyp}{\typ}}{\env}$,
  as required.
\item \Iexp:
  Let $\tctx \vdash \patp{\typ}{\tm} : (\ex{\btyp}{\typtwo})\pp$
  be derived from $\tctx \vdash \tm : \typtwo\PP\sub{\btyp}{\typ}$,
  and let $\judgSubst{\subst}{\env}{\tctx}$. 
  By \ih, $\erase{\tm}^{\subst} \in \red{\typtwo\PP\sub{\btyp}{\typ}}{\env}$.
  By~\rlem{reducible_terms_substitution}, note that:\\
  $
   \red{\typtwo\PP\sub{\btyp}{\typ}}{\env} =
   \red{\typtwo\PP}{\env\esub{\btyp}{\red{\typ\pp}{\env},\red{\typ\nn}{\env}}}
  $\\
  so
  $\erase{\tm}^{\subst}
   \in \red{\typtwo\PP}{
         \env\esub{\btyp}{\red{\typ\pp}{\env},\red{\typ\nn}{\env}}
       }$.
  Recall from~\rlem{red_opposite_orthogonal}
  that $\rcperp{\red{\typ\pp}{\env}}{\red{\typ\nn}{\env}}$.
  Hence, by~\rlem{adequacy_existential_introduction},
  $\patu{\erase{\tm}^{\subst}}
   \in \rcex{(\rc\pp,\rc\nn)\in\RCPerp}{
         \red{\typtwo\PP}{\env\esub{\btyp}{\rc\pp,\rc\nn}}
       }
     = \red{(\ex{\btyp}{\typtwo})\pp}{\env}$.
\item \Eexp:
  Let $\tctx \vdash \optp{\tm}{\btyp}{\var}{\tmtwo} : \ev$
  be derived from $\tctx \vdash \tm : (\ex{\btyp}{\typ})\pp$
  and $\tctx,\var:\typ\PP \vdash \tmtwo : \ev$,
  where we assume that $\btyp \notin \ftv{\tctx,\ev}$.
  Moreover, let $\judgSubst{\subst}{\env}{\tctx}$.
  By \ih on the first premise, we have that:
  \begin{center}
  $
    \erase{\tm}^{\subst}
    \in \red{(\ex{\btyp}{\typ})\pp}{\env}
       = \rcex{(\rc\pp,\rc\nn)\in\RCPerp}{
           \red{\typ\PP}{\env\esub{\btyp}{\rc\pp,\rc\nn}}
         }
  $
  \end{center}
  By \ih on the second premise,
  for every environment $\env'$ and every substitution $\subst'$
  such that $\judgSubst{\subst'}{\env'}{\tctx,\var:\typ\PP}$,
  we have that
  $\erase{\tmtwo}^{\subst'} \in \red{\ev}{\env'}$.
  In particular, consider two arbitrary orthogonal reducibility
  candidates $(\rc\pp,\rc\nn) \in \RCPerp$,
  and consider the environment $\env' = \env\esub{\btyp}{\rc\pp,\rc\nn}$.
  By hypothesis $\btyp\notin\ftv{\ev}$,
  so by irrelevance~(\rlem{reducible_terms_irrelevance})
  we have that $\red{\ev}{\env} = \red{\ev}{\env'}$.
  Moreover, consider an arbitrary term $\atm \in \red{\typ\PP}{\env'}$,
  and consider the substitution $\subst' = \subst\esub{\var}{\atm}$.

  We argue that $\judgSubst{\subst'}{\env'}{\tctx,\var:\typ\PP}$ holds.
  To see this, note, on one hand,
  that $\subst'(\var) = \atm \in \red{\typ\PP}{\env'}$.
  On the other hand, given an association $(\vartwo:\evtwo)\in\tctx$
  for a variable other than $\var$,
  note that by hypothesis
  $\subst'(\vartwo) = \subst(\vartwo) \in \red{\evtwo}{\env}$.
  Moreover, by hypothesis $\btyp\notin\ftv{\evtwo}$,
  so by irrelevance~(\rlem{reducible_terms_irrelevance})
  we have that $\red{\evtwo}{\env} = \red{\evtwo}{\env'}$.

  In summary, the \ih on the second premise tells us that
  for all $(\rc\pp,\rc\nn) \in \RCPerp$,
  and for all $\atm \in \red{\typ\PP}{\env\esub{\btyp}{\rc\pp,\rc\nn}}$,
  we have
  $\erase{\tmtwo}^{\subst\esub{\var}{\atm}} \in \red{\ev}{\env}$.
  Note that:
  \begin{center}
  $\erase{\tmtwo}^{\subst\esub{\var}{\atm}} =
   \erase{\tmtwo}^{\subst\esub{\var}{\var}}\sub{\var}{\atm}$
  \end{center}
  Finally, by \rlem{adequacy_existential_elimination},
  we conclude that
  $\erase{\optp{\tm}{\btyp}{\var}{\tmtwo}}^\subst
  = \optu{\erase{\tm}^\subst}{\var}{\erase{\tmtwo}^{\subst\esub{\var}{\var}}}
  \in \red{\ev}{\env}$,
  as required.

\end{enumerate}
\end{proof}

\begin{corollary}[Strong normalization]
Let $\judg{\tctx}{\tm}{\ev}$.
Then $\tm$ is strongly normalizing.
\end{corollary}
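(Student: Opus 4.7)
The plan is to obtain this as an immediate consequence of the Adequacy Theorem~\rthm{adequacy}, combined with the fact that type erasure preserves non-termination~\rlem{erasure_preserves_non_termination}. The strategy is simply to exhibit a canonical environment and substitution under which adequacy applies to $\tm$, and then transfer the resulting membership in a reducibility candidate back to strong normalization of $\tm$ itself.

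First, I would pick the environment $\env_0$ that assigns every pair of type variables to the bottom reducibility candidate, namely $\env_0(\btyp\pp) = \env_0(\btyp\nn) = \rcbot$ for every $\btyp$. This is a well-defined environment because, as noted in the discussion following the definition of $\RCPerp$, the pair $(\rcbot,\rcbot)$ is orthogonal. Second, I would take $\subst_0$ to be the identity substitution, sending each variable to itself. The only thing to check is that $\judgSubst{\subst_0}{\env_0}{\tctx}$ holds: for each $(\var:\ev)\in\tctx$ we need $\var \in \red{\ev}{\env_0}$. This follows from the general observation, made right after the definition of reducibility candidates, that every $\rc\in\RCSet$ contains all variables—a variable is strongly normalizing and has no canonical reducts, so by completeness it belongs to every $\rc$, and in particular to $\red{\ev}{\env_0}$.

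With these choices in hand, the Adequacy Theorem~\rthm{adequacy} yields $\erase{\tm} = \erase{\tm}^{\subst_0} \in \red{\ev}{\env_0}$. Since every $\red{\ev}{\env_0}$ is by construction a subset of $\SNTerms$, the untyped term $\erase{\tm}$ is strongly normalizing with respect to $\tou$. Applying~\rlem{erasure_preserves_non_termination} then transfers this to $\tm$: any infinite $\toa{}$-reduction from $\tm$ would project to an infinite $\tou$-reduction from $\erase{\tm}$, which is impossible. There is no real obstacle here, as all the difficult work—building the model, verifying closure under the operators, and proving adequacy clause by clause—has already been carried out; the corollary is essentially a packaging step that instantiates adequacy with the simplest admissible data.
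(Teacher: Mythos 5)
Your proof is correct and follows exactly the paper's own argument: instantiate adequacy with the all-$\rcbot$ environment (admissible since $(\rcbot,\rcbot)\in\RCPerp$) and the identity substitution (adequate because every variable lies in every reducibility candidate), then transfer strong normalization of $\erase{\tm}$ back to $\tm$ via the erasure lemma. No differences worth noting.
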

\begin{proof}
Consider the environment $\env$ that maps all type variables
to the bottom reducibility candidate,
that is $\env(\btyp\pp) = \env(\btyp\nn) = \rcbot$
for every type variable $\btyp$.
Note that this is indeed an environment because
$(\rcbot,\rcbot) \in \RCPerp$.
Moreover, let $\subst$ be the identity substitution,
that is, $\subst(\var) = \var$ for all variables $\var$.
Remark that $\judgSubst{\subst}{\env}{\tctx}$ 
because a variable $\var$ is in any reducibility candidate,
so if $(\var:\evtwo)\in\tctx$, then indeed $\var \in \red{\evtwo}{\env}$.
Then, by adequacy~\rthm{adequacy}, we have that
$\erase{\tm} \in \red{\ev}{\env}$,
so in particular $\erase{\tm}$ is strongly normalizing with respect to $\tou$.
Finally, recall from~\llem{erasure_preserves_non_termination}
that this entails that $\tm$ is strongly normalizing with respect to $\toa{}$.
\end{proof}

\newpage
\section{Admissible Principles in $\lambdaPRJ$}
\lsec{appendix:prj_principles}

\begin{definition}{$\varset$-intuitionistic term}
If $\varset$ is an arbitrary set of variables,
a term $\tm$ is {\em $\varset$-intuitionistic}
if it is intuitionistic and, furthermore,
it has no useful free occurrences of variables in $\varset$.
Remark that a term is intuitionistic if and only if it is
$\emptyset$-intuitionistic.

If $\varset$ is an arbitrary set of variables,
we generalize the judgment $\judgPRJ{\tctx}{\tm}{\typ}$
by writing $\judgPRJ[\varset]{\tctx}{\tm}{\typ}$,
if the judgment is derivable in $\PRK$ and $\tm$ is $\varset$-intuitionistic.
\end{definition}

\begin{lemma}[Admissible Principles in $\lambdaPRJ$]
\llem{prj_admissible_rules}
\quad
\begin{enumerate}
\item {\bf Counterfactual weakening.}
  If $\judgPRJ[\varset]{\tctx}{\tm}{\ev}$
  and $\varset \supseteq \varset'$
  then $\judgPRJ[\varset']{\tctx}{\tm}{\ev}$.
\item {\bf Weakening} ($\Weakening$):
  If $\judgPRJ[\varset]{\tctx}{\tm}{\ev}$
  and $\var \notin \fv{\tm}$
  then $\judgPRJ[\varset]{\tctx, \var:\evtwo}{\tm}{\ev}$.
\item {\bf Intuitionistic cut} ($\ICut$):
  If $\judgPRJ[\varset]{\tctx,\var:\ev}{\tm}{\evtwo}$
  and $\judgPRJ[\varset]{\tctx}{\tmtwo}{\ev}$
  then $\judgPRJ[\varset]{\tctx}{\tm\sub{\var}{\tmtwo}}{\evtwo}$.
\item {\bf Counterfactual cut} ($\CCut$):
  If $\judgPRJ[\varset\cup\set{\var}]{\tctx,\var:\ev}{\tm}{\evtwo}$
  and $\judgPRK{\tctx}{\tmtwo}{\ev}$,
  then $\judgPRJ[\varset]{\tctx}{\tm\sub{\var}{\tmtwo}}{\evtwo}$.
\item {\bf Generalized absurdity} ($\Abs'$):
  If $\judgPRJ[\varset]{\tctx}{\tm}{\ev}$
  and $\judgPRJ[\varset]{\tctx}{\tmtwo}{\ev\OP}$,
  where $\ev$ is not necessarily strong,
  there is a term $\abs{\evtwo}{\tm}{\tmtwo}$
  such that $\judgPRJ[\varset]{\tctx}{\abs{\evtwo}{\tm}{\tmtwo}}{\evtwo}$.
\item {\bf Intuitionistic contraposition} ($\IContrapose$):
  Let $\judgPRJ[\varset]{\tctx, \var : \typ\PP}{\tm}{\evtwo}$
  and suppose that $\vartwo\notin\varset\cup\fv{\tm}$.
  Then there is a term $\icontrapose{\var}{\vartwo}{\tm}$
  such that
  $\judgPRJ[\varset]{\tctx, \vartwo : \evtwo\OP}{\icontrapose{\var}{\vartwo}{\tm}}{\typ\NN}$.
\item {\bf Counterfactual contraposition} ($\CContrapose$):
  $\judgPRJ[\varset\cup\set{\var}]{\tctx, \var : \typ\NN}{\tm}{\evtwo}$,
  there is a term $\ccontrapose{\var}{\vartwo}{\tm}$
  such that
  $\judgPRJ[\varset]{\tctx, \vartwo : \evtwo\OP}{\ccontrapose{\var}{\vartwo}{\tm}}{\typ\PP}$.
\item {\bf Weak negation introduction}:
  If $\judgPRJ{\tctx}{\tm}{\typ\NN}$,
  there is a term $\negiP{\tm}$
  such that $\judgPRJ{\tctx}{\negiP{\tm}}{(\neg\typ)\PP}$.
\item {\bf Weak negation elimination}:
  If $\judgPRJ{\tctx}{\tm}{(\neg\typ)\PP}$,
  there is a term $\negeP{\tm}$
  such that $\judgPRJ{\tctx}{\negeP{\tm}}{\typ\NN}$.
\end{enumerate}
\end{lemma}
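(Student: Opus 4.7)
The plan is to establish these nine admissible principles in the order presented, so that each construction may rely on the earlier ones. Items~1 and 2 are essentially bookkeeping: item~1 is immediate because the predicate ``$\tm$ has no useful free occurrences of variables in $\varset$'' is antimonotone in $\varset$, and item~2 is standard weakening, since $\varset$-intuitionisticity is a purely syntactic condition on $\tm$ itself that does not mention the typing context.

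The substantive content lies in the two cut rules. For intuitionistic cut (item~3) I would proceed by induction on the structure of $\tm$. The key point is that $\tmtwo$ is itself $\varset$-intuitionistic, so when copies of $\tmtwo$ are substituted for $\var$, the resulting term is still $\varset$-intuitionistic regardless of whether the targeted occurrences of $\var$ lie in useful or useless positions; this relies on the observation that usefulness is a property of the enclosing context of a hole and is not affected by what fills the hole. For counterfactual cut (item~4) the hypothesis $\judgPRJ[\varset\cup\set{\var}]{\tctx,\var:\ev}{\tm}{\evtwo}$ ensures that every free occurrence of $\var$ in $\tm$ is useless, so every copy of $\tmtwo$ introduced by the substitution lands inside a context of the shape $\gctx_1\ctxof{\clasapp{\tmthree}{\gctx_2\ctxof{\ctxhole}}}$, which is a useless region of the ambient term. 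Consequently, it suffices for $\tmtwo$ to be merely typable in $\PRK$ (and not intuitionistic) for $\tm\sub{\var}{\tmtwo}$ to be $\varset$-intuitionistic; this is precisely the design principle of $\lambdaPRJ$ and it is also the rule already exploited in the preceding subject reduction proof.

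Items~5--9 are construction lemmas whose correctness reduces to a syntactic check of intuitionisticity. Generalized absurdity (item~5) reuses the operator $\abs{\evtwo}{\tm}{\tmtwo}$ from \rsec{second_order_prk}: in each of its four cases the constructed term is a $\strongabs{\evtwo}{}{}$ possibly wrapped around $\clasapp{}{}$ and $\clasapn{}{}$ applications, whose intuitionisticity follows inductively from that of $\tm$ and $\tmtwo$. Intuitionistic contraposition (item~6) is obtained by $\icontrapose{\var}{\vartwo}{\tm} \eqdef \claslamn{(\var:\typ\PP)}{(\abs{\typ\nn}{\tm}{\vartwo})}$, which introduces no negative counterfactual and so trivially preserves the invariant. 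Counterfactual contraposition (item~7) is $\ccontrapose{\var}{\vartwo}{\tm} \eqdef \claslamp{(\var:\typ\NN)}{(\abs{\typ\pp}{\tm}{\vartwo})}$, where the crucial check is that the freshly bound negative counterfactual $\var$ has no useful occurrences in the body; this is supplied directly by the hypothesis $\var\in\varset\cup\set{\var}$. Finally, weak negation introduction (item~8) uses $\claslamp{\under}{\negip{\tm}}$, which binds an unused counterfactual, and weak negation elimination (item~9) uses the term from the classical construction, whose bound negative counterfactual appears only as the argument of $\clasapp{}{}$ and $\clasapn{}{}$, hence uselessly.

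The main obstacle is the inductive argument for item~4: making precise the intuition that useless positions are stable under substitution requires careful context composition bookkeeping, and handling the case of a subterm of $\tm$ of the form $\claslamp{(\vartwo:\typtwo\NN)}{\tm'}$ requires checking that the $\varset$-extension shift interacts cleanly with substitution. Once this case analysis is in place, the other items follow by routine typing derivations together with the inductive intuitionisticity check.
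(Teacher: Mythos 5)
Your proposal is correct and follows essentially the same route as the paper: items 1--4 by induction on the derivation (with the same key observation that counterfactual cut only needs $\tmtwo$ to be $\PRK$-typable because all substituted copies land in useless positions), and items 5--9 by exactly the constructions the paper uses, with the same intuitionisticity checks. One tiny imprecision: for weak negation elimination the bound variable is a \emph{positive} counterfactual (bound by a negative weak introduction), so condition~2 on useless negative counterfactuals imposes nothing on it --- and note that an argument of a \emph{negative} weak elimination is not useless under the paper's definition --- but this does not affect the correctness of the construction.
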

\begin{proof}
{\bf Counterfactual weakening}, {\bf weakening}, {\bf cut},
and {\bf counterfactual cut} are straightforward by induction
on the derivation of the given judgment.

For {\bf generalized absurdity}, it suffices to take:
\[
  \abs{\evtwo}{\tm}{\tmtwo} \eqdef
    \begin{cases}
      \strongabs{\evtwo}{\tm}{\tmtwo}
      & \text{if $\ev = \typ\pp$} \\
      \strongabs{\evtwo}{\tmtwo}{\tm}
      & \text{if $\ev = \typ\nn$} \\
      \strongabs{\evtwo}{(\clasapp{\tm}{\tmtwo})}{(\clasapn{\tmtwo}{\tm})}
      & \text{if $\ev = \typ\PP$} \\
      \strongabs{\evtwo}{(\clasapp{\tmtwo}{\tm})}{(\clasapn{\tm}{\tmtwo})}
      & \text{if $\ev = \typ\NN$} \\
    \end{cases}
\]

For {\bf intuitionistic contraposition}, take the term below.
Observe that no condition must be imposed on the free occurrences
of $\var$ in $\tm$, because $\var$ is a {\em positive} counterfactual:
\[
  \icontrapose{\var}{\vartwo}{\tm} \eqdef
    \claslamn{(\var:\typ\PP)}{
      (\abs{\typ\nn}{
        \tm
      }{ 
        \vartwo
      })
    }
\]

For {\bf counterfactual contraposition}, take the term below.
Observe that $\var$ is a {\em negative} counterfactual, and this is
why we require in the hypothesis that $\var$ has no useful free occurrences
in~$\tm$:
\[
  \ccontrapose{\var}{\vartwo}{\tm} \eqdef
    \claslamp{(\var:\typ\NN)}{
      (\abs{\typ\pp}{
        \tm
      }{ 
        \vartwo
      })
    }
\]

For {\bf weak negation introduction},
let $\judgPRJ{\tctx}{\tm}{\typ\NN}$ and take the term below.
Observe that there are no free occurrences of the negative counterfactual
in $\negip{\tm}$:
\[
  \negiP{\tm} \eqdef
  \claslamp{(\under:(\neg\typ)\NN)}{
    \negip{
      \tm
    }
  }
\]

For {\bf weak negation elimination},
let $\judgPRJ{\tctx}{\tm}{(\neg\typ)\PP}$ and take
the term below. Observe that all the occurrences
of the negative counterfactual $\var$ are useless:
\[
  \negeP{\tm} \eqdef
  \claslamn{(\var:\typ\PP)}{
    \clasappn{
      \negep{
        (\clasapp{
          \tm
        }{
          \claslamn{\under:(\neg\typ)\PP}{
            \negin{
              \var
            }
          }
        })
      }
    }{
      \var
    }
  }
\]
\end{proof}

\newpage
\section{An Alternative Presentation of $\lambdaPRJ$: the $\lambdaPRJV$ Type System}
\lsec{appendix:prjv_definition}

The $\lambdaPRJ$ type system
is formulated by imposing a global restriction on terms,
forbidding that subterms of specific forms occur in specific positions.
Alternatively, valid judgments in $\PRJ$ may be defined inductively
with inference rules, as shown in this section.

\begin{definition}[The $\lambdaPRJV$ type system]
The type system $\lambdaPRJV$
is defined with typing judgments of the form ``$\judgV{\tctx}{\tm}{\ev}$''
for each set of variables $\varset$
by means of inductive rules below.
We write $\judgPRJV{\tctx}{\tm}{\typ}$
if $\judgV{\tctx}{\tm}{\typ}$ is derivable in $\PRJV$.

\begin{fragBox}{Basic rules}
\[
\indrule{\Ax\PRJVsym}{
  \var\notin\varset
}{
  \judgV{\tctx,\var:\ev}{\var}{\ev}
}
\indrule{\Abs\PRJVsym}{
  \judgV{\tctx}{\tm}{\typ\pp}
  \HS
  \judgV{\tctx}{\tmtwo}{\typ\nn}
}{
  \judgV{\tctx}{\strongabs{\ev}{\tm}{\tmtwo}}{\ev}
}
\]
\[
\indrule{\Icp\PRJVsym}{
  \judgV[\varset\cup\set{\var}]{\tctx, \var : \typ\NN}{\tm}{\typ\pp}
}{
  \judgV{\tctx}{\claslamp{(\var:\typ\NN)}{\tm}}{\typ\PP}
}
\indrule{\Icn\PRJVsym}{
  \judgV{\tctx, \var : \typ\PP}{\tm}{\typ\nn}
}{
  \judgV{\tctx}{\claslamp{(\var:\typ\PP)}{\tm}}{\typ\NN}
}
\]
\[
\indrule{\Ecp\PRJVsym}{
  \judgV{\tctx}{\tm}{\typ\PP}
  \HS
  \judgPRK{\tctx}{\tmtwo}{\typ\NN}
}{
  \judgV{\tctx}{\clasapp{\tm}{\tmtwo}}{\typ\pp}
}
\indrule{\Ecn\PRJVsym}{
  \judgV{\tctx}{\tm}{\typ\NN}
  \HS
  \judgV{\tctx}{\tmtwo}{\typ\PP}
}{
  \judgV{\tctx}{\clasapn{\tm}{\tmtwo}}{\typ\nn}
}
\]
\end{fragBox}

\begin{fragBox}{Conjunction and disjunction}
\[
\indrule{\colorand{\Iandp\PRJVsym}}{
  \judgV{\tctx}{\tm}{\typ\PP}
  \HS
  \judgV{\tctx}{\tmtwo}{\typtwo\PP}
}{
  \judgV{\tctx}{\pairp{\tm}{\tmtwo}}{(\typ \land \typtwo)\pp}
}
\indrule{\colorand{\Iorn\PRJVsym}}{
  \judgV{\tctx}{\tm}{\typ\NN}
  \HS
  \judgV{\tctx}{\tmtwo}{\typtwo\NN}
}{
  \judgV{\tctx}{\pairn{\tm}{\tmtwo}}{(\typ \lor \typtwo)\nn}
}
\]
\[
\indrule{\colorand{\Eandp\PRJVsym}}{
  \judgV{\tctx}{\tm}{(\typ_1 \land \typ_2)\pp}
}{
  \judgV{\tctx}{\projip{\tm}}{\typ_i\PP}
}
\indrule{\colorand{\Eorn\PRJVsym}}{
  \judgV{\tctx}{\tm}{(\typ_1 \lor \typ_2)\nn}
}{
  \judgV{\tctx}{\projin{\tm}}{\typ_i\NN}
}
\]
\[
\indrule{\colorand{\Iorp\PRJVsym}}{
  \judgV{\tctx}{\tm}{\typ_i\PP}
}{
  \judgV{\tctx}{\inip{\tm}}{(\typ_1 \lor \typ_2)\pp}
}
\indrule{\colorand{\Iandn\PRJVsym}}{
  \judgV{\tctx}{\tm}{\typ_i\NN}
}{
  \judgV{\tctx}{\inin{\tm}}{(\typ_1 \land \typ_2)\nn}
}
\]
\[
\indrule{\colorand{\Eorp\PRJVsym}}{
  \judgV{\tctx}{\tm}{(\typ \lor \typtwo)\pp}
  \HS
  \judgV{\tctx, \var:\typ\PP}{\tmtwo}{\ev}
  \HS
  \judgV{\tctx, \vartwo:\typtwo\PP}{\tmthree}{\ev}
}{
  \judgV{\tctx}{\casep{\tm}{\var:\typ\PP}{\tmtwo}{\vartwo:\typtwo\PP}{\tmthree}}{\ev}
}
\]
\end{fragBox}

\begin{fragBox}{Implication and co-implication}
\[
\indrule{\colorimp{\Iimpp\PRJVsym}}{
  \judgV{\tctx,\var:\typ\PP}{\tm}{\typtwo\PP}
}{
  \judgV{\tctx}{\lamp{(\var:\typ\PP)}{\tm}}{(\typ\imp\typtwo)\pp}
}
\indrule{\colorimp{\Icoimpn\PRJVsym}}{
  \judgV{\tctx,\var:\typ\NN}{\tm}{\typtwo\NN}
}{
  \judgV{\tctx}{\lamn{(\var:\typ\NN)}{\tm}}{(\typ\coimp\typtwo)\nn}
}
\]
\[
\indrule{\colorimp{\Eimpp\PRJVsym}}{
  \judgV{\tctx}{\tm}{(\typ\imp\typtwo)\pp}
  \HS
  \judgV{\tctx}{\tmtwo}{\typ\PP}
}{
  \judgV{\tctx}{\app{\tm}{\tmtwo}}{\typtwo\PP}
}
\indrule{\colorimp{\Ecoimpn\PRJVsym}}{
  \judgV{\tctx}{\tm}{(\typ\coimp\typtwo)\nn}
  \HS
  \judgV{\tctx}{\tmtwo}{\typ\NN}
}{
  \judgV{\tctx}{\apn{\tm}{\tmtwo}}{\typtwo\NN}
}
\]
\[
\indrule{\colorimp{\Icoimpp\PRJVsym}}{
  \judgV{\tctx}{\tm}{\typ\NN}
  \HS
  \judgV{\tctx}{\tmtwo}{\typtwo\PP}
}{
  \judgV{\tctx}{\copairp{\tm}{\tmtwo}}{(\typ\coimp\typtwo)\pp}
}
\indrule{\colorimp{\Iimpn\PRJVsym}}{
  \judgV{\tctx}{\tm}{\typ\PP}
  \HS
  \judgV{\tctx}{\tmtwo}{\typtwo\NN}
}{
  \judgV{\tctx}{\copairn{\tm}{\tmtwo}}{(\typ\imp\typtwo)\nn}
}
\]
\medskip
\[
\indrule{\colorimp{\Ecoimpp\PRJVsym}}{
  \judgV{\tctx}{\tm}{(\typ\coimp\typtwo)\pp}
  \HS
  \judgV{\tctx,\var:\typ\NN,\vartwo:\typtwo\PP}{\tmtwo}{\ev}
}{
  \judgV{\tctx}{\colamp{\tm}{\var}{\vartwo}{\tmtwo}}{\ev}
}
\]
\end{fragBox}

\newpage
\begin{fragBox}{Negation}
\[
\indrule{\colorneg{\Inotp\PRJVsym}}{
  \judgV{\tctx}{\tm}{\typ\NN}
}{
  \judgV{\tctx}{\negip{\tm}}{(\neg\typ)\pp}
}
\indrule{\colorneg{\Inotn\PRJVsym}}{
  \judgV{\tctx}{\tm}{\typ\PP}
}{
  \judgV{\tctx}{\negin{\tm}}{(\neg\typ)\nn}
}
\indrule{\colorneg{\Enotp\PRJVsym}}{
  \judgV{\tctx}{\tm}{(\neg\typ)\pp}
}{
  \judgV{\tctx}{\negep{\tm}}{\typ\NN}
}
\]
\end{fragBox}

\begin{fragBox}{Second-order quantification}
\[
  \indrule{\colorall{\Iallp\PRJVsym}}{
    \judgV{\tctx}{\tm}{\typ\PP}
    \HS
    \btyp \not\in \ftv{\tctx}
  }{
    \judgV{\tctx}{\lamtp{\btyp}{\tm}}{(\all{\btyp}{\typ})\pp}
  }
  \indrule{\colorall{\Iexn\PRJVsym}}{
    \judgV{\tctx}{\tm}{\typ\NN}
    \HS
    \btyp \not\in \ftv{\tctx}
  }{
    \judgV{\tctx}{\lamtn{\btyp}{\tm}}{(\ex{\btyp}{\typ})\nn}
  }
\]
\[
  \indrule{\colorall{\Eallp\PRJVsym}}{
    \judgV{\tctx}{\tm}{(\all{\btyp}{\typtwo})\pp}
  }{
    \judgV{\tctx}{\apptp{\tm}{\typ}}{\typtwo\PP\sub{\btyp}{\typ}}
  }
  \indrule{\colorall{\Eexn\PRJVsym}}{
    \judgV{\tctx}{\tm}{(\ex{\btyp}{\typtwo})\nn}
  }{
    \judgV{\tctx}{\apptn{\tm}{\typ}}{\typtwo\NN\sub{\btyp}{\typ}}
  }
\]
\[
  \indrule{\colorall{\Iexp\PRJVsym}}{
    \judgV{\tctx}{\tm}{\typtwo\PP\sub{\btyp}{\typ}}
  }{
    \judgV{\tctx}{\patp{\typ}{\tm}}{(\ex{\btyp}{\typtwo})\pp}
  }
  \indrule{\colorall{\Ialln\PRJVsym}}{
    \judgV{\tctx}{\tm}{\typtwo\NN\sub{\btyp}{\typ}}
  }{
    \judgV{\tctx}{\patn{\typ}{\tm}}{(\all{\btyp}{\typtwo})\nn}
  }
\]
\[
  \indrule{\colorall{\Eexp\PRJVsym}}{
    \judgV{\tctx}{\tm}{(\ex{\btyp}{\typ})\pp}
    \HS
    \judgV{\tctx,\var:\typ\PP}{\tmtwo}{\ev}
    \HS
    \btyp \not\in \ftv{\tctx,\ev}
  }{
    \judgV{\tctx}{\optp{\tm}{\btyp}{\var:\typ\PP}{\tmtwo}}{\ev}
  }
\]
\end{fragBox}
\end{definition}

\begin{proposition}[Equivalence of $\lambdaPRJ$ and $\lambdaPRJV$]
\lprop{prj_equiv_prjv}
\lprop{appendix:prj_equiv_prjv}
The following are equivalent:
\begin{enumerate}
\item
  $\judgPRJ[\varset]{\tctx}{\tm}{\typ}$
\item
  $\judgPRJV[\varset]{\tctx}{\tm}{\typ}$
\end{enumerate}
\end{proposition}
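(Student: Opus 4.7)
The plan is to prove the equivalence by mutual induction, one direction on the derivation of the $\PRK$ judgment (plus the intuitionistic side condition) and the other on the derivation in $\PRJV$.

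For the direction $\PRJ \Rightarrow \PRJV$, I would induct on the $\PRK$ derivation of $\judg{\tctx}{\tm}{\ev}$, using that $\tm$ is $\varset$-intuitionistic as an auxiliary invariant. Every $\PRK$ rule whose term constructor is also available in $\PRJV$ (including $\Ax$, $\Abs$, $\Icn$, the strong intro/elim for $\land,\lor,\imp,\coimp,\neg,\forall,\exists$ in both signs, and $\Eorp$, $\Eexp$) is handled by applying the same rule in $\PRJV$: one just has to check that the intuitionistic conditions propagate to the immediate subterms (which is routine, since the useful/useless status of a subterm is determined by the immediately enclosing context). The two special cases are $\Icp$ and $\Ecp$. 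For $\Icp$, the condition on intuitionistic terms says that all free occurrences of the negative counterfactual $\var$ inside $\tm$ are useless, hence $\tm$ is $(\varset\cup\set{\var})$-intuitionistic, so the $\PRJV$ premise $\judgV[\varset\cup\set{\var}]{\tctx,\var:\typ\NN}{\tm}{\typ\pp}$ follows by IH. For $\Ecp$, the argument $\tmtwo$ of $\clasapp{\tm}{\tmtwo}$ lies in a useless position, so any $\PRK$-derivation of its typing suffices for the $\Ecp\PRJVsym$ premise $\judgPRK{\tctx}{\tmtwo}{\typ\NN}$; the head $\tm$ is handled by IH. Finally, the four negative elimination rules absent from $\PRJV$ ($\Eandn$, $\Eimpn$, $\Enotn$, $\Ealln$) cannot appear at the root of a useful subterm in an intuitionistic term, so they only arise while processing a subterm already placed in a useless position, where the $\Ecp\PRJVsym$ rule has already discharged the requirement.

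For the converse $\PRJV \Rightarrow \PRJ$, I would induct on the $\PRJV$ derivation. The $\PRK$ side is immediate rule-by-rule (the $\PRJV$ rules are simply restrictions of $\PRK$ rules, and the $\Ecp\PRJVsym$ rule uses $\judgPRK$ on the argument, matching $\Ecp$ in $\PRK$ directly). The content of the induction lies in establishing the $\varset$-intuitionistic invariant on the witnessing term. All cases follow from IH applied to the premises, together with the elementary observation that every useful subterm of the constructed term either coincides with or is a useful subterm of a premise's term. The critical case is $\Icp\PRJVsym$: the premise yields by IH that the body $\tm$ is $(\varset\cup\set{\var})$-intuitionistic, and hence $\claslamp{(\var:\typ\NN)}{\tm}$ satisfies the useless-counterfactual condition. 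For $\Ecp\PRJVsym$, the argument $\tmtwo$ is only $\PRK$-typed, but since it is embedded in a useless position in $\clasapp{\tm}{\tmtwo}$, no occurrence of any variable inside $\tmtwo$ is useful in $\clasapp{\tm}{\tmtwo}$, so no intuitionistic condition is violated.

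The main obstacle will be formalizing the invariant tracking uselessness through contexts and $\varset$. Concretely, I expect to need a short preparatory lemma stating: if $\tm = \gctxof{\tmtwo}$ where $\gctx$ places its hole in a useless position, then $\tm$ is $\varset$-intuitionistic iff the surrounding context contribution (looking only at the useful positions of $\gctx$) is $\varset$-intuitionistic, independently of $\tmtwo$. This lets the $\Ecp\PRJVsym$ case close cleanly and justifies why the argument $\tmtwo$ only needs to be $\PRK$-typable. Once this invariant is in place, the two inductions amount to matching the premises of $\PRJV$ rules with the compositional reading of $\varset$-intuitionisticity, and the equivalence follows.
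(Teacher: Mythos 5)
Your proposal is correct and follows essentially the same route as the paper's proof: mutual induction on the two derivations, with the special treatment of $\Icp$ (promoting the body to $(\varset\cup\set{\var})$-intuitionistic), $\Ecp$ (the argument lies in a useless position, so only $\judgPRK$ is needed), and the observation that the four negative eliminations cannot occur at the root of a useful, hence intuitionistic, subterm. The only detail worth making explicit in the $\Ax$ case is that a bare variable is a useful occurrence of itself, which is what yields the side condition $\var\notin\varset$ of $\Ax\PRJVsym$.
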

\begin{proof}
By induction on $\tm$. We focus on the interesting cases only:
\begin{itemize}
\item[($\Rightarrow$)]
  Suppose that $\judgPRJ[\varset]{\tctx}{\tm}{\typ}$,
  \ie that $\judgPRK{\tctx}{\tm}{\typ}$ and $\tm$ is $\varset$-intuitionistic.
  We proceed by induction on the derivation of the judgment.
  The interesting cases are:
  \begin{enumerate}
  \item \Ax:
    Let $\judgPRK{\tctx,\var:\ev}{\var}{\ev}$.
    Note that $\var$ is $\varset$-intuitionistic
    and the term $\var$ contains a useful free occurrence of $\var$,
    so $\var\notin\varset$.
    Hence $\judgPRJV{\tctx,\var:\ev}{\var}{\ev}$ by $\Ax\PRJVsym$.
  \item \Icp:
    Let $\judg{\tctx}{\claslamp{(\var:\typ\NN)}{\tmtwo}}{\typ\PP}$
    be derived from $\judg{\tctx, \var : \typ\NN}{\tmtwo}{\typ\pp}$.
    Since $\tm = \claslamp{(\var:\typ\NN)}{\tmtwo}$
    is $\varset$-intuitionistic
    we know that $\tmtwo$ is $\varset$-intuitionistic
    and moreover $\tmtwo$ does not have useful free occurrences of $\var$.
    Hence $\tmtwo$ is $(\varset\cup\set{\var})$-intuitionistic
    and by \ih we have that
    $\judgPRJV[\varset\cup\set{\var}]{\tctx, \var : \typ\NN}{\tmtwo}{\typ\pp}$.
    Applying the $\Icp\PRJVsym$ rule we conclude that
    $\judgPRJV[\varset]{\tctx}{\claslamp{(\var:\typ\NN)}{\tmtwo}}{\typ\PP}$,
    as required.
  \item \Ecp:
    Let $\judg{\tctx}{\clasapp{\tmtwo}{\tmthree}}{\typ\pp}$
    be derived from $\judg{\tctx}{\tmtwo}{\typ\PP}$
    and $\judg{\tctx}{\tmthree}{\typ\NN}$.
    Since $\tm = \clasapp{\tmtwo}{\tmthree}$
    is $\varset$-intuitionistic,
    we have that $\tmtwo$ is $\varset$-intuitionistic.
    So, by \ih on the first premise, we have
    $\judgPRJV{\tctx}{\tmtwo}{\typ\PP}$.
    Applying the $\Ecp\PRJVsym$ rule directly on the second premise,
    \ie without the need of resorting to the \ih for the second premise,
    we conclude that $\judgPRJV{\tctx}{\clasapp{\tmtwo}{\tmthree}}{\typ\pp}$,
    as required.
  \item \colorand{\Eandn}:
    Then this case is impossible, given that $\tm$
    must be of the form
    $\casen{\tmtwo}{\var:\typ\NN}{\tmthree}{\vartwo:\typtwo\NN}{\tmfour}$,
    which is not $\varset$-intuitionistic,
    contradicting the hypothesis.
  \item \colorimp{\Eimpn}:
    Then this case is impossible, given that $\tm$
    must be of the form $\judg{\tctx}{\colam{\tmtwo}{\var}{\vartwo}{\tmthree}}{\ev}$,
    which is not $\varset$-intuitionistic,
    contradicting the hypothesis.
  \item \colorneg{\Enotn}:
    Then this case is impossible, given that $\tm$
    must be of the form $\negen{\tmtwo}$,
    which is not $\varset$-intuitionistic,
    contradicting the hypothesis.
  \item \colorall{\Ealln}:
    Then this case is impossible, given that $\tm$
    must be of the form $\optn{\tmtwo}{\btyp}{\var}{\tmthree}$,
    which is not $\varset$-intuitionistic,
    contradicting the hypothesis.
  \end{enumerate}
  The remaining cases are all straightforward by \ih.
  For example, for the \Iandp case,
  let $\judg{\tctx}{\pairp{\tmtwo}{\tmthree}}{(\typ \land \typtwo)\pp}$
  be derived from $\judg{\tctx}{\tmtwo}{\typ\PP}$
  and $\judg{\tctx}{\tmthree}{\typtwo\PP}$.
  Since $\tm = \pairp{\tmtwo}{\tmthree}$ is $\varset$-intuitionistic,
  $\tmtwo$ and $\tmthree$ are also $\varset$-intuitionistic
  so, by \ih, we have $\judgPRJV{\tctx}{\tmtwo}{\typ\PP}$
  and $\judgPRJV{\tctx}{\tmthree}{\typtwo\PP}$.
  Applying the $\Iandp\PRJVsym$ rule,
  we conclude that
  $\judgPRJV{\tctx}{\pairp{\tmtwo}{\tmthree}}{(\typ \land \typtwo)\pp}$,
  as required.
\item[($\Leftarrow$)]
  By induction on the derivation of $\judgPRJV{\tctx}{\tm}{\typ}$.
  The reasoning is similar as for the ``only if'' direction.
\end{itemize}
\end{proof}

\newpage
\section{$\lambdaPRJ$ Refines Intuitionistic Second-Order Logic}
\lsec{appendix:prj_refinement}

The proof of~\rthm{lambdaPRJ_refinement},
that $\lambdaPRJ$ refines intuitionistic second-order logic,
is split into two lemmas:
{\bf Intuitionistic~Conservativity}~(\rlem{lambdaPRJ_conservativity})
proves the implication $1 \implies 2$, and
{\bf Intuitionistic~Embedding}~(\rlem{lambdaPRJ_embedding})
proves the implication $2 \implies 1$.

Recall that $\intem{\ev}$ is defined as follows:
\[
  \begin{array}{rcl@{\HS}rcl}
    \intem{\typ\pp} & \eqdef & \typ
  &
    \intem{\typ\PP} & \eqdef & \typ \\
    \intem{\typ\nn} & \eqdef & \neg\typ
  &
    \intem{\typ\NN} & \eqdef & \neg\typ \\
  \end{array}
\]

\begin{lemma}[Intuitionistic Conservativity]
\llem{lambdaPRJ_conservativity}
If $\logPRJ{\tctx}{\ev}$
then $\logNJ{\intem{\tctx}}{\intem{\ev}}$.
\end{lemma}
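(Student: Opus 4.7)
The plan is to leverage the equivalence between $\lambdaPRJ$ and $\lambdaPRJV$ (Proposition~\ref{prop:appendix:prj_equiv_prjv}) and reason by induction on the inductive $\lambdaPRJV$ derivation. For the induction to close, the statement must be generalized: if $\judgPRJV[\varset]{\tctx}{\tm}{\ev}$, then $\logNJ{\intem{\tctx'}}{\intem{\ev}}$, where $\tctx'$ is the restriction of $\tctx$ to variables outside $\varset$. The lemma is the special case $\varset=\emptyset$.

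Each typing rule of $\lambdaPRJV$ must then be shown to translate to an admissible derivation of $\NJ$. The rules for positive conjunction, disjunction, implication, co-implication, negation, and both quantifiers translate verbatim to the corresponding $\NJ$ rules, since $\intem{\typ\PP}=\intem{\typ\pp}=\typ$. The strong negative introduction and elimination rules (such as $\Iandn\PRJVsym$, $\Iorn\PRJVsym$, $\Icoimpn\PRJVsym$, $\Ialln\PRJVsym$) correspond to intuitionistically valid De~Morgan-like principles; for example, $\Iorn\PRJVsym$ becomes the derivation of $\neg(\typ\lor\typtwo)$ from $\neg\typ$ and $\neg\typtwo$. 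The absurdity rule $\Abs\PRJVsym$ becomes ex~falso. Crucially, $\PRJV$ does \emph{not} contain the negative elimination rules corresponding to $\Enotn$, $\Eandn$, $\Eimpn$, $\Ealln$, which is precisely why no classical principle sneaks in through the translation.

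The rule whose translation justifies the $\varset$-machinery is $\Icp\PRJVsym$: from the premise $\judgPRJV[\varset\cup\set{\var}]{\tctx,\var:\typ\NN}{\tm}{\typ\pp}$ the induction hypothesis yields $\logNJ{\intem{\tctx'}}{\typ}$, because the hypothesis $\var:\typ\NN$, which would translate to the classically-loaded $\neg\typ$, is discarded by the $\varset\cup\set{\var}$ restriction. Since $\intem{\typ\PP}=\typ$, this is exactly the desired conclusion. Thus the seemingly classical $\Icp\PRJVsym$ rule is vindicated intuitionistically: the counterfactual $\var$ has no useful occurrences and plays no effective role in the translated proof. A related subtlety arises in $\Ecp\PRJVsym$, whose second premise is stated in $\PRK$ rather than $\PRJ$; fortunately the first premise alone already yields $\logNJ{\intem{\tctx'}}{\typ}=\logNJ{\intem{\tctx'}}{\intem{\typ\pp}}$ via the induction hypothesis, so the second premise can simply be discarded. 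The main obstacle is checking, rule by rule, that the $\varset$-restriction interacts correctly with binders and fresh type-variable side conditions (notably in $\Iallp\PRJVsym$ and $\Eexp\PRJVsym$); no single case is deep, but the case analysis must be exhaustive to guarantee that no classical principle slips through.
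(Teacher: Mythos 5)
Your proposal is correct and follows essentially the same route as the paper: it generalizes the statement so that the hypotheses indexed by $\varset$ are dropped from the translated context, passes to the inductive presentation $\lambdaPRJV$ via \rprop{prj_equiv_prjv}, and proceeds by induction on that derivation, with the same key observations that $\Icp\PRJVsym$ discharges its counterfactual hypothesis through the $\varset$-restriction and that $\Ecp\PRJVsym$ needs only its first premise. The remaining cases (ex falso for $\Abs$, the intuitionistically valid De Morgan directions for the negative introductions, and the freshness side conditions for the quantifiers) are exactly the ones the paper checks.
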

\begin{proof}
We shall prove a slightly more general property.
We claim that
if $\judgPRJ[\dom{\tctxtwo}]{\tctx,\tctxtwo}{\tm}{\ev}$,
then $\logNJ{\intem{\tctx}}{\intem{\ev}}$.
Observe that the general property implies the statement of the lemma
taking $\tctxtwo = \emptyset$.

To prove the general property,
recall by \rprop{prj_equiv_prjv}
that $\judgPRJ[\dom{\tctxtwo}]{\tctx,\tctxtwo}{\tm}{\ev}$
if and only if
$\judgPRJV[\dom{\tctxtwo}]{\tctx,\tctxtwo}{\tm}{\ev}$
and we proceed by induction on the derivation
of $\judgV[\dom{\tctxtwo}]{\tctx,\tctxtwo}{\tm}{\ev}$ in $\PRJV$.
The more interesting cases are
the axiom rule (\Ax)
and the positive weak introduction and elimination (\Icp, \Ecp).
\begin{enumerate}
\item $\Ax\PRJVsym$:
  Let $\judgPRJV[\dom{\tctxtwo}]{\tctx,\tctxtwo}{\var}{\ev}$
  where $(\var:\ev) \in (\tctx,\tctxtwo)$
  and $\var \notin \dom{\tctxtwo}$.
  Hence $(\var:\ev) \in \tctx$
  and $\intem{\ev} \in \intem{\tctx}$,
  so $\logNJ{\intem{\tctx}}{\intem{\ev}}$ by \NDAx.
\item $\Abs\PRJVsym$:
  Let $\judgPRJV[\dom{\tctxtwo}]{\tctx,\tctxtwo}{\strongabs{\ev}{\tm}{\tmtwo}}{\ev}$
  be derived from $\judgPRJV[\dom{\tctxtwo}]{\tctx,\tctxtwo}{\tm}{\typ\pp}$
  and $\judgPRJV[\dom{\tctxtwo}]{\tctx,\tctxtwo}{\tmtwo}{\typ\nn}$.
  Then:
  \[
    \indruleN{\NDExpl}{
      \indruleN{\NDEnot}{
        \indih{\log{\intem{\tctx}}{\neg\typ}}
        \HS
        \indih{\log{\intem{\tctx}}{\typ}}
      }{
        \log{\intem{\tctx}}{\bot}
      }
    }{
      \log{\intem{\tctx}}{\intem{\ev}}
    }
  \]
\item $\Icp\PRJVsym$:
  Let $\judgPRJV[\dom{\tctxtwo}]{\tctx,\tctxtwo}{\claslamp{(\var:\typ\NN)}{\tm}}{\typ\PP}$
  be derived from $\judgPRJV[\dom{\tctxtwo}\cup\set{\var}]{\tctx,\tctxtwo,\var:\typ\NN}{\tm}{\typ\pp}$.
  By \ih we have that $\logNJ{\intem{\tctx}}{\typ}$, exactly as required.
\item $\Icn\PRJVsym$:
  Let $\judgPRJV[\dom{\tctxtwo}]{\tctx,\tctxtwo}{\claslamp{(\var:\typ\PP)}{\tm}}{\typ\NN}$
  be derived from $\judgPRJV[\dom{\tctxtwo}]{\tctx,\tctxtwo,\var:\typ\PP}{\tm}{\typ\nn}$.
  Then:
  \[
    \indruleN{\NDInot}{
      \indruleN{\NDEnot}{
        \indNih{\logNJ{\intem{\tctx},\typ}{\neg\typ}}
        \HS
        \indNax[\NDAx]{\logNJ{\intem{\tctx},\typ}{\typ}}
      }{
        \logNJ{\intem{\tctx},\typ}{\bot}
      }
    }{
      \logNJ{\intem{\tctx}}{\neg\typ}
    }
  \]
\item $\Ecp\PRJVsym$:
  Let $\judgPRJV[\dom{\tctxtwo}]{\tctx,\tctxtwo}{\clasapp{\tm}{\tmtwo}}{\typ\pp}$
  be derived from $\judgPRJV[\dom{\tctxtwo}]{\tctx,\tctxtwo}{\tm}{\typ\PP}$
  and $\judgPRK{\tctx,\tctxtwo}{\tmtwo}{\typ\NN}$.
  By \ih on the first premise, we have that $\logNJ{\intem{\tctx}}{\typ}$,
  exactly as required.
\item $\Ecn\PRJVsym$:
  Let $\judgPRJV[\dom{\tctxtwo}]{\tctx,\tctxtwo}{\clasapn{\tm}{\tmtwo}}{\typ\nn}$
  be derived from $\judgPRJV[\dom{\tctxtwo}]{\tctx,\tctxtwo}{\tm}{\typ\NN}$
  and $\judgPRJV[\dom{\tctxtwo}]{\tctx,\tctxtwo}{\tmtwo}{\typ\PP}$.
  By \ih on the first premise, we have that $\logNJ{\intem{\tctx}}{\neg\typ}$,
  exactly as required.
\item $\colorand{\Iandp\PRJVsym}$:
  Let $\judgPRJV[\dom{\tctxtwo}]{\tctx,\tctxtwo}{\pairp{\tm}{\tmtwo}}{(\typ \land \typtwo)\pp}$
  be derived from $\judgPRJV[\dom{\tctxtwo}]{\tctx,\tctxtwo}{\tm}{\typ\PP}$
  and $\judgPRJV[\dom{\tctxtwo}]{\tctx,\tctxtwo}{\tmtwo}{\typtwo\PP}$.
  By \ih, $\logNJ{\intem{\tctx}}{\typ}$
  and $\logNJ{\intem{\tctx}}{\typtwo}$,
  which imply $\logNJ{\intem{\tctx}}{\typ\land\typtwo}$
  by \NDIand.
\item $\colorand{\Iorn\PRJVsym}$:
  Let $\judgPRJV[\dom{\tctxtwo}]{\tctx,\tctxtwo}{\pairn{\tm}{\tmtwo}}{(\typ \lor \typtwo)\nn}$
  be derived from $\judgPRJV[\dom{\tctxtwo}]{\tctx,\tctxtwo}{\tm}{\typ\NN}$
  and $\judgPRJV[\dom{\tctxtwo}]{\tctx,\tctxtwo}{\tmtwo}{\typtwo\NN}$.
  By \ih, $\logNJ{\intem{\tctx}}{\neg\typ}$
  and $\logNJ{\intem{\tctx}}{\neg\typtwo}$.
  Let $\deriv_1$ be the derivation:
  \[
    \indruleN{\NDEnot}{
      \indruleN{\NDW}{
        \indNih{\log{\intem{\tctx}}{\neg\typ}}
      }{
        \log{\intem{\tctx},\typ\lor\typtwo,\typ}{\neg\typ}
      }
      \HS
      \indNax{\log{\intem{\tctx},\typ\lor\typtwo,\typ}{\typ}}
    }{
      \log{\intem{\tctx},\typ\lor\typtwo,\typ}{\bot}
    }
  \]
  and, symmetrically, let $\deriv_2$ be a derivation of
  $\log{\intem{\tctx},\typ\lor\typtwo,\typtwo}{\bot}$.
  Then:
  \[
    \indruleN{\NDInot}{
      \indruleN{\NDEor}{
        \indNax{\log{\intem{\tctx},\typ\lor\typtwo}{\typ\lor\typtwo}}
        \HS
        \derivdots{\deriv_1}
        \HS
        \derivdots{\deriv_2}
        \HS
      }{
        \log{\intem{\tctx},\typ\lor\typtwo}{\bot}
      }
    }{
      \log{\intem{\tctx}}{\neg(\typ\lor\typtwo)}
    }
  \]
\item $\colorand{\Eandp\PRJVsym}$:
  Let $\judgPRJV[\dom{\tctxtwo}]{\tctx,\tctxtwo}{\projip{\tm}}{\typ_i\PP}$
  be derived from $\judgPRJV[\dom{\tctxtwo}]{\tctx,\tctxtwo}{\tm}{(\typ_1 \land \typ_2)\pp}$
  for some $i \in \set{1, 2}$.
  By \ih, $\logNJ{\intem{\tctx}}{\typ_1 \land \typ_2}$,
  which implies $\logNJ{\intem{\tctx}}{\typ_i}$
  by \NDEand.
\item $\colorand{\Eorn\PRJVsym}$:
  Let $\judgPRJV[\dom{\tctxtwo}]{\tctx,\tctxtwo}{\projin{\tm}}{\typ_i\NN}$
  be derived from $\judgPRJV[\dom{\tctxtwo}]{\tctx,\tctxtwo}{\tm}{(\typ_1 \lor \typ_2)\nn}$
  for some $i \in \set{1, 2}$.
  By \ih, $\logNJ{\intem{\tctx}}{\neg(\typ_1 \lor \typ_2)}$.
  Then:
  \[
    \indruleN{\NDInot}{
      \indruleN{\NDEnot}{
        \indruleN{\NDW}{
          \indNih{\log{\intem{\tctx}}{\neg(\typ_1\lor\typ_2)}}
        }{
          \log{\intem{\tctx},\typ_i}{\neg(\typ_1\lor\typ_2)}
        }
        \HS
        \indruleN{\NDIor}{
          \indNax{\log{\intem{\tctx},\typ_i}{\typ_i}}
        }{
          \log{\intem{\tctx},\typ_i}{\typ_1\lor\typ_2}
        }
        \HS
      }{
        \log{\intem{\tctx},\typ_i}{\bot}
      }
    }{
      \log{\intem{\tctx}}{\neg\typ_i}
    }
  \]
\item $\colorand{\Iorp\PRJVsym}$:
  Let $\judgPRJV[\dom{\tctxtwo}]{\tctx,\tctxtwo}{\inip{\tm}}{(\typ_1 \lor \typ_2)\pp}$
  be derived from $\judgPRJV[\dom{\tctxtwo}]{\tctx,\tctxtwo}{\tm}{\typ_i\PP}$
  for some $i \in \set{1, 2}$.
  By \ih, $\logNJ{\intem{\tctx}}{\typ_i}$,
  which implies $\logNJ{\intem{\tctx}}{\typ_1\lor\typ_2}$
  by \NDIor.
\item $\colorand{\Iandn\PRJVsym}$:
  Let $\judgPRJV[\dom{\tctxtwo}]{\tctx,\tctxtwo}{\inin{\tm}}{(\typ_1 \land \typ_2)\nn}$
  be derived from $\judgPRJV[\dom{\tctxtwo}]{\tctx,\tctxtwo}{\tm}{\typ_i\NN}$
  for some $i \in \set{1, 2}$.
  By \ih, $\logNJ{\intem{\tctx}}{\neg\typ_i}$. Then:
  \[
    \indruleN{\NDInot}{
      \indruleN{\NDEnot}{
        \indruleN{\NDW}{
          \indNih{\log{\intem{\tctx}}{\neg\typ_i}}
        }{
          \log{\intem{\tctx},\typ_1\land\typ_2}{\neg\typ_i}
        }
        \HS
        \indruleN{\NDEand}{
          \indNax{\log{\intem{\tctx},\typ_1\land\typ_2}{\typ_1\land\typ_2}}
        }{
          \log{\intem{\tctx},\typ_1\land\typ_2}{\typ_i}
        }
      }{
        \log{\intem{\tctx},\typ_1\land\typ_2}{\bot}
      }
    }{
      \log{\intem{\tctx}}{\neg(\typ_1\land\typ_2)}
    }
  \]
\item $\colorand{\Eorp\PRJVsym}$:
  Let
  $\judgPRJV[\dom{\tctxtwo}]{\tctx,\tctxtwo}{\casep{\tm}{\var:\typ\PP}{\tmtwo}{\vartwo:\typtwo\PP}{\tmthree}}{\ev}$
  be derived from $\judgPRJV[\dom{\tctxtwo}]{\tctx,\tctxtwo}{\tm}{(\typ \lor \typtwo)\pp}$
  and $\judgPRJV[\dom{\tctxtwo}]{\tctx,\tctxtwo,\var:\typ\PP}{\tmtwo}{\ev}$
  and $\judgPRJV[\dom{\tctxtwo}]{\tctx,\tctxtwo,\vartwo:\typtwo\PP}{\tmthree}{\ev}$.
  By \ih, $\logNJ{\intem{\tctx}}{\typ \lor \typtwo}$
  and $\logNJ{\intem{\tctx},\typ}{\intem{\ev}}$
  and $\logNJ{\intem{\tctx},\typtwo}{\intem{\ev}}$,
  which imply $\logNJ{\intem{\tctx}}{\intem{\ev}}$
  by \NDEor.
\item $\colorimp{\Iimpp\PRJVsym}$:
  Let $\judgPRJV[\dom{\tctxtwo}]{\tctx,\tctxtwo}{\lamp{(\var:\typ\PP)}{\tm}}{(\typ\imp\typtwo)\pp}$
  be derived from $\judgPRJV[\dom{\tctxtwo}]{\tctx,\tctxtwo,\var:\typ\PP}{\tm}{\typtwo\PP}$.
  By \ih, $\logNJ{\intem{\tctx},\typ}{\typtwo}$,
  which implies $\logNJ{\intem{\tctx}}{\typ\imp\typtwo}$
  by \NDIimp.
\item $\colorimp{\Icoimpn\PRJVsym}$:
  Let $\judgPRJV[\dom{\tctxtwo}]{\tctx,\tctxtwo}{\lamn{(\var:\typ\NN)}{\tm}}{(\typ\coimp\typtwo)\nn}$
  be derived from $\judgPRJV[\dom{\tctxtwo}]{\tctx,\tctxtwo,\var:\typ\NN}{\tm}{\typtwo\NN}$.
  Let $\deriv$ be the derivation:
  \[
    \indruleN{\NDCut}{
      \indruleN{\NDW}{
        \indNih{\log{\intem{\tctx},\neg\typ}{\neg\typtwo}}
      }{
        \log{\intem{\tctx},\typ\coimp\typtwo,\neg\typ}{\neg\typtwo}
      }
      \HS
      \indruleN{\NDEcoimp}{
        \indNax[\NDAx]{\log{\intem{\tctx},\typ\coimp\typtwo}{\typ\coimp\typtwo}}
        \HS
        \indNax{\log{\intem{\tctx},\typ\coimp\typtwo,\neg\typ,\typtwo}{\neg\typ}}
      }{
        \log{\intem{\tctx},\typ\coimp\typtwo}{\neg\typ}
      }
    }{
      \log{\intem{\tctx},\typ\coimp\typtwo}{\neg\typtwo}
    }
  \]
  Then:
  \[
    \indruleN{\NDInot}{
      \indruleN{\NDEnot}{
        \derivdots{\deriv}
        \HS
        \indruleN{\NDEcoimp}{
          \indNax[\NDAx]{\log{\intem{\tctx},\typ\coimp\typtwo}{\typ\coimp\typtwo}}
          \HS
          \indNax[\NDAx]{\log{\intem{\tctx},\typ\coimp\typtwo,\neg\typ,\typtwo}{\typtwo}}
        }{
          \log{\intem{\tctx},\typ\coimp\typtwo}{\typtwo}
        }
      }{
        \log{\intem{\tctx},\typ\coimp\typtwo}{\bot}
      }
    }{
      \log{\intem{\tctx}}{\neg(\typ\coimp\typtwo)}
    }
  \]
\item $\colorimp{\Eimpp\PRJVsym}$:
  Let $\judgPRJV[\dom{\tctxtwo}]{\tctx,\tctxtwo}{\app{\tm}{\tmtwo}}{\typtwo\PP}$
  be derived from $\judgPRJV[\dom{\tctxtwo}]{\tctx,\tctxtwo}{\tm}{(\typ\imp\typtwo)\pp}$
  and $\judgPRJV[\dom{\tctxtwo}]{\tctx,\tctxtwo}{\tmtwo}{\typ\PP}$.
  By \ih, $\logNJ{\intem{\tctx}}{\typ\imp\typtwo}$
  and $\logNJ{\intem{\tctx}}{\typ}$,
  which imply $\logNJ{\intem{\tctx}}{\typtwo}$
  by \NDEimp.
\item $\colorimp{\Ecoimpn\PRJVsym}$:
  Let $\judgPRJV[\dom{\tctxtwo}]{\tctx,\tctxtwo}{\apn{\tm}{\tmtwo}}{\typtwo\NN}$
  be derived from $\judgPRJV[\dom{\tctxtwo}]{\tctx,\tctxtwo}{\tm}{(\typ\coimp\typtwo)\nn}$
  and $\judgPRJV[\dom{\tctxtwo}]{\tctx,\tctxtwo}{\tmtwo}{\typ\NN}$.
  Let $\deriv$ be the derivation:
  \[
    \indruleN{\NDIcoimp}{
      \indruleN{\NDW}{
        \indih{\log{\intem{\tctx}}{\neg\typ}}
      }{
        \log{\intem{\tctx},\typtwo}{\neg\typ}
      }
      \HS
      \indNax[\NDAx]{\log{\intem{\tctx},\typtwo}{\typtwo}}
    }{
      \log{\intem{\tctx},\typtwo}{\typ\coimp\typtwo}
    }
  \]
  Then:
  \[
    \indruleN{\NDInot}{
      \indruleN{\NDEnot}{
        \indruleN{\NDW}{
          \indih{\log{\intem{\tctx}}{\neg(\typ\coimp\typtwo)}}
        }{
          \log{\intem{\tctx},\typtwo}{\neg(\typ\coimp\typtwo)}
        }
        \HS
        \derivdots{\deriv}
        \HS
      }{
        \log{\intem{\tctx},\typtwo}{\bot}
      }
    }{
      \log{\intem{\tctx}}{\neg\typtwo}
    }
  \]
\item $\colorimp{\Icoimpp\PRJVsym}$:
  Let $\judgPRJV[\dom{\tctxtwo}]{\tctx,\tctxtwo}{\copairp{\tm}{\tmtwo}}{(\typ\coimp\typtwo)\pp}$
  be derived from $\judgPRJV[\dom{\tctxtwo}]{\tctx,\tctxtwo}{\tm}{\typ\NN}$
  and $\judgPRJV[\dom{\tctxtwo}]{\tctx,\tctxtwo}{\tmtwo}{\typtwo\PP}$.
  By \ih, $\logNJ{\intem{\tctx}}{\neg\typ}$
  and $\logNJ{\intem{\tctx}}{\typtwo}$,
  which imply $\logNJ{\intem{\tctx}}{\typ\coimp\typtwo}$
  by \NDIcoimp.
\item $\colorimp{\Iimpn\PRJVsym}$:
  Let $\judgPRJV[\dom{\tctxtwo}]{\tctx,\tctxtwo}{\copairn{\tm}{\tmtwo}}{(\typ\imp\typtwo)\nn}$
  be derived from $\judgPRJV[\dom{\tctxtwo}]{\tctx,\tctxtwo}{\tm}{\typ\PP}$
  and $\judgPRJV[\dom{\tctxtwo}]{\tctx,\tctxtwo}{\tmtwo}{\typtwo\NN}$.
  Let $\deriv$ be the derivation:
  \[
    \indruleN{\NDEimp}{
      \indNax[\NDAx]{\log{\intem{\tctx},\typ\imp\typtwo}{\typ\imp\typtwo}}
      \HS
      \indruleN{\NDW}{
        \indNih{\log{\intem{\tctx}}{\typ}}
      }{
        \log{\intem{\tctx},\typ\imp\typtwo}{\typ}
      }
    }{
      \log{\intem{\tctx},\typ\imp\typtwo}{\typtwo}
    }
  \]
  Then:
  \[
    \indruleN{\NDInot}{
      \indruleN{\NDEnot}{
        \indruleN{\NDW}{
          \indNih{\log{\intem{\tctx}}{\neg\typtwo}}
        }{
          \log{\intem{\tctx},\typ\imp\typtwo}{\neg\typtwo}
        }
        \HS
        \derivdots{\deriv}
        \HS
      }{
        \log{\intem{\tctx},\typ\imp\typtwo}{\bot}
      }
    }{
      \log{\intem{\tctx}}{\neg(\typ\imp\typtwo)}
    }
  \]
\item $\colorimp{\Ecoimpp\PRJVsym}$:
  Let $\judgPRJV[\dom{\tctxtwo}]{\tctx,\tctxtwo}{\colamp{\tm}{\var}{\vartwo}{\tmtwo}}{\ev}$
  be derived from
  $\judgPRJV[\dom{\tctxtwo}]{\tctx,\tctxtwo}{\tm}{(\typ\coimp\typtwo)\pp}$
  and
  $\judgPRJV[\dom{\tctxtwo}]{\tctx,\tctxtwo,\var:\typ\NN,\vartwo:\typtwo\PP}{\tmtwo}{\ev}$.
  By \ih, $\logNJ{\intem{\tctx}}{\typ\coimp\typtwo}$
  and $\logNJ{\intem{\tctx},\neg\typ,\typtwo}{\intem{\ev}}$
  which implies $\logNJ{\intem{\tctx}}{\intem{\ev}}$
  by \NDEcoimp.
\item $\colorneg{\Inotp\PRJVsym}$:
  Let $\judgPRJV[\dom{\tctxtwo}]{\tctx,\tctxtwo}{\negip{\tm}}{(\neg\typ)\pp}$
  be derived from $\judgPRJV[\dom{\tctxtwo}]{\tctx,\tctxtwo}{\tm}{\typ\NN}$.
  By \ih, $\logNJ{\intem{\tctx}}{\neg\typ}$, exactly as required.
\item $\colorneg{\Inotn\PRJVsym}$:
  Let $\judgPRJV[\dom{\tctxtwo}]{\tctx,\tctxtwo}{\negin{\tm}}{(\neg\typ)\nn}$
  be derived from $\judgPRJV[\dom{\tctxtwo}]{\tctx,\tctxtwo}{\tm}{\typ\PP}$.
  By \ih, $\logNJ{\intem{\tctx}}{\typ}$. Then:
  \[
    \indruleN{\NDInot}{
      \indruleN{\NDEnot}{
        \indNax{\log{\intem{\tctx},\neg\typ}{\neg\typ}}
        \HS
        \indruleN{\NDW}{
          \indNih{\log{\intem{\tctx}}{\typ}}
        }{
          \log{\intem{\tctx},\neg\typ}{\typ}
        }
      }{
        \log{\intem{\tctx},\neg\typ}{\bot}
      }
    }{
      \log{\intem{\tctx}}{\neg\neg\typ}
    }
  \]
\item $\colorneg{\Enotp\PRJVsym}$:
  Let $\judgPRJV[\dom{\tctxtwo}]{\tctx,\tctxtwo}{\negep{\tm}}{\typ\NN}$
  be derived from $\judgPRJV[\dom{\tctxtwo}]{\tctx,\tctxtwo}{\tm}{(\neg\typ)\pp}$.
  By \ih, $\logNJ{\intem{\tctx}}{\neg\typ}$, exactly as required.
\item $\colorall{\Iallp\PRJVsym}$:
  Let
    $\judgPRJV[\dom{\tctxtwo}]{\tctx,\tctxtwo}{\lamtp{\btyp}{\tm}}{(\all{\btyp}{\typ})\pp}$
  be derived from
    $\judgPRJV[\dom{\tctxtwo}]{\tctx,\tctxtwo}{\tm}{\typ\PP}$,
  where $\btyp \notin \ftv{\tctx,\tctxtwo}$.
  By \ih, $\logNJ{\intem{\tctx}}{\typ}$.
  Moreover, note that $\btyp \notin \ftv{\intem{\tctx}}$
  since $\btyp \notin \ftv{\tctx}$.
  Hence by $\NDIall$ we have that $\logNJ{\intem{\tctx}}{\all{\btyp}{\typ}}$.
\item $\colorall{\Iexn\PRJVsym}$:
  Let
    $\judgPRJV[\dom{\tctxtwo}]{\tctx,\tctxtwo}{\lamtn{\btyp}{\tm}}{(\ex{\btyp}{\typ})\nn}$
  be derived from
    $\judgPRJV[\dom{\tctxtwo}]{\tctx,\tctxtwo}{\tm}{\typ\NN}$,
  where $\btyp \not\in \ftv{\tctx}$.
  By \ih, $\logNJ{\intem{\tctx}}{\neg\typ}$.
  Moreover, note that $\btyp \notin \ftv{\intem{\tctx}}$.
  Let $\deriv$ be the derivation:
  \[
    \indruleN{\NDEex}{
      \indNax{\log{\intem{\tctx},\ex{\btyp}{\typ}}{\ex{\btyp}{\typ}}}
      \HS
      \btyp\notin\ftv{\intem{\tctx},\ex{\btyp}{\typ}}
      \HS
    }{
      \log{\intem{\tctx},\ex{\btyp}{\typ}}{\typ}
    }
  \]
  Then:
  \[
    \indruleN{\NDInot}{
      \indruleN{\NDEnot}{
        \indruleN{\NDW}{
          \indNih{\log{\intem{\tctx}}{\neg\typ}}
        }{
          \log{\intem{\tctx},\ex{\btyp}{\typ}}{\neg\typ}
        }
        \HS
        \derivdots{\deriv}
        \HS
      }{
        \log{\intem{\tctx},\ex{\btyp}{\typ}}{\bot}
      }
    }{
      \log{\intem{\tctx}}{\neg\ex{\btyp}{\typ}}
    }
  \]
\item $\colorall{\Eallp\PRJVsym}$:
  Let
    $\judgPRJV[\dom{\tctxtwo}]{\tctx,\tctxtwo}{\apptp{\tm}{\typ}}{\typtwo\PP\sub{\btyp}{\typ}}$
  be derived from
    $\judgPRJV[\dom{\tctxtwo}]{\tctx,\tctxtwo}{\tm}{(\all{\btyp}{\typtwo})\pp}$.
  By \ih, $\logNJ{\intem{\tctx}}{\all{\btyp}{\typtwo}}$,
  which implies
    $\logNJ{\intem{\tctx}}{\typtwo\sub{\btyp}{\typ}}$
  by $\NDEall$.
\item $\colorall{\Eexn\PRJVsym}$:
  Let
    $\judgPRJV[\dom{\tctxtwo}]{\tctx,\tctxtwo}{\apptn{\tm}{\typ}}{\typtwo\NN\sub{\btyp}{\typ}}$
  be derived from
    $\judgPRJV[\dom{\tctxtwo}]{\tctx,\tctxtwo}{\tm}{(\ex{\btyp}{\typtwo})\nn}$.
  By \ih, $\logNJ{\intem{\tctx}}{\neg\ex{\btyp}{\typtwo}}$.
  Let $\deriv$ be the derivation:
  \[
    \indruleN{\NDIex}{
      \indNax{
        \log{\intem{\tctx},\typtwo\sub{\btyp}{\typ}}{\typtwo\sub{\btyp}{\typ}}
      }
    }{
      \log{\intem{\tctx},\typtwo\sub{\btyp}{\typ}}{\ex{\btyp}{\typtwo}}
    }
  \]
  Then:
  \[
    \indruleN{\NDInot}{
      \indruleN{\NDEnot}{
        \indruleN{\NDW}{
          \indNih{\log{\intem{\tctx}}{\neg\ex{\btyp}{\typtwo}}}
        }{
          \log{\intem{\tctx},\typtwo\sub{\btyp}{\typ}}{\neg\ex{\btyp}{\typtwo}}
        }
        \HS
        \derivdots{\deriv}
        \HS
      }{
        \log{\intem{\tctx},\typtwo\sub{\btyp}{\typ}}{\bot}
      }
    }{
      \log{\intem{\tctx}}{\neg\typtwo\sub{\btyp}{\typ}}
    }
  \]
\item $\colorall{\Iexp\PRJVsym}$:
  Let
    $\judgPRJV[\dom{\tctxtwo}]{\tctx,\tctxtwo}{\patp{\typ}{\tm}}{(\ex{\btyp}{\typtwo})\pp}$
  be derived from
    $\judgPRJV[\dom{\tctxtwo}]{\tctx,\tctxtwo}{\tm}{\typtwo\PP\sub{\btyp}{\typ}}$.
  By \ih,
    $\logNJ{\intem{\tctx}}{\typtwo\sub{\btyp}{\typ}}$,
  which implies
    $\logNJ{\intem{\tctx}}{\ex{\btyp}{\typtwo}}$
  by \NDIex.
\item $\colorall{\Ialln\PRJVsym}$:
  Let
    $\judgPRJV[\dom{\tctxtwo}]{\tctx,\tctxtwo}{\patn{\typ}{\tm}}{(\all{\btyp}{\typtwo})\nn}$
  be derived from
    $\judgPRJV[\dom{\tctxtwo}]{\tctx,\tctxtwo}{\tm}{\typtwo\NN\sub{\btyp}{\typ}}$.
  By \ih,
    $\logNJ{\intem{\tctx}}{\neg\typtwo\sub{\btyp}{\typ}}$.
  Let $\deriv$ be the derivation:
  \[
    \indruleN{\NDEall}{
      \indNax{\log{\intem{\tctx},\all{\btyp}{\typtwo}}{\all{\btyp}{\typtwo}}}
    }{
      \log{\intem{\tctx},\all{\btyp}{\typtwo}}{\typtwo\sub{\btyp}{\typ}}
    }
  \]
  Then:
  \[
    \indruleN{\NDIall}{
      \indruleN{\NDEall}{
        \indruleN{\NDW}{
          \indNih{\log{\intem{\tctx}}{\neg\typtwo\sub{\btyp}{\typ}}}
        }{
          \log{\intem{\tctx},\all{\btyp}{\typtwo}}{\neg\typtwo\sub{\btyp}{\typ}}
        }
        \HS
        \derivdots{\deriv}
        \HS
      }{
        \log{\intem{\tctx},\all{\btyp}{\typtwo}}{\bot}
      }
    }{
      \log{\intem{\tctx}}{\neg\all{\btyp}{\typtwo}}
    }
  \]
\item $\colorall{\Eexp\PRJVsym}$:
  Let
    $\judgPRJV[\dom{\tctxtwo}]{\tctx,\tctxtwo}{\optp{\tm}{\btyp}{\var}{\tmtwo}}{\ev}$
  be derived from
    $\judgPRJV[\dom{\tctxtwo}]{\tctx,\tctxtwo}{\tm}{(\ex{\btyp}{\typ})\pp}$
  and
    $\judgPRJV[\dom{\tctxtwo}]{\tctx,\tctxtwo,\var:\typ\PP}{\tmtwo}{\ev}$,
  where
    $\btyp \not\in \ftv{\tctx,\tctxtwo,\ev}$.
  By \ih we have that
    $\logNJ{\intem{\tctx}}{\ex{\btyp}{\typ}}$
  and
    $\logNJ{\intem{\tctx},\typ}{\intem{\ev}}$.
  Moreover, note that $\btyp\notin\ftv{\intem{\tctx},\intem{\ev}}$.
  Hence by $\NDEex$ we have $\logNJ{\intem{\tctx}}{\intem{\ev}}$,
  as required.
\end{enumerate}
\end{proof}

\begin{lemma}[Intuitionistic Embedding]
\llem{lambdaPRJ_embedding}
If $\logNJ{\typ_1,\hdots,\typ_n}{\typtwo}$
there exists a term $\tm$
such that
$\judgPRJ{\var_1:\typ_1\PP,\hdots,\var_n:\typ_n\PP}{\tm}{\typtwo\PP}$.
\end{lemma}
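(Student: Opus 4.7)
The plan is to proceed by induction on the derivation of $\logNJ{\tctx}{\typtwo}$ in $\NJ$, following the structure of the classical embedding lemma \rlem{lambdaPRK_embedding} but substituting each non-intuitionistic ingredient with an admissible principle of $\lambdaPRJ$ from \rlem{prj_admissible_rules}. The axiom, the introductions and both eliminations of $\land$, $\imp$, $\forall$, the introductions of $\lor$, $\coimp$, $\exists$, and the rule $\NDEnot$ all admit literal reuse of the classical constructions, each of which is easily seen to be intuitionistic since its term either introduces no negative counterfactual or confines every such occurrence to the argument position of some positive weak elimination $\clasapp{\cdot}{\cdot}$.

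The first adjustment concerns the continuation-style eliminators $\NDEor$, $\NDEcoimp$, and $\NDEex$. In each, the classical embedding contraposes a positive assumption $\var:\typ\PP$ via $\ccontrapose$, which is not licensed in $\PRJ$; we substitute intuitionistic contraposition $\icontrapose$, provided by $\IContrapose$. The two operators have the same shape when $\var$ is positive, so only the admissibility side conditions change, and a direct inspection shows that the fresh negative counterfactual $\vartwo$ always lies beneath the argument of some $\clasapp{\tm}{\cdot}$, hence uselessly. The $\colamc$ construction used for $\NDEcoimp$ likewise transfers unchanged, as its negative counterfactual $\varthree$ is similarly confined.

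The main obstacle is $\NDInot$, whose classical embedding substitutes $\lemN{\btyp_0}$ to discharge a counterfactual assumption of type $\bot\NN$, with $\bot \eqdef \btyp_0 \land \neg\btyp_0$. The unlocking observation is that $\lemN{\btyp_0}$ is itself already intuitionistic: the explicit construction given in the proof of \rlem{prk_lemP_and_lemN} binds only positive counterfactuals and uses weak eliminations only on such positive counterfactuals, and positive counterfactuals carry no restriction in $\PRJ$. Hence the classical term $\neglamc{\var:\typ\PP}{\tm}$ transfers verbatim to $\lambdaPRJ$ once $\ccontrapose$ is replaced by $\icontrapose$; the rule $\NDExpl$ is handled analogously, since its classical translation $\abs{\typ\PP}{\tm}{\lemN{\btyp_0}}$ is intuitionistic for the same reason. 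Crucially, no case of the induction calls for $\lemP{\cdot}$, which is inherently non-intuitionistic, because $\NDLem$ is absent from $\NJ$. At each inductive step the built term must still be verified intuitionistic, but these verifications reduce to mechanical inspection of the explicit formulas: the systematic use of $\icontrapose$ and the confinement of weak eliminations beneath $\clasapp{\cdot}{\cdot}$ preclude any useful occurrence of a negative counterfactual, while no forbidden negative elimination $\Eandn$, $\Eimpn$, $\Enotn$, $\Ealln$ appears in any constructed term.
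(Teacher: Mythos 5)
Your proposal is correct and follows essentially the same route as the paper's proof: induction on the $\NJ$ derivation, reuse of the witnesses from the classical embedding (\rlem{lambdaPRK_embedding}), replacement of $\ccontrapose$ by $\icontrapose$ in the eliminator cases, and the observation that $\lemN{\btyp_0}$ is itself intuitionistic so that the $\NDInot$ and $\NDExpl$ cases go through. The only divergence is cosmetic (the paper's $\casej$ pushes the application to the counterfactual $\vartwo$ into the branches of the case, which does not affect the argument), so no further comment is needed.
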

\begin{proof}
We proceed by induction of the derivation
of the judgment $\logNJ{\typ_1,\hdots,\typ_n}{\typtwo}$
in intuitionistic natural deduction.
The construction of the witnesses is, in almost all cases,
the same as in the classical case~(\rlem{lambdaPRK_embedding}).
Here we only check that these witnesses are actually intuitionistic terms.
\begin{enumerate}
\item \NDAx:
  Let $\log{\typ_1,\hdots,\typ_n}{\typ_i}$ be derived from the $\NDAx$ rule.
  Then $\judgPRK{\var_1:\typ_1\PP,\hdots,\var_n:\typ_n\PP}{\var_i}{\typ_i\PP}$
  by the $\Ax$ rule, and $\var_i$ is intuitionistic
  so $\judgPRJ{\var_1:\typ_1\PP,\hdots,\var_n:\typ_n\PP}{\var_i}{\typ_i\PP}$.
\item \colorand{\NDIand}:
  Recall that:
  \[
    \pairj{\tm}{\tmtwo}
    \eqdef
    \claslamp{\under:(\typ\land\typtwo)\NN}{
      \pairp{\tm}{\tmtwo}
    }
  \]
  There are no free occurrences of the negative counterfactual in
  $\tm$ nor in $\tmtwo$,
  so in particular there are
  no free useful occurrences of the negative counterfactual
  in $\pairp{\tm}{\tmtwo}$.
  Hence $\pairj{\tm}{\tmtwo}$ is intuitionistic
  and $\judgPRJ{\tctx}{\pairj{\tm}{\tmtwo}}{(\typ\land\typtwo)\PP}$.
\item \colorand{\NDEand}:
  Recall that $\projij{\tm}$ is defined by:
  \[
    \claslamp{(\var:\typ_i\NN)}{
      \clasapp{
        \projip{
          \clasapp{
            \tm
          }{
            \claslamn{(\under:(\typ_1\land\typ_2)\PP)}{
              \inin{\var}
            }
          }
        }
      }{
        \var
      }
    }
  \]
  There are only two free occurrences
  of the negative counterfactual $\var$
  in $\clasapp{
        \projip{
          \clasapp{
            \tm
          }{
            \claslamn{(\under:(\typ_1\land\typ_2)\PP)}{
              \inin{\var}
            }
          }
        }
      }{
        \var
      }$,
  both of them useless.
  Hence $\projij{\tm}$ is intuitionistic
  and $\judgPRJ{\tctx}{\projij{\tm}}{\typ_i\PP}$.
\item \colorand{\NDIor}:
  Recall that:
  \[
    \inij{\tm} \eqdef
    \claslamp{(\under:(\typ_1\lor\typ_2)\NN)}{
      \inip{
        \tm
      }
    }
  \]
  There are no free occurrences of the negative counterfactual
  in $\tm$, so in particular there are no useful free occurrences of the
  negative counterfactual in $\inip{\tm}$.
  Hence $\inij{\tm}$ is intuitionistic and
  $\judgPRJ{\tctx}{\inij{\tm}}{(\typ_1\lor\typ_2)\PP}$.
\item \colorand{\NDEor}:
  Recall that
    $\casej{\tm}{(\var:\typ\PP)}{\tmtwo}{(\var:\typtwo\PP)}{\tmthree}$
  is defined as follows,
  where the two contrapositions are indeed intuitionistic:
  \[
    \claslamp{(\vartwo:\typthree\NN)}{
        \caseptable{
          (\clasapp{
            \tm
          }{
            \claslamn{(\under:(\typ\lor\typtwo)\PP)}{
              \pairn{
                \icontrapose{\var}{\vartwo}{\tm}
              }{
                \icontrapose{\var}{\vartwo}{\tmtwo}
              }
            }
          })
        }{(\var:\typ\PP)}{
          \clasapp{
            \tmtwo
          }{
            \vartwo
          }
        }{(\var:\typtwo\PP)}{
          \clasapp{
            \tmthree
          }{
            \vartwo
          }
        }
    }
  \]
  All the occurrences of the negative counterfactual
  $\vartwo$ in the body are useless.
  Hence $\casej{\tm}{(\var:\typ\PP)}{\tmtwo}{(\var:\typtwo\PP)}{\tmthree}$
  is intuitionistic and
  $\judgPRJ{\tctx}{\casej{\tm}{(\var:\typ\PP)}{\tmtwo}{(\var:\typtwo\PP)}{\tmthree}}{\typthree\PP}$.
\item
  \colorimp{\NDIimp}:
  Recall that:
  \[
    \lamj{(\var:\typ\PP)}{\tm} \eqdef
    \claslamp{(\under:(\typ\to\typtwo)\NN)}{
      \lamp{(\var:\typ\PP)}{
        \tm
      }
    }
  \]
  There are no free occurrences of the negative counterfactual
  in $\lamp{(\var:\typ\PP)}{\tm}$.
  Hence $\lamj{(\var:\typ\PP)}{\tm}$ is intuitionistic and
  $\judgPRJ{\tctx}{\lamj{(\var:\typ\PP)}{\tm}}{(\typ\imp\typtwo)\PP}$.
\item \colorimp{\NDEimp}:
  Recall that $\appj{\tm}{\tmtwo}$ is defined by:
  \[
    \claslamp{(\var:\typtwo\NN)}{
      (\clasapp{
        \app{
          \clasapp{
            \tm
          }{
            (\claslamn{(\under:(\typ\to\typtwo)\PP)}{
              \copairn{
                \tmtwo
              }{
                \var
              }
            })
          }
        }{
          \tmtwo
        }
      }{
        \var
      })
    }
  \]
  The occurrences of the negative counterfactual $\var$
  are useless.
  Hence $\appj{\tm}{\tmtwo}$ is intuitionistic
  and $\judgPRJ{\tctx}{\appj{\tm}{\tmtwo}}{\typtwo\PP}$.
\item \colorimp{\NDIcoimp}:
  Recall that:
  \[
    \copairj{\tm}{\tmtwo}
    \eqdef
    \claslamp{(\under:(\typ\coimp\typtwo)\NN)}{
      \copairp{
        \negeP{\tm}
      }{
        \tmtwo
      }
    }
  \]
  Note that $\copairj{\tm}{\tmtwo}$ is intuitionistic and
  $\judgPRJ{\tctx}{\copairj{\tm}{\tmtwo}}{(\typ\coimp\typtwo)\PP}$.
\item \colorimp{\NDEcoimp}:
  Recall that $\colamc{\tm}{\var}{\vartwo}{\tmtwo}$ is defined by:
  \[
    \claslamp{\varthree:\typthree\NN}{
      (\clasapp{
        \colamp{
          (\clasapp{
            \tm
          }{
            (\claslamn{\under}{
              \lamn{\var_0:\typ\NN}{
                \claslamn{\vartwo:\typtwo\PP}{
                  (\abs{\typtwo\nn}{
                    \tmtwo'
                  }{
                    \varthree
                  })
                }
              }
            })
          })
        }{\var_0:\typ\NN}{\vartwo:\typtwo\PP}{
          \tmtwo'
        }
      }{
        \varthree
      })
    }
  \]
  where $\tmtwo' \eqdef \tmtwo\sub{\var}{\negiP{\var_0}}$.
  The occurrences of the negative counterfactual $\varthree$ are useless,
  and there are no other negative counterfactuals.
  Hence $\colamc{\tm}{\var}{\vartwo}{\tmtwo}$ is intuitionistic
  and $\judgPRJ{\tctx}{\colamc{\tm}{\var}{\vartwo}{\tmtwo}}{(\neg\typ)\PP}$.
\item \colorneg{\NDInot}:
  Recall that:
  \[
    \neglamj{\var:\typ\PP}{\tm} \eqdef
    \claslamp{\under:(\neg\typ)\NN}{
      \negip{(
        \icontrapose{\var}{\vartwo}{\tm}
          \sub{\vartwo}{\lemN{\btyp_0}}
      )}
    }
  \]
  where the contraposition is indeed intuitionistic.
  There are no occurrences of the
  negative counterfactual.
  Hence $\neglamj{\var:\typ\PP}{\tm}$ is intuitionistic
  and $\judgPRJ{\tctx}{\neglamj{\var:\typ\PP}{\tm}}{(\neg\typ)\PP}$.
\item \colorneg{\NDEnot}:
  Recall that:
  \[
    \negapj{\tm}{\tmtwo} \eqdef
    \abs{\bot\PP}{
      \tm
    }{
      \claslamn{(\under:\typ\PP)}{
        \negin{
          \tmtwo
        }
      }
    }
  \]
  Note that $\negapj{\tm}{\tmtwo}$
  is intuitionistic, so $\judgPRJ{\tctx}{\negapj{\tm}{\tmtwo}}{\bot\PP}$.
\item \colorall{\NDIall}:
  Recall that:
  \[
    \lamtj{\btyp}{\tm} \eqdef
    \claslamp{(\under:(\all{\btyp}{\typ})\NN)}{
      \lamtp{\btyp}{
        \tm
      }
    }
  \]
  Note that $\lamtj{\btyp}{\tm}$ is intuitionistic,
  so
  $\judgPRJ{\tctx}{\lamtj{\btyp}{\tm}}{(\all{\btyp}{\typ})\PP}$.
\item \colorall{\NDEall}:
  Recall that $\apptj{\tm}{\typ}$ is defined by:
  \[
    \claslamp{(\var:(\typtwo\sub{\btyp}{\typ})\NN)}{
      (\clasapp{
        \apptp{
          \tm'
        }{
          \typ
        }
      }{
        \var
      })
    }
  \]
  where
    $\tm' =
          \clasapp{
            \tm
          }{
            \claslamp{(\under:(\all{\btyp}{\typtwo})\PP)}{
              \patn{
                \typ
              }{
                \var
              }
            }
          }$.
  All the occurrences of the negative counterfactual $\var$
  are useless. Hence $\apptj{\tm}{\typ}$ is intuitionistic
  and $\judgPRJ{\tctx}{\apptj{\tm}{\typ}}{\typtwo\sub{\btyp}{\typ}\PP}$.
\item \colorall{\NDIex}:
  Recall that:
  \[
    \patj{\typ}{\tm} \eqdef
    \claslamp{(\under:(\ex{\btyp}{\typtwo})\NN)}{
      \patp{\typ}{
        \tm
      }
    }
  \]
  Note that $\patj{\typ}{\tm}$ is intuitionistic,
  so $\judgPRJ{\tctx}{\patj{\typ}{\tm}}{(\ex{\btyp}{\typtwo})\PP}$.
\item \colorall{\NDEex}:
  Recall that $\optj{\tm}{\btyp}{\var}{\tmtwo}$ is defined by:
  \[
    \claslamp{(\vartwo:\typtwo\NN)}{
      (\clasapp{
        \optp{
          \tm'
        }{\btyp}{\var}{
          \tmtwo
        }
      }{
        \vartwo
      })
    }
  \]
  where $\tm' = \clasapp{
            \tm
          }{
            \claslamn{(\under:(\ex{\btyp}{\typ})\PP)}{
              \lamtn{\btyp}{
                \icontrapose{\var}{\vartwo}{\tmtwo}
              }
            }
          }$,
  and where the contraposition is indeed intuitionistic.
  All the occurrences of the negative counterfactual
  $\vartwo$ are useless.
  Hence $\optj{\tm}{\btyp}{\var}{\tmtwo}$ is intuitionistic
  and $\judgPRJ{\tctx}{\optj{\tm}{\btyp}{\var}{\tmtwo}}{\typtwo\PP}$.
\item \NDExpl:
  Note that $\abs{\typ\PP}{\tm}{\lemN{\btyp_0}}$ is intuitionistic,
  so $\judgPRJ{\tctx}{\abs{\typ\PP}{\tm}{\lemN{\btyp_0}}}{\typ\PP}$.
\end{enumerate}
\end{proof}

\newpage
\section{Canonicity}
\lsec{appendix:canonicity}

Recall that {\em neutral terms} ($\neu,\hdots$) and
{\em normal terms} ($\nf,\hdots$) are:
\[
\begin{array}{r@{\,}r@{\,}l}
  \neu & ::=  &
       \var
  \mid \strongabs{\ev}{\neu}{\nf}
  \mid \strongabs{\ev}{\nf}{\neu}
  \mid \colorand{\projipn{\neu}}
  \mid \colorand{\casepn{\neu}{\var}{\nf}{\var}{\nf}}
  \\
  & \mid &
       \colorimp{\appn{\neu}{\nf}}
  \mid \colorimp{\colam{\neu}{\var}{\vartwo}{\nf}}
  \mid \colorneg{\negepn{\neu}}
  \mid \colorall{\apptpn{\neu}{\typ}}
  \mid \colorall{\optpn{\neu}{\var}{\btyp}{\nf}}
  \mid \clasappn{\neu}{\nf}
\\
  \nf & ::=  &
       \neu
  \mid \colorand{\pairpn{\nf}{\nf}}
  \mid \colorand{\inipn{\nf}}
  \mid \colorimp{\lampn{\var}{\nf}}
  \mid \colorimp{\copairpn{\nf}{\nf}}
  \mid \colorneg{\negipn{\nf}}
  \mid \colorall{\lamtpn{\btyp}{\nf}}
  \mid \colorall{\patpn{\typ}{\nf}}
  \mid \claslampn{\var}{\nf}
\end{array}
\]
A typing context $\tctx = \var_1:\ev_1,\hdots,\var_n:\ev_n$
is {\em weak}
if $\ev_1,\hdots,\ev_n$ are all weak types.
{\em Canonical} terms are those terms built with an introduction rule:
\[
  \colorand{\pairpn{\tm}{\tmtwo}}
  \Hs
  \colorand{\inipn{\tm}}
  \Hs
  \colorimp{\lampn{\var}{\tm}}
  \Hs
  \colorimp{\copairpn{\tm}{\tmtwo}}
  \Hs
  \colorneg{\negipn{\tm}}
  \Hs
  \colorall{\lamtpn{\btyp}{\tm}}
  \Hs
  \colorall{\patpn{\typ}{\tm}}
  \Hs
  \claslampn{\var}{\tm}
\]
A {\em capsule} is
either a variable or a term of the form $\claslampn{\var}{\nf}$.
A {\em critical context} is a context $\crictx$ given by
the following grammar:
\[
  \begin{array}{r@{\,}r@{\,}l}
  \crictx & ::= &
          \ctxhole
    \mid \strongabs{}{\crictx}{\nf}
    \mid \strongabs{}{\nf}{\crictx}
    \mid \colorand{\projipn{\crictx}}
    \mid \colorand{\casepn{\crictx}{\var}{\nf_1}{\vartwo}{\nf_2}}
    \mid \colorimp{\appn{\crictx}{\nf}}
  \\
    & \mid &
         \colorimp{\colampn{\crictx}{\var}{\vartwo}{\nf}}
    \mid \colorneg{\negepn{\crictx}}
    \mid \colorall{\apptpn{\crictx}{\typ}}
    \mid \colorall{\optpn{\crictx}{\var}{\btyp}{\nf}}
    \mid \clasappn{\crictx}{\nf}
    \mid \clasappn{\var}{\crictx}
  \end{array}
\]
A typing context $\tctx = \var_1:\ev_1,\hdots,\var_n:\ev_n$
is {\em weak}
if $\ev_1,\hdots,\ev_n$ are all weak types.

\begin{lemma}[Shape of neutral terms]
\llem{appendix:shape_of_neutral_terms}
Let $\judgPRK{\tctx}{\tm}{\ev}$
where $\tctx$ is weak
and $\tm$ is a neutral term.
Then:
\begin{enumerate}
\item
  If $\ev$ is strong,
  then $\tm$ is of the form $\crictxof{\clasappn{\var}{\tmfive}}$
  where $\tmfive$ is a capsule.
\item
  If $\ev$ is weak,
  then $\tm$ is either a variable
  or of the form $\crictxof{\clasappn{\var}{\tmfive}}$,
  where $\tmfive$ is a capsule.
\end{enumerate}
\end{lemma}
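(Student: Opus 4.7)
The plan is to proceed by structural induction on the neutral term $\tm$, using the derivation of $\judgPRK{\tctx}{\tm}{\ev}$. Because $\tctx$ is weak, whenever $(\var:\ev)\in\tctx$ the type $\ev$ is weak, so the case $\tm=\var$ contributes only to the second conclusion, with the empty critical context.

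For the absurdity constructors $\strongabs{\ev}{\neu}{\nf}$ and $\strongabs{\ev}{\nf}{\neu}$, and for every elimination rule whose principal premise has strong type — namely $\projipn{\neu}$, $\casepn{\neu}{\var}{\nf_1}{\vartwo}{\nf_2}$, $\appn{\neu}{\nf}$, $\colampn{\neu}{\var}{\vartwo}{\nf}$, $\negepn{\neu}$, $\apptpn{\neu}{\typ}$ and $\optpn{\neu}{\var}{\btyp}{\nf}$ — the first part of the inductive hypothesis applies to $\neu$ and yields $\neu=\crictx'\ctxof{\clasappn{\var'}{\tmfive}}$ with $\tmfive$ a capsule; wrapping $\crictx'$ inside the matching clause of the critical context grammar produces the required factoring.

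The interesting case is $\tm=\clasappn{\neu}{\nf}$, where $\neu$ has weak type, so the second part of the inductive hypothesis applies. If $\neu=\crictx'\ctxof{\clasappn{\var'}{\tmfive}}$, I wrap with the clause $\clasappn{\crictx}{\nf}$. Otherwise $\neu=\var$ is a variable and $\tm=\clasappn{\var}{\nf}$; here I must show that $\nf$ is a capsule, by a sub-case analysis on the normal term $\nf$, whose type is the weak mode opposite to $\neu$. If $\nf$ is canonical, inspecting the introduction rules shows that the only canonical inhabitants of a weak type are of the form $\claslampn{\var''}{\nf'}$, hence a capsule. If $\nf$ is neutral, the inductive hypothesis gives either $\nf=\vartwo$ (already a capsule) or $\nf=\crictx''\ctxof{\clasappn{\var''}{\tmfive'}}$; in this last situation the clause $\clasappn{\var}{\crictx}$ of the critical context grammar lets me factor $\tm=(\clasappn{\var}{\crictx''})\ctxof{\clasappn{\var''}{\tmfive'}}$.

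The main obstacle is this last sub-case, and it relies on two complementary observations: the typing fact that the rules $\Icp$ and $\Icn$ are the unique introduction rules producing canonical terms of weak type, so that a canonical argument of a weak application is automatically a capsule; and the syntactic fact that the critical context grammar contains both $\clasappn{\crictx}{\nf}$ and $\clasappn{\var}{\crictx}$, the second clause being exactly what is needed to descend into the argument of a weak application whose head has already been reduced to a variable. Without either ingredient, a neutral term of the form $\clasappn{\var}{\nf}$ with $\nf$ neither a variable nor a $\claslampn{-}{-}$ could not be factored through the capsule shape, and the induction would break.
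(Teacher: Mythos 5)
Your proof is correct and follows essentially the same route as the paper's: induction on the structure of the neutral term guided by its typing derivation, with the only delicate case being $\clasappn{\var}{\nf}$, resolved by observing that a normal form of weak type is either a $\claslampn{\var}{\nf'}$ (hence a capsule) or neutral, in which case the $\clasappn{\var}{\crictx}$ clause of the critical-context grammar applies. The paper phrases the sub-case split on $\nf$ as ``of the form $\claslampn{\var}{\nf'}$ or not'' rather than ``canonical or neutral,'' but this is the same dichotomy.
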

\begin{proof}
We proceed by induction on the derivation that $\tm$ is a neutral term:
\begin{enumerate}
\item $\tm = \var$:
  If $\ev$ is weak, \ie $\ev = \typ\PP$ or $\ev = \typ\NN$,
  then we are done, given that $\tm$ is a variable.
  If $\ev$ is strong, \ie $\ev = \typ^{\pm}$,
  then note that this case is impossible,
  since $\judg{\tctx}{\var}{\typ^{\pm}}$
  must be derived from the $\Ax$ rule,
  so $(\var:\typ^{\pm}) \in \tctx$, contradicting the hypothesis that
  $\tctx$ is weak.
\item $\tm = \strongabs{}{\neu}{\nf}$:
  Note that $\judg{\tctx}{(\strongabs{}{\neu}{\nf})}{\ev}$
  must be derived from the $\Abs$ rule,
  so in particular we must have that $\judg{\tctx}{\neu}{\typ\pp}$.
  By \ih, $\neu$ must be
  of the form $\neu = \crictx'\ctxof{\clasapn{\var}{\tmfive}}$
  where $\tmfive$ is a capsule.
  Then $\tm = \strongabs{}{\crictx'\ctxof{\clasapn{\var}{\tmfive}}}{\nf}$,
  so taking $\crictx := (\strongabs{}{\crictx'}{\nf})$ we conclude.
\item $\tm = \strongabs{}{\nf}{\neu}$:
  Similar to the previous case, applying the \ih on the judgment
  $\judg{\tctx}{\neu}{\typ\nn}$.
\item \colorand{$\tm = \projipn{\neu}$}:
  Note that $\judg{\tctx}{\projipn{\neu}}{\ev}$
  must be derived from either of the rules $\Eandp$ or $\Eorn$,
  from a judgment of the form
  $\judg{\tctx}{\neu}{(\typ\land\typtwo)\pp}$
  or of the form
  $\judg{\tctx}{\neu}{(\typ\lor\typtwo)\nn}$.
  In any case, the type of $\neu$ is strong so we may apply the \ih
  to conclude that $\neu = \crictx'\ctxof{\clasappn{\var}{\tmfive}}$,
  where $\tmfive$ is a capsule.
  Then $\tm = \projipn{\crictx'\ctxof{\clasappn{\var}{\tmfive}}}$
  (where the signs of the projection and the weak elimination do
  not necessarily match),
  so taking $\crictx := \projipn{\crictx'}$ we conclude.
\item \colorand{$\tm = \casepn{\neu}{\var}{\nf_1}{\vartwo}{\nf_2}$}:
  Similar to the previous case, noting that
  $\judg{\tctx}{\casepn{\neu}{\var}{\nf_1}{\vartwo}{\nf_2}}{\ev}$
  must be derived from either of the rules $\Eorp$ or $\Eandn$,
  from a judgment of the form
  $\judg{\tctx}{\neu}{(\typ\lor\typtwo)\pp}$
  or of the form
  $\judg{\tctx}{\neu}{(\typ\land\typtwo)\nn}$.
\item \colorimp{$\tm = \appn{\neu}{\nf}$}:
  Similar to the previous case, noting that
  $\judg{\tctx}{\appn{\neu}{\nf}}{\ev}$
  must be derived from either of the rules $\Eimpp$ or $\Ecoimpn$,
  from a judgment of the form
  $\judg{\tctx}{\neu}{(\typ\imp\typtwo)\pp}$
  or of the form
  $\judg{\tctx}{\neu}{(\typ\coimp\typtwo)\nn}$.
\item \colorimp{$\tm = \colampn{\neu}{\var}{\vartwo}{\nf}$}:
  Similar to the previous case, noting that
  must be derived from either of the rules $\Ecoimpp$ or $\Eimpn$,
  from a judgment of the form
  $\judg{\tctx}{\neu}{(\typ\coimp\typtwo)\pp}$
  or of the form
  $\judg{\tctx}{\neu}{(\typ\imp\typtwo)\nn}$.
\item \colorneg{$\tm = \negepn{\neu}$}:
  Similar to the previous case, noting that
  $\judg{\tctx}{\negepn{\neu}}{\ev}$
  must be derived from either of the rules $\Enotp$ or $\Enotn$,
  from a judgment of the form
  $\judg{\tctx}{\neu}{(\neg\typ)\pp}$
  or of the form
  $\judg{\tctx}{\neu}{(\neg\typ)\nn}$.
\item \colorall{$\tm = \apptpn{\neu}{\typ}$}:
  Similar to the previous case, noting that
  $\judg{\tctx}{\apptpn{\neu}{\typ}}{\ev}$
  must be derived from either of the rules $\Eallp$ or $\Eexn$,
  from a judgment of the form
  $\judg{\tctx}{\neu}{(\all{\btyp}{\typtwo})\pp}$
  with $\ev = \typtwo\PP\sub{\var}{\typ}$,
  or from a judgment of the form
  $\judg{\tctx}{\neu}{(\ex{\btyp}{\typtwo})\pp}$
  with $\ev = \typtwo\NN\sub{\var}{\typ}$.
\item \colorall{$\tm = \optpn{\neu}{\var}{\btyp}{\nf}$}:
  Similar to the previous case, noting that
  $\judg{\tctx}{\optpn{\neu}{\var}{\btyp}{\nf}}{\ev}$
  must be derived from either of the rules $\Eexp$ or $\Ealln$,
  from a judgment of the form
  $\judg{\tctx}{\neu}{(\ex{\btyp}{\typ})\pp}$
  or of the form
  $\judg{\tctx}{\neu}{(\all{\btyp}{\typ})\nn}$.
\item $\tm = \clasappn{\neu}{\nf}$:
  Note that $\judg{\tctx}{\clasappn{\neu}{\nf}}{\ev}$
  must be derived from either of the rules $\Ecp$ or $\Ecn$,
  from a judgment of the form $\judg{\tctx}{\neu}{\typ\PP}$
  or of the form $\judg{\tctx}{\neu}{\typ\NN}$.
  Since the type of $\neu$ is weak,
  by \ih we have that $\neu$ is either a variable or of the form
  $\neu = \crictx'\ctxof{\clasappn{\var}{\tmfive}}$,
  where $\tmfive$ is a capsule.
  We consider these two subcases:
  \begin{enumerate}
  \item
    If $\neu$ is a variable, $\neu = \var$, 
    note first that, since $\judg{\tctx}{\clasappn{\neu}{\nf}}{\ev}$
    is derived from either of the rules $\Ecp$ or $\Ecn$,
    the type of $\nf$ must be weak.
    We consider two subcases, depending on whether
    $\nf$ is of the form $\nf = \claslamnp{\var}{\nf'}$ or not:
    \begin{enumerate}
    \item
      If $\nf = \claslamnp{\var}{\nf'}$,
      then $\tm = \clasappn{\var}{\claslamnp{\var}{\nf'}}$,
      so taking $\crictx := \ctxhole$ we are done.
    \item
      If $\nf$ is not of the form $\claslamnp{\var}{\nf'}$
      then since $\nf$ is a normal term and its type is weak
      we know that $\nf$ must be neutral.
      Hence by \ih we know that $\nf$ must be either a variable
      or of the form $\nf = \crictx''\ctxof{\clasappn{\vartwo}{\tmfive'}}$,
      where $\tmfive''$ is a capsule. We consider these two subcases:
      \begin{enumerate}
      \item
        If $\nf$ is a variable, $\nf = \vartwo$,
        then $\tm = \clasappn{\var}{\vartwo}$
        so taking $\crictx := \ctxhole$ we are done.
      \item
        If $\nf = \crictx''\ctxof{\clasappn{\vartwo}{\tmfive'}}$,
        where $\tmfive'$ is a capsule,
        then $\tm = \clasappn{\var}{\crictx''\ctxof{\clasappn{\vartwo}{\tmfive'}}}$
        (where the signs of the two weak eliminations do not
        necessarily match),
        so taking $\crictx := \clasappn{\var}{\crictx''}$ we are done.
      \end{enumerate}
    \end{enumerate}
  \item
    If $\neu = \crictx'\ctxof{\clasappn{\var}{\tmfive}}$
    where $\tmfive$ is a capsule,
    then $\tm = \clasappn{\crictx'\ctxof{\clasappn{\var}{\tmfive}}}{\nf}$
    (where the signs of the two weak eliminations do not necessarily
    match),
    so taking $\crictx := \clasappn{\crictx'}{\nf}$
    we conclude.
  \end{enumerate}
\end{enumerate}
\end{proof}

\begin{theorem}[Canonicity]
\lthm{appendix:canonicity}
\quad
\begin{enumerate}
\item
  If $\judgPRK{}{\tm}{\ev}$,
  then $\tm$ reduces to a canonical normal form $\nf$
  such that $\judgPRK{}{\nf}{\ev}$.
\item
  If $\judgPRK{}{\tm}{\ev}$, where $\ev$ is weak,
  then a canonical term $\tm'$ can be effectively found
  such that $\judgPRK{}{\claslampn{(\var:\ev\OP)}{\tm'}}{\ev}$
\end{enumerate}
\end{theorem}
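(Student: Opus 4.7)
The plan is to combine strong normalization with the structural analysis of neutral terms given by \rlem{appendix:shape_of_neutral_terms}.

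For part~1, I would first apply strong normalization and subject reduction (\rthm{subject_reduction}) to obtain a normal form $\nf$ with $\tm \rto \nf$ and $\judgPRK{}{\nf}{\ev}$. Since every normal form is either canonical or neutral, as read off the grammars for $\neu$ and $\nf$, it suffices to rule out the neutral case. But \rlem{appendix:shape_of_neutral_terms} applied to the empty (vacuously weak) typing context forces any neutral $\nf$ to contain a free variable: either as a bare occurrence when $\ev$ is weak, or as the subject $\var$ of the weak application $\clasappn{\var}{\tmfive}$ sitting at the hole of the critical context. Since $\nf$ is closed, both subcases are impossible, and hence $\nf$ is canonical.

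For part~2, applying part~1 yields a canonical normal form $\nf$ of weak type $\ev$; by inspection of the introduction rules, the only such shape is $\nf = \claslampn{(\var:\ev\OP)}{\nf_0}$ with $\nf_0$ a normal form typed by the strong counterpart of $\ev$ under the context $\var:\ev\OP$. If $\nf_0$ is already canonical, we set $\tm' \eqdef \nf_0$. Otherwise, I would proceed by induction on the size of $\nf_0$: \rlem{appendix:shape_of_neutral_terms} applied to the weak context $\var:\ev\OP$ gives $\nf_0 = \crictxof{\clasappn{\var}{\tmfive}}$, and since $\tmfive$ has the weak mode opposite to that of $\var$ and is a capsule, it cannot be a bare variable, hence it must be of the form $\tmfive = \claslampn{(\vartwo:\ev\OP)}{\nf''}$. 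The term $\nf''\sub{\vartwo}{\var}$ is again a normal form of the same strong type under $\var:\ev\OP$; since substituting a variable for a variable never duplicates subterms, its size is strictly smaller than that of $\nf_0$, and the induction hypothesis produces a canonical term which we return as $\tm'$.

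The step I expect to require the most care is the inductive analysis for part~2: one must properly track the modes so that the sign of $\tmfive$ matches what the weak application rules demand, verify that $\nf''\sub{\vartwo}{\var}$ is again in normal form (no new redex is created since $\var$ is merely a variable, not a canonical term), and confirm the size decrease after $\alpha$-renaming. Once these bookkeeping invariants are in place, the recursion is effective and terminates on the size measure, yielding the canonical witness $\tm'$.
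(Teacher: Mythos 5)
Your proposal is correct and follows essentially the same route as the paper: part~1 combines strong normalization, subject reduction, and the shape-of-neutral-terms lemma to rule out neutral normal forms in the empty context, and part~2 peels off capsules by induction on size using the same lemma. The only (harmless) difference is bookkeeping in part~2: you rename the capsule's bound variable to $\var$ at each recursive step so the context stays a single $\var:\ev\OP$, whereas the paper generalizes the induction hypothesis to a context $\vartwo_1:\typ\NN,\hdots,\vartwo_n:\typ\NN$ and performs all renamings at the end — your observation that variable-for-variable substitution preserves normality and size is exactly what makes this variant go through.
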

\begin{proof}
We prove each case:
\begin{enumerate}
\item
  Suppose that $\judgPRK{}{\tm}{\ev}$ holds.
  By strong normalization
  consider the normal form $\nf$ of $\tm$
  and by subject reduction~(\rthm{subject_reduction})
  note that $\judgPRK{}{\nf}{\ev}$.
  By the characterization
  of normal forms~\lprop{characterization_of_normal_terms}
  we know that $\nf$ is either canonical or a neutral term.
  If $\nf$ is canonical, we are done.

  It suffices to argue that $\nf$ cannot be a neutral term.
  Indeed, note that $\nf$ cannot be a variable,
  since the typing context is empty.
  Hence, by~\rlem{appendix:shape_of_neutral_terms},
  $\nf$ must be of the form $\crictxof{\clasappn{\var}{\tmfive}}$,
  where $\crictx$ is a critical context and $\tmfive$ is a capsule.
  But note that critical contexts do not bind variables,
  so $\crictxof{\clasappn{\var}{\tmfive}}$ has a free variable $\var$.
  Hence $\nf = \crictxof{\clasappn{\var}{\tmfive}}$ cannot be typable
  under the empty typing context.
  This contradicts the fact that $\judgPRK{}{\nf}{\ev}$.
\item
  Suppose that $\judg{}{\tm}{\ev}$, where $\ev$ is weak.
  We consider the case in which $\ev = \typ\PP$;
  if $\ev = \typ\NN$ the proof is similar changing the signs.

  By the first item of this lemma, note that $\tm$ reduces to a
  canonical normal form $\nf$ such that $\judgPRK{}{\nf}{\typ\PP}$.
  Since $\nf$ is canonical,
  $\nf = \claslamp{(\vartwo:\typ\NN)}{\nf'}$
  where $\nf'$ is a normal form and
  $\judgPRK{\vartwo:\typ\NN}{\nf'}{\typ\pp}$.
  To prove the statement of the lemma, we must show
  we can find a canonical term $\tm'$ such that
  $\judgPRK{\vartwo:\typ\NN}{\tm'}{\typ\pp}$.

  We claim, more in general, that if
  $\nf'$ is a normal form such that
  $\judgPRK{\vartwo_1:\typ\NN,\hdots,\vartwo_n:\typ\NN}{\nf'}{\typ\pp}$,
  then
  we can find a canonical term $\tm'$ such that
  $\judgPRK{\vartwo:\typ\NN}{\tm'}{\typ\pp}$.
  We proceed by induction on the size of $\nf'$.
  Suppose that
  $\judgPRK{\vartwo_1:\typ\NN,\hdots,\vartwo_n:\typ\NN}{\nf'}{\typ\PP}$.
  By the characterization
  of normal forms~\lprop{characterization_of_normal_terms},
  $\nf'$ is either canonical or a neutral term.
  We consider these two cases:
  \begin{enumerate}
  \item
    If $\nf'$ is canonical:
    take
    $\tm' = \nf\sub{\vartwo_1}{\vartwo}\hdots\sub{\vartwo_n}{\vartwo}$,
    which is again canonical,
    and note that
    $\judgPRK{\vartwo:\typ\NN}{\tm'}{\typ\pp}$,
    as required.
  \item
    If $\nf'$ is a neutral term:
    since $\nf'$ is of strong type,
    by~\rlem{appendix:shape_of_neutral_terms}
    we have that $\nf' = \crictxof{\clasappn{\vartwo'}{\tmfive}}$
    where $\tmfive$ is a capsule.
    Since critical contexts do not bind variables, 
    we know that $\vartwo' = \vartwo_i$ for some $i\in1..n$
    and since $\vartwo_i$ is of type $\typ\NN$
    we have
    $\judgPRK{\vartwo_1:\typ\NN,\hdots,\vartwo_n:\typ\NN}{\tmfive}{\typ\PP}$.
    Moreover, $\tmfive$ is a capsule,
    \ie either a variable or a weak introduction.
    Note that $\tmfive$ cannot be a variable, for all the variables
    in $\vartwo_1:\typ\NN,\hdots,\vartwo_n:\typ\NN$
    are of type $\typ\NN$, whereas $\tmfive$ is of type $\typ\PP$.
    Hence $\tmfive$ must be a weak introduction,
    so we know that it must be of the form
    $\tmfive = \claslamp{(\varthree:\typ\NN)}{\nf''}$.
    Note that
    $\judgPRK{\vartwo_1:\typ\NN,\hdots,\vartwo_n:\typ\NN,\varthree:\typ\NN}{\nf''}{\typ\pp}$,
    where $\nf''$ is a strict subterm of $\nf'$.
    Hence, by \ih, there exists a canonical term $\tm'$
    such that
    $\judgPRK{\vartwo:\typ\NN}{\tm'}{\typ\pp}$.
    This concludes the proof.
  \end{enumerate}
\end{enumerate}
\end{proof}

\end{document}